\newsavebox{\@brx}
\newcommand{\llangle}[1][]{\savebox{\@brx}{\(\m@th{#1\langle}\)}%
  \mathopen{\copy\@brx\kern-0.5\wd\@brx\usebox{\@brx}}}
\newcommand{\rrangle}[1][]{\savebox{\@brx}{\(\m@th{#1\rangle}\)}%
  \mathclose{\copy\@brx\kern-0.5\wd\@brx\usebox{\@brx}}}
\newcommand{\Paragraph}[1]{\vspace{5pt}\noindent{\bf #1:}}
\newcommand{\m}[1]{\mathsf{#1}}
\newcommand{\mi}[1]{\mathit{#1}}
\newcommand{\bnfdef}{::=}
\newcommand{\bnfalt}{\,|\,}
\newcommand{\rulename}[1]{\textsc{#1}}
\newcommand{\etrue}{\m{true}}
\newcommand{\efalse}{\m{false}}
\newcommand{\eif}{\mathsf{if}}
\newcommand{\ethen}{\mathsf{then}}
\newcommand{\eelse}{\mathsf{else}}
\newcommand{\ewhile}{\mathsf{while}}
\newcommand{\edo}{\mathsf{do}}
\newcommand{\epair}[2]{\langle #1\bnfalt #2\rangle}
\newcommand{\eupdate}{\mathsf{upd}}
\newcommand{\eread}{\mathsf{rd}}
\newcommand{\eskip}{\mathsf{skip}}
\newcommand{\eoutput}{\mathsf{output}}
\newcommand{\eabort}{\mathsf{abort}}
\newcommand{\bop}{\mathrel{\mathsf{bop}}}
\newcommand{\proj}[2]{\lfloor{#1}\rfloor_{#2}}
\newcommand{\trace}{\mathbb{T}}
\newcommand{\wg}{\mathbb{WHILE}^{\mathsf{G}}}
\newcommand{\wge}{\mathbb{WHILE}^{\mathsf{G}}_{\mathsf{Evd}}}
\newcommand{\tpint}{\mathsf{int}}
\newcommand{\tbool}{\mathsf{bool}}
\newcommand{\lab}{\ell}
\newcommand{\pc}{\mathit{pc}}
\newcommand{\labof}{\textit{intvl}\,}
\newcommand{\glab}{g}
\newcommand{\gpc}{g_\pc}
\newcommand{\labless}{\preccurlyeq}
\newcommand{\nlabless}{\not\preccurlyeq}
\newcommand{\labjoin}{\mathrel{\curlyvee}}
\newcommand{\labmeet}{\mathrel{\curlywedge}}
\newcommand{\cjoin}{\mathrel{\curlyvee{\!\!}_c}}
\newcommand{\clabless}{\mathrel{\preccurlyeq{\!}_c}}
\newcommand{\csubtp}{\mathrel{\leq{\!}_c}}
\newcommand{\gsubtp}{\mathrel{\sqsubseteq}}
\newcommand{\eundef}{\mathtt{undef}}
\newcommand{\wtsetof}{\textit{WtSet}}
\newcommand{\refineof}{\textit{refine}}
\newcommand{\rflof}[1]{\textit{rfL}_{#1}}
\newcommand{\reflvof}[1]{\textit{refineLB}_{#1}}
\newcommand{\updval}{\mathsf{updL}}
\newcommand{\newlab}{\textit{restrictLB}}
\newcommand{\stepsto}{\longrightarrow}
\newcommand{\evalsto}{\Downarrow}
\newcommand{\sepidx}[1]{\mathrel{/_{#1}}}
\newcommand{\ee}{\mathcal{E}}
\newcommand{\de}{\mathcal{D}}
\newtheorem{thm}{Theorem}
\newtheorem{lem}{Lemma}
\newenvironment{proofsketch}{\vspace{-1pt}\textsc{Proof (sketch). }\hspace*{0.25em}}{ \hspace*{\fill} \qed}
\definecolor{mypurple}{rgb}{.41,.28,.63}
\def\arcr{\@arraycr}
\newcommand{\iffull}[1]{#1}
\newcommand{\ifconf}[1]{}
\begin{document}

\title{First-order Gradual Information Flow Types with Gradual Guarantees}

\author{
\IEEEauthorblockN{Abhishek Bichhawat, McKenna McCall and Limin Jia} \\
\IEEEauthorblockA{\textit{Carnegie Mellon University}, 
Pittsburgh, USA \\
\{abichhaw, mckennak, liminjia\}@andrew.cmu.edu}
}

\maketitle

\thispagestyle{plain}
\pagestyle{plain}

\begin{abstract}
  Information flow type systems enforce the security property of
  noninterference by detecting unauthorized data flows at
  compile-time. However, they require precise type annotations,
  making them difficult to use in practice as much of the legacy
  infrastructure is written in untyped or dynamically-typed languages.
  Gradual typing seamlessly integrates static and
  dynamic typing, providing the best of both approaches, and has been applied
  to information flow control, where information flow monitors are
  derived from gradual security types. Prior work on gradual
  information flow typing uncovered tensions between noninterference and
  the dynamic gradual guarantee---the property that less precise
  security type annotations in a program should not cause more runtime
  errors. 

  This paper re-examines the connection between gradual
  information flow types and information flow monitors to 
  identify the root cause of the tension between the gradual
  guarantees and noninterference. We develop runtime semantics for a
  simple imperative language with gradual information flow types that
  provides both noninterference and gradual guarantees. We
  leverage a proof technique developed for FlowML and reduce
  noninterference proofs to preservation proofs.


\end{abstract}
\section{Introduction}
\label{sec:intro}

Information flow type systems combine types and security labels to
ensure that well-typed programs do not leak secrets to attackers at
compile-time~\cite{volpano1996}. However, purely statically-typed
languages face significant adoption challenges. Most programmers
are unfamiliar with and may be unwilling to use complex information flow type
systems. Moreover, much of the legacy infrastructure is written in untyped and
dynamically-typed languages without precise security type annotations.  

Gradual typing is one promising technique to address these
challenges~\cite{siek2006}; it aims to seamlessly integrate
statically-typed programs with dynamically-typed programs. At a
high-level, gradual type systems introduce a dynamic type, often
written as $?$, to accommodate untyped portions of the program. The
type system allows any program to be typed under $?$. The type
  system enforces type safety on statically typed parts and the
runtime semantics of gradual type systems monitor the interactions
between parts typed as $?$ and statically typed parts to ensure type
preservation.

Gradual typing has been applied to information flow
types~\cite{disney2011,fennell2013,fennell2016,toro2018,gifc-lics2020}, 
where certain
expressions have a dynamic security label $?$ (and are typed as, e.g.,
$\m{int}^?$), which is determined at runtime.  
Information flow monitors are then derived from the runtime semantics
of gradual information flow types.
Earlier adoptions of gradual typing to information flow types
  have developed ad hoc approaches to treat ``gradual types''.  For
  instance, Disney and Flanagan did not include a dynamic security
  label~\cite{disney2011}. Instead, the programmer would insert type casts, which
  are checked at runtime, and are used to ``gradually'' make the
  programs more secure. Later, ML-GS~\cite{fennell2013} and
  LJGS~\cite{fennell2016} included the dynamic security label $?$ and a
  runtime monitor that performed checks on the dynamically typed parts
  of the program. Programmers still need to write annotations and casts
  in ML-GS and label constraints in LJGS. The dynamic label is
  instantiated as a single security label at run time in both ML-GS
  and LJGS.

At the same time, interests in the formal foundations of gradual
  types grew significantly. Formal properties related to gradual
typing such as the \emph{gradual guarantee}~\cite{siek2015} were
introduced, which
says that loosening policies should not cause more type errors or
runtime failures. Roughly, the
gradual guarantee ensures that programs which type-check and run to
completion with precise type annotations will also type-check and
run to completion with less precise types, i.e., $?$.
This property ensures that programmers are not punished for
  not specifying type annotations if the program is safe. Such a
  guarantee is important for information flow type systems, as
  security annotations have been a road block for adoption.
  Without the gradual guarantee, the programmers' burden
  of providing (unnecessary) security annotations is increased.

Garcia et al. developed the abstracting gradual typing (AGT)
  framework which provides a formal interpretation of gradual type
  systems~\cite{garcia2016}. In AGT, a principled interpretation of
  the dynamic type is that it represents the set of all possible types
  that are refined by the monitor to preserve type safety at
  runtime. By that interpretation, the semantics of the dynamic
  information flow label $?$ is the set of all possible labels. Early
  work on gradual information flow typing all instantiate the dynamic
  label as a single label at runtime~\cite{fennell2013,fennell2016}.
  Recent work by Toro et al., $\m{GSL_{Ref}}$, aims to apply the AGT
  framework to information flow types~\cite{toro2018}; however,
it has to give up the dynamic gradual guarantee in favor of
\emph{noninterference}, the key information flow security
property~\cite{goguen1982}, when dealing with mutable references.

In this paper, we re-examine the connection between gradual
information flow types and information flow monitors
(c.f.~\cite{austin2009,austin2010,russo2010}).  We aim to identify the root
cause of the tension between the dynamic gradual guarantee and security
in systems that refine the set of possible labels for dynamically labeled
   programs at runtime. To this end, we focus on a simple imperative language
  with first-order stores, which has been widely used to design
information flow control
systems~\cite{hunt2006,volpano1996,volpano1997,terauchi2005,russo2010,moore2011}.
While simple, this language includes all the features to
  illustrate the problem of refining dynamic labels at runtime.  We
develop runtime semantics for this language with gradual information
flow types that enjoy both noninterference and the dynamic gradual
guarantee.  We draw ideas from abstracting gradual typing, which
advocates deriving runtime semantics for gradual types via the
preservation proof~\cite{garcia2016}.

We observe that as dynamic labels are updated, the
  semantics that only gradually refine the possible security labels
  during program execution \emph{resemble} a naive
  flow-sensitive monitor and therefore inherit the problems with implicit
  leaks of flow-sensitive monitors~\cite{austin2009}. To enforce
  noninterference and remove the implicit leaks caused by insecure
  writes in branches, the runtime semantics needs to take into
  consideration the variable and channel writes in the untaken
  branch~\cite{russo2010}.  Pure guessing which ignores
  information about the untaken branch like $\m{GSL_{Ref}}$'s runtime
  yields rigid semantics that break the gradual guarantee (more in
  Section~\ref{sec:motv}).  The no-sensitive-upgrade
  (NSU) check~\cite{austin2009} also doesn't solve the problem. Instead, a
  ``hybrid'' approach~\cite{moore2011, russo2010} that leverages
  static analysis to obtain the write effects of the untaken branch
  and upgrade relevant references for both branches can be used to
  remove the implicit leaks and provide the gradual guarantees.

We leverage a proof technique developed for FlowML, which reduces
noninterference proofs to preservation proofs~\cite{pottier2002}.
The main idea is to extend the language with \emph{pairs}
of expressions and commands, representing two executions with
different secrets in one program. Noninterference follows from
preservation. This proof technique clearly illustrates the
problem with purely dynamic flow-sensitive monitors and
naturally suggests the hybrid approach~\cite{moore2011, russo2010}. 


To summarize, we study the connection between gradual security types
and information flow monitors and identify the conservative handling
of implicit flows in $\m{GSL_{Ref}}$ as the reason
that it gives up dynamic gradual guarantee in favor of
noninterference. Additionally, we show that the dynamic gradual guarantee
can be recovered by using a hybrid approach that leverages the static
phase to generate a list of variables that are written to in both the
branches. Due to space constraints, we omit detailed definitions and proofs, 
which can be found in the full version of the paper~\cite{full-version}.


\section{Overview of  Information Flow Control}
\label{sec:overview}
In information flow control systems, variables are annotated with
a label from a security lattice, which
have a partial-ordering ($\labless$) and a well-defined join and meet
operation. $\lab_1 \labless \lab_2$ 
means information can flow from $\lab_1$ to $\lab_2$. 
Consider a two-point security lattice with labels
$\{L,H\}$ with $L \labless H$ where $L$ represents public and
$H$ represents secret. A variable $x$
having type $\tpint^H$ contains a sensitive integer value. 

Information flows can be broadly classified as \emph{explicit} or
\emph{implicit}~\cite{denning1977,goguen1982}. Explicit flows arise from
variable assignments. For instance, the statement $x = y +
z$ causes an explicit flow of values from $y$ and $z$ to $x$. Implicit
flows arise from control structures in the program. For example, in
the program $l = \efalse;\; \eif (h) \{l = \etrue;\}$, there is an
implicit flow of information from $h$ to the final value of $l$ (it is 
$\etrue$ iff $h$ is $\etrue$). Implicit flows are handled by
maintaining a $\pc$ (program-context) label, 
which is an upper bound on the labels of all the predicates that have
influenced the control flow thus far. In the example, 
the $\pc$ inside the branch is the label of $h$.  

Information flow control systems aim to prevent leaks through these
flows by either enforcing information flow typing rules and
ruling out insecure programs at compile-time or dynamically monitoring
programs and aborting the execution of insecure programs. In both
systems, assignment to a variable is disallowed if either the $\pc$
label or the join of the label of the operands is not less than or 
equal to the
label of the variable being
assigned~\cite{volpano1996,austin2009}. Thus, in the above examples,
if either the label of $y$ or $z$ is greater than the label of $x$ 
or the label of $h$ is greater than the label of $l$, the assignment
does not type-check or  the execution aborts at runtime. This
guarantees a variant of noninterference, known as
termination-insensitive noninterference~\cite{volpano1996}, which 
we prove for our gradual type system. We assume that an adversary cannot
observe or gain any information if a program's execution diverges or
aborts and can only observe ``public'' outputs by the program. 

We consider a flow-insensitive, fixed-label system in this paper and
prove termination-insensitive noninterference for it.

\section{Gradual Security Typing}
\label{sec:overview-gradual}
Static information flow type systems do not scale up to scenarios
where the security levels of some of the variables are not known at
compile-time, while pure monitoring approaches cannot reject obviously
insecure programs at compile-time. Gradual typing extends the reach of
type-system based analysis by adding an imprecise (or dynamic) label, 
$?$, for variables whose labels are not known at
compile-time. The runtime semantics then ensures that no information
is leaked due to the relaxation of the type-system's handling
of $?$ labels.

\subsection{Imprecise Security Label: Interpretations and Operations}
\label{sec:overview-g}
The label $?$ is not an actual element of the security lattice
and its meaning is not universally agreed upon. Differences will
manifest in the runtime monitoring semantics and proof of
noninterference. For illustration, consider a variable $x$ of type
$\m{int}^?$. Semantically, in the literature $?$ has meant one of the following.  (1)
$x$'s label is dynamic (flow-sensitive) and can change at runtime
(e.g., from $?$ representing $L$ to representing $H$ when $x$ is
assigned a secret). (2) the set of possible labels for $x$ is refined
at runtime and the set in a future state will be a subset of its the
current state. 
Since we build on a flow-insensitive type system, we opt for
the second meaning of $?$.  Our runtime monitor will keep track of the
set of possible labels for $x$. 
Note that a typical flow-sensitive IFC monitor
(e.g.~\cite{russo2010,austin2009}) takes an approach aligned with (1). 

\begin{figure}
\begin{minipage}{.5\linewidth}
  \centering
\begin{lstlisting}[caption={\small Statically typed},
  label=introEg1,xleftmargin=.15\columnwidth]
  $y := \efalse^H$
  if ($x$) then $y := \etrue^H$ @\label{intr1}@ 
\end{lstlisting}
\end{minipage}\hfill
\begin{minipage}{.5\linewidth}
  \centering
\begin{lstlisting}[caption={\small Dynamically typed},
  label=introEg2,xleftmargin=.15\columnwidth]
  $y := \efalse^?$
  if ($x$) then $y := \etrue^?$ @\label{intr2}@ 
\end{lstlisting}
\end{minipage}
\medskip
\noindent
\begin{minipage}{.5\linewidth}
\begin{lstlisting}[caption={\small NSU},label=introEg3,xleftmargin=.15\columnwidth]
  $y := \efalse^?$
  $z := \efalse^L$
  if ($x$) then $y := \etrue^?$ @\label{intr3}@ 
  if ($y$) then $z := \etrue^L$ @\label{intr4}@
  output($L$, $z$)
\end{lstlisting}
\end{minipage}\hfill
\begin{minipage}{.5\linewidth}
\begin{lstlisting}[caption={\small Implicit flows},label=motvEg,xleftmargin=.15\columnwidth]
  $y := \etrue^?$
  $z := \etrue^L$
  if ($x$) then $y := \efalse^?$ @\label{motv1}@ 
  if ($y$) then $z := \efalse^L$ @\label{motv2}@
  output($L$, $z$)@\label{motv3}@
\end{lstlisting}
\end{minipage}
\vspace*{-2mm}
\end{figure}

In the initial program state, $?$ could be interpreted as:
(1) $x$ could contain a secret or be observable to an
adversary. Therefore, we should treat $x$ conservatively as if it is
both secret and public. (2) $?$ indicates indifference; the data $x$
contains in the initial state is of no security value; otherwise, $x$ should have been
given the label $H$. We choose (2) again, as it is a cleaner
interpretation. Note that this is only for the initial state. At
runtime, the monitor maintains enough state to concretely know whether
$x$ contains a secret or not.

\subsection{Gradually Refined Security Policy via Examples}
\renewcommand{\thelstlisting}{\arabic{lstlisting}}
Next, we describe
  how security labels can be gradually refined at runtime.
Consider the program in Listing~\ref{introEg1} and its
variant with dynamic types in Listing~\ref{introEg2}. Suppose the
lattice contains four elements $\{\bot, L, H, \top\}$ such that $\bot
\labless L \labless H \labless \top$. Assume that the initial type of
$x$ is $\tbool^H$
in both examples while the type of $y$ is
$\tbool^H$ in Listing~\ref{introEg1} and $\tbool^?$ in
Listing~\ref{introEg2}\footnote{We will write $x^\ell$ 
  to indicate that variable $x$ has the type $\tau^\ell$.}.  The program in Listing~\ref{introEg1} does
not leak any information. With gradual typing, its variant in
Listing~\ref{introEg2} is also accepted by the type system.
The runtime refines the set of possible labels for $y$ as the
  program runs.
  With $x:\texttt{bool}^H$,  $y$ cannot be $\bot$ or $L$
  as that would result in an implicit flow. Thus, the
  possible labels of $y$ are refined to the set $\{H, \top\}$ when
  $x=\etrue$.
  If, suppose, $x: \texttt{bool}^L$, then $y$ is labeled 
  $\{L, H, \top\}$ after executing line~\ref{intr2}.




Suppose the program in Listing~\ref{introEg2} is extended with another
branch as shown in Listing~\ref{introEg3} with 
$z:\texttt{bool}^L$. 
When $x:\texttt{bool}^L$, the possible labels for $y$ on
line~\ref{intr4} are $\{L, H, \top \}$. This allows the assignment on
line~\ref{intr4} to succeed as the assignment is to a variable that
has a label contained in the set of possible $\pc$ labels.  When $x$
has the type $\texttt{bool}^H$, then the possible labels for $y$ on
line~\ref{intr4} are $\{H, \top \}$. The assignment on
line~\ref{intr4} is aborted as $L$ is not equal or higher than any of the 
possible $\pc$ labels ($\{H, \top\}$).
In this case, the monitor enforces NSU and prevents the implicit leak.

\subsection{Gradual Guarantees} 

Desirable formal properties for gradual type systems
are the \emph{gradual guarantees}, proposed by~\cite{siek2015}.
The gradual guarantees relate programs that differ
only in the precision of the type annotations. They state that
changes that make the annotations of a gradually typed program less precise should not
change the static or dynamic behavior of the program. In other words,
if a program with more precise type annotations is well-typed in the static type
system, and terminates in the runtime semantics,
then the same program with less precise terms is also
well-typed, and terminates, respectively.

For illustration, consider the previous example from
Listing~\ref{introEg1}. Assume that the program is well-typed under a
gradual type system with $x : \texttt{bool}^H$ and $y :
\texttt{bool}^H$ as secret variables. When $x$ is \texttt{true}, the branch on
line~\ref{intr1} is taken and $y$ is assigned the value \texttt{true}
and has the label $H$. When $x$ is \texttt{false}, $y$ remains
\texttt{false}. This program is accepted by the security type system 
and dynamic information flow monitor, and runs to completion in all
possible executions. As per the \emph{static gradual guarantee}, the
program should also be well-typed if, for instance, $y$ had an
imprecise $?$ security level as shown in Listing~\ref{introEg2}. By 
the \emph{dynamic gradual guarantee}, the program in
Listing~\ref{introEg2} should run to completion at runtime, even with
the imprecise label for $y$ in all executions of the program. 

The gradual guarantees are important in the context of
information flow systems to show that the gradual security type system
is strictly more permissive than the static security type system and
the dynamic IFC monitor, while providing the same
guarantees. They mean that programmers need not worry about insufficient
  annotations causing their safe programs to be rejected by the type
  system, or worse, at runtime, which may lead to undesirable
  behavior. Concerning programmers with unnecessary annotations
  defeats the purpose of gradual typing, which is meant to
  alleviate the burden of annotation.

\subsection{Implicit Flows vs. Dynamic Gradual Guarantee}
\label{sec:motv}

The example in Listing~\ref{motvEg} illustrates how gradual
  typing semantics handle implicit flows.
  Assume that $x : \tbool^H$ is a secret variable while the
  security types of $y : \tbool^?$ and $z : \tbool^?$ are unknown at
  compile-time, and the security lattice is $\bot \labless L \labless
  H \labless \top$. Consider two runs of the program with different
  initial values of $x$. When $x$ is $\etrue$, the branch on
  line~\ref{motv1} is taken and $y$ is assigned the value $\efalse$.
  With gradual typing, the labels of $y$ are refined to $\{H, \top\}$.
  As $y$ is $\efalse$, the branch on line~\ref{motv2} is not taken and
  $z$ remains $\etrue$ with the set of labels $\{\bot, L, H,
  \top\}$. As $z$'s value is visible at $L$, the output on
  line~\ref{motv3} succeeds.  When $x$ is $\efalse$, the branch on
  line~\ref{motv1} is not taken and $y$ remains $\etrue$ with the set
  of labels $\{\bot, L, H, \top\}$.  As the $\pc$ on line~\ref{motv2}
  contains the set of labels $\{\bot, L, H, \top\}$, the assignment on
  line~\ref{motv2} succeeds, and $z$ becomes $\efalse$ while the set
  of labels remains $\{\bot, L, H, \top\}$. Again, as $z$'s value is
  visible at $L$, the output on line~\ref{motv3} succeeds. Thus, in
  the two runs of the program, different values of $z$ are output for
  different values of $x$, thereby leaking  $x$ to
  the adversary at level $L$. Here, the NSU mechanism does not apply, as
  the assignment to $y$ on line~\ref{motv1} is merely refining, not
  ``upgrading'', the label of $y$. If $y$'s label
  had been $L$, this program would have been rejected.

This program can be rejected by deploying a special monitoring rule
that preemptively aborts an assignment statement if there is a
possibility that the $\pc$ is not lower than or equal to the
variable's label, as deployed by $\m{GSL_{Ref}}$~\cite{toro2018}. In
  the above example, the assignment on line~\ref{motv1} will be
  aborted, because the $\pc$ is $H$, and $y$'s label could be $\bot$
  or $L$, which might leak information. This ensures noninterference,
  but unfortunately, the extra check does not retain the dynamic gradual
  guarantee. That is, enlarging the set of possible labels for
  the dynamic security type of $y$ 
  will cause the monitor to abort, which contradicts the dynamic
  gradual guarantee.
 
  %
\begin{lstlisting}[caption={\small Secure program violating gradual guarantee},
  label=motvEg2,xleftmargin=.3\columnwidth,
  float=t]
  $y := \etrue^?$
  if ($x$) then $y := \efalse^?$ @\label{motv21}@ 
  output($H$, $y$)@\label{motv23}@
\end{lstlisting}
%

 Consider the program in Listing~\ref{motvEg2} with the same 
security lattice as before such that the variable $x$ is
labeled $H$ and $y$'s label is not specified at compile-time. 
As the $\pc$ on line~\ref{motv21} is $H$ and the possible set of
labels of $y$ on line~\ref{motv21} is $\{\bot, L, H, \top\}$,
the monitor aborts the execution of the program when $x$ is
$\etrue$ to satisfy noninterference.
However, if $y$ was labeled $H$ or $\top$ at compile-time
instead of being $?$, the execution would have 
proceeded and output the value of $y$ to $H$. In other words, 
the larger set of possible labels for $y$ on line~\ref{motv21}
(because of the unknown label) causes the monitor to abort while
the precisely typed version of the program with $y:\tbool^H$
is accepted by the monitor, which violates the dynamic gradual
guarantee.

To tackle this problem, we leverage the static phase of the gradual
type system to determine the set of variables being written to in
different branches and loops, and \emph{refine} their possible security
labels to implement a monitoring strategy that preserves the
dynamic gradual guarantee.  At the branch on
line~\ref{motv21} in Listing~\ref{motvEg2}, we know that $y$ may be
written to inside the branch, therefore, we narrow the possibility
of the labels for $y$ to $\{H, \top\}$ as the first step of executing
the if statement, regardless of whether $x$ is $\etrue$ or $\efalse$.
This is very similar to how hybrid monitors stop implicit
leaks~\cite{russo2010,moore2011,hedin2015,bedford2017}.
We will discuss this further in Section~\ref{sec:monitor-semantics}.

\section{A Language with Gradual Security Types} 
\label{sec:language}

\iffull{
  \begin{figure}[!htbp]
    \centering
    \(
    \begin{array}{lcll}
      \textit{Labels} & \lab& \bnfdef & L \bnfalt \ldots \bnfalt H
      \\
      \textit{Raw values} & u &\bnfdef & n \bnfalt
                                         \etrue \bnfalt \efalse 
      \\ \textit{Values} & v & \bnfdef & u^\glab
      \\
      \textit{Types} & \tau & \bnfdef & \tbool \bnfalt \tpint
      \\
      \textit{Gradual labels} & g& \bnfdef & ? \bnfalt \lab
      \\
      \textit{Gradual types} & U & \bnfdef & \tau^\glab
      \\
      \textit{Typing Context} & \Gamma & \bnfdef & \cdot \bnfalt \Gamma, x: U
      \\\\
      \textit{Expressions} & e & \bnfdef &
                                           x \bnfalt v \bnfalt e_1 \bop e_2 \bnfalt e :: \tau^\glab
      \\
      \textit{Commands} & c & \bnfdef &
                                        \eskip \bnfalt c_1;c_2
                                        \bnfalt x\, := \,e  \bnfalt \eoutput (\lab, e) 
      \\ & & \bnfalt & \eif\ e\, \ethen\, c_1\, \eelse \, c_2
               \bnfalt \ewhile\ e\; \edo\; c
    \end{array}
    \)
    \caption{Syntax for the language $\wg$}
    \label{fig:app-syntax}
  \end{figure}

\begin{figure}
  \small
    \begin{mathpar}
      \inferrule*{\lab_1\labless\lab_2 }{ 
        \lab_1\clabless\lab_2
      }
      \and
      \inferrule*{ }{ 
        ?\clabless\glab
      }
      \and
      \inferrule*{ }{ 
        \glab\clabless ?
      }
      \and
      \inferrule*{\glab_1 \clabless \glab_2}{ 
        \tau^{\glab_1}\csubtp\tau^{\glab_2}
      }
    \end{mathpar}
    \begin{mathpar}
      \inferrule*{ }{ 
        \lab_1\cjoin\lab_2 = 
        \lab_1\labjoin\lab_2 
      }
      \and
      \inferrule*{
        \glab \neq \top
      }{ 
        ?\cjoin\glab = \,?
      }
      \and
      \inferrule*{
        \glab \neq \top
      }{ 
        \glab\cjoin \, ? = \,?
      }
      \and
      \inferrule*{
      }{ 
        ?\cjoin\top = \top
      }
      \and
      \inferrule*{
      }{ 
        \top\cjoin \, ? = \top
      }
    \end{mathpar}
    \caption{Operations on gradual labels and types}
    \label{fig:clabelop}
          \end{figure}
          
  \begin{figure}[h]
  
    \flushleft
    \noindent\framebox{$\Gamma \vdash e : U$}
    \begin{mathpar}
      \inferrule*[right=Bool]{
      }{ 
        \Gamma  \vdash b^\glab : \tbool^\glab
      }
      \and
      \inferrule*[right=Int]{
      }{ 
        \Gamma  \vdash n^\glab : \tpint^\glab
      }
      \and
      \inferrule*[right=Var]{
      }{ 
        \Gamma  \vdash x : \Gamma(x) 
      }
      \and
      \inferrule*[right=Cast]{
        \Gamma  \vdash e : U' \\
        U' \csubtp U
      }{ 
        \Gamma  \vdash e::U : U
      }
      \and
      \inferrule*[right=Bop]{
        \forall i\in\{1,2\}, ~
        \Gamma\vdash e_i: \tau^{\glab_i}
        \\ \glab = \glab_1\cjoin\glab_2
      }{ 
        \Gamma  \vdash e_1 \bop e_2 :  \tau^\glab
      }
    \end{mathpar}

    \noindent\framebox{$ \Gamma ; \;\gpc \vdash c $}
    \begin{mathpar}
      \inferrule*[right=Skip]{
      }{ 
        \Gamma ; \;\gpc 
        \vdash  \eskip
      }
      \and
      \inferrule*[right=Seq]{ 
        \Gamma;\; \gpc \vdash c_1
        \\ \Gamma;\; \gpc \vdash c_2
      }{ 
        \Gamma; \;\gpc  \vdash c_1; c_2
      }
      \and
      \inferrule*[right=Assign]{
        \Gamma \vdash x : {\tau^\glab}
        \\   \Gamma \vdash e : \tau^{\glab'}
        \\  \gpc\clabless \glab
        \\  \glab'\clabless \glab
      }{ 
        \Gamma ; \;\gpc 
        \vdash  x := e
      }
      \and
      \inferrule*[right=Out]{ 
        \Gamma \vdash e: \tau^\glab
        \\\\ \glab\clabless\lab
        \\ \gpc \clabless\lab
      }{ 
        \Gamma; \;\gpc  \vdash \eoutput(\lab, e)
      }
      \and
      \inferrule*[right=While]{
        \Gamma \vdash e :  \tbool^\glab
        \\\\    \Gamma ; 
        \;\gpc\cjoin \glab \vdash c
      }{ 
        \Gamma ; \;\gpc \vdash 
        \ewhile \; e\ \edo\  c
      }
      \and
      \inferrule*[right=If]{
        \Gamma \vdash e :  \tbool^\glab 
        \\\\    \Gamma ; \;\gpc\cjoin \glab \vdash c_1 
        \\ \Gamma ; \;\gpc\cjoin \glab \vdash c_2
      }{ 
        \Gamma ; \;\gpc \vdash 
        \eif \; e\ \ethen\  c_1\ \eelse\  c_2 
      }
    \end{mathpar}
    \caption{Typing rules for $\wg$}
    \label{fig:app-typing}
  \end{figure}
  
The syntax and typing rules for the language with gradual
security types ($\wg$) are standard as shown in Fig.~\ref{fig:app-syntax}
and Fig.~\ref{fig:app-typing}.

The partial-ordering ($\labless$) and join operation ($\labjoin$) on
security labels ($\lab$) extends to consistent ordering ($\clabless$) 
and consistent-join ($\cjoin$) to account for $?$, as shown in Fig.~\ref{fig:clabelop}.
The consistent subtyping relation is written as $\tau^{\glab_1} \csubtp \tau^{\glab_2}$.

Next,} 
\ifconf{
The syntax and typing rules for the language with gradual
security types ($\wg$) are standard and provided in
the full version of the paper~\cite{full-version}.
In this section,} we present the syntax and typing rules for our
language with gradual information flow types and evidence ($\wge$),
define the translation from $\wg$ to $\wge$, and
explain the operational semantics for our monitor.

\subsection{$\wge$}
\label{sec:while-evd}

\begin{figure}[t]
  \small
    \centering
    \(
    \begin{array}{lcll}
       \textit{Labels} & \lab& \bnfdef & L \bnfalt \ldots \bnfalt H
      \\
      \textit{Label-intervals} & \iota & \bnfdef & [\lab_\textit{low},
                                                            \lab_\textit{high}]
      \\
      \textit{Raw values} & u &\bnfdef & n \bnfalt
                                         \etrue \bnfalt \efalse 
      \\
      \textit{Values} & v & \bnfdef & (\iota\;u)^\glab
      \\
      \textit{Types} & \tau & \bnfdef & \tbool \bnfalt \tpint
      \\
      \textit{Gradual labels} & g& \bnfdef & ? \bnfalt \lab
      \\
      \textit{Gradual types} & U & \bnfdef & \tau^\glab
      \\
      \textit{Typing Context} & \Gamma & \bnfdef & \cdot \bnfalt \Gamma, x: U
      \\ 
      \textit{Cast evidence} & E & \bnfdef & (\iota_1, \iota_2)
      \\
      \textit{Expressions} & e & \bnfdef &
                                           x \bnfalt v \bnfalt e_1 \bop e_2\bnfalt  E^g\; e 
      \\
      \textit{Variable Set} & X & \bnfdef & \{x_1, \ldots, x_n\}
      \\
      \textit{Commands} & c & \bnfdef &
                                        \eskip \bnfalt c_1;c_2   \bnfalt x\, := \,e  \bnfalt \eoutput (\lab, e) 
      \\ & &\bnfalt & \eif^X e\, \ethen\, c_1\, \eelse \, c_2
               \bnfalt \ewhile^X e\, \edo\, c
    \end{array}
    \)
    \caption{Syntax for the language $\wge$}
    \label{fig:evidence-syntax}
    \vspace*{-2mm}
  \end{figure}

The syntax of $\wge$ is shown in Fig.~\ref{fig:evidence-syntax}.
Gradual types, $U$, consist of a type
($\tbool$, or $\tpint$) and a gradual security label, $g$. This label
is either a static security label, denoted $\lab$; or an imprecise
dynamic label, denoted $?$.  As is standard,  $\lab$ is drawn from
$\mi{Labs}$, a set of labels, which is a part of a security lattice
$\mathcal{L} = (\mi{Labs}, \labless)$. Here $\labless$ is a partial
order between labels in $\mi{Labs}$.
We commonly use the label $H$ to indicate
secret, $L$ to indicate public data, and $L \labless H$.

\ifconf{
The partial-ordering ($\labless$) and join operation ($\labjoin$) on
security labels ($\lab$) extends to consistent ordering ($\clabless$) 
and consistent-join ($\cjoin$) to account for $?$, as shown in Fig.~\ref{fig:clabelop}.
The consistent subtyping relation is written as $\tau^{\glab_1} \csubtp \tau^{\glab_2}$.

\begin{figure}
  \small
    \begin{mathpar}
      \inferrule*{\lab_1\labless\lab_2 }{ 
        \lab_1\clabless\lab_2
      }
      \and
      \inferrule*{ }{ 
        ?\clabless\glab
      }
      \and
      \inferrule*{ }{ 
        \glab\clabless ?
      }
      \and
      \inferrule*{\glab_1 \clabless \glab_2}{ 
        \tau^{\glab_1}\csubtp\tau^{\glab_2}
      }
    \end{mathpar}
    \begin{mathpar}
      \inferrule*{ }{ 
        \lab_1\cjoin\lab_2 = 
        \lab_1\labjoin\lab_2 
      }
      \and
      \inferrule*{
        \glab \neq \top
      }{ 
        ?\cjoin\glab = \,?
      }
      \and
      \inferrule*{
        \glab \neq \top
      }{ 
        \glab\cjoin \, ? = \,?
      }
      \and
      \inferrule*{
      }{ 
        ?\cjoin\top = \top
      }
      \and
      \inferrule*{
      }{ 
        \top\cjoin \, ? = \top
      }
    \end{mathpar}
        \vspace*{-2mm}
    \caption{Operations on gradual labels and types}
    \label{fig:clabelop}
            \vspace*{-2mm}
  \end{figure}
}
\begin{figure*}
  \small
    \begin{mathpar}
      \inferrule*{ }{\gamma(?) = [\bot, \top]}
      \and
      \inferrule*{ }{\gamma(\lab) = [\lab, \lab]}
      \and
      \inferrule*{\lab_l \labless \lab_r}{\m{valid}([\lab_l, \lab_r])}
      \and
      \inferrule*{
        \lab_2\labless \lab_1
        \\ \lab'_1\labless \lab'_2
      }{
        [\lab_1, \lab'_1] \gsubtp [\lab_2, \lab'_2]
      }
      \and
      \inferrule*{
        \lab'_1\labless \lab_2
      }{
        [\lab_1, \lab'_1] \labless [\lab_2, \lab'_2]
      }
      \and
      \inferrule*
      {
        \lab_{1l} \labless \lab_{1r}\labmeet\lab_{2r}
        \\ \lab_{2l}\labjoin\lab_{1l} \labless \lab_{2r}
      }{
        \refineof([\lab_{1l}, \lab_{1r}], [\lab_{2l}, \lab_{2r}]) =([\lab_{1l},\lab_{1r}\labmeet\lab_{2r}], [\lab_{2l}\labjoin\lab_{1l},\lab_{2r}])
      }
      \and
      \inferrule*
      {
        (\lab_{1l} \nlabless \lab_{1r}\labmeet\lab_{2r}) \vee (\lab_{2l}\labjoin\lab_{1l} \nlabless \lab_{2r})
      }{
        \refineof([\lab_{1l}, \lab_{1r}], [\lab_{2l}, \lab_{2r}])= \eundef
      }
      \and
      \inferrule*{
        \iota_1 = [\lab_1, \lab'_1]
        \\ \iota_2 =  [\lab_2, \lab'_2]
      }{
        \iota_1\labjoin\iota_2 = [\lab_1\labjoin\lab_2, \lab'_1\labjoin\lab'_2]
      }
      \and
      \inferrule*{
        \iota_1 \sqsubseteq \gamma(g_1)
        \\ \iota_2 \sqsubseteq \gamma(g_2)
        \\ g_1\clabless g_2
      }{
        (\iota_1, \iota_2) \vdash \tau^{g_1} \csubtp \tau^{g_2} 
      }
    \end{mathpar}
    \begin{mathpar}
      \inferrule*
      {
        \iota_1 = [\lab_{1l}, \lab_{1r}]
        \\\iota_2 = [\lab_{2l}, \lab_{2r}]
      }{
        \iota_1\bowtie \iota_2 = [\lab_{1l}\labjoin\lab_{2l}, \lab_{1r}\labmeet\lab_{2r}]
      }
      \and
      \inferrule*{\iota'' = (\iota'\bowtie\iota) 
        \\ \iota'' \sqsubseteq \gamma(\glab)}{ 
        \iota'\bowtie (\iota\;u)^\glab = (\iota''\; u)^\glab
      }
      %
      %
      %
      %
      %
      \and
      \inferrule*
      {
        \refineof(\iota_1\bowtie\iota, \iota_2) = (\iota'_1, \iota'_2)
      }{
        \iota \bowtie  (\iota_1, \iota_2) = \iota'_2
      }
      \and
      \inferrule*
      {
        \refineof(\iota_1\bowtie\iota, \iota_2) = \eundef
      }{
        \iota \bowtie  (\iota_1, \iota_2) = \eundef
      }
    \end{mathpar}
    \vspace*{-2mm}
    \caption{Label and label-interval operations}
    \label{fig:label-op}
    \vspace*{-2mm}
  \end{figure*}
  
Recall that examples in Section~\ref{sec:overview-g} use a set of
possible security labels for preventing information leaks. This is
\emph{evidence} attesting to the validity of gradual labels. We use an interval of
labels, representing the lowest and the highest possible label, as the 
refinement only narrows the
interval, similar to $\m{GSL_{Ref}}$~\cite{toro2018}.

There are two types of evidence: a label-interval, $\iota$, 
that justifies the dynamic label $?$; and a pair of intervals
or the cast evidence, $E {=} (\iota_1, \iota_2)$,
that justifies the consistent subtyping 
relation between two gradual types used in casts. Intuitively, $\iota$ represents the range of possible static
labels that would allow the program to type-check. For static labels
$\lab$, the evidence is $[\lab,\lab]$. An interval $[\lab_l,
\lab_r]$ is valid iff $\lab_l \labless \lab_r$. The rest of the paper only 
considers valid intervals. Operations leading
to an invalid interval are aborted. 

A value in $\wge$ is a raw value with an interval of possible security
labels for the gradual label. Raw values are integer  
constants $n$, or boolean values. Expressions include values,
variables, casts, and binary operations on expressions.
An explicit type cast is written $E^g \; e$, where $E$ is the evidence
justifying the type cast and $g$ is the label of the resulting type.

Commands include $\eskip$, sequencing, assignments, branches, loops,
and outputs. This language does not have higher-order stores, so the
left-hand side of the assignment is always a global variable. 
The output command outputs a value at a fixed security label $\lab$. We include this
command mainly to have a clear statement of the system's observable
behavior, so we do not allow output at the imprecise label $?$.
To prevent implicit leaks, we include a \emph{write-set} of variables,
$X$, which takes into account variable writes in both
conditional branches and the loop body. 

\Paragraph{Label-interval operations} 
We first define functions and operations on
the label-intervals that are used by the typing rules and
operational semantics (shown in Fig.~\ref{fig:label-op}). 
The function
$\gamma(g)$ returns the maximum possible label-interval for the
gradual label $g$, assuming $\bot$ and $\top$ are the least and the
greatest element in the lattice, respectively.
Label-intervals form a lattice with the partial ordering 
defined as $\iota_1 \gsubtp \iota_2$.
Here, $\iota_1$ is said to be more precise than $\iota_2$.
The label-intervals are refined throughout the execution of the
program; i.e., they get more precise. Consider
the example in Listing~\ref{introEg3}. Assume that the security
lattice contains two elements $L$ and $H$ such that $L \labless
H$. Initially, $y$ has a label $?$ with the evidence $[L, H]$
indicating that any of the two labels are possible. If
$x:\tbool^H$, then the only possible 
label for $y$ that allows assignment on line~\ref{intr3} is $H$. Thus,
the evidence for the label on $y$ is refined to $[H,H]$, which makes
the program-context's evidence on line~\ref{intr4} $[H, H]$,
disallowing assignments to $L$-labeled variables.    

We define $\iota_1\labless \iota_2$ to mean for every security label
in $\iota_1$, all labels in $\iota_2$ are at higher or equal
positions in the security lattice; and for every security label in
$\iota_2$, all labels in $\iota_1$ are at lower or equal positions
in the security lattice. 
Even though this relation is not used in our typing rules or
operational semantics, it is an invariant that must hold on the results of the binary
label-interval operations used by the noninterference proofs.
The function $\refineof(\iota_1, \iota_2)$ returns the largest
sub-intervals of $\iota_1$ and $\iota_2$ ($\iota_1'$ and $\iota_2'$,
respectively) such that  $\iota_1' \labless 
\iota_2'$. If the relation does not hold between $\iota_1'$ and $\iota_2'$, 
the function returns $\eundef$.

The join of label-intervals is defined as $\iota_1 \labjoin \iota_2$. Note that
this is not to be confused with the join operation in the lattice that the intervals
form. The join of the label-intervals computes the interval
corresponding to all possible joins of security labels in those intervals. 

$\iota_1 \bowtie \iota_2$ computes the intersection of the
intervals  $\iota_1$ and $\iota_2$.
$\iota'\bowtie (\iota\;u)^\glab$ merges the labels for a value. 
$\iota \bowtie  (\iota_1, \iota_2)$ refines $\iota_2$ based on the
intersection of $\iota$ and $\iota_1$.

\Paragraph{Evidence-based consistent subtyping}
Next, we define consistent subtyping relations for both gradual labels
and types as supported by label-intervals.
The consistent subtyping relation between two gradual types is written
as $(\iota_1, \iota_2) \vdash \tau^{g_1} \csubtp \tau^{g_2} $ (defined in
Fig.~\ref{fig:label-op}). In this relation, $\iota_1$, resp. $\iota_2$
represents the set of possible labels for $g_1$, resp. $g_2$,
and $g_1\clabless g_2$. The
evidence $(\iota_1,\iota_2)$ is to justify the consistent
security label partial ordering relation between the labels of the
gradual types. 
Note that we do not have $\iota_1\labless \iota_2$ in the premise. The
reason is that $(\iota_1, \iota_2) \vdash \tau^{g_1} \csubtp \tau^{g_2}$ is used
to type runtime terms; even though $\iota_1\labless \iota_2$ holds
initially, as label-intervals are refined from $\iota_i$ to
$\iota'_i$, we cannot guarantee that $\iota'_1\labless \iota'_2$
hold. This will break preservation proofs. 
It is not the gradual type system's job to ensure all execution paths
are secure.  Instead, the runtime semantics will refine the intervals and
abort the computation if necessary when a term is evaluated. 

%
%
%

\begin{figure*}
    \flushleft
    \noindent\framebox{$\Gamma \vdash e : U$}
    \begin{mathpar}
      \inferrule*[right=G-Const]{
        \iota \gsubtp \gamma(\glab)
      }{ 
        \Gamma  \vdash (\iota\; u)^\glab : \Gamma(u)^{\glab}
      }
      \and
      \inferrule*[right=G-Var]{
      }{ 
        \Gamma  \vdash x : \Gamma(x) 
      }
      \and
      \inferrule*[right=G-Bop]{
        \forall i\in\{1,2\}, ~
        \Gamma\vdash e_i: \tau^{\glab_i}
        \\\\ \glab = \glab_1\cjoin\glab_2
      }{ 
        \Gamma  \vdash e_1 \bop e_2 :  \tau^\glab
      }
      \and
      \inferrule*[right=G-Cast]{
        \Gamma\vdash e: \tau^{\glab_1}
        \\\\ E \vdash \tau^{\glab_1} \csubtp \tau^g
      }{ 
        \Gamma  \vdash E^g\; e: \tau^g
      }
    \end{mathpar}

    \noindent\framebox{$ \Gamma ; \iota_\pc\;\gpc \vdash c $}
    \begin{mathpar}
      \inferrule*[right=G-Skip]{
      }{ 
        \Gamma ; \iota_\pc\;\gpc 
        \vdash  \eskip
      }
      \and
      \inferrule*[right=G-While]{
        \Gamma \vdash e :  \tbool^\glab
        \\ X = \wtsetof(c)
        \\ \iota_c = \gamma(\glab)
        \\    \Gamma ; 
        \iota_\pc\labjoin\iota_c\;\gpc\cjoin \glab \vdash c
      }{ 
        \Gamma ; \iota_\pc\;\gpc \vdash 
        \ewhile^X \; e\ \edo\  c
      }
      \and
      \inferrule*[right=G-Seq]{ 
        \Gamma;\iota_\pc\; \gpc \vdash c_1
        \\\\   \Gamma;\iota_\pc\; \gpc \vdash c_2
      }{ 
        \Gamma; \iota_\pc\;\gpc  \vdash c_1; c_2
      }
      \and
      \inferrule*[right=G-Assign]{
        \Gamma \vdash x : \tau^{\glab}
        \\   \Gamma \vdash e : \tau^{\glab}
        \\\\  \iota_\pc \gsubtp \gamma(\gpc)
        \\  \gpc \clabless \glab
      }{ 
        \Gamma ; \iota_\pc\;\gpc 
        \vdash  x:= e
      }
      \and
      \inferrule*[right=G-Out]{ 
        \Gamma \vdash e: \tau^{\lab} 
        \\ \iota_\pc \gsubtp \gamma(\gpc)
        \\ \gpc \clabless\lab
      }{ 
        \Gamma; \iota_\pc\;\gpc  \vdash \eoutput (\lab, e)
      }
      \and
      \inferrule*[right=G-If]{
        \Gamma \vdash e :  \tbool^\glab
        \\ \iota_c = \gamma(\glab)
        \\ X = \wtsetof(c_1)\cup \wtsetof(c_2)
        \\    \forall i = \{1,2\},\ \Gamma ; 
        \iota_\pc\labjoin\iota_c\;\gpc\cjoin \glab \vdash c_i 
      }{ 
        \Gamma ; \iota_\pc\;\gpc \vdash 
        \eif^X \; e\ \ethen\  c_1\ \eelse\  c_2 
      }
    \end{mathpar}
    \vspace*{-3mm}
    \caption{Typing rules for $\wge$.
      $\Gamma(u)$ maps a constant to its type (e.g. $n$ to $\tpint$, and $\etrue$ to $\tbool$)}.
    \label{fig:evidence-gradual-typing}
    \vspace*{-2mm}
  \end{figure*}
  
\Paragraph{Typing rules} 
Expressions and commands with evidence are typed using rules shown in 
Fig.~\ref{fig:evidence-gradual-typing}. 

\textsc{G-Cast} casts an expression of type $U_1$ to
$U_2$, if the cast evidence $E$ shows that $U_1$ is a consistent
subtype of $U_2$. 

We augment the command typing with an interval for the gradual $\pc$
label; $\iota_\pc$ is the range of possible static labels for the
$\gpc$.  The rules are similar to the ones in the original type-system
except for the use of evidence for consistent ordering between the
gradual labels. An exception is the use of $\wtsetof(c)$ that
returns the set of variables being updated or assigned to in the
command $c$. $\wtsetof$ is
 straightforwardly inductively defined over the structure of $c$ and
is shown below:
\[\small
  \begin{array}{l@{\quad}l}
    \mi{WtSet}(\m{skip}) = \emptyset & \mi{WtSet}(\m{output}(\lab, e)) = \emptyset \\ 
    \mi{WtSet}(x := e) = \{x\} &       \mi{WtSet}(\ewhile\; e\; \edo\ c) = \mi{WtSet}(c)    \\
    \multicolumn{2}{l}{\mi{WtSet}(c_1; c_2) = \mi{WtSet}(c_1)\cup  \mi{WtSet}(c_2)} \\
    \multicolumn{2}{l}{\mi{WtSet}(\eif\; e\; \ethen\ c_1\ \eelse\; c_2) = \mi{WtSet}(c_1)\cup  \mi{WtSet}(c_2)}
  \end{array}
\]
Further, \rulename{G-Assign} and
\rulename{G-Out} do not consider expression subtyping 
and instead rely on the casts
inserted by translation.

\begin{figure*}
    \flushleft
    \noindent\framebox{$\Gamma \vdash e \leadsto e' : U$}
    \begin{mathpar}
      \inferrule*[right=T-Bool]{
        \iota = \gamma(\glab)
      }{ 
        \Gamma  \vdash b^\glab \leadsto (\iota\;b)^\glab : \tbool^\glab
      }
      \and
      \inferrule*[right=T-Int]{
        \iota = \gamma(\glab)
      }{ 
        \Gamma  \vdash n^\glab \leadsto (\iota\;n)^\glab : \tpint^\glab
      }
      \and
      \inferrule*[right=T-Var]{
        \Gamma(x) = \tau^\glab
      }{ 
        \Gamma  \vdash x \leadsto x : \tau^\glab
      }
      \and
      \inferrule*[right=T-Bop]{
        \forall i\in\{1,2\}, ~
        \Gamma\vdash e_i \leadsto e_i' : \tau^{\glab_i}
        \\\\ \glab = \glab_1\cjoin\glab_2
      }{ 
        \Gamma  \vdash e_1 \bop e_2 \leadsto e_1' \bop e_2' : \tau^\glab
      }
      \and
      \inferrule*[right=T-Cast]{
        \Gamma  \vdash e \leadsto e' : \tau^{\glab_1} 
        \\ \glab_1 \clabless \glab \\\\ 
        (\iota_1, \iota_2) = \refineof(\gamma(\glab_1), \gamma(\glab)) 
      }{ 
        \Gamma  \vdash (e :: \tau^\glab) \leadsto (\iota_1, \iota_2)^g\; e' : \tau^\glab 
      }
    \end{mathpar}

    \noindent\framebox{$ \Gamma ; \;\gpc \vdash c \leadsto c'$}
    \begin{mathpar}
      \inferrule*[right=T-Assign]{
        \Gamma(x) = \tau^\glab
        \\   \Gamma \vdash e \leadsto e' : \tau^{\glab'}
         \\\\ \glab' \clabless \glab \\ 
        (\iota_1, \iota_2) = \refineof(\gamma(\glab'), \gamma(\glab))
      }{ 
        \Gamma ; \;\gpc 
        \vdash  x := e \leadsto x := (\iota_1, \iota_2)^g\;e' 
      }
      \and
      \inferrule*[right=T-Out]{ 
        \Gamma \vdash e \leadsto e' : \tau^\glab
        \\  \glab \clabless \lab
        \\\\ (\iota_1, \iota_2) = \refineof(\gamma(\glab), \gamma(\lab)) 
      }{ 
        \Gamma; \;\gpc  \vdash \eoutput(\lab, e) \leadsto \eoutput(\lab,  
        (\iota_1, \iota_2)^g\;e')
      }
      \and
      \inferrule*[right=T-While]{
        \Gamma \vdash e \leadsto e' : \tbool^\glab
        \\    \Gamma ; \;\gpc\cjoin \glab \vdash c \leadsto c'
        \\\\ X = \wtsetof(c')
      }{ 
        \Gamma ; \;\gpc \vdash 
        \ewhile \; e\ \edo\  c \leadsto \ewhile^X \; e'\ \edo\  c'
      }
      \and
      \inferrule*[right=T-If]{
        \Gamma \vdash e \leadsto e' : \tbool^\glab
        \\    \Gamma ; \;\gpc\cjoin \glab \vdash c_1 \leadsto c_1'
        \\\\    \Gamma ; \;\gpc\cjoin \glab \vdash c_2 \leadsto c_2'
        \\ X = \wtsetof(c_1') \cup \wtsetof(c_2')
      }{ 
        \Gamma ; \;\gpc \vdash 
        \eif \; e\ \ethen\  c_1\ \eelse\  c_2  \leadsto \eif^X \; e'\ \ethen\  c_1'\ \eelse\  c_2'
      }
    \end{mathpar}
    \vspace*{-2mm}
    \caption{Translation from $\wg$ to $\wge$}
    \label{fig:translation}
    \vspace*{-2mm}
\end{figure*}
\subsection{From $\wg$ to $\wge$}
The programs are written in $\wg$, the language without
evidence, which is then translated to
$\wge$. The
explicit casts for
\rulename{Assign} and \rulename{Out} are automatically inserted
 to account for the subtyping
of expressions. We show the interesting rules for
translating $\wg$ expressions and commands to
$\wge$ with evidence insertion in Fig.~\ref{fig:translation}.
\rulename{T-Assign} inserts a cast for the expression $e$ to have
  the same label as that of $x$. For example, $x^L := y^?$ 
  is rewritten to $x:=(\refineof([\bot,\top], [L,L])^L) y$. Similarly,
  \rulename{T-Out} casts the expression $e$ to the channel level $\ell$.

The other interesting translation rules
are \rulename{T-If} and \rulename{T-While}, which insert a write-set
$X$ that includes the set of all variables that might be written to in
both the branches and the loop body. 
We prove that any well-typed term in $\wg$ is translated to another
well-typed term in $\wge$.
\iffull{
The lemmas and their proofs are described in Appendix~\ref{app:trans-lemmas}.
}

\subsection{Operational Semantics}
\label{sec:monitor-semantics}

\noindent{\bf Runtime constructs:}
We define additional runtime constructs for our
semantics, shown below. The store, $\delta$, maps variables to values
with their gradual labels and intervals. The gradual labels of the
variables are suffixed on the values for the purpose of evaluation.
$\kappa$ is a stack of $\pc$ labels, each of which is a gradual label,
$g_\pc$, with the corresponding interval, $\iota_\pc$. 
\[
  \begin{array}{l@{~~~}lcl}\small
    \textit{Store} & \delta & \bnfdef & \cdot\, \bnfalt 
                                        \delta, x \mapsto v
    \\
    \textit{PC Stack} & \kappa & \bnfdef & \emptyset \bnfalt
                                           (\iota_\pc \, \gpc) \bnfalt \kappa_1\rhd\kappa_2
    \\
    \textit{Actions} & \alpha & \bnfdef & \cdot\, \bnfalt (\lab, v)
    \\
    \textit{Commands} & c & \bnfdef & \cdots \bnfalt \{c\} \bnfalt
                                      \eif \ e\, \ethen\, c_1\, \eelse
                                      \, c_2
  \end{array}
\]
The stack is
used for evaluating nested if statements. The operation $\kappa_1 \rhd
\kappa_2$ indicates that $\kappa_1$ is on top of $\kappa_2$ in the stack.
$\alpha$ is an action, which may be silent or a labeled output.
We add two runtime commands. $\{c\}$ is used in evaluating if
statements. The curly braces help the monitor keep track of the scope of a
branch. 
The if statement without the write set is used in an intermediate evaluation state.

\begin{figure}
    \flushleft
    \noindent{\framebox{$ \delta \sepidx{} e \evalsto v$}}
    \begin{mathpar}
      \inferrule*[right=M-Const]{
      }{
        \delta\sepidx{} (\iota\, u)^\glab \evalsto  (\iota\, u)^\glab
      }
      \and
      \inferrule*[right=M-Var]{
      }{
        \delta\sepidx{} x \evalsto  \delta(x)
      }
      \and
      \inferrule*[right=M-Bop]{
        \forall i \in \{1,2\},~\delta\sepidx{} e_i \evalsto (\iota_i\, u_i)^{\glab_i} \\
        \iota = \iota_1\labjoin\iota_2 \\ \glab = \glab_1\cjoin\glab_2 \\ u = u_1\bop u_2
      }{
        \delta\sepidx{} e_1\bop e_2 \evalsto   (\iota\;u)^{\glab}
      }
      \and
      \inferrule*[right=M-Cast]{
        \delta\sepidx{} e \evalsto (\iota\, u)^{\glab} \\
        \iota' = \iota \bowtie E
      }{
        \delta\sepidx{} E^{\glab'}\; e
        \evalsto  (\iota'\; u)^{\glab'}
      }
      \and
       \inferrule*[right=M-Cast-Err]{
        \delta\sepidx{} e \evalsto (\iota\, u)^{\glab} \\
        \iota \bowtie E = \eundef
      }{
        \delta\sepidx{} E^{\glab'}\; e
        \evalsto  \eabort
      }
    \end{mathpar}
        \vspace*{-2mm}
    \caption{Monitor semantics for expressions}
    \label{fig:mon-semantics-exp}
    \vspace*{-2mm}
  \end{figure}

\begin{figure}
    \flushleft
    \noindent{\framebox{$ \kappa,\delta\sepidx{} c \stackrel{\alpha}{\stepsto} \kappa',\delta'\sepidx{} c'$}}
    \begin{mathpar}
      \inferrule*[right=M-Seq]{
        \kappa,\delta\sepidx{} c_1  \stackrel{\alpha}{\stepsto}  \kappa',\delta' \sepidx{} c'_1
      }{
        \kappa,\delta\sepidx{} c_1; c_2
        \stackrel{\alpha}{\stepsto}  \kappa',\delta'\sepidx{} c'_1; c_2
      }
      \and
      \inferrule*[right=M-Pc]{
        \kappa,\delta\sepidx{} c 
        \stackrel{\alpha}{\stepsto}  \kappa',\delta'\sepidx{}c'
      }{
        \kappa\rhd \iota_\pc\;\gpc,\delta\sepidx{} \{c\} 
        \stackrel{\alpha}{\stepsto}  \kappa'\rhd\iota_\pc\;\gpc,\delta'\sepidx{}\{c'\}
      }
      \and
      \inferrule*[right=M-Pop]{
      }{
        \iota_\pc\;\gpc\rhd\kappa,\delta\sepidx{} \{\eskip\} 
        \stepsto  \kappa,\delta\sepidx{} \eskip
      }
      \and
      \inferrule*[right=M-Skip]{
      }{
        \kappa,\delta\sepidx{} \eskip; c
        \stepsto  \kappa,\delta\sepidx{} c
      }
      \and
      \inferrule*[right=M-Assign]{
        \delta\sepidx{} e \evalsto v 
        \\ v' = \reflvof{}(\iota_\pc, v)
        \\ v'' = \updval(\labof(\delta(x)), v')
      }{
        \iota_\pc\;\gpc,\delta \sepidx{} x:= e
        \stepsto  \iota_\pc\;\gpc, 
        \delta[x\mapsto v'']\sepidx{} \eskip
      }
      \and
      \inferrule*[right=M-Out]{
        \delta\sepidx{} e \evalsto v 
        \\ v' = \reflvof{}(\iota_\pc, v)
        \\ v'' = \updval(\labof(\delta(x)), v')
      }{
        \iota_\pc\;\gpc,\delta \sepidx{} \eoutput(\lab, e)
        \stackrel{(\lab, v'')}{\stepsto}  \iota_\pc\;\gpc, 
        \delta\sepidx{} \eskip
      }
      \and
      \inferrule*[right=M-If]{
        \iota'_\pc = \iota_\pc\labjoin \iota
        \\ \gpc' =\gpc\cjoin\glab \\\\
        c_i = c_1~\mbox{if}~b = \etrue\\
        c_i = c_2~\mbox{if}~b = \efalse
      }{
        \iota_\pc\;\gpc,\delta\sepidx{} \eif\; (\iota\;b)^\glab\ \ethen\ c_1\ \eelse\ c_2
        \stepsto \\\\ \iota'_\pc\;\gpc'\rhd \iota_\pc\;\gpc, \delta \sepidx{}  \{c_i\}
      }
      \and
      \inferrule*[right=M-If-Refine]{
        \delta\sepidx{} e \evalsto v \\
        \delta' = \rflof{}(\delta, X, \iota_\pc\labjoin \labof(v))
      }{
        \iota_\pc\;\gpc,\delta\sepidx{} \eif^X\;e\ \ethen\ c_1\ \eelse\ c_2
        \stepsto
        \\\\
        \iota_\pc\;\gpc, \delta' \sepidx{} \eif\;v\ \ethen\ c_1\ \eelse\ c_2
      }
      \and
      \inferrule*[right=M-While]{
      }{
        \iota_\pc\;\gpc,\delta\sepidx{} \ewhile^X\;e\ \edo\ c
        \stepsto   \iota_\pc\;\gpc, \delta \sepidx{} \\\\
        \eif^X\;e\ \ethen\ (c; \ewhile^X\;e\ \edo\ c)\ \eelse\ \m{skip}
      }
    \end{mathpar}
        \vspace*{-2mm}
    \caption{Monitor semantics for commands}
    \label{fig:mon-semantics-cmd}
    \vspace*{-2mm}
  \end{figure}

\begin{figure}[tb]
    \begin{mathpar}
      \inferrule*{ 
        \refineof(\iota_c,\iota) = (\_,\iota')
      }{
        \reflvof{}(\iota_c, (\iota\;u)^\glab) = (\iota'\;u)^\glab
      }
      \and
      \inferrule*{ }{
        \rflof{}(\delta, \cdot, \iota) = \delta
      }
      \and 
      \inferrule*{  
        \delta' = \rflof{}(\delta, X, \iota)
        \\ v' = \reflvof{}(\iota, v) \neq \eundef
      }{
        \rflof{}((\delta, x\mapsto v), (X, x), \iota) 
        = \delta', x \mapsto v'
      }
      \and
       \inferrule*
      {
         \lab_{1l}\labjoin\lab_{2l} \labless \lab_{1r}
      }{
        \newlab([\lab_{1l}, \lab_{1r}], [\lab_{2l}, \lab_{2r}]) =
         [\lab_{1l}\labjoin\lab_{2l}, \lab_{1r}]
      }
       \and
\inferrule*{ }
 {\updval(\iota_o, (\iota_n\, u_n)^\glab) =  (\newlab(\iota_o, \iota_n) u_n)^\glab 
}
\end{mathpar}
    \vspace*{-2mm}
    \caption{Label-interval operations for the monitor}
    \label{fig:store-aux-single}
    \vspace*{-2mm}
  \end{figure}

\Paragraph{Expression monitoring semantics}
Our monitoring semantics for expressions is of the form
$\delta \sepidx{} e \evalsto e'$ as shown in
Fig.~\ref{fig:mon-semantics-exp}.
Rules \rulename{M-Const} and \rulename{M-Var} are standard. To perform a
binary operation on two values, the operation is performed on the raw
values, and the join of their associated intervals and gradual labels
is assigned to the computed value. \rulename{M-Cast} refines a 
value's interval according to the cast evidence. If the refinement is not
valid, the execution aborts (\rulename{M-Cast-Err}). Note that none of
these operations modify the gradual label of the variable (the type of
store locations remain the same); the operations only refine the
intervals of the gradual label.

\Paragraph{Commands monitoring semantics}
Our monitoring semantics for commands is summarized in
Fig.~\ref{fig:mon-semantics-cmd} and has the form 
$\kappa,\delta\sepidx{} c \stackrel{\alpha}{\stepsto} \kappa',\delta'\sepidx{}
c'$. 
Additional 
  set of rules where the monitor aborts can be found in 
  Fig.~\ref{fig:app-mon-semantics-cmd} in the Appendix.
Rules \rulename{M-Pc} and \rulename{M-Pop} manage
commands running in branches or loops. \rulename{M-Pop} pops the
top-most $\pc$ label from the stack, indicating the end of the branch
or loop. We use braces around a command, $\{ c \}$ to indicate that
$c$ is executing in a branch or loop. Such a command is run taking
into account only the specific branch's $\pc$ stack. When 
the command execution finishes, the braces are removed and the current 
$\pc$ label is popped off the stack. 

Rule~\rulename{M-Assign} updates the
label-interval of the value being assigned based on the
assignment's context $\iota_\pc$ to prevent information leaks. The resulting label-interval is further restricted using
the existing label-interval of the variable to ensure that we only
refine the set of possible labels. 
The function $\labof(v)$ 
returns the label-interval 
of $v$.  Formally:
$$\labof((\iota\;u)^\glab) = \iota$$

%
The function $\reflvof{}$ 
refines the lower bound of a value's label-interval and raises it 
based on the interval $\iota_\pc$.
The $\updval$ function is defined in Fig.~\ref{fig:store-aux-single} and uses the
interval $\newlab$ operation. We
  raise the lower bound of the existing interval based on the interval
  of the newly computed value.
Note that $\newlab$ differs from $\refineof$ in the upper-bound of the
computed interval. 
If either of these functions return
an invalid interval, the execution aborts.
 The interval need not be checked against the reference's
 label-interval because the inserted cast would have ruled out unsafe
 programs earlier.  Consider the
 following program, where $\delta= [x\mapsto([L,L]\, 3)^L,
   y\mapsto([H,H]\,5)^H]$
\[x := ([\bot,\top],[L,L])\,([H,H],[\bot,\top])\,y\]
This program tries to cast an $H$ value to ?, then back to L, which is accepted by the type system.
The expression to be assigned to $x$ is first evaluated to
$([\bot,\top],[L,L])\,([H,H],[\bot,\top])\,([H,H]\,5)^H$, then to
$([\bot,\top],[L,L])\,([H,H]\,5)^H$, then aborts, because
$[H,H]\bowtie([\bot,\top],[L,L])$ evaluates to $\refineof([H,H], [L,L])=\eundef$. 


We explain the assignment rule via examples. Consider a three-point
lattice $L\labless M\labless H$, the following command $ x := [H,
  H]\;5^?$, and two stores $\delta_1 = x\mapsto [M, H]\;1^?$ and
$\delta_2 = x\mapsto [L, M]\;2^?$. Assume the following current 
$\pc$-interval $\iota_\pc = [L, H]$.  Here, $v = [H, H]\;5^?$.  The second
premise further refines the interval of $v$ to make sure that the $\pc$
context is lower than or equal to the interval of the value to be
written. This is to prevent low assignments in a high context. For this
example, $\refineof([L,H], [H,H]) = ([L,H], [H,H])$, so the
intervals remain the same.  Next, we narrow down the possible
  label set using the existing value's interval. This is to adhere to
  our design choice that we do not change the type of the variables
  and only narrow down the label choices
  (Section~\ref{sec:overview-g}).  
For $\delta_1$, $v'' =  [H,
H]\;5^?$, so now $x$ stores a secret value with label $H$. For store
$\delta_2$, $\newlab([L,M], [H,H])$ is not defined and the monitor
aborts.

When the old value in the store has a label-interval
  that is lower than the label-interval of the
  new value to be stored, the monitor aborts; for instance, under the store where
  $x\mapsto([L,L] 0)^?$, the monitor aborts the execution of
  $x:=([H,H]1 )^?$, as $\newlab([L,L],[H,H])$ is not defined.
  This is also consistent with our design choice to only
  refine the label set, not update it.

\rulename{M-Out} makes similar
comparisons as \rulename{M-Assign} to ensure that the output is permitted.
Rule \rulename{M-If} is standard. 
The $\pc$ label is determined by joining the current $\pc$ with the
gradual label and interval of the branch-predicate's value. Here, the
$\pc$ stack grows and the branch is placed in the scoping braces.
The rule for \texttt{while} reduces it to \texttt{if}. 
Rule \rulename{M-If-Refine} is the key for
preventing implicit leaks. We refine the intervals for variables in 
both branches according to the write set, $X$, which contains the set
of all variables being updated in either one of the two
branches. Fig.~\ref{fig:store-aux-single} includes the auxiliary  
definitions for refining the intervals of variables in a
write set.
The function $\rflof{}$ refines the
label-interval of values in the store and is defined inductively.
Here, it is used to refine the intervals of the variables in
the write set to be at least as high as the lower label in the interval of
the current $\pc$. 
%
When the functions $\rflof{}$ return $\eundef$, the
execution aborts. 

\Paragraph{Example}
Below, we define two initial memories, $\delta_t$ maps $x$
to $\etrue$ and $\delta_f$ maps $x$ to $\efalse$. Both $y$ and $z$
store $\etrue$ initially with $y$'s label being ? and $z$ being $L$
such that $L \sqsubseteq H$.
\[
  \begin{array}{lcl}
    \delta_y & = & y\mapsto [L, H] \etrue^?
    \\ \delta_z & = & z\mapsto [L, L] \etrue^L
    \\ \delta_t & = & x\mapsto [H, H]\etrue^H 
    \\ \delta_f & = & x\mapsto [H, H]\efalse^H 
    \\  c_1 & = & \eif^{\{y\}}\ x\ \ethen\ y := [L, H]\efalse^?\ \eelse\ \m{skip}
    \\  c_2& = & \eif^{\{z\}}\ y\ \ethen\ z := [L,L]\efalse^L\ \eelse\ \m{skip}
  \end{array}
\]
Below is the execution starting from the state where $x$ is $\etrue$.
\[ \begin{array}{r@{}l}
    &[L, L]\; L,\ (\delta_t, \delta_z, \delta_y)
    \sepidx{} c_1; c_2 
    \\
    \stepsto &{[L, L]}\; L,\ (\delta_t, \delta_z, y\mapsto [H, H] \etrue^? )
    \sepidx{} \\ &~~~~\eif \ x\ \ethen\ y :=\ [L, H]\efalse^?\ \eelse\
    \m{skip}; c_2
    \\ 
    \stepsto &{[H, H]} \; H \rhd [L, L]\; L,\ (\delta_t, \delta_z, y\mapsto [H, H] \etrue^?)
    \sepidx{}  \\ &~~~~\{y := [L, H]\efalse^?\}; c_2 
    \\
    \stepsto &{[H, H]} \; H \rhd [L, L]\; L,\ (\delta_t, \delta_z, y\mapsto [H, H]
    \efalse^?) \sepidx{}  
    \\ &~~~~\{\m{skip}\}; c_2
    \\
    \stepsto &{[H, H]} \; H \rhd [L, L]\; L,\ (\delta_t, \delta_z, y\mapsto [H, H]
    \efalse^?) \sepidx{}  
    \\ &~~~~\eskip; c_2
    \\
    \stepsto &{[L, L]}\; L,\ (\delta_t, \delta_z, y\mapsto [H, H]
    \efalse^?) \sepidx{}  c_2  
    \\
    \stepsto &\m{abort}
  \end{array}
\]
In the last step, $~\rflof{}$ fails, because the operation
$\mi{refine}([H,H], [L,L])$ produces an invalid label-interval.
Now let's see the execution starting from $x \mapsto \efalse$.
\hspace{-10pt}\[
  \begin{array}{r@{}l}
    &[L, L]\; L, (\delta_f, \delta_z,\delta_y) \sepidx{} c_1; c_2 
    \\
    \stepsto &{[L, L]}\; L, (\delta_f, \delta_z,y\mapsto [H, H] \etrue^? ) \sepidx{} 
    \\ &~~~~\eif\ x\ \ethen\ y := [L, H]\efalse^?\ \eelse\
    \m{skip}; c_2 
    \\
    \stepsto &{[H, H]} \; H \rhd [L, L]\; L, (\delta_f, \delta_z,y\mapsto [H, H] \etrue^?) \sepidx{} 
    \\ &~~~~\{\m{skip}\}; c_2
    \\
    \stepsto & {[H, H]} \; H \rhd [L, L]\; L, (\delta_f, \delta_z,y\mapsto [H, H] \etrue^?) \sepidx{} 
               \\ &~~~~\m{skip}; c_2
    \\
    \stepsto &{[L, L]}\; L, (\delta_f, \delta_z,y\mapsto [H, H]
    \etrue^?) \sepidx{}  c_2  
    \\
    \stepsto &\m{abort}
  \end{array}
\]
Notice that the label-intervals of $y$ are changed the same way as when we start
the execution from $\delta_t$. Ultimately, the program aborts for the
same reason.

\section{Noninterference}
\label{sec:noninterference}

To prove noninterference, we extend $\wge$ with pairs of
values, expressions, and commands to simulate two executions  
which differ on secret values. 
This allows us to reduce our noninterference proof to a preservation
proof~\cite{pottier2002}. 

\subsection{Paired Execution}

\noindent{\bf Syntax:}
The augmented syntax with pairs is shown below. 
\[
  \begin{array}{lcll}
    \textit{Values } & v & \bnfdef & (\iota\; u)^\glab \bnfalt  \epair{\iota_1\;  u_1}{\iota_2\; u_2}^\glab
    \\
    \textit{Cmd.}& c & \bnfdef & \cdots \bnfalt \epair{\kappa_1, \iota_1,  c_1}{\kappa_2, \iota_2, c_2}_\glab
  \end{array}
\]
The store $\delta$ is
extended to contain pairs of values.
We also extend
commands to be paired but do not allow pairs to be
nested; an invariant maintained by our operational semantics. We only
use pairs for values and commands whose values and effects are
not observable by the adversary (are ``high'').
%
Pairs of commands are part of the runtime statement, generated as a
result of evaluating a branching statement. Each command
represents an independent execution, capable of changing its own
$\pc$ stack. As a result, we include local $\pc$ stacks in the pair 
with each command. The rationale behind additional $\pc$ stacks in
command pairs is explained with the semantics. 

\Paragraph{Label-interval operations on pairs}
The interval of a paired value is a pair of intervals, defined below. 
$$
\inferrule*{}{\labof(\epair{\iota_1\;u_1}{\iota_2\;u_2}^\glab) = \epair{\iota_1}{\iota_2}}
$$
The
intersection of an interval and a paired value is defined as follows. 
Other extensions to label-interval operations can be found in 
\ifconf{
the full version of the paper~\cite{full-version}. 
}
\iffull{
Appendix~\ref{sec:app-paired}.
}
$$
  \inferrule*{}{\iota\bowtie \epair{\iota_1\;u_1}{\iota_2\;u_2}^\glab = \epair{\iota\bowtie\iota_1\;u_1}{\iota\bowtie\iota_2\;u_2}^\glab}
$$
\noindent{\bf Memory read and update operations:}
As we allow the intervals of values to be refined, the store read
($\eread$) and update ($\eupdate$) operations for paired values
need to make sure that the correct paired value is read or updated.
These functions are shown in Fig.~\ref{fig:app-rdupd} in the Appendix. 

\begin{figure}[t!]
    \flushleft
    \textbf{Expression Semantics: }
    \noindent{\framebox{$ \delta \sepidx{i} e \evalsto v$}}
    \begin{mathpar}
      \inferrule*[right=P-Const]{
      }{
        \delta\sepidx{i} (\iota\, u)^\glab \evalsto  (\iota\, u)^\glab
      }
      \and
      \inferrule*[right=P-Var]{
      }{
        \delta\sepidx{i} x \evalsto  \eread_i\ \delta(x)
      }
      %
      \and
      \inferrule*[right=P-Cast]{
        \delta\sepidx{i} e \evalsto v 
        \\ v' = (E, \glab)\rhd v
      }{
        \delta\sepidx{i} E^{\glab}~ e 
        \evalsto  v'
      }
    \end{mathpar}
    \flushleft
    \textbf{Command Semantics: } 
    \noindent{\framebox{$ \kappa,\delta \sepidx{i} c \stackrel{\alpha}{\stepsto}
        \kappa',\delta'\sepidx{i} c'$}}
    
    \begin{mathpar}
      \inferrule*[right=P-C-Pair]{
        \kappa_i\rhd \iota_\pc\labjoin\iota_i\;\gpc\cjoin\glab,\delta\sepidx{i} c_i 
        \stackrel{\alpha}{\stepsto} \\\\ \kappa'_i\rhd \iota_\pc\labjoin\iota_i\;\gpc\cjoin\glab,\delta'\sepidx{i}c'_i
        \\ c_j = c'_j
        \\ \kappa_j = \kappa'_j
        \\ \{i,j\} = \{1,2\}
      }{
        \iota_\pc\;\gpc,\delta\sepidx{} 
        \epair{\kappa_1, \iota_1, c_1}{\kappa_2, \iota_2, c_2}_\glab
        \stackrel{\alpha}{\stepsto} \\\\ \iota_\pc\;\gpc,\delta'\sepidx{} 
        \epair{\kappa'_1, \iota_1, c'_1}{\kappa'_2, \iota_2, c'_2}_{\glab}
      }
      \and
      \inferrule*[right=P-Lift-If]{
        c_j = c_1~\mbox{if}~u_1 = \etrue
        \\ c_j = c_2~\mbox{if}~u_1 = \efalse
        \\ c_k = c_1~\mbox{if}~u_2 = \etrue
        \\ c_k = c_2~\mbox{if}~u_2 = \efalse
      }{
        \iota_\pc\;\gpc,\delta\sepidx{}
        \eif\; \epair{\iota_1\;u_1}{\iota_2\;u_2}^\glab
        \ \ethen\ c_1\ \eelse\ c_2
        \stepsto \\\\ \iota_\pc\;\gpc,\delta\sepidx{} 
        \epair{\emptyset, \iota_{1}, c_j}{
          \emptyset, \iota_{2}, c_k}_g
      }
      \and
      \inferrule*[right=P-Skip-Pair]{
      }{
        \iota_\pc\; \gpc, \delta\sepidx{} \epair{\emptyset, \iota_1, \eskip}{\emptyset, \iota_2, \eskip}_g
        \stepsto \\ \iota_\pc\; \gpc,\delta\sepidx{} \eskip
      }
      \and
      \inferrule*[right=P-Assign]{
        \delta\sepidx{i} e \evalsto v 
        \\ v' = \reflvof{}(\iota_\pc, v)
      }{
        \iota_\pc\;\gpc,\delta \sepidx{i} x:= e
        \stepsto \\ \iota_\pc\;\gpc, 
        \delta[x\mapsto \eupdate_i\ \delta(x)\  v'] \sepidx{i} \m{skip}
      }
      \and
       \inferrule*[right=P-Out]{
         \delta\sepidx{i} e \evalsto v 
         \\  v' = \reflvof{}(\iota_\pc, v)
         \\  v'' = \updval\ [\lab,\lab]\  v'
       }{
         \iota_\pc\;\gpc,\delta \sepidx{i} \eoutput(\lab, e)
         \stackrel{(i, \lab, v'')}{\stepsto} \iota_\pc\;\gpc, 
         \delta\sepidx{i} \eskip
         }
    \end{mathpar}
    
    \caption{Selected rules of paired executions}
        \vspace*{-2mm}
    \label{fig:pair-semantics-c}
    \vspace*{-2mm}
  \end{figure}  

\Paragraph{Operational semantics for pairs}
The operational semantics are augmented with an index, $i$.  The
judgments now are of the form $\delta \sepidx{i} e \evalsto e'$ and
$ \kappa,\delta \sepidx{i} c \stepsto \kappa',\delta'\sepidx{i} c'$.
The index $i$ indicates which branch of a pair is executing (when
$i\in\{1,2\}$) or if it is a top-level execution (when $i$ is omitted).
Most of the rules can be directly obtained by adding the $i$ to the
monitor semantics shown in Fig.~\ref{fig:mon-semantics-exp}
and~\ref{fig:mon-semantics-cmd}.
Rules that deal with pairs, including read and
write to the store need to be modified. 
We explain important rule changes (shown in
Fig.~\ref{fig:pair-semantics-c}).

Rule \rulename{P-Var} uses the function $\eread_i\; v$ to retrieve the
value indexed by $i$ within $v$. 
To evaluate a cast over a pair of values, we push the cast
inside the pair (\rulename{P-Cast}).

Each command in the pair (\rulename{P-C-Pair}) can make
progress independently and the premise of the rule is indexed by the
corresponding $i$. Here $\kappa_i$ is the $\pc$ stack specific to
$c_i$. Consider a command $c=\epair{c_1}{c_2}$, where both $c_1$ and 
$c_2$ have nested if statements. The execution of $c$ will create
different $\kappa_1$ and $\kappa_2$ when executing $c_1$ and
$c_2$. Next, $\iota_i$ is the $\pc$ label-interval demonstrating that
$c$ is supposed to execute in a ``high'' context (unobservable by the
adversary). The bottom $\pc$ in the stack is joined with
$\iota_i$. We will come back to this point when explaining the typing
rules.  

Rule \rulename{P-Lift-If} lifts the pair that appears as branch
conditions to generate a paired command. The resulting commands on each
side of the pair are determined by the value in the corresponding side
of the branch condition.  The branching context $\iota_i$ is the
runtime interval of the branching condition. 
The initial local $\pc$ stack is empty.

Note that the individual branches do not contain pairs of commands.
The only rule that generates paired command is \rulename{P-Lift-If}.
To see how the semantics prevent
nesting command pairs and how paired execution represents low runs
with different secrets, consider the program in Listing~\ref{eg:pair-branch}.
%
%
%
%
%
%
%
Assume that $a$ and $b$ are variables containing paired values such that
$a = \epair{u_{a1}}{u_{a2}}^H$ and $b = \epair{u_{b1}}{u_{b2}}^H$, meaning both
$a$ and $b$ contain secrets and $u_{a1}$ and $u_{b1}$ are values for the
first execution and $u_{a2}$ and $u_{b2}$ are for the second. We ignore
the intervals in this example for simplicity of exposition. On
line~\ref{if1}, we use the \rulename{P-Lift-If} rule since we branch on 
a pair of values to create paired commands. In the first execution,
if $u_{a1} = \etrue$, we take the $\ethen$ branch. When evaluating $b$
inside the branch we take the first part of the pair using the
expression evaluation rules and $\eread_i$ operation
(Fig.~\ref{fig:app-rdupd}) for $i=1$, i.e., $\eread_1\ b = \proj{b}{1} = u_{b1}$.
Thus, the branch on line~\ref{if2}
becomes: $\eif \;u_{b1} \;\ethen \ldots$ while the remaining parts remain
the same. Similarly in the second execution, based on
the value of $u_{a2}$, either the $\ethen$ branch or the $\eelse$ branch
is chosen. If the $\ethen$ branch is chosen, the branch on line~\ref{if2}
becomes $\eif \;u_{b2} \;\ethen \ldots$ as we are in the second execution of
the branch ($i=2$) on line~\ref{if1} and $\eread_2\ b = \proj{b}{2} = u_{b2}$.
Generating two different runs of the program is sufficient for reasoning about
noninterference, which is what the projection semantics
do. 


  \begin{lstlisting}[caption=Example program to explain branching on pairs,
    label=eg:pair-branch,xleftmargin=0.3\columnwidth,float=t]
  if $a$ then              @\label{if1}@
     if $b$ then $y :=[\bot,\top] 1^?$ @\label{if2}@ @\label{assn1}@
     else $\eskip$ 
  else $y :=[\bot,\top] 2^?$ @\label{assn2}@
\end{lstlisting}

Local $\pc$ refinements in pairs are forgotten when both sides of the
pair finish executing in \rulename{P-Skip-Pair}. This is similar to
the \rulename{P-Pop} rule where $\pc$ for the branch or loop is
forgotten. 

Rule \rulename{P-Assign} deals with the complexity of pairs updating
the store in one branch with the helper function $\eupdate_i~v_o~v_n$
(defined in Fig.~\ref{fig:app-rdupd}). The refinement of labels during
store updates is the same as the monitor semantics. When the update
comes from a specific branch of execution ($i\in\{1,2\}$), the value
for the other branch should be preserved. If the value in the store is
already a pair, only the $i^{th}$ sub-expression is
updated. Reconsider the example in Listing~\ref{eg:pair-branch}. The
assignment on line~\ref{assn1} happens in either of the two branches,
or both the branches depending on the values of $u_{a1}$ to
$u_{b2}$. If it happens in only the first projection, the first part
of the value-pair in $y$ is updated. Suppose that $y = \epair{[H,\top]
  0}{[H,\top] 42}^{?}$, initially, and $u_{a1} = u_{b1} =
\etrue$. Then, the value of $y$ after the assignment on
line~\ref{assn1} becomes $y = \epair{[H,\top] 1}{[H,\top] 42}^{?}$. If
$u_{a2} = \efalse$, then the $\eelse$ branch is taken, and at the end
of the assignment on line~\ref{assn2} the value of $y$ is updated to
$y = \epair{[H,\top] 1}{[H,\top] 2}^{?}$. The first part of the pair
is already updated through the $\ethen$ branch as we evaluate the two
runs one after the other when branching on a pair of values.

If the store value is not a pair, the value becomes a pair where the
$i^{th}$ sub-expression is the updated value, and the other 
sub-expression is the old value.  Considering the same
example as above, if initially $y = ([H,\top] 42)^?$, then at the end
of $\ethen$ branch with $u_{a1} = \etrue$, the updated value of $y$ is $y
= \epair{[H,\top] 1}{[H,\top] 42}^{?}$.
When updates happen at the
top-level, the entire value in the store should be updated. The first
rule applies when either the old or the new value is a pair and the
second rule applies when none of them are pairs. Note that this
is the reason why the intervals in a pair may differ. 

The output rule is mostly the same. The event being output now
includes the index to aid the statement and proof of noninterference.
The \rulename{P-If-Refine} rule (for \rulename{M-If-Refine}) uses an 
augmented version of $\rflof{}$, which only refines
label-intervals for the $i^{\mathit{th}}$ branch.


\subsection{Semantic Soundness and Completeness}

To connect the semantics of the extended language with pairs to
the monitor semantics, we prove soundness and completeness theorems. These
theorems depend on projections of the store, expression- and
command-configurations. Similar to the value 
projection seen before, the goal of these projections
is to obtain one execution from a paired execution. 

The projection of a paired value, a paired
interval, a normal value and interval are straightforward and
defined in Fig.~\ref{fig:app-rdupd}. 
The projection of stores ($\delta$) and traces ($\trace$) is
inductively defined as shown in Fig.~\ref{fig:proj}.
The projection function only keeps the output events produced
by the execution of concern and ignores output performed by the other 
execution. The projection function for
expression configurations is $\proj{\delta\sepidx{} e}{i} = \proj{\delta}{i} \sepidx{} e$ and for command configurations is defined in
Fig.~\ref{fig:proj}. The interesting case is the projection
of a command pair. We reassemble the $\pc$ stack and wrap $c_i$
with curly braces to reflect the fact that these pairs only appear in
an if branch. 


\begin{figure}
    \flushleft
    \textbf{Store projection:}
    \begin{mathpar}
      \inferrule*{}{\proj{\cdot}{i} = \; \cdot}
      \and
      \inferrule*{}{\proj{\delta, x \mapsto v}{i} = \; \proj{\delta}{i}, x \mapsto \proj{v}{i}}
    \end{mathpar}
    \flushleft
    \textbf{Trace projection:}
    \[
      \begin{array}{rlrll}
        \proj{\cdot}{i} & = \; \cdot & \\
        \proj{\trace, (\lab, v)}{i} & = \; \proj{\trace}{i}, (\lab, \proj{v}{i}) & \\
        \proj{\trace, (j, \lab, v)}{i} & = \; \proj{\trace}{i}, (\lab, v) &~ \mathit{if}~i = j\\
        \proj{\trace, (j, \lab, v)}{i} & = \; \proj{\trace}{i} &~ \mathit{if}~ i \neq j
      \end{array}
    \]
    \textbf{Command-configuration projection:}
    \begin{mathpar}
      \inferrule*{ }{
        \proj{\iota_\pc\;\gpc, \delta \sepidx{} \m{skip}}{i} =
        \iota_\pc\;\gpc, \proj{\delta}{i} \sepidx{}\m{skip} 
      }
      \and
      \inferrule*{
        \proj{\kappa, \delta \sepidx{} c_1}{i} = \kappa', \delta' \sepidx{} c'_1}{
        \proj{\kappa, \delta \sepidx{} c_1;c_2}{i} = \kappa', \delta' \sepidx{} c'_1; c_2
      }
      \and
      \inferrule*{ } {
        \proj{\iota_\pc\;\gpc, \delta \sepidx{} x:= e}{i} = 
        \iota_\pc\;\gpc, \proj{\delta}{i} \sepidx{}  x := e
      }
      \and
      \inferrule*{ 
        \proj{\kappa, \delta \sepidx{} c}{i} = \kappa', \delta' \sepidx{} c'
      }{
        \proj{\kappa\rhd \iota_\pc\; \gpc, \delta \sepidx{} \{c\}}{i} = \kappa'\rhd
        \iota_\pc\;\gpc, \delta' \sepidx{} \{c'\}
      }
      \and
      \inferrule*{ } {
        \proj{\iota_\pc\;\gpc, \delta \sepidx{} \m{output}(\lab, e)}{i} = 
        \iota_\pc\;\gpc, \proj{\delta}{i} \sepidx{} \m{output}(\lab, e)
      }
      \and
      \inferrule*{\forall \{i,j\} \in \{1,2\},\ c'_i =
        \left\{\begin{array}{ll}
                 \eskip\, & \m{if}\ c_i=\eskip\ \m{and}\ c_j \neq \eskip \arcr
                            \{c_i\}\, & \m{else} \arcr
               \end{array}\right.
      }
      {
        \proj{\iota_\pc\;\gpc, \delta \sepidx{} \epair{\kappa_1, \iota_1, c_1}{\kappa_2, \iota_2, c_2}_\glab}{i} = \\
        \kappa_i\rhd  (\iota_\pc\labjoin\iota_i)\;(\gpc\cjoin\glab) 
        \rhd \iota_\pc\;\gpc, \proj{\delta}{i} \sepidx{} c_i'
      }
      \and
      \inferrule*{ }
      {
        \proj{\iota_\pc\;\gpc, \delta \sepidx{} \epair{\emptyset, \iota_1, \eskip}{\emptyset, \iota_2, \eskip}_\glab}{i} = \\
        \kappa_i\rhd  (\iota_\pc\labjoin\iota_i)\;(\gpc\cjoin\glab) 
        \rhd \iota_\pc\;\gpc, \proj{\delta}{i} \sepidx{} \{\eskip\}
      }
      \and
      \inferrule*{ }{
        \proj{\iota_\pc\;\gpc, \delta \sepidx{} \eif\; v\; \ethen\; c_1\; \eelse\; c_2 }{i} = \\
        \iota_\pc\;\gpc, \proj{\delta}{i} \sepidx{} \eif\; \proj{v}{i} \; \ethen\; c_1\; \eelse\; c_2 
      }
      \and
      \inferrule*{ }{
        \proj{\iota_\pc\;\gpc, \delta \sepidx{} \eif^X\; e\; \ethen\; c_1\; \eelse\; c_2 }{i} = \\
        \iota_\pc\;\gpc, \proj{\delta}{i} \sepidx{} \eif^X\; e\; \ethen\; c_1\; \eelse\; c_2 
      }
      \and
      \inferrule*{ }{
        \proj{\iota_\pc\;\gpc, \delta \sepidx{} \ewhile^X\; e\; \edo\; c}{i} = 
        \iota_\pc\;\gpc, \proj{\delta}{i} \sepidx{} \ewhile^X\; e\; \edo\; c
      }
    \end{mathpar}
        \vspace*{-2mm}
    \caption{Projections}
    \label{fig:proj}
    \vspace*{-2mm}
  \end{figure}

The Soundness theorem ensures that if a configuration can transition
to another configuration, then its projection can transition to the
projection of the resulting configuration, generating the same trace
modulo projection. The Completeness theorem ensures that if both 
projections of a configuration terminate, then the configuration
terminates in an equivalent state.
We write, $\vdash \kappa, \delta \sepidx{i} c\ \m{wf}$,
to indicate that the configuration is well-formed
(defined in Appendix~\ref{sec:app-wf}).
Theorems~\ref{thm:soundness} and~\ref{thm:completeness}
are the formal soundness and completeness theorem statements.
\ifconf{
  The proofs can be found in the full version of the paper~\cite{full-version}.
}
\iffull{
The proofs can be found in Appendix~\ref{sec:app-sound} and Appendix~\ref{sec:app-comp}.
}
\begin{thm}[Soundness]
  \label{thm:soundness}
  If $\kappa,\delta \sepidx{}  c \stackrel{\trace}{\stepsto^*} \kappa',\delta'\sepidx{} c'$
  where $\vdash \kappa, \delta \sepidx{} c\ \m{wf}$, 
  then $\forall i \in \{1, 2\}$, 
  $\proj{\kappa,\delta \sepidx{}  c}{i} \stackrel{\proj{\trace}{i}}{\stepsto^*} 
  \proj{\kappa',\delta'\sepidx{} c'}{i}$
\end{thm}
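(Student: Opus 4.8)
The plan is to prove Theorem~\ref{thm:soundness} by induction on the length of the reduction, reducing the whole statement to a \emph{single-step simulation lemma}: if $\kappa,\delta\sepidx{}c\stackrel{\alpha}{\stepsto}\kappa_1,\delta_1\sepidx{}c_1$ and $\vdash\kappa,\delta\sepidx{}c\ \m{wf}$, then for each $i\in\{1,2\}$ we have $\proj{\kappa,\delta\sepidx{}c}{i}\stackrel{\proj{\alpha}{i}}{\stepsto^*}\proj{\kappa_1,\delta_1\sepidx{}c_1}{i}$ (zero or more steps) and $\vdash\kappa_1,\delta_1\sepidx{}c_1\ \m{wf}$. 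Since trace projection distributes over concatenation (Fig.~\ref{fig:proj}) and $\stepsto^*$ is closed under composition, chaining these facts along the given reduction yields the theorem. I would carry well-formedness through the induction because it is what guarantees the structural invariants the projection rules rely on---pairs are never nested, a paired command sits directly under an ambient $\pc$ frame, and the $\pc$ stack has the shape $\kappa_i\rhd(\iota_\pc\labjoin\iota_i)\,(\gpc\cjoin\glab)\rhd\iota_\pc\,\gpc$ for each side---so that $\proj{\cdot}{i}$ stays defined along the run; if a self-contained preservation-of-well-formedness lemma is not already available it would be proven first, by the same case analysis.

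Before the case analysis I would establish the \emph{projection-commutation} lemmas. (i) For expressions: $\delta\sepidx{i}e\evalsto v$ implies $\proj{\delta}{i}\sepidx{}e\evalsto\proj{v}{i}$, and likewise $\delta\sepidx{}e\evalsto v$ implies $\proj{\delta}{i}\sepidx{}e\evalsto\proj{v}{i}$; the content is that $\eread_i\,\delta(x)$ is exactly $\proj{\delta}{i}(x)$ and that $\labjoin$, $\cjoin$, $\bowtie$, and pushing a cast into a value all act componentwise on paired values (using Fig.~\ref{fig:app-rdupd} and the paired label-interval operations). (ii) For the store helpers: $\proj{\eupdate_i\,\delta(x)\,v'}{i}$ is the value one would write in the non-paired semantics, $\proj{\eupdate_j\,\delta(x)\,v'}{i}=\proj{\delta(x)}{i}$ for $j\neq i$, and $\reflvof{}$, $\updval$, $\rflof{}$, $\refineof$ commute with $\proj{\cdot}{i}$. (iii) An \emph{indexed-step} lemma: a step taken under index $i$, $\kappa,\delta\sepidx{i}c\stackrel{\alpha}{\stepsto}\kappa_1,\delta_1\sepidx{i}c_1$, acts on $\proj{\delta}{i}$ exactly like a top-level step, $\kappa,\proj{\delta}{i}\sepidx{}c\stackrel{\proj{\alpha}{i}}{\stepsto}\kappa_1,\proj{\delta_1}{i}\sepidx{}c_1$, since the command inside a pair never refers to the other branch. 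Each of these is a routine structural induction built on the previous ones.

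The single-step lemma then proceeds by cases on the rule applied. The ``ordinary'' rules---the indexed analogues of \rulename{M-Seq}, \rulename{M-Skip}, \rulename{M-Pc}, \rulename{M-Pop}, \rulename{M-If}, \rulename{M-If-Refine}, \rulename{M-While}, and \rulename{P-Assign}/\rulename{P-Out}/\rulename{P-Cast} at top level---project componentwise and reduce, via the commutation lemmas, to the corresponding monitor rule on $\proj{\delta}{i}$, so the projected side takes exactly one step. For \rulename{P-C-Pair}: if the active side is $i$, lemma~(iii) together with the command-pair projection rule turns the step into one step of the projection; if the active side is $j\neq i$, then $\kappa_j,c_j$ and hence $\proj{\cdot}{i}$ of the whole configuration are unchanged and, since $\alpha$ is silent or an output tagged $j$, $\proj{\alpha}{i}=\cdot$, giving the zero-step case. \rulename{P-Lift-If} on a paired condition matches one \rulename{M-If} step on each projection ($\proj{v}{i}=(\iota_i\,u_i)^\glab$ selects the same body, and the pushed $\pc$ frame is exactly what the command-pair projection reconstructs), \emph{except} when the selected body is already $\eskip$, where an extra \rulename{M-Pop} on the projected side is needed---one reason the lemma must allow more than one projected step. \rulename{P-Skip-Pair} matches one \rulename{M-Pop}; \rulename{P-Assign}/\rulename{P-Out} under index $i$ use lemma~(ii), with the output event kept when the configuration's index matches $i$ and dropped otherwise, consistent with the trace projection of Fig.~\ref{fig:proj}.

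The part I expect to be hardest is precisely this $\pc$-stack and $\eskip$/braces bookkeeping around command pairs. The command-pair projection rule synthesizes a two-frame prefix and selectively strips the scoping braces from a side that has finished ($c_i=\eskip$ while $c_j\neq\eskip$); one has to check that \emph{every} rule creating, advancing, popping, or dissolving a pair keeps this reconstruction in lock-step with the push/pop discipline that \rulename{M-If}, \rulename{M-Pc}, and \rulename{M-Pop} impose on the projected run, including the states where the projected side must sit still, take one step, or take two. Pinning down a well-formedness predicate strong enough to exclude ill-shaped stacks and nested pairs yet preserved by all rules---and checking it against the refinement operations so that no projected step can get stuck when the source did not---is where most of the care goes; the remaining cases are mechanical once the commutation lemmas are in place.
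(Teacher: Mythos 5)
Your decomposition is essentially the paper's own: Theorem~\ref{thm:soundness} is obtained by induction on the number of steps from a one-step simulation lemma (the paper's Lemma~\ref{lem:soundness}) with well-formedness carried along via a separate preservation lemma (Lemma~\ref{lem:wf-pres}); the one-step lemma is in turn discharged by expression soundness (Lemma~\ref{lem:expr-soundness}), componentwise commutation of $\proj{\cdot}{i}$ with $\eread_i$, $\eupdate_i$, $\reflvof{}$, $\updval$ and $\rflof{}$ (Lemmas~\ref{lem:store-proj-pres}--\ref{lem:rflof-proj-pres-pi}), and an indexed-step lemma (Lemma~\ref{lem:store-proj}) saying a $\sepidx{i}$ step acts on $\proj{\delta}{i}$ exactly like a monitor step while leaving the $j$-side store projection unchanged and $\proj{\alpha}{j}=\cdot$. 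Your handling of \rulename{P-C-Pair} (one projected step for the active side, a stutter for the inactive side) and \rulename{P-Skip-Pair} (one \rulename{M-Pop}) is exactly the paper's.

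The one concrete deviation is your claim that \rulename{P-Lift-If} sometimes requires two projected steps (\rulename{M-If} followed by \rulename{M-Pop}) when the selected body is already $\eskip$, which is why you weaken the lemma's conclusion to $\stepsto^*$. The paper's one-step lemma allows only ``one projected step or an unchanged projection,'' and its \rulename{P-Lift-If} case is closed with a single \rulename{M-If} step, recording the projection of the freshly created pair as $(\iota_\pc\labjoin\iota_i)\;(\gpc\cjoin\glab)\rhd\iota_\pc\;\gpc,\ \proj{\delta}{i}\sepidx{}\{c_j\}$. More importantly, the repair you propose does not go through with the projection as defined: in the sub-case $c_j=\eskip$, other component non-$\eskip$, the command-pair projection retains the two-frame $\pc$ prefix and merely strips the braces, whereas an extra \rulename{M-Pop} after \rulename{M-If} also pops that frame, so the stacks disagree and no number of additional monitor steps lands on the projected configuration. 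The brace-stripping clause exists to keep the finished side's projection stationary while the other side continues under \rulename{P-C-Pair}; it is not something to simulate by inserting \rulename{M-Pop} at pair-creation time. So you should either match this sub-case with the single \rulename{M-If} step as the paper's case analysis does (its written case states the projection as $\{c_j\}$ without the skip side-condition, i.e., it elides the asymmetric clause here), or treat the asymmetric clause explicitly; as written, your extra-step route leaves that case of the simulation lemma unprovable, even though your instinct that the $\eskip$/braces bookkeeping around pairs is the delicate part is correct.
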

\begin{thm}[Completeness]
  \label{thm:completeness}
  If $\forall i\in \{1, 2\}$, $\proj{\kappa, \delta \sepidx{} c}{i} \stackrel{\trace_i}{\stepsto^*} 
  \kappa_i, \delta_i \sepidx{} \m{skip}$ and $\vdash \kappa, \delta \sepidx{} c \ \m{wf}$, then $\exists \kappa', \delta'$ s.t. 
  $\kappa, \delta \sepidx{} c \stackrel{\trace}{\stepsto^*} \kappa', \delta' \sepidx{} \m{skip}$,
  $\proj{\kappa', \delta' \sepidx{} \m{skip}}{i} = \kappa_i, \delta_i \sepidx{} \m{skip}$ and $\trace_i = \proj{\trace}{i}$.
\end{thm}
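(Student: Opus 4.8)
\textit{Proof plan.} The plan is to prove Completeness by induction on $n_1 + n_2$, where $n_i$ is the length of the given terminating execution $\proj{\kappa,\delta\sepidx{}c}{i} \stackrel{\trace_i}{\stepsto^*} \kappa_i,\delta_i\sepidx{}\m{skip}$, and to \emph{construct} the paired execution one step at a time. The invariant I would maintain is that the current paired configuration projects (under $\proj{\cdot}{i}$) to the configuration reached by projection $i$ after some prefix of its execution, so that the residual lengths $n_1,n_2$ stay well-defined; each constructed paired step will strictly decrease $n_1+n_2$, and accumulating the per-step trace correspondences discharges the trace obligation. For the base case $n_1+n_2=0$, both projections already sit at a $\m{skip}$ configuration; inspecting the command-configuration projection rules, the only configuration whose projection on \emph{both} sides is a bare $\m{skip}$ is $c=\m{skip}$ itself (a command pair with one finished side projects to $\m{skip}$ on that side but to $\{c_j\}$ on the other), so the paired configuration is already done and the obligations hold with the empty trace.

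For the inductive step I would case-split on $c$, which is not $\m{skip}$ since $n_1+n_2>0$. The ``leaf'' commands --- $\m{skip};c'$, $x:=e$, $\eoutput(\lab,e)$, $\eif\,v\,\ethen\,c_1\,\eelse\,c_2$ with $v$ not a pair, $\eif^X e\,\ldots$, $\ewhile^X e\,\edo\,c'$, and $\{\m{skip}\}$ --- each fire their obvious paired rule, and a single paired step is mirrored by exactly one step of the monitor semantics in \emph{each} projection (e.g.\ \rulename{M-Assign} projects to \rulename{M-Assign}, and for $\eif^X$, \rulename{M-If-Refine} projects to \rulename{M-If-Refine}). The congruence cases $c_1;c_2$ with $c_1\neq\m{skip}$ and $\{c'\}$ with $c'\neq\m{skip}$ recurse on the active subterm via \rulename{M-Seq}/\rulename{M-Pc}. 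For command pairs $\epair{\kappa_1,\iota_1,c_1}{\kappa_2,\iota_2,c_2}_\glab$ I would \emph{schedule}: if some $c_i\neq\m{skip}$, take a \rulename{P-C-Pair} step on side $i$, which --- by the projection lemmas underlying Theorem~\ref{thm:soundness} --- mirrors exactly one step of projection $i$ and leaves projection $j$ untouched ($\eupdate_i$ only touches the $i$-th component, and $c_j,\kappa_j,\iota_j$ are unchanged), so $n_i$ drops by one; when both sides are $\m{skip}$ with empty local stacks, \rulename{P-Skip-Pair} mirrors an \rulename{M-Pop} in each projection. A pair-valued branch condition is handled by \rulename{M-If-Refine} followed by \rulename{P-Lift-If} in the paired machine, mirrored by \rulename{M-If-Refine} then \rulename{M-If} in each projection. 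In every case $n_1+n_2$ strictly decreases, so the IH applies to the residual configuration and splices onto the constructed prefix.

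The hard part will be showing that the constructed paired step always \emph{exists}, i.e.\ the paired machine never gets stuck, which amounts to checking that every side condition holds: validity of the intervals produced by $\refineof$, $\bowtie$, $\reflvof{}$, $\newlab$, $\updval$, $\rflof{}$, and the cast combination $\iota\bowtie E$. The key fact is that all label-interval operations lift \emph{componentwise} to paired values and intervals; combined with the invariant $\proj{\delta}{i}=(\text{projection }i\text{'s store})$, a failing operation in the paired machine would project to a failing operation in projection $i$ at precisely the step projection $i$ is about to take, contradicting the hypothesis that projection $i$ runs to $\m{skip}$. The most delicate instance is $\rflof{}(\delta,X,\cdot)$, which iterates $\reflvof{}$ over every variable of the write set $X$ and aborts if any single one does; since $X$ is syntactic --- hence identical in the paired machine and both projections --- and $\reflvof{}$ is componentwise, definedness for the pair is equivalent to definedness for both projections, so the hypothesis again rules out failure. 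A secondary obstacle is the bookkeeping of $\pc$ stacks and $\{\cdot\}$ scopes when projecting command pairs and braced commands, which is where the well-formedness hypothesis $\vdash\kappa,\delta\sepidx{}c\ \m{wf}$ is needed (e.g.\ to ensure $\kappa_i=\emptyset$ whenever $c_i=\m{skip}$ inside a pair); determinism of the monitor semantics modulo abort lets me identify each mirrored step with the actual next step of the given projection execution.
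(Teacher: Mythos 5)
Your proposal matches the paper's own strategy: the paper proves Completeness by induction on the lengths of the two projected runs, with a per-step simulation lemma (its Lemma~\ref{lem:proj-complete}, built on expression completeness and the componentwise projection-commutation lemmas for $\eread$, $\eupdate$, $\reflvof{}$, $\rflof{}$) and the Soundness theorem to keep the constructed paired run aligned with the given projections---exactly the induction-on-$n_1+n_2$, case-split-on-$c$, componentwise-definedness argument you describe. Your scheduling of one side of a command pair at a time and your explicit treatment of non-stuckness are only cosmetic variations on the paper's proof, so the proposal is correct and essentially the same.
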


\subsection{Preservation}

Before we explain the typing rules for the extended configuration, we
define another label relation. A gradual label is said to be
``high'' w.r.t an attacker, if the lower label in the interval is
not lower than or equal to the level of the attacker. 
\[  %
  \inferrule*{
    \iota = [\lab_l, \lab_r]
    \\ \lab_l \not\preccurlyeq \lab_A 
    \\ \iota \sqsubseteq \gamma(g)
  }{
    \iota \vdash g \in H(\lab_A) 
  }
\]
All the pair typing rules are parameterized over attacker's label
$\lab_A$, which we omit from the rules  for simplicity. 
The typing rule for value-pairs is shown below. The second
premise checks that the interval is representative of the gradual type
$U$. The last premise checks if $U$'s security
label is high, meaning this pair of values is non-observable to
the adversary. 
\begin{mathpar}
  \inferrule*[right=R-V-Pair]{ 
    \forall i\in\{1,2\}, ~
    \Gamma  \vdash \iota_i\;u_i : \tau^\glab
    \\ \iota_i \sqsubseteq \gamma(\glab)
    \\ \iota_i \vdash (\glab)\in\ H(\lab_A)
  }{ 
    \Gamma  \vdash \epair{\iota_1\;u_1}{\iota_2\;u_2} : \tau^\glab
  }
\end{mathpar}

The judgement for typing commands with pairs is of the form $\Gamma ;
\kappa \vdash_r c$.  
Fig.~\ref{fig:runtime-typing} summarizes these typing rules.  
\begin{figure}[t!]
    \flushleft
    \noindent\framebox{$\Gamma ; \kappa \vdash_r c $}
    \begin{mathpar}
      \inferrule*[right=R-Pop]{
        \Gamma; \kappa  \vdash_r c
      }{ 
        \Gamma ; \kappa \rhd \iota\; \gpc \vdash_r \{c\} 
      }
      \and
      \inferrule*[right=R-End]{
        \Gamma; \iota\; \gpc \vdash c
      }{ 
        \Gamma ; \iota\;\gpc \vdash_r c
      }
      \and
      \inferrule*[right=R-C-Seq]{ 
        \Gamma; \kappa\rhd \iota_\pc\;\gpc \vdash_r  c_1
        \\     \Gamma; \iota_\pc\;\gpc \vdash  c_2
        \\ \kappa\neq \emptyset 
      }{ 
        \Gamma; \kappa\rhd \iota_\pc\;\gpc \vdash_r  c_1; c_2
      }
      \and
      \inferrule*[right=R-C-Pair]{ 
        \forall i\in{1,2},\  
        \Gamma;\kappa_i\rhd(\iota_\pc\labjoin\iota_i)\; (\gpc\cjoin g)
        \vdash_r c_i
        \\ \iota_i\vdash \glab\in H(\lab_A)
      }{ 
        \Gamma; \iota_\pc\;\gpc \vdash_r 
        \epair{\kappa_1, \iota_1, c_1}{\kappa_2, \iota_2, c_2}_\glab
      }
      \and
      \inferrule*[right=R-C-If]{
        \Gamma \vdash e :  \tbool^\glab
        \\ \iota_g = \gamma(\glab)
        \\ \forall i \in \{1,2\},\ \Gamma ; 
        \iota_\pc\labjoin\iota_g\;\gpc\cjoin \glab \vdash c_i
      }{ 
        \Gamma ; \iota_\pc\;\gpc \vdash_r 
        \eif\; e\ \ethen\  c_1\ \eelse\  c_2 }
    \end{mathpar}
        \vspace*{-2mm}
    \caption{Typing rules for commands with pairs}
    \label{fig:runtime-typing}
    \vspace*{-2mm}
  \end{figure}

Rule \rulename{R-Pop} types the inner command with only the top part
of the $\pc$ stack. When the $\pc$ stack contains only one element,
\rulename{R-End} directly uses command typing. For pairs,
\rulename{R-C-Pair} first checks that each $c_i$ is well-typing,
using the $\pc$ context assembled from the local $\pc$ context. The second
premise makes sure that these commands are typed (executed) in a high
context. Here $\iota_i$ is the witness for $g$, which demonstrates
that $c_i$ are high commands. The sequencing statement types the
second command using only the last $\pc$ on the stack because
the execution order is from left to right. We can only encounter
branches in the first part of a sequencing statement and not the
second part before beginning the execution of the second command in
the sequence. The typing rule for if statements without a write set is
straightforward.

We define store, trace and configuration typing in
Fig.~\ref{fig:stc-typing}. The store $\delta$ types in the typing
environment $\Gamma$ if all variables in $\delta$ are mapped to their
respective type and gradual label in $\Gamma$.
We define top-level configuration typing as $\vdash \kappa, \delta, c$.
To type traces and actions, the output
value needs to be well-typed, and the label-interval of  the value 
has to be lower than or equal to the channel label. 

\begin{figure}[t!]
    \flushleft
    \noindent\textbf{Store typing:}
    \begin{mathpar}
      \inferrule*[right=T-S-Emp]{ }{
        \vdash \cdot: \cdot
      }
      \and
      \inferrule*[right=T-S-Ind]{ 
        \vdash \delta: \Gamma
        \\ \Gamma \vdash v : U
      }{
        \vdash \delta, x\mapsto v : \Gamma, x: U
      }
    \end{mathpar}
    \flushleft
    \noindent\textbf{Configuration typing:}
    \begin{mathpar}
      \inferrule*[right=T-Conf]{
        \vdash \delta: \Gamma
        \\  \Gamma; \kappa \vdash_r c
      }{
        \vdash \kappa,\delta, c
      }
    \end{mathpar}
    \flushleft
    \noindent\textbf{Trace typing:}
    \begin{mathpar}
      \inferrule*[right=T-A-Out]{ 
        \vdash v: U
        \\ \vdash \labof(v) \labless [\lab, \lab]
      }{
        \vdash (\lab, v)}
      \and
      \inferrule*[right=T-A-OutI]{ 
        \vdash v: U
        \\  \lab \not\labless \lab_A 
        \\ \vdash  \labof(v) \labless [\lab, \lab]
      }{
        \vdash (i, \lab, v)
      }
    \end{mathpar}
    \begin{mathpar}
      \inferrule*[right=T-T-Emp]{ }{
        \vdash \cdot
      }\and
      \inferrule*[right=T-T-Ind]{ \vdash \alpha\\ \vdash \trace}{
        \vdash \alpha, \trace
      }
    \end{mathpar}
        \vspace*{-2mm}
    \caption{Store, trace and configuration typing}
    \label{fig:stc-typing}
    \vspace*{-2mm}
  \end{figure}


Using these definitions, we prove that our paired execution semantics
preserve the configuration typing and generate a well-typed trace 
(Theorem~\ref{thm:preservation}).
We write, $\vdash \kappa, \delta \sepidx{i} c\ \m{sf}$ for
$i \in \{\cdot, 1, 2\}$, to indicate that the configuration is safe.
We say a configuration is safe if all of the following hold:
\begin{enumerate}
\item if $i\in\{1,2\}$, then $\kappa\in H(\lab_A)$, 
  $\forall x \in \wtsetof(c)$, $\labof(\delta(x))\in H(\lab_A)$ 
\item if $c=\eif\;\epair{\_}{\_}\;\ethen\; c_1\;\eelse\;c_2$,
  then $\forall x\in\wtsetof(c)$, $\labof(\delta(x))\in H(\lab_A)$
\item if $c=\epair{\kappa_1,\iota_1,c_1}{\kappa_2,\iota_2,c_2}_g$,
  then $\forall i\in\{1,2\}$, 
  $\iota_i\vdash g\in H(\lab_A)$,
  and $\forall x\in\wtsetof(c)$, $\labof(\delta(x))\in H(\lab_A)$ 
\end{enumerate}
\iffull{
  The lemmas and the proofs are shown in Appendix~\ref{sec:app-pres}.
}
\begin{thm}[Preservation]
  ~\label{thm:preservation}
  If $\kappa, \delta \sepidx{} c \stackrel{\trace}{\stepsto^*} \kappa', \delta' \sepidx{} c'$
  with $\vdash \kappa, \delta, c$ and $\vdash \kappa, \delta \sepidx{} c \ \m{sf}$, then $\vdash \kappa', \delta', c'$
  and $\vdash \trace$
\end{thm}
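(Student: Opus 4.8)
The plan is to proceed by induction on the length of the reduction sequence $\kappa, \delta \sepidx{} c \stackrel{\trace}{\stepsto^*} \kappa', \delta' \sepidx{} c'$. The base case is immediate: the empty trace is well-typed by \rulename{T-T-Emp}, and the configuration types by assumption. For the inductive step it suffices to prove a single-step preservation lemma: if $\vdash \kappa, \delta, c$ and $\vdash \kappa, \delta \sepidx{} c\ \m{sf}$ and $\kappa, \delta \sepidx{} c \stackrel{\alpha}{\stepsto} \kappa'', \delta'' \sepidx{} c''$, then $\vdash \kappa'', \delta'', c''$, the step preserves safety ($\vdash \kappa'', \delta'' \sepidx{} c''\ \m{sf}$), and $\vdash \alpha$; the theorem then follows by composing these with \rulename{T-T-Ind}. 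The single-step lemma is proved by case analysis on the command reduction rule used (Fig.~\ref{fig:mon-semantics-cmd} and Fig.~\ref{fig:pair-semantics-c}, together with the abort rules in the appendix, which vacuously satisfy the conclusion since they produce no typed successor configuration).

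First I would dispatch the structural and scaffolding rules. For \rulename{M-Seq}/\rulename{M-Skip} and \rulename{M-Pc}/\rulename{M-Pop}, typing is preserved by inversion on \rulename{R-C-Seq}, \rulename{R-Pop}, and \rulename{R-End} together with the inductive hypothesis on the sub-derivation; safety is preserved because the $\pc$ stack only shrinks or is unchanged and $\wtsetof$ only decreases along these steps. For \rulename{M-If} and \rulename{M-While}, I would use the \rulename{R-C-If} rule and the equivalence of the reduced \texttt{if} form, checking that the newly pushed $\pc$ entry $\iota'_\pc\;\gpc' = \iota_\pc\labjoin\iota\;\gpc\cjoin\glab$ is exactly what \rulename{R-Pop}/\rulename{R-End} expect when typing $\{c_i\}$, and that $\iota \sqsubseteq \gamma(\glab)$ so the interval-join stays valid. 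The real content is in \rulename{M-Assign}, \rulename{M-Out}, \rulename{M-If-Refine} and their paired counterparts. For \rulename{M-Assign}, I would need a lemma that $\reflvof{}$ and $\updval$ (via $\newlab$) either abort or return a value whose interval is a valid sub-interval of $\gamma(\glab)$ where $\glab$ is the static gradual label of $x$ in $\Gamma$ — since $\glab$ is unchanged, store typing is preserved (the key fact being that these operations only narrow intervals, never change $g$). For \rulename{M-If-Refine}, I would show $\rflof{}$ applied to the write set $X$ either aborts or yields a store still typed by $\Gamma$, using that $\reflvof{}$ preserves the gradual label; safety here needs that after refining, for every $x \in X$ whose interval got raised, $\labof(\delta'(x)) \in H(\lab_A)$ when the branch condition is high — this is precisely the invariant clauses (1)--(3) in the definition of $\m{sf}$, and it is maintained because the refinement raises the lower bound to at least the lower bound of $\iota_\pc \labjoin \labof(v)$.

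The paired rules are where I expect the main obstacle, and they deserve the bulk of the effort. For \rulename{P-Lift-If} I would show that branching on a pair of values typed by \rulename{R-V-Pair} produces a paired command typed by \rulename{R-C-Pair}, transferring the $\iota_i \vdash \glab \in H(\lab_A)$ witnesses directly, and establishing the safety side-conditions on $\wtsetof$ of each branch (which holds because the pair's label $\glab$ is high, so the branch is executing in a high context and the write-set refinement from the enclosing \rulename{M-If-Refine} has already raised the relevant variables). For \rulename{P-C-Pair}, I would invoke the inductive hypothesis on the indexed sub-step $\kappa_i \rhd (\iota_\pc \labjoin \iota_i)\;(\gpc \cjoin \glab), \delta \sepidx{i} c_i \stackrel{\alpha}{\stepsto} \dots$ — this requires knowing that the indexed sub-configuration is itself well-typed and safe, which follows from \rulename{R-C-Pair} and clause (3) of $\m{sf}$. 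The subtle point is that $\eupdate_i$ in \rulename{P-Assign} can turn a non-pair store value into a pair, so I would need a lemma that $\eupdate_i$ preserves $\Gamma$-typing and, crucially, that when the update happens inside a high-context pair ($i \in \{1,2\}$), the resulting interval(s) remain high — this is what keeps the $H(\lab_A)$ invariant intact across one-sided store updates. Finally, \rulename{P-Skip-Pair} collapses a pair back to $\eskip$; here I would just appeal to \rulename{R-End}/\rulename{R-C-If} and note that forgetting the local $\pc$ stacks and the intervals $\iota_i$ is sound because $\eskip$ imposes no constraints. Throughout, the trace well-typedness obligation $\vdash \alpha$ is discharged only for \rulename{M-Out}/\rulename{P-Out}: I must check $\labof(v'') \labless [\lab,\lab]$, which holds because the output rule only fires after $\reflvof{}(\iota_\pc, v)$ and $\updval\ [\lab,\lab]\ v'$ succeed, forcing the interval into $[\lab,\lab]$ or below, and for the indexed form \rulename{P-Out} the additional premise $\lab \not\labless \lab_A$ needed by \rulename{T-A-OutI} follows from the $\pc$ being high in a paired execution. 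The hardest single step is showing that the interval arithmetic ($\refineof$, $\bowtie$, $\newlab$, $\labjoin$) composes coherently across the paired-store update operations so that the $H(\lab_A)$ invariants in $\m{sf}$ are never violated — essentially a confinement lemma that one-sided writes in a high context cannot lower any interval's lower bound below $\lab_A$.
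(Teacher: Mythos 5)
Your plan matches the paper's proof: the paper likewise reduces the theorem to a single-step lemma that preserves configuration typing, the safety predicate $\m{sf}$, and action typing (its Lemmas on one-step preservation), proves that lemma by cases on the command/reduction rules with auxiliary lemmas showing $\reflvof{}$, $\updval$, $\rflof{}$, and $\eupdate_i$ preserve $\Gamma$-typing and keep intervals in $H(\lab_A)$ in indexed/high contexts, handles \rulename{P-Lift-If}/\rulename{P-C-Pair} exactly via \rulename{R-C-Pair} and the $\m{sf}$ clauses, and discharges $\vdash\alpha$ in the output cases by the interval being both high and below $[\lab,\lab]$, then closes the multi-step induction with \rulename{T-T-Ind}. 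Your explicit threading of $\m{sf}$ through the multi-step induction is exactly what the paper's per-step lemma supplies, so the proposal is correct and essentially identical in structure.
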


\subsection{Noninterference}
\begin{figure}
    \flushleft
    \noindent\textbf{Store equivalence:}
    \begin{mathpar}
      \inferrule*[right=EqV-L]{
        \glab\clabless \lab_A
        \\ \iota \sqsubseteq \gamma(\glab)
        \\ \vdash (\iota\; u)^\glab : U
      }{ 
        \vdash (\iota\; u)^\glab \approx_{\lab_A} (\iota\; u)^\glab :
        U
      }
      \and
      \inferrule*[right=EqV-H]{
        \forall i\in\{1,2\}, 
        \iota_i \vdash \glab\in H(\lab_A) 
        \\ \vdash (\iota_i\; u_i)^\glab : U
      }{ 
        \vdash (\iota_1\; u_1)^\glab \approx_{\lab_A} (\iota_2\; u_2)^\glab
        : U
      }
      \and
      \inferrule*[right=EqS-Emp]{  }{
        \vdash \cdot \approx_{\lab_A} \cdot:\cdot
      }
      \and
      \inferrule*[right=EqS-Ind]{ 
        \vdash \delta_1 \approx_{\lab_A} \delta_2: \Gamma
        \\ \vdash v_1\approx_{\lab_A} v_2 : U
      }{
        \vdash \delta_1, x\mapsto v_1 \approx_{\lab_A} \delta_2, x\mapsto
        v_2 :\Gamma, x:U
      }
    \end{mathpar}
    \flushleft
    \noindent\textbf{Trace equivalence:}
    \begin{mathpar}
      \inferrule*[right=EqT-E]{
      }{ 
        \vdash [] \approx_{\lab_A} []
      }
      \and
      \inferrule*[right=EqT-L]{
        \vdash \trace_1 \approx_{\lab_A} \trace_2 \\
        \lab_1 = \lab_2 \labless \lab_A \\ v_1 = v_2
      }{ 
        \vdash (\lab_1, v_1)::\trace_1 \approx_{\lab_A} (\lab_2, v_2)::\trace_2
      }
      \and
      \inferrule*[right=EqT-Hl]{
        \vdash \trace_1 \approx_{\lab_A} \trace_2 \\
        \lab_1 \not\labless \lab_A 
      }{ 
        \vdash (\lab_1, v_1)::\trace_1 \approx_{\lab_A} \trace_2
      }
      \and
      \inferrule*[right=EqT-Hr]{
        \vdash \trace_1 \approx_{\lab_A} \trace_2 \\
        \lab_2 \not\labless \lab_A
      }{ 
        \vdash \trace_1 \approx_{\lab_A} (\lab_2, v_2)::\trace_2
      }
    \end{mathpar}
        \vspace*{-2mm}
    \caption{Equivalence definitions}
    \label{fig:eq-def}
    \vspace*{-2mm}
  \end{figure}

We show that the gradual type system presented above satisfies
termination-insensitive noninterference. We start by defining
equivalence for values and stores (Fig.~\ref{fig:eq-def}).
Two values are said to be equivalent to an adversary at level $\lab_A$
if either they are both visible to the adversary and are the same, or neither are
observable by the adversary. We also define
equivalence of traces w.r.t an adversary at level $\lab_A$ in Fig.~\ref{fig:eq-def}. 
The following noninterference theorem (Theorem~\ref{thm:security}) states that
given a program and two stores equivalent for an adversary at level
$\lab_A$, if the program terminates in both the runs, then the
$\lab_A$-observable actions on both runs are the same. 

\begin{thm}[Noninterference]~\label{thm:security}
  Given an adversary label $\lab_A$, a program $c$, and two stores $\delta_1$,
  $\delta_2$, s.t., $\vdash \delta_1\approx_{\lab_A}\delta_2: \Gamma$,
  and $\Gamma; [\bot, \bot]\; \bot\vdash c$, and $\forall i\in\{1,2\}$, $[\bot,\bot]\; \bot,
  \delta_i \sepidx{} c \stackrel{\trace_i}{\stepsto^*}
  \kappa_i,\delta'_i,\m{skip}$, then $\vdash \trace_1\approx_{\lab_A}\trace_2$.
\end{thm}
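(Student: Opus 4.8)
The plan is to reduce noninterference to the preservation theorem (Theorem~\ref{thm:preservation}) together with the soundness theorem (Theorem~\ref{thm:soundness}), following the FlowML-style technique~\cite{pottier2002}. The key device is the paired execution: given the two $\lab_A$-equivalent stores $\delta_1, \delta_2$ with $\vdash \delta_1 \approx_{\lab_A} \delta_2 : \Gamma$, I first build a single ``merged'' store $\delta$ that combines them, mapping each variable $x$ to $\delta_1(x)$ if the two agree (the low case \textsc{EqV-L}), and to the paired value $\epair{\iota_1\,u_1}{\iota_2\,u_2}^\glab$ if they differ (the high case \textsc{EqV-H}). From store equivalence one checks that $\vdash \delta : \Gamma$ holds and that the merged configuration $[\bot,\bot]\,\bot, \delta \sepidx{} c$ is well-typed ($\vdash [\bot,\bot]\,\bot, \delta, c$, using $\Gamma; [\bot,\bot]\,\bot \vdash c$ and \textsc{R-End}) and safe ($\vdash [\bot,\bot]\,\bot, \delta \sepidx{} c\ \m{sf}$, which is vacuous at the top level since the relevant side-conditions only bite for pairs and braces, which the initial configuration has none of). It is also well-formed, so that Soundness applies.

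Next I run the paired semantics on the merged configuration. The two hypotheses say each projection $\proj{[\bot,\bot]\,\bot, \delta \sepidx{} c}{i}$ terminates; since $\proj{\delta}{i} = \delta_i$ (which follows from the definition of value/store projection and the construction of $\delta$), these are exactly the two given terminating runs producing $\trace_1$ and $\trace_2$. By Completeness (Theorem~\ref{thm:completeness}), the merged configuration itself terminates: $[\bot,\bot]\,\bot, \delta \sepidx{} c \stackrel{\trace}{\stepsto^*} \kappa', \delta' \sepidx{} \m{skip}$ for some $\kappa', \delta', \trace$ with $\trace_i = \proj{\trace}{i}$. Now apply Preservation to this merged run: from $\vdash [\bot,\bot]\,\bot, \delta, c$ and safety we get $\vdash \trace$, i.e.\ the merged trace is well-typed. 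The crux is then a purely syntactic lemma: \emph{any well-typed trace $\trace$ satisfies $\vdash \proj{\trace}{1} \approx_{\lab_A} \proj{\trace}{2}$}. This follows by induction on $\trace$ using the trace-typing rules \textsc{T-A-Out}/\textsc{T-A-OutI} and the trace-equivalence rules: a top-level output event $(\lab, v)$ is well-typed only with $\vdash \labof(v) \labless [\lab,\lab]$, and a value $v$ typeable at a label $\labless \lab \labless \lab_A$ cannot be a pair (by \textsc{R-V-Pair}'s high-premise it would need $\lab \nlabless \lab_A$), so $\proj{v}{1} = \proj{v}{2} = v$ and the event contributes equally to both projections (\textsc{EqT-L}); an indexed event $(i, \lab, v)$ is well-typed only when $\lab \nlabless \lab_A$ (rule \textsc{T-A-OutI}), so it is invisible to the adversary and is dropped by the opposite projection, matching \textsc{EqT-Hl}/\textsc{EqT-Hr}. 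Chaining these: $\vdash \trace_1 = \proj{\trace}{1} \approx_{\lab_A} \proj{\trace}{2} = \trace_2$, which is the goal.

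The main obstacle is establishing that the hypothesis ``$\Gamma;[\bot,\bot]\,\bot\vdash c$'' (plain $\wge$ command typing) yields everything Preservation needs about the \emph{merged} run, and in particular that Preservation's invariant is genuinely maintained across the \textsc{P-Lift-If} step, where a high branch condition $\epair{\iota_1\,u_1}{\iota_2\,u_2}^\glab$ spawns a command pair. At that point one must know that the write-set variables of the branch have high label-intervals in the store; this is exactly what the \textsc{M-If-Refine}/\textsc{P-If-Refine} refinement step and the safety conditions (clauses 1--3) are designed to guarantee, and the subtle part of the argument is that the refinement step \emph{succeeds} (does not abort) on one run precisely when it succeeds on the other — which is where the hybrid write-set mechanism, rather than a purely flow-sensitive update, is essential. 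Most of this heavy lifting, however, is already packaged inside Theorems~\ref{thm:soundness}, \ref{thm:completeness}, and \ref{thm:preservation}; at the level of Theorem~\ref{thm:security} itself the remaining work is the merge/projection bookkeeping and the short trace-typing-implies-trace-equivalence induction sketched above. I would also need to double-check the boundary case where one run aborts or diverges while the other terminates: the theorem's hypothesis only invokes the conclusion when \emph{both} runs terminate, so termination-insensitivity is built into the statement and no extra argument is needed there.
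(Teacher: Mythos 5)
Your proposal is correct and follows essentially the same route as the paper's proof: merge the two equivalent stores into a single paired store whose projections are $\delta_1,\delta_2$ (the paper's store-merge lemma), use Completeness (and Soundness/determinism) to obtain a single terminating merged run whose trace projects to $\trace_1,\trace_2$, apply Preservation to get a well-typed trace, and conclude via the lemma that well-typed traces have $\lab_A$-equivalent projections. Your extra care about the safety and well-formedness side conditions is consistent with (and slightly more explicit than) the paper's own argument.
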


\iffull{The lemmas and proofs can be found in Appendix~\ref{sec:app-ni}.}
We know that only when $\lab \not\labless \lab_A$, the individual runs
can produce $(i, \lab, v)$ because the two runs diverge only when
branching on pairs. Similarly, $(\lab, \epair{v_1}{v_2})$ can only be
produced if $\lab \not\labless \lab_A$, because pairs can only be
typed if each individual interval in the pair is high and rule
\rulename{P-Out} makes sure $\lab$ is lower than or equal to the pair's
interval.  We prove a simple lemma that establishes that
given a well-typed trace of actions $\trace$, $\vdash \proj{\trace}{1}
\approx_{\lab_A} \proj{\trace}{2}$. 
By combining the Preservation, Soundness, and
Completeness Theorems, it follows that
our gradual type system satisfies termination-insensitive noninterference. 

\section{Gradual Guarantees}
\label{sec:graduality}
\begin{figure}
    \flushleft
    \noindent\textbf{Labels and intervals:}
    \begin{mathpar}
      \inferrule*{
      }{
        \lab \gsubtp \; ?
      }
      \and
      \inferrule*{
      }{
        \lab \gsubtp \lab
      }
      \and
      \inferrule*{
        \lab_1'\labless \lab_1 \\
        \lab_2 \labless \lab_2'
      }{
        [\lab_1, \lab_2] \gsubtp [\lab_1', \lab_2']
      }
    \end{mathpar}
    \flushleft
    \noindent\textbf{Expressions:}
    \begin{mathpar}
      \inferrule*{
        \iota_1\gsubtp \iota_2 \\ \glab_1\gsubtp \glab_2
      }{
        (\iota_1\; u)^{\glab_1} \gsubtp (\iota_2\; u)^{\glab_2}
      }
      \and
      \inferrule*{
      }{
        x \gsubtp x
      }
      \and
      \inferrule*{
        e_1\gsubtp e_1' \\ e_2\gsubtp e_2'
      }{
        e_1 \bop e_2 \gsubtp e_1' \bop e_2'
      }
      \and
      \inferrule*{
        E\gsubtp E' \\ \glab_1\gsubtp \glab_2 \\ e_1 \gsubtp e_2
      }{
        E_{\glab_1} e_1 \gsubtp E_{\glab_2}' e_2
      }
    \end{mathpar}
    \flushleft
    \noindent\textbf{Commands:}
    \begin{mathpar}
      \inferrule*{
      }{
        \eskip \gsubtp \eskip
      }
      \and
      \inferrule*{
        c_1\gsubtp c_1' \\ c_2 \gsubtp c_2'
      }{
        c_1;c_2 \gsubtp c_1';c_2'
      }
      \and
      \inferrule*{
        e_1\gsubtp e_2
      }{
        x := e_1 \gsubtp x := e_2
      }
      \and
      \inferrule*{
        e_1\gsubtp e_2 
      }{
        \eoutput (\lab, e_1) \gsubtp \eoutput (\lab, e_2)
      }
      \and
      \inferrule*{
        e_1\gsubtp e_2 \\ c_1 \gsubtp c_1' \\ c_2 \gsubtp c_2'
      }{
        \eif^X \;e_1\; \ethen\; c_1 \;\eelse\; c_2 \gsubtp    \eif^X\; e_2\;\ethen\; c_1'\; \eelse\; c_2'
      }
      \and
      \inferrule*{
        e_1\gsubtp e_2 \\ c_1 \gsubtp c_2
      }{
        \ewhile^X\; e_1\;\edo\; c_1 \gsubtp    \ewhile^X\; e_2\;\edo\; c_2
      }
    \end{mathpar}
    \flushleft
    \noindent\textbf{Store, Types and Typing-context:}
    \begin{mathpar}
      \inferrule*{
        \glab \gsubtp \glab'
      }{
        \tau^\glab \gsubtp \tau^{\glab'}
      }
      \and
      \inferrule*{
        \forall x \in \Gamma.~ \Gamma(x) \sqsubseteq \Gamma'(x)
      }{
        \Gamma \gsubtp \Gamma'
      }
    \end{mathpar}
    \flushleft
    \noindent\textbf{PC-stack and Configurations:}
    \begin{mathpar}
      \inferrule*{
      }{
        \emptyset \gsubtp \emptyset
      }
      \and
      \inferrule*{
        \iota \gsubtp \iota' \\ \glab \gsubtp \glab'
      }{
        \iota\; \glab \gsubtp \iota'\; \glab' 
      }
      \and
      \inferrule*{
        \kappa_1 \gsubtp \kappa_1' \\         \kappa_2 \gsubtp \kappa_2'
      }{
        \kappa_1 \rhd \kappa_2 \gsubtp \kappa_1' \rhd \kappa_2'
      }
      \and
      \inferrule*{
        \forall x \in \delta.~ \delta(x) \sqsubseteq \delta'(x)
      }{
        \delta \sqsubseteq \delta' 
      }
      \and
      \inferrule*{
        \kappa \gsubtp \kappa' \\ \delta \gsubtp \delta' \\ c \gsubtp c'
      }{
        \kappa, \delta \sepidx{} c \gsubtp \kappa', \delta' \sepidx{} c'
      }
    \end{mathpar}
        \vspace*{-2mm}
    \caption{Precision relations}
    \label{fig:precision}
  \vspace*{-2mm}
\end{figure}

The gradual guarantees state that if a program with more precise
labels type-checks and is accepted by the runtime semantics of the
gradual type system, then the same program with less precise labels
is also accepted by the gradual type system. To establish these
guarantees, we define a precision relation between labels, expressions, and
commands. The precision relations are shown in Fig.~\ref{fig:precision}. 

Our type system with gradual labels satisfies the static gradual
guarantee.  The dynamic gradual guarantee is also ensured by our calculus,
i.e., if a command takes a step under a store and $\pc$ stack, then a
less precise command can also take a step under a less precise store
and $\pc$ stack.
\iffull{
  The proofs are shown in Appendix~\ref{sec:app-sgg} and~\ref{sec:app-dgg}.}

\begin{thm}[Static Guarantee]
If~ $\Gamma_1; \glab_1 \vdash c_1$, $\Gamma_1\sqsubseteq\Gamma_2$, 
  $\glab_1\sqsubseteq\glab_2$, and $c_1\sqsubseteq c_2$, 
then $\Gamma_2; \glab_2 \vdash c_2$. 
\end{thm}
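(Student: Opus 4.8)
The plan is to prove the static guarantee by structural induction on the derivation of $\Gamma_1; g_1 \vdash c_1$, after first establishing monotonicity of the auxiliary relations used inside the typing rules. The fact that makes the whole argument go through cleanly is that precision is syntax-directed: $c_1 \sqsubseteq c_2$ and $e_1 \sqsubseteq e_2$ force $c_1,c_2$ (resp.\ $e_1,e_2$) to have identical shape, differing only in the security-label annotations (and, in the $\wge$ formulation, in the carried-along evidence and write sets), while $\Gamma_1 \sqsubseteq \Gamma_2$ changes only the labels attached to variable types, never the base types $\tau$. In particular, the base type assigned to an expression and the value of $\wtsetof$ on a command are invariant under $\sqsubseteq$.

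First I would prove three auxiliary lemmas. (i) \emph{Monotonicity of consistent ordering}: if $g_a \clabless g_b$, $g_a \sqsubseteq g_a'$, and $g_b \sqsubseteq g_b'$, then $g_a' \clabless g_b'$; likewise for $\csubtp$, which is defined through $\clabless$. This is a short case split on whether each $g_i'$ is $?$ or a static label, using that the only thing below a static label in the precision order is that label itself; when both sides stay static it reduces to monotonicity of $\labless$, and when either side becomes $?$ the conclusion is immediate. (ii) \emph{Monotonicity of consistent join}: $\cjoin$ is total, and if $g = g_a \cjoin g_b$ with $g_a \sqsubseteq g_a'$ and $g_b \sqsubseteq g_b'$, then $g_a' \cjoin g_b' = g'$ for some $g' \sqsupseteq g$; again a finite case analysis on whether the primed labels are $?$ or static, the only mildly delicate case being when $?$ meets the top label $\top$. (iii) \emph{Monotonicity of expression typing}: if $\Gamma_1 \vdash e_1 : \tau^{g_e}$, $\Gamma_1 \sqsubseteq \Gamma_2$, and $e_1 \sqsubseteq e_2$, then $\Gamma_2 \vdash e_2 : \tau^{g_e'}$ with the \emph{same} base type $\tau$ and $g_e \sqsubseteq g_e'$. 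Lemma (iii) is itself an induction on the expression typing derivation: constants and variables are immediate (for variables using $\Gamma_1 \sqsubseteq \Gamma_2$); the $\bop$ case uses the induction hypothesis to get operand types with the same base type and precision-above labels and then closes with Lemma (ii); the cast case re-derives the consistent-subtyping side condition at the coarser types via Lemma (i).

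Given these lemmas, the main induction on $\Gamma_1; g_1 \vdash c_1$ is largely bookkeeping. \textsc{Skip} is trivial, and \textsc{Seq} follows by applying the induction hypothesis to each sub-derivation with the same less precise $\Gamma_2$ and pc label $g_2$. For \textsc{Assign} and \textsc{Out}, I apply Lemma (iii) to the right-hand-side expression to obtain a type whose label is precision-above the original, and then chain Lemma (i) to re-establish each $\clabless$ side condition: e.g.\ from $g_1 \clabless g_x$ (pc below the target variable's label $g_x$), $g_1 \sqsubseteq g_2$, and $g_x \sqsubseteq g_x'$ we get $g_2 \clabless g_x'$, and similarly for the ``expression label $\clabless$ target label'' premise (for \textsc{Out}, the channel label $\lab$ stays fixed and $\lab \sqsubseteq \lab$ reflexively). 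For \textsc{If} and \textsc{While}, Lemma (iii) types the guard, which stays a $\tbool$ by base-type preservation, Lemma (ii) shows the branch/body pc label $g_2 \cjoin g_e'$ is precision-above the original $g_1 \cjoin g_e$ so the induction hypothesis applies to the branch/body sub-derivations, and invariance of $\wtsetof$ matches the $\eif^X$ / $\ewhile^X$ annotations verbatim (in the $\wge$ formulation). The expression cast rule is already handled inside Lemma (iii).

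I expect the only step requiring genuine thought rather than mechanical case analysis is Lemma (iii), and within it the discipline of keeping the base type $\tau$ fixed while only the label loosens: this is exactly what lets the $\bop$ rule re-apply (both operands must retain a common base type) and the cast rule re-typecheck (its consistent-subtyping witness must be re-derived between the coarsened types). Everything downstream of Lemmas (i)--(iii) is a routine rule-by-rule check, and the hypothesis $g_1 \sqsubseteq g_2$ is used precisely to discharge the pc-label premises in \textsc{Assign}, \textsc{Out}, \textsc{If}, and \textsc{While}.
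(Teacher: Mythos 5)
Your proposal matches the paper's proof essentially step for step: the paper establishes the same three ingredients (monotonicity of $\clabless$ under precision, closure of $\cjoin$ under precision as part of its closure-under-refinement lemma, and an expression-level static guarantee that keeps the base type fixed while the label loosens) and then performs the same rule-by-rule induction on the command typing derivation, invoking the expression lemma for \rulename{Assign}, \rulename{Out}, \rulename{If}, \rulename{While} and the join lemma for \rulename{If}/\rulename{While}. The only cosmetic difference is that the paper's precision relation pins the cast target type to be identical on both sides, so its cast case needs only your Lemma (i) with reflexivity on the target label.
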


\begin{thm}[Dynamic Guarantee]
  \label{thm:dgg}
  If~ $\kappa_1, \delta_1 \sepidx{} c_1 \stackrel{{\alpha_1}}{\stepsto}
  \kappa'_1,\delta'_1\sepidx{} c'_1$ and 
  $\kappa_1, \delta_1 \sepidx{} c_1 \sqsubseteq \kappa_2, \delta_2 \sepidx{} c_2$, 
  then $\kappa_2, \delta_2 \sepidx{} c_2 \stackrel{{\alpha_2}}{\stepsto}
  \kappa'_2,\delta'_2\sepidx{} c'_2$ such that
  $\kappa'_1, \delta_1' \sepidx{} c_1' \sqsubseteq\kappa'_2, \delta_2' \sepidx{} c_2'$ 
  and $\alpha_1 = \alpha_2$.
\end{thm}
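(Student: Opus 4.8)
The plan is to prove Theorem~\ref{thm:dgg} by induction on the derivation of the single step $\kappa_1, \delta_1 \sepidx{} c_1 \stackrel{\alpha_1}{\stepsto} \kappa'_1, \delta'_1 \sepidx{} c'_1$, with a case analysis on the last rule of the monitor semantics (Fig.~\ref{fig:mon-semantics-cmd}, together with its abort rules). First I would extend the precision relation of Fig.~\ref{fig:precision} to the runtime-only commands $\{c\}$ and the unannotated branch $\eif\; v\; \ethen\; c_1\; \eelse\; c_2$ in the obvious shape-preserving way, since these occur in intermediate configurations. I would also record two easy facts: precision preserves write sets, i.e.\ $c_1 \gsubtp c_2$ implies $\wtsetof(c_1) = \wtsetof(c_2)$ and related $\eif^X$/$\ewhile^X$ carry the same $X$; and precision preserves raw values, i.e.\ $v_1 \gsubtp v_2$ with $v_i = (\iota_i\; u_i)^{g_i}$ forces $u_1 = u_2$. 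The latter is why the conclusion gives $\alpha_1 = \alpha_2$ rather than merely $\alpha_1 \gsubtp \alpha_2$: related predicates select the same branch, and related output expressions emit the same raw value.

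The technical core is a family of monotonicity lemmas for the label-interval machinery of Fig.~\ref{fig:label-op} and Fig.~\ref{fig:store-aux-single}, all discharged by unfolding definitions plus monotonicity of $\labjoin$ and $\labmeet$. Specifically: (i) $\gamma$, interval-$\labjoin$, $\cjoin$ and interval-$\bowtie$ are total and monotone w.r.t.\ $\gsubtp$; (ii) $\refineof$ and $\newlab$ are \emph{downward-definedness-closed} --- if the operation is defined on more precise (narrower) arguments, it is defined on any less precise (wider) arguments, and its result on the wider arguments is less precise than on the narrower ones. For $\refineof([\lab_{1l},\lab_{1r}],[\lab_{2l},\lab_{2r}])$, widening moves $\lab_{1l}$ down and $\lab_{1r}\labmeet\lab_{2r}$ up, so the guard $\lab_{1l}\labless\lab_{1r}\labmeet\lab_{2r}$ is preserved, and similarly for $\lab_{2l}\labjoin\lab_{1l}\labless\lab_{2r}$; an analogous check handles $\newlab$. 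Then (iii) $\reflvof{}$, $\updval$ and $\rflof{}$ inherit downward-definedness-closure and monotonicity from $\refineof$ and $\newlab$ (for $\rflof{}$ one uses that the write set $X$ is identical on both sides). Finally (iv) expression evaluation is monotone: if $\delta_1 \sepidx{} e_1 \evalsto v_1$ (a value, not $\eabort$), $\delta_1 \gsubtp \delta_2$ and $e_1 \gsubtp e_2$, then $\delta_2 \sepidx{} e_2 \evalsto v_2$ for some value $v_2$ with $v_1 \gsubtp v_2$, proved by induction on the evaluation using (i)--(ii) for \rulename{M-Bop} and \rulename{M-Cast}. A pervasive corollary: whenever the more precise configuration takes a non-abort step, every label-interval operation on the less precise side is defined, so the less precise configuration does not abort --- which is exactly what is needed, since the conclusion demands an ordinary step on the right.

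With these lemmas each case is routine. \rulename{M-Seq} and \rulename{M-Pc} follow from the induction hypothesis on the sub-derivation (for \rulename{M-Pc} also unfolding $\kappa_1 \gsubtp \kappa_2$ componentwise); \rulename{M-Pop}, \rulename{M-Skip} and \rulename{M-While} are purely structural. For \rulename{M-Assign} and \rulename{M-Out}, evaluate $e_2$ by (iv), refine by (iii) via $\reflvof{}$ and $\updval$, and conclude $\delta'_1 \gsubtp \delta'_2$; for \rulename{M-Out} the translated output expression casts to the static channel label $\lab$, so $\updval([\lab,\lab],\cdot)$ collapses the emitted interval to $[\lab,\lab]$ on both sides, and with equal raw value we obtain $\alpha_1 = \alpha_2$. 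For \rulename{M-If} the evaluated predicate carries the same raw boolean on both sides (by (iv) and raw-value preservation), so the same $c_i$ is taken and the pushed entry $(\iota_\pc\labjoin\iota)\,(\gpc\cjoin g)$ is related by (i); \rulename{M-If-Refine} is the same, using downward-definedness of $\rflof{}$ on the common write set $X$ to get $\delta'_1 \gsubtp \delta'_2$. In every case $\alpha_1 = \alpha_2$, being either silent or handled by the output argument.

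I expect the main obstacle to be nailing down the orientation in lemmas (ii)--(iii): ``less precise'' means a \emph{wider} interval, so one must check that widening never turns a defined label-interval operation into $\eundef$. This is precisely where $\m{GSL_{Ref}}$'s preemptive-abort rule breaks the guarantee, and where our design makes the argument go through --- the monitor only ever \emph{refines} (narrows) intervals, and the hybrid refinement in \rulename{M-If-Refine} narrows the same set $X$ of variables for related stores. The remaining bookkeeping, maintaining precision of the $\pc$ stack $\kappa$ across the push in \rulename{M-If} and the pops in \rulename{M-Pop}/\rulename{M-Pc}, is straightforward once precision is extended to runtime commands.
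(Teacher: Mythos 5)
Your proposal matches the paper's proof in essence: the paper likewise proceeds by induction on the command-step derivation, using a "closed under refinement" lemma for the interval operations ($\bowtie$, $\labjoin$, $\refineof$, evidence-$\bowtie$, $\cjoin$), an expression-level dynamic-guarantee lemma, and monotonicity of $\reflvof{}$, $\updval$ and $\rflof{}$, with \rulename{M-Assign}, \rulename{M-Out} and \rulename{M-If-Refine} as the only cases needing these lemmas. Your extra bookkeeping (explicit definedness transfer under widening, precision on runtime commands, equal raw values and write sets) only makes explicit what the paper's terser argument leaves implicit.
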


\section{Discussion}
\label{sec:discussion}

\Paragraph{Monitor comparison}
Our monitor is a hybrid monitor.  The differences between our monitor
and traditional hybrid
monitors (c.f.,~\cite{russo2010,moore2011,buiras2015,hedin2015,bedford2017})
are that (1), we update the memory before executing the branch, while
other hybrid monitors update the labels at the merge points; (2), our
monitor will not upgrade variables with fixed static labels (the
monitor will abort) and only refine label intervals for dynamically
labeled variables.
Fig.~\ref{fig:mon-comp} highlights the differences between our
monitor and other information flow monitors using the example program
in Listing~\ref{motvEg} and a two point lattice ($L\labless H$).
We show two cases for our monitor, differing
in $y$'s security label (?, or $L$). Note that
hybrid monitors including ours have the same behavior regardless of $x$'s value.
When $y$ has a dynamic label, our
monitor is more precise than NSU, as precise as
permissive-upgrade~\cite{austin2010}, and less precise than a traditional hybrid
monitor~\cite{russo2010}.  Ours can be as precise as a traditional hybrid monitor
if $z$'s initial value is $([L, H]\etrue)^?$. This would allow
us to refine $z$'s label interval, and only abort at the output. When
$y$ has a static label, or an effectively static label (i.e., $([L,
  L]\etrue)^L$), our monitor is the least precise and will abort at the
first branch. This rigidity is due to our decision to not update
variables' label-intervals, except by refinement.

\begin{figure*}[tbp]
  \small
  \begin{center}
      \begin{tabular}{|l|c|c|c|c|c|c|} 
\hline
 $y =\etrue^L$, $z =\etrue^L$  & $x =\efalse^H$         & \multicolumn{2}{c|}{$x =\etrue^H$}           &                 & \multicolumn{2}{c|}{$\wge$ }                                   \\ 
\cline{1-4}\cline{6-7}
\multicolumn{1}{|c|}{Program}                        & NSU/Permissive         & NSU                                                   & Permissive                      & Hybrid          & $y = [L,H]\etrue^?$           & $y = [L,L]\etrue^L$            \\ 
\hline
 $\eif$ $x$                    & branch not taken       & branch taken~                                         & branch taken                    &                 & $y \uparrow [H,H]$            & try $y\uparrow [H,H]$, abort   \\ 
 $\ethen$ $y := \efalse^L$     &                        & $\pc=H$, abort                                        & $y \uparrow P$ & $y \uparrow H$  &                               & \multicolumn{1}{c|}{}           \\ 
 $\eif$ $y$                    & branch taken           & \multicolumn{1}{l|}{}                                 & abort                           &                 & try $z\uparrow [H,H]$, abort  & \multicolumn{1}{c|}{}           \\ 
 $\ethen$ $z := \efalse^L$     &                        & \multicolumn{1}{l|}{}                                  & \multicolumn{1}{l|}{}           & $z\uparrow H$   & \multicolumn{1}{l|}{}          & \multicolumn{1}{l|}{}           \\
$\eoutput(L, z)$               & $\eoutput(L,\efalse)$  & \multicolumn{1}{l|}{}                                  & \multicolumn{1}{l|}{}           & abort           & \multicolumn{1}{l|}{}          & \multicolumn{1}{l|}{}           \\\hline
\end{tabular}
\end{center}
\caption{Comparison of monitor behavior. $y \uparrow \lab$ denotes
  monitor's attempt to update $y$'s label (interval).}
\label{fig:mon-comp}
    \vspace*{-2mm}
\end{figure*}


\Paragraph{Implicit leaks manifested in noninterference proofs}
Let's revisit the example at the end of
Section~\ref{sec:monitor-semantics} to see how the implicit
leak manifests in the paired execution and why it leads to our current
design; where we do not use the write sets in the
$\eif$ statements and simply refine the label-intervals. Because $x$
is $H$, it's initialized with a paired value.
\[
\begin{array}{lcl}
  \delta & = & x\mapsto\epair{ [H, H]\etrue^H}{ [H, H]\efalse^H},\\
         &  & y\mapsto [L, H] \etrue^?,\ z\mapsto [L, L] \etrue^L \\
  c_1 & = & \eif\ x\ \ethen\ y := [L, H]\efalse^?\ \eelse\ \m{skip} \\
  c_2& = & \eif\ y\ \ethen\ z := [L,L]\efalse^L\ \eelse\ \m{skip}
\end{array}
\]
\[
\begin{array}{rl}
[L, L]\; L,\ \delta &\sepidx{} c_1; c_2 \stepsto \\
\cdots &\sepidx{} \eif\ \epair{ [H, H]\etrue^H}{ [H, H]\efalse^H}\  \ethen\\
& ~~~~~~~y := [L, H]\efalse^?\ \eelse\ \m{skip}; c_2 \stepsto \\ 
\cdots &\sepidx{}  \epair{\cdots, y := [L, H]\efalse^?; c_2}{\cdots, \m{skip}; c_2} \stepsto
\end{array}
\]
The variable $y$ is updated only in the left branch. To prove
soundness and completeness of the paired semantics, the two executions
should be independent. Therefore, we try to update $y$ in the 
store as $\epair{[H, H]\efalse}{ [L, H] \etrue}^?$. However, this pair
is not well-formed because pairs are only well-typed if both intervals
are in $H$. Clearly, the right branch of $y$ does not satisfy this
requirement. Therefore, we cannot prove preservation for the
assignment case. For preservation to succeed, we would need to refine
the right branch to be $[H, H] \etrue$ when assigning to $y$ in the
left execution. But then the two executions are no longer independent,
which breaks soundness (i.e., the projected execution is not
guaranteed to make progress or stay in the same state). 

With these constraints in place, we need to refine $y$ before 
the branch, which ultimately leads to our final design.


\section{Related Work}
\label{sec:related}

\Paragraph{Static Information Flow Type Systems}
Quite a few type-systems have been
proposed to statically enforce noninterference by annotating variables
with labels. Volpano et al.~\cite{volpano1996} present the first
type-system with information flow labels that satisfies a variant of
noninterference, also known as termination-insensitive
noninterference. 
If all variables are annotated with concrete security labels, our
  type system behaves the same as a flow-insensitive information flow
  type system. Being a gradual type system, we can additionally accept
  programs with no security labels and enforce termination-insensitive
  noninterference at runtime. Our formalization borrows the
proof-technique from FlowML, presented by Pottier and
Simonet~\cite{pottier2002}, for enforcing noninterference using
pairs. 

\Paragraph{Static type systems that resemble gradual typing}
JFlow~\cite{myers1999} (and later Jif) includes polymorphic
  labels, for which programmers can specify the upper bound of a
  polymorphic label. Polymorphic labels are essentially 
  (bounded) universally quantified labels and these labels are
  instantiated by concrete labels at runtime. There is no label
  refinement associated with polymorphic labels. Jif also allows
  run-time labels (also called dynamic labels). These are runtime
  representation of label objects that users can generate and perform
  tests on. This is not to be confused with the dynamic label (?) in
  gradual typing. Runtime labels do not mean unknown security labels,
  nor are they refined at runtime. Finally, label inference is a
  widely used compile-time algorithm to reduce programmers' annotation
  burdens. A flow-insensitive type system with the most powerful
  inference algorithm is less permission than our system.  Ill-typed
  programs, rejected by a type system, can be accepted by our type
  system. Their safety is ensured by our runtime monitors.  

\Paragraph{Purely dynamic monitors}
Dynamic approaches use a runtime monitor to track the flow of
information through the program. The labels are mostly flow-sensitive
in nature. Austin and Flanagan~\cite{austin2009} present a
purely dynamic information flow monitoring approach that disallows
assignments to public values in secret contexts. Our monitor semantics
follows a similar approach to prevent information leaks at
runtime. Subsequent work presents approaches to make the analysis more
permissive and amenable to dynamic
languages~\cite{austin2010,hedin2012,plas2014,inlining-js}.  A recent
paper shows the equivalence between coarse-grained and fined-grained
dynamic monitors~\cite{fine-coarse-popl19}.  Detailed comparisons
  between our monitor and dynamic monitors can be found in
  Fig.~\ref{fig:mon-comp}.  

\Paragraph{Hybrid monitors}
To leverage the benefits of static and dynamic approaches for
precision and permissiveness, researchers have also proposed hybrid
approaches to enforce
noninterference~\cite{chandra2007,russo2010,moore2011,just2011,buiras2015,hedin2015,hybrid-csf15,bedford2017}. We
demonstrate that the hybrid monitoring approach is suitable for
generating runtime behavior of gradual types that rely on refining
label intervals. Gradual typing has the added benefit of allowing
programmers to reject ill-typed programs.  As shown in
  Fig.~\ref{fig:mon-comp}, our monitor aborts earlier than a
  typical hybrid monitor because of the lack of support for
  label updating. We support termination-insensitive 
  noninterference, while others support progress-sensitive 
  noninterference~\cite{hybrid-csf15,bedford2017}.


\Paragraph{Gradual information flow type systems}
More closely related to our work are works on gradual security
types. Disney and Flanagan~\cite{disney2011} study gradual security
types for a pure lambda calculus, and Fennell and
Thiemann~\cite{fennell2013} present a gradual type system for a
calculus with ML-style references. However, these works are based on
adding explicit programmer-provided checks and casts to the
code. Fennell and Thiemann~\cite{fennell2016} extend their prior work
to object-oriented programs in a flow-sensitive setting for a
Java-like language. They use a hybrid approach to perform effect
analysis that upgrades the labels of variables similar to the
write set used in our analysis. At runtime, these systems cast the
dynamic label to a fixed label, rather than a set of possible labels,
and the monitors updates labels of memory locations, which we do not do.
On the other hand, our approach has fixed gradual labels and refines
only the label-intervals associated with the value to satisfy dynamic
gradual guarantee. More recently, Toro et al.~\cite{toro2018}
presented a type-driven gradual type system for a higher-order
language with references based on abstract gradual
typing~\cite{garcia2016}. Their formalization satisfies the static
gradual guarantee, but sacrifices the dynamic gradual guarantee for
noninterference. They briefly discuss the idea of using hybrid
approaches and faceted evaluation for regaining the dynamic gradual
guarantee. The language presented in this paper is simpler than their
language but has mutable global variables and hence, a similar
issue with proving noninterference while satisfying the dynamic
gradual guarantee.

\textit{GLIO}~\cite{gifc-lics2020} presents another
interpretation of gradual information flow types that enjoys both
noninterference and gradual guarantees. \textit{GLIO} is the most
expressive among the above-mentioned projects; it includes
higher-order functions, general references, coarse-grained
information flow control, and first-class labels. \textit{GLIO}'s monitor
decides the concrete label for a dynamically labeled reference at
allocation time. While avoiding problems stemmed from refining label
intervals, the concrete label results in a less permissive
approach.

Our work explores yet another design space of gradual information
  flow types and highlights the necessity of a hybrid approach for a
  system that refines label intervals to ensure both noninterference
  and the gradual guarantees.
Extending our static type system to include higher-order
  functions and references would require a precise static analysis to
  determine the write set accurately, 
   as pointed out by prior
  work~\cite{moore2011}.  This is common for hybrid
  approaches. For instance, LJGS~\cite{fennell2016} uses a
  sophisticated points-to analysis. Moore and Chong have identified
  sufficient conditions for safely incorporating memory abstractions
  and static analyses into a hybrid information-flow
  monitor~\cite{moore2011}. An interesting future direction is to
  investigate such conditions and abstractions for a higher-order
  language. Another possible direction for handling languages with
  first-class functions and references can be using the 
  ideas proposed by Nielson et al.~\cite{nielson1999} and
  Foster et al.~\cite{foster2002}, who use
  regions and side-effect analysis to determine aliases.

\section{Conclusion}
We presented a gradual information flow type system for
a simple imperative language that enforces
termination-insensitive noninterference and ensures the gradual 
guarantee at the same time. 
We demonstrated that our hybrid monitor can stop
 implicit flows by refining the labels for references in
  the write-sets of both branches,
 regardless of which branch is taken.
The non-conventional proof
technique of noninterference that we used
helps us identify the conditions for ensuring the gradual guarantees.

\section*{Acknowledgment}
We would like to thank the anonymous reviewers for their
insightful comments and feedback. This work was supported
in part by the National Science Foundation via grant
CNS1704542 and the CyLab Presidential Fellowship at
Carnegie Mellon University.   

\bibliographystyle{IEEEtran}
\bibliography{main-full}

\begin{thebibliography}{10}
\providecommand{\url}[1]{#1}
\csname url@samestyle\endcsname
\providecommand{\newblock}{\relax}
\providecommand{\bibinfo}[2]{#2}
\providecommand{\BIBentrySTDinterwordspacing}{\spaceskip=0pt\relax}
\providecommand{\BIBentryALTinterwordstretchfactor}{4}
\providecommand{\BIBentryALTinterwordspacing}{\spaceskip=\fontdimen2\font plus
\BIBentryALTinterwordstretchfactor\fontdimen3\font minus
  \fontdimen4\font\relax}
\providecommand{\BIBforeignlanguage}[2]{{%
\expandafter\ifx\csname l@#1\endcsname\relax
\typeout{** WARNING: IEEEtran.bst: No hyphenation pattern has been}%
\typeout{** loaded for the language `#1'. Using the pattern for}%
\typeout{** the default language instead.}%
\else
\language=\csname l@#1\endcsname
\fi
#2}}
\providecommand{\BIBdecl}{\relax}
\BIBdecl

\bibitem{volpano1996}
D.~Volpano, C.~Irvine, and G.~Smith, ``A sound type system for secure flow
  analysis,'' \emph{Journal of Computer Security}, vol.~4, no. 2-3, pp.
  167--187, Jan. 1996.

\bibitem{siek2006}
J.~G. Siek and W.~Taha, ``Gradual typing for functional languages,'' in
  \emph{In Scheme And Functional Programming Workshop}, 2006, pp. 81--92.

\bibitem{disney2011}
T.~Disney and C.~Flanagan, ``Gradual information flow typing,'' in
  \emph{Proceedings of the 2nd International Workshop on Scripts to Programs
  Evolution}, 2011.

\bibitem{fennell2013}
L.~Fennell and P.~Thiemann, ``Gradual security typing with references,'' in
  \emph{Proceedings of the 2013 IEEE 26th Computer Security Foundations
  Symposium}, 2013, pp. 224--239.

\bibitem{fennell2016}
------, ``{LJGS:} gradual security types for object-oriented languages,'' in
  \emph{30th European Conference on Object-Oriented Programming}, 2016, pp.
  9:1--9:26.

\bibitem{toro2018}
M.~Toro, R.~Garcia, and E.~Tanter, ``Type-driven gradual security with
  references,'' \emph{ACM Trans. Program. Lang. Syst.}, vol.~40, no.~4, pp.
  16:1--16:55, Dec. 2018.

\bibitem{gifc-lics2020}
A.~A. de~Amorim, M.~Fredrikson, and L.~Jia, ``Reconciling noninterference and
  gradual typing,'' in \emph{Proceedings of the 35th ACM/IEEE Symposium on
  Logic in Computer Science}, 2020, pp. 116--129.

\bibitem{siek2015}
J.~G. Siek, M.~M. Vitousek, M.~Cimini, and J.~T. Boyland, ``{Refined Criteria
  for Gradual Typing},'' in \emph{1st Summit on Advances in Programming
  Languages}, ser. Leibniz International Proceedings in Informatics, vol.~32,
  2015, pp. 274--293.

\bibitem{garcia2016}
R.~Garcia, A.~M. Clark, and E.~Tanter, ``Abstracting gradual typing,'' in
  \emph{Proceedings of the 43rd Annual ACM SIGPLAN-SIGACT Symposium on
  Principles of Programming Languages}, 2016, pp. 429--442.

\bibitem{goguen1982}
J.~A. Goguen and J.~Meseguer, ``Security policies and security models,'' in
  \emph{Proc. IEEE Symposium on Security and Privacy}, 1982, pp. 11--20.

\bibitem{austin2009}
T.~H. Austin and C.~Flanagan, ``Efficient purely-dynamic information flow
  analysis,'' in \emph{Proc. ACM SIGPLAN Fourth Workshop on Programming
  Languages and Analysis for Security}, 2009, pp. 113--124.

\bibitem{austin2010}
------, ``Permissive dynamic information flow analysis,'' in \emph{Proc. 5th
  ACM SIGPLAN Workshop on Programming Languages and Analysis for Security},
  2010, pp. 3:1--3:12.

\bibitem{russo2010}
A.~Russo and A.~Sabelfeld, ``Dynamic vs. static flow-sensitive security
  analysis,'' in \emph{Proceedings of the 2010 23rd IEEE Computer Security
  Foundations Symposium}, 2010, pp. 186--199.

\bibitem{hunt2006}
S.~Hunt and D.~Sands, ``On flow-sensitive security types,'' in
  \emph{Proceedings of the 33rd ACM SIGPLAN-SIGACT Symposium on Principles of
  Programming Languages}, 2006, pp. 79--90.

\bibitem{volpano1997}
D.~{Volpano} and G.~{Smith}, ``Eliminating covert flows with minimum typings,''
  in \emph{Proceedings 10th Computer Security Foundations Workshop}, June 1997,
  pp. 156--168.

\bibitem{terauchi2005}
T.~Terauchi and A.~Aiken, ``Secure information flow as a safety problem,'' in
  \emph{Proceedings of the 12th International Conference on Static Analysis},
  2005, pp. 352--367.

\bibitem{moore2011}
S.~Moore and S.~Chong, ``Static analysis for efficient hybrid information-flow
  control,'' in \emph{Proceedings of the 2011 IEEE 24th Computer Security
  Foundations Symposium}, 2011, pp. 146--160.

\bibitem{pottier2002}
F.~Pottier and V.~Simonet, ``Information flow inference for {ML},'' in
  \emph{Proceedings of the 29th ACM SIGPLAN-SIGACT Symposium on Principles of
  Programming Languages}, 2002, pp. 319--330.

\bibitem{full-version}
A.~Bichhawat, M.~McCall, and L.~Jia, ``First-order gradual information flow
  types and gradual guarantees,'' \url{https://arxiv.org/abs/2003.12819}, 2021.

\bibitem{denning1977}
D.~E. Denning and P.~J. Denning, ``Certification of programs for secure
  information flow,'' \emph{Commun. ACM}, vol.~20, no.~7, pp. 504--513, Jul.
  1977.

\bibitem{hedin2015}
D.~Hedin, L.~Bello, and A.~Sabelfeld, ``Value-sensitive hybrid information flow
  control for a {JavaScript}-like language,'' in \emph{Proceedings of the 2015
  IEEE 28th Computer Security Foundations Symposium}, 2015, pp. 351--365.

\bibitem{bedford2017}
A.~Bedford, S.~Chong, J.~Desharnais, E.~Kozyri, and N.~Tawbi, ``A
  progress-sensitive flow-sensitive inlined information-flow control monitor,''
  \emph{Computers \& Security}, vol.~71, pp. 114 -- 131, 2017.

\bibitem{buiras2015}
P.~Buiras, D.~Vytiniotis, and A.~Russo, ``{HLIO}: Mixing static and dynamic
  typing for information-flow control in {Haskell},'' in \emph{Proceedings of
  the 20th ACM SIGPLAN International Conference on Functional Programming},
  2015, pp. 289--301.

\bibitem{myers1999}
A.~C. Myers, ``{JFlow}: Practical mostly-static information flow control,'' in
  \emph{Proceedings of the 26th ACM SIGPLAN-SIGACT Symposium on Principles of
  Programming Languages}, 1999, pp. 228--241.

\bibitem{hedin2012}
D.~Hedin and A.~Sabelfeld, ``Information-flow security for a core of
  {JavaScript},'' in \emph{Proceedings of the 2012 IEEE 25th Computer Security
  Foundations Symposium}, 2012, pp. 3--18.

\bibitem{plas2014}
A.~Bichhawat, V.~Rajani, D.~Garg, and C.~Hammer, ``Generalizing
  permissive-upgrade in dynamic information flow analysis,'' in
  \emph{Proceedings of the Ninth Workshop on Programming Languages and Analysis
  for Security}, 2014, pp. 15:15--15:24.

\bibitem{inlining-js}
J.~F. Santos and T.~Rezk, ``An information flow monitor-inlining compiler for
  securing a core of {JavaScript},'' in \emph{ICT Systems Security and Privacy
  Protection}, 2014, pp. 278--292.

\bibitem{fine-coarse-popl19}
M.~Vassena, A.~Russo, D.~Garg, V.~Rajani, and D.~Stefan, ``From fine- to
  coarse-grained dynamic information flow control and back,'' \emph{Proc. ACM
  Program. Lang.}, vol.~3, no. POPL, Jan. 2019.

\bibitem{chandra2007}
D.~Chandra and M.~Franz, ``Fine-grained information flow analysis and
  enforcement in a {Java} virtual machine,'' in \emph{23rd Annual Computer
  Security Applications Conference}, 2007, pp. 463--475.

\bibitem{just2011}
S.~Just, A.~Cleary, B.~Shirley, and C.~Hammer, ``Information flow analysis for
  {JavaScript},'' in \emph{Proceedings of the 1st ACM SIGPLAN International
  Workshop on Programming Language and Systems Technologies for Internet
  Clients}, 2011, pp. 9--18.

\bibitem{hybrid-csf15}
A.~{Askarov}, S.~{Chong}, and H.~{Mantel}, ``Hybrid monitors for concurrent
  noninterference,'' in \emph{2015 IEEE 28th Computer Security Foundations
  Symposium}, 2015, pp. 137--151.

\bibitem{nielson1999}
F.~Nielson, H.~R. Nielson, and C.~Hankin, \emph{Principles of Program
  Analysis}.\hskip 1em plus 0.5em minus 0.4em\relax Berlin, Heidelberg:
  Springer-Verlag, 1999.

\bibitem{foster2002}
J.~S. Foster, T.~Terauchi, and A.~Aiken, ``Flow-sensitive type qualifiers,'' in
  \emph{Proceedings of the ACM SIGPLAN 2002 Conference on Programming Language
  Design and Implementation}, 2002, pp. 1--12.

\end{thebibliography}

\appendix
\subsection{From $\wg$ to $\wge$}
\label{app:trans-lemmas}
\begin{lem}
  \label{lem:i-sub-g}
  If $\iota = \gamma(g)$, then $\iota \sqsubseteq \gamma(g)$
\end{lem}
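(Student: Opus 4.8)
The plan is to recognize that this lemma is just reflexivity of the label-interval precision order $\gsubtp$, specialized to the interval $\gamma(g)$. First I would recall from Fig.~\ref{fig:label-op} that $[\lab_1,\lab_1']\gsubtp[\lab_2,\lab_2']$ holds exactly when $\lab_2\labless\lab_1$ and $\lab_1'\labless\lab_2'$. Instantiating both intervals to the same $[\lab_l,\lab_r]$, the two obligations become $\lab_l\labless\lab_l$ and $\lab_r\labless\lab_r$, which hold because $\labless$ is a partial order and hence reflexive. Therefore $\iota\gsubtp\iota$ for every interval $\iota$, and in particular for $\iota=\gamma(g)$ we obtain $\iota\gsubtp\gamma(g)$.

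If a fully explicit argument is preferred, I would case-split on $g$. When $g={?}$, $\gamma(g)=[\bot,\top]$ and the obligations are $\bot\labless\bot$ and $\top\labless\top$; when $g=\lab$, $\gamma(g)=[\lab,\lab]$ and the obligation is $\lab\labless\lab$. In every case, reflexivity of $\labless$ discharges the goal, and since $\gamma(g)$ is defined in both cases and is always a valid interval ($\bot\labless\top$ and $\lab\labless\lab$), nothing further need be checked.

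There is no real obstacle here; the only thing to be careful about is to keep the interval-lattice ordering $\gsubtp$ (precision of intervals) distinct from the label ordering $\labless$ appearing inside the interval endpoints, and to cite the correct clause of Fig.~\ref{fig:label-op}. The lemma is presumably stated only to package this reflexivity fact for convenient reuse in the translation well-typedness proofs --- for instance, to discharge the premise $\iota\gsubtp\gamma(\glab)$ of \textsc{G-Const} immediately after \textsc{T-Bool}/\textsc{T-Int} have set $\iota=\gamma(\glab)$.
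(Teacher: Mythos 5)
Your proof is correct and takes essentially the same approach as the paper: the explicit case split on $g \in \{?, \lab\}$ with reflexivity of $\labless$ at the endpoints is exactly the paper's argument, and your observation that this is just reflexivity of $\gsubtp$ is a harmless repackaging of the same facts.
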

\begin{proof}
  \begin{description}
  \item [Case:] $g = ?$\\
    $\iota = [\bot, \top]$ and $\gamma(g) = [\bot, \top]$. \\
    By the definition of $\labless$, $\bot \labless \bot$ and $\top \labless \top$.\\
    Thus by the definition of $\sqsubseteq$, $\iota \sqsubseteq \gamma(g)$
  \item [Case:] $g = \ell$\\
    $\iota = [\ell, \ell]$ and $\gamma(g) = [\ell, \ell]$. \\
    By the definition of $\labless$, $\ell \labless \ell$ and $\ell \labless \ell$.\\
    Thus by the definition of $\sqsubseteq$, $\iota \sqsubseteq \gamma(g)$
  \end{description}      
\end{proof}

\begin{lem}[Interval refine]\label{lem:interval-refine}
~~
$\forall i\in\{1,2\}$, $\refineof(\iota_1, \iota_2) \sqsubseteq \iota_i$
\end{lem}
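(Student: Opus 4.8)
The plan is to unfold the definition of $\refineof$ from Fig.~\ref{fig:label-op}, read the statement $\refineof(\iota_1,\iota_2)\gsubtp\iota_i$ as the claim that (whenever $\refineof(\iota_1,\iota_2)$ is defined and equals a pair $(\iota'_1,\iota'_2)$) the $i$-th component $\iota'_i$ is a sub-interval of $\iota_i$, and then proceed by a short case analysis, leaning only on the elementary lattice facts that a meet lies below each of its arguments, a join lies above each of its arguments, and $\labless$ is reflexive.

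First I would dispose of the degenerate case: if $\refineof(\iota_1,\iota_2)=\eundef$ there is nothing to prove. Otherwise write $\iota_1=[\lab_{1l},\lab_{1r}]$ and $\iota_2=[\lab_{2l},\lab_{2r}]$; by the defining rule the side conditions $\lab_{1l}\labless\lab_{1r}\labmeet\lab_{2r}$ and $\lab_{2l}\labjoin\lab_{1l}\labless\lab_{2r}$ hold and $\refineof(\iota_1,\iota_2)=\bigl([\lab_{1l},\,\lab_{1r}\labmeet\lab_{2r}],\ [\lab_{2l}\labjoin\lab_{1l},\,\lab_{2r}]\bigr)$.

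For $i=1$ I would establish $[\lab_{1l},\,\lab_{1r}\labmeet\lab_{2r}]\gsubtp[\lab_{1l},\lab_{1r}]$: by the interval-precision rule this reduces to $\lab_{1l}\labless\lab_{1l}$, which holds by reflexivity of $\labless$, together with $\lab_{1r}\labmeet\lab_{2r}\labless\lab_{1r}$, which holds because the meet is a lower bound of its arguments. For $i=2$ I would establish $[\lab_{2l}\labjoin\lab_{1l},\,\lab_{2r}]\gsubtp[\lab_{2l},\lab_{2r}]$: this reduces to $\lab_{2l}\labless\lab_{2l}\labjoin\lab_{1l}$, which holds because the join is an upper bound of its arguments, together with $\lab_{2r}\labless\lab_{2r}$ by reflexivity. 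Hence both components of $\refineof(\iota_1,\iota_2)$ are sub-intervals of the corresponding input, which is the claim.

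I do not anticipate any real obstacle; the argument never needs validity (non-emptiness) of the intervals nor the side conditions themselves, only the four lattice identities above. The only subtlety is notational — making explicit that $\refineof(\iota_1,\iota_2)\gsubtp\iota_i$ is shorthand for a statement about the $i$-th projection of the pair produced by $\refineof$ — which I would flag at the start of the proof. This is entirely analogous in style to the preceding Lemma~\ref{lem:i-sub-g}.
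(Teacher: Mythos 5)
Your proposal is correct and matches the paper's approach: the paper only gives a one-line sketch (``by examining the definitions of the operations''), and your case analysis on the two components of $\refineof(\iota_1,\iota_2)$, using reflexivity of $\labless$ together with the meet/join bound properties, is exactly the intended unfolding. Your componentwise reading of $\refineof(\iota_1,\iota_2)\gsubtp\iota_i$ is also the one the paper relies on in its later uses of the lemma.
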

\begin{proofsketch} By examining the definitions of the operations.
\end{proofsketch}

\begin{lem}[Intersect refine]\label{lem:bowtie-refine}
~~
$\forall i\in\{1,2\}$, $\iota_1\bowtie \iota_2 \sqsubseteq \iota_i$

\end{lem}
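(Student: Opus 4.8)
The plan is to prove the statement by directly unfolding the definitions, exactly as in the proof of Lemma~\ref{lem:interval-refine}. Write $\iota_1 = [\lab_{1l}, \lab_{1r}]$ and $\iota_2 = [\lab_{2l}, \lab_{2r}]$. By the definition of interval intersection in Fig.~\ref{fig:label-op},
\[
\iota_1 \bowtie \iota_2 = [\lab_{1l} \labjoin \lab_{2l},\; \lab_{1r} \labmeet \lab_{2r}].
\]
By the definition of interval precision (also in Fig.~\ref{fig:label-op}, and repeated in Fig.~\ref{fig:precision}), which says $[\lab_1, \lab'_1] \sqsubseteq [\lab_2, \lab'_2]$ iff $\lab_2 \labless \lab_1$ and $\lab'_1 \labless \lab'_2$, it suffices to establish, for $i = 1$, that $\lab_{1l} \labless \lab_{1l} \labjoin \lab_{2l}$ and $\lab_{1r} \labmeet \lab_{2r} \labless \lab_{1r}$, and symmetrically, for $i = 2$, that $\lab_{2l} \labless \lab_{1l} \labjoin \lab_{2l}$ and $\lab_{1r} \labmeet \lab_{2r} \labless \lab_{2r}$.

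Each of these four inequalities is immediate from the fact that $(\mi{Labs}, \labless)$ is a lattice: $\labjoin$ produces an upper bound of its two arguments and $\labmeet$ produces a lower bound of its two arguments. Discharging the four comparisons and reassembling them through the definition of $\sqsubseteq$ on intervals yields $\iota_1 \bowtie \iota_2 \sqsubseteq \iota_i$ for both $i \in \{1,2\}$. Note that no validity assumption on $\iota_1 \bowtie \iota_2$ is needed, since the precision relation on intervals is defined purely by the two endpoint comparisons and does not mention validity.

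There is essentially no obstacle here; the statement is the interval-level analogue of Lemma~\ref{lem:interval-refine} and is used the same way in the noninterference development (to track that $\bowtie$ only refines intervals, never loosens them). The only point requiring a little care is getting the direction of the endpoint comparisons right in $[\lab_1, \lab'_1] \sqsubseteq [\lab_2, \lab'_2]$ --- increasing precision moves the lower endpoint up and the upper endpoint down --- and treating the two cases $i = 1$ and $i = 2$ symmetrically. Consequently I would present this, like Lemma~\ref{lem:interval-refine}, as a short proof by inspection of the definitions rather than a full calculation.
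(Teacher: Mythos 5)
Your proof is correct and matches the paper's approach: the paper discharges this lemma with the one-line sketch ``by examining the definitions of the operations,'' and your argument is exactly that examination spelled out, unfolding $\iota_1\bowtie\iota_2 = [\lab_{1l}\labjoin\lab_{2l},\,\lab_{1r}\labmeet\lab_{2r}]$ and checking the two endpoint comparisons required by $\gsubtp$ for each $i$ via basic lattice properties of $\labjoin$ and $\labmeet$. Nothing further is needed.
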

\begin{proofsketch} By examining the definitions of the operations.
\end{proofsketch}

\begin{lem}[Join refine]\label{lem:join-refine}
$\forall i\in\{1,2\}$, $\iota_i\sqsubseteq \gamma(\glab_i)$ imply
$\iota_1\labjoin\iota_2 \sqsubseteq \gamma(\glab_1\cjoin\glab_2)$.
\end{lem}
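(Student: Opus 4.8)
The plan is to prove this by case analysis on the gradual labels $\glab_1$ and $\glab_2$, mirroring the case structure of the definitions of $\cjoin$ (Fig.~\ref{fig:clabelop}) and of $\gamma$ (Fig.~\ref{fig:label-op}): a gradual label is either $?$, the top element $\top$, or some other static label $\lab$. Before splitting into cases I would record three elementary facts. First, if $\glab = \lab$ is static, then $\gamma(\glab) = [\lab,\lab]$, and a valid interval $\iota = [\lab_l,\lab_r]$ (so $\lab_l \labless \lab_r$) with $\iota \gsubtp [\lab,\lab]$ must satisfy $\lab \labless \lab_l$ and $\lab_r \labless \lab$, whence $\lab \labless \lab_l \labless \lab_r \labless \lab$ and so $\iota = [\lab,\lab]$ by antisymmetry of $\labless$. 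Second, every valid interval $[\lab_l,\lab_r]$ satisfies $[\lab_l,\lab_r] \gsubtp [\bot,\top] = \gamma(?)$, since $\bot \labless \lab_l$ and $\lab_r \labless \top$ hold unconditionally. Third, $\gsubtp$ is reflexive (immediate from reflexivity of $\labless$) and $\iota_1 \labjoin \iota_2$ is valid whenever $\iota_1,\iota_2$ are (componentwise monotonicity of $\labjoin$).

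With these in hand the cases are short. If $\glab_1\cjoin\glab_2 = ?$ (the cases of $\cjoin$ that produce $?$), then $\gamma(\glab_1\cjoin\glab_2) = [\bot,\top]$ and the goal is exactly the second fact applied to $\iota_1\labjoin\iota_2$. If both $\glab_1 = \lab_1$ and $\glab_2 = \lab_2$ are static, the first fact forces $\iota_1 = [\lab_1,\lab_1]$ and $\iota_2 = [\lab_2,\lab_2]$, so $\iota_1\labjoin\iota_2 = [\lab_1\labjoin\lab_2,\lab_1\labjoin\lab_2] = \gamma(\lab_1\labjoin\lab_2) = \gamma(\glab_1\cjoin\glab_2)$, and we conclude by reflexivity of $\gsubtp$; note this subcase already handles $\lab_1$ or $\lab_2$ equal to $\top$, since $\labjoin$ in the lattice collapses accordingly. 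Finally, if one of $\glab_1,\glab_2$ is $?$ and the other is $\top$, say $\glab_2 = \top$, then $\glab_1\cjoin\glab_2 = \top$ and $\gamma(\top) = [\top,\top]$; the first fact gives $\iota_2 = [\top,\top]$, and joining any interval with $[\top,\top]$ yields $[\top,\top]$ because $\top$ is the greatest element, so the result again follows by reflexivity. The $\cjoin$ table in Fig.~\ref{fig:clabelop} covers precisely these possibilities, so the analysis is exhaustive.

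There is no genuine obstacle in this argument; the only points deserving care are (i) invoking the interval-validity invariant in the first fact, which is what makes $\iota \gsubtp [\lab,\lab]$ actually collapse $\iota$ to the point interval $[\lab,\lab]$, and (ii) noting that the join of valid intervals is valid so that the $\gsubtp$ judgments appearing in the first two facts are well-formed. Everything else is direct unfolding of the definitions of $\gamma$, $\labjoin$, and $\gsubtp$ from Fig.~\ref{fig:label-op} and of $\cjoin$ from Fig.~\ref{fig:clabelop}.
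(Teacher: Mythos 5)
Your proof is correct: the case analysis on $\glab_1,\glab_2$ following the clauses of $\cjoin$, together with the collapse of $\iota\gsubtp[\lab,\lab]$ to the point interval and the trivial bound $\iota\gsubtp[\bot,\top]$, is exactly the kind of ``examining the definitions'' argument the paper gives as its (one-line) proof sketch for this lemma. The only remark worth adding is that the static--static case also follows directly from componentwise monotonicity of $\labjoin$ without invoking antisymmetry, but your route is equally valid.
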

\begin{proofsketch} By examining the definitions of the operations.
\end{proofsketch}

\begin{lem}[Evidence subtyping]\label{lem:evd-subtyping}
 $E \vdash \glab_1\csubtp \glab_2$, $\iota\sqsubseteq \gamma(\glab_1)$ imply 
$\iota \bowtie E \sqsubseteq \gamma(\glab_2)$.
\end{lem}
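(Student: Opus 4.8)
The plan is to unfold the two definitions involved and chain two facts we already have. First I would invert the evidence subtyping judgment: writing $E = (\iota_1, \iota_2)$, the judgment $E \vdash \glab_1 \csubtp \glab_2$ (the label-level reading of the rule for $(\iota_1,\iota_2)\vdash \tau^{g_1}\csubtp\tau^{g_2}$ in Fig.~\ref{fig:label-op}) yields in particular the premise $\iota_2 \sqsubseteq \gamma(\glab_2)$ --- this is the only premise we will actually need; the premises $\iota_1 \sqsubseteq \gamma(\glab_1)$ and $\glab_1 \clabless \glab_2$, and even the hypothesis $\iota \sqsubseteq \gamma(\glab_1)$, are not used for this particular conclusion (the latter is natural to keep only for uniformity with the neighbouring interval lemmas). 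Next I would expand $\iota \bowtie E$ by its definition: it is $\iota_2'$, where $(\iota_1', \iota_2') = \refineof(\iota_1 \bowtie \iota, \iota_2)$, whenever that $\refineof$ call does not return $\eundef$.

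The argument then splits on whether $\refineof(\iota_1\bowtie\iota, \iota_2) = \eundef$. If it does, then $\iota\bowtie E = \eundef$ and there is nothing to establish (we read the conclusion vacuously in that case, consistent with the paper's convention that undefined interval operations abort). Otherwise, I would apply Lemma~\ref{lem:interval-refine} with $i = 2$ to the arguments $\iota_1\bowtie\iota$ and $\iota_2$: the second component of $\refineof$ is always at least as imprecise as its second argument, so $\iota_2' \sqsubseteq \iota_2$. Composing $\iota_2' \sqsubseteq \iota_2 \sqsubseteq \gamma(\glab_2)$ by transitivity of the interval-precision order $\sqsubseteq$ (which is a partial order, since the label-intervals form a lattice under it), we obtain $\iota\bowtie E = \iota_2' \sqsubseteq \gamma(\glab_2)$, which is the claim.

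There is not much of an obstacle here: the proof is essentially two lines once the definitions are in place. The only points requiring a little care are the bookkeeping --- checking that the $i = 2$ instance of Lemma~\ref{lem:interval-refine} lines up with exactly the component that $\bowtie$-on-a-pair returns --- and remembering to treat the $\eundef$ branch explicitly so the statement is well-formed. No case analysis on the shape of $\glab_1$ or $\glab_2$, and no unfolding of $\labjoin$/$\labmeet$, is needed, since all of that reasoning has already been absorbed into the previously proved interval lemmas.
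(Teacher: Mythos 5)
Your proof is correct and takes essentially the same route as the paper's: invert the evidence judgment to extract $\iota_2 \sqsubseteq \gamma(\glab_2)$, use Lemma~\ref{lem:interval-refine} (second component) to get $\iota \bowtie E \sqsubseteq \iota_2$, and finish by transitivity of $\sqsubseteq$. Your explicit handling of the $\eundef$ branch and your remark that the hypothesis $\iota \sqsubseteq \gamma(\glab_1)$ is not actually needed are just slightly more careful bookkeeping than the paper's three-line argument.
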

\begin{proof}
\begin{tabbing}
\\\quad\=
Let $E = (\iota_1, \iota_2)$. By inversion, 
  $\iota_2 \sqsubseteq \gamma(g_2)$,
\\\>By Lemma~\ref{lem:interval-refine}, $\iota \bowtie E\sqsubseteq \iota_2$
\\\>By $\sqsubseteq$ is transitive, $\iota \sqsubseteq \gamma(g_2)$,
\end{tabbing}
\end{proof}

\begin{lem} 
  \label{lem:translation}
  If $\Gamma \vdash e : U$ and $\Gamma \vdash e \leadsto e' : U$, then $\Gamma \vdash e' : U$
\end{lem}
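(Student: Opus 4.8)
The plan is to prove the statement by induction on the derivation of the translation judgment $\Gamma \vdash e \leadsto e' : U$, with a case analysis on the last rule applied. In each case I would first invert the corresponding $\wg$ typing rule in the hypothesis $\Gamma \vdash e : U$; since $\wg$ typing is syntax-directed, the $\wg$ type of every subexpression is uniquely determined, and it agrees with the type component recorded by the translation judgment, so the induction hypotheses apply to the translated subterms without any mismatch of intermediate labels.

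The constant, variable, and binary-operation cases are mechanical. For \rulename{T-Var} the target term is $x$ and \rulename{G-Var} closes the case immediately. For \rulename{T-Bop} the induction hypotheses give $\Gamma \vdash e_i' : \tau^{\glab_i}$, and \rulename{G-Bop} reassembles $\Gamma \vdash e_1' \bop e_2' : \tau^{\glab_1\cjoin\glab_2}$. For \rulename{T-Bool} and \rulename{T-Int} the target term is $(\iota\;u)^\glab$ with $\iota = \gamma(\glab)$, and the side condition $\iota \gsubtp \gamma(\glab)$ of \rulename{G-Const} is exactly Lemma~\ref{lem:i-sub-g}, while its conclusion type $\Gamma(u)^\glab$ matches $U$ since $\Gamma(u)$ is the base type of the literal.

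The only case with content is \rulename{T-Cast}: here $e = (e_0 :: \tau^\glab)$, the translation yields $e' = (\iota_1,\iota_2)^\glab\;e_0'$ with $\Gamma \vdash e_0 \leadsto e_0' : \tau^{\glab_1}$, $\glab_1 \clabless \glab$, and $(\iota_1,\iota_2) = \refineof(\gamma(\glab_1),\gamma(\glab))$. I would apply \rulename{G-Cast}, using the induction hypothesis for $\Gamma \vdash e_0' : \tau^{\glab_1}$, which reduces the goal to the evidence-based consistent subtyping judgment $(\iota_1,\iota_2)\vdash\tau^{\glab_1}\csubtp\tau^\glab$, i.e., to $\iota_1 \gsubtp \gamma(\glab_1)$, $\iota_2 \gsubtp \gamma(\glab)$, and $\glab_1\clabless\glab$. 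The last is a premise of \rulename{T-Cast}; the first two follow from Lemma~\ref{lem:interval-refine} applied to $(\iota_1,\iota_2) = \refineof(\gamma(\glab_1),\gamma(\glab))$, provided this call does not return $\eundef$.

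The hard part — indeed the only non-routine step — is showing that $\refineof(\gamma(\glab_1),\gamma(\glab))$ is defined whenever $\glab_1\clabless\glab$, so that the translation never emits a cast with $\eundef$ evidence. I would establish this as a small auxiliary fact by case analysis on whether each of $\glab_1$ and $\glab$ is $?$ or a concrete label: in all four combinations $\gamma$ produces intervals whose endpoints lie between $\bot$ and $\top$, and $\glab_1\clabless\glab$ forces the concrete endpoints to be ordered, so the two preconditions $\lab_{1l}\labless\lab_{1r}\labmeet\lab_{2r}$ and $\lab_{2l}\labjoin\lab_{1l}\labless\lab_{2r}$ of the defining rule for $\refineof$ hold. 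With that in hand the \rulename{T-Cast} case closes and the induction is complete.
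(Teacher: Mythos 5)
Your proof is correct and follows essentially the same route as the paper's: induction over the syntax-directed derivation, Lemma~\ref{lem:i-sub-g} for the constant cases, and, in the cast case, the induction hypothesis plus Lemma~\ref{lem:interval-refine} to discharge the evidence-based consistent-subtyping premise $(\iota_1,\iota_2)\vdash\tau^{\glab_1}\csubtp\tau^{\glab}$ of \rulename{G-Cast}. The only divergence is your auxiliary argument that $\refineof(\gamma(\glab_1),\gamma(\glab))$ is defined when $\glab_1\clabless\glab$; that fact is true but unnecessary for this lemma, since the hypothesis that $\Gamma \vdash e \leadsto e' : U$ is derivable already forces the \rulename{T-Cast} premise $(\iota_1,\iota_2)=\refineof(\gamma(\glab_1),\gamma(\glab))$ to have produced actual evidence rather than $\eundef$ (it would matter instead for a totality result about the translation, which is not what is being proved here).
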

\begin{proof}
  By induction on the expression typing derivation.
  \begin{description}
  \item [Case:] \rulename{Bool, Int}
    \begin{tabbing}
      By Lemma~\ref{lem:i-sub-g}, $\gamma(g) \sqsubseteq \gamma(g)$.\\
      By \rulename{T-Bool, T-Int} and \rulename{G-Bool, G-Int},\\
      the conclusion holds
    \end{tabbing}
  \item [Case:] \rulename{Var}
    \begin{tabbing}
      By definition of \rulename{T-Var} and \rulename{G-Var}
    \end{tabbing}
  \item [Case:] \rulename{Bop}
    \begin{tabbing}
      $
      \inferrule*{
        \forall i\in\{1,2\}, ~\Gamma\vdash e_i: \tau^{\glab_i}
        \\ \glab = \glab_1\cjoin\glab_2
      }{ 
        \Gamma  \vdash e_1 \bop e_2 :  \tau^\glab
      }
      $
      \\By IH,
      \\\,\,\,\,\=(1) \,\,\,\,\= $\Gamma \vdash e_i' : \tau^{\glab_i}$
      \\By (1), \rulename{T-Bool} and \rulename{G-Bool}, the conclusion holds
    \end{tabbing}
  \item [Case:] \rulename{Cast}
    \begin{tabbing}
      $\inferrule*{
        \Gamma  \vdash e : U' \\
        U' \csubtp U
      }{ 
        \Gamma  \vdash e::U : U
      }$
      \\Assume $U = \tau^\glab$.
      \\ By assumption
      \\\quad \=(1)\quad\=  $(\iota_1, \iota_2) = \refineof(\gamma(\glab'), \gamma(\glab))$ 
      \\ By IH,
      \\\,\,\,\,\>(2) \,\,\,\,\= $\Gamma \vdash e' : \tau^{\glab'}$ 
      \\ By Lemma~\ref{lem:interval-refine} and $\sqsubseteq$ is transitive
       \\\>(3) \> $(\iota_1, \iota_2)\vdash \tau^{\glab'}\csubtp \tau^\glab$
      \\By (2) and (3), 
      subtyping and $\sqsubseteq$, $\Gamma \vdash (\iota_1, \iota_2)^\glab e' : \tau^\glab$
    \end{tabbing}
  \end{description}
\end{proof}

\begin{lem}
  \label{lem:wtset-trans}
  If ~$\Gamma;\gpc \vdash c$ and $\Gamma\;\gpc \vdash c \leadsto c'$, then $\wtsetof(c) = \wtsetof(c')$
\end{lem}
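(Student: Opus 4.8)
The plan is to proceed by structural induction on the translation derivation $\Gamma;\gpc \vdash c \leadsto c'$ (equivalently, on the structure of $c$), using the command-typing hypothesis only to invoke the induction hypothesis on subcommands. The crucial observation is that the definition of $\wtsetof$ is insensitive to exactly the two things the translation changes: it never inspects expressions (so the casts inserted into the expression positions of assignments and outputs are irrelevant), and it never reads the write-set annotation $X$ on $\eif^X$ or $\ewhile^X$ (the clause for a branch recurses into $c_1,c_2$ and the clause for a loop recurses into the body). Since the translation rules of Fig.~\ref{fig:translation} preserve the command tree structure node-for-node -- $\eskip\leadsto\eskip$, $c_1;c_2\leadsto c_1';c_2'$, $x:=e\leadsto x:=E^g e'$, $\eoutput(\lab,e)\leadsto\eoutput(\lab,E^g e')$, $\eif\,e\,\ethen\,c_1\,\eelse\,c_2\leadsto \eif^X\,e'\,\ethen\,c_1'\,\eelse\,c_2'$, and $\ewhile\,e\,\edo\,c\leadsto\ewhile^X\,e'\,\edo\,c'$ -- and in particular preserves the left-hand-side variable of every assignment, the two write-sets coincide.

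First I would dispatch the base cases. For $\eskip$ both sides are $\emptyset$; for $x:=e$ both sides are $\{x\}$, since $\wtsetof(x:=e)=\{x\}$ regardless of the expression and the translation rewrites only the expression to $E^g e'$ while keeping $x$; for $\eoutput(\lab,e)$ both sides are $\emptyset$ for the same reason. Next I would handle the inductive cases. For $c_1;c_2$, unfold $\wtsetof(c_1;c_2)=\wtsetof(c_1)\cup\wtsetof(c_2)$, apply the induction hypothesis to each subderivation (each is itself a well-typed translation by inversion on \rulename{Seq}), and reassemble to get $\wtsetof(c_1')\cup\wtsetof(c_2')=\wtsetof(c_1';c_2')$. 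The $\eif$ and $\ewhile$ cases are identical in spirit: unfold $\wtsetof$ on the source term, apply the induction hypothesis to the branch/body subderivations, and note that on the target side $\wtsetof(\eif^X\,e'\,\ethen\,c_1'\,\eelse\,c_2')=\wtsetof(c_1')\cup\wtsetof(c_2')$ and $\wtsetof(\ewhile^X\,e'\,\edo\,c')=\wtsetof(c')$ by the $\wge$ definition of $\wtsetof$, which discards $X$.

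There is essentially no obstacle here: the only thing to be careful about is making the point explicit that $\wtsetof$ is defined the same way on $\wg$ and $\wge$ terms and in particular ignores the annotation $X$, so that the annotation produced by \rulename{T-If}/\rulename{T-While} (which is itself defined as exactly this union, but that fact is not even needed) cannot perturb the result. Once that remark is in place, every case is a one-line rewrite, and the lemma follows.

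\iffull{}
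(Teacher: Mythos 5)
Your proposal is correct and follows essentially the same argument as the paper: induction over the (typing/translation) derivation, with the base cases \rulename{Skip}, \rulename{Assign}, \rulename{Out} immediate from the definition of $\wtsetof$ and the \rulename{Seq}, \rulename{If}, \rulename{While} cases by the induction hypothesis, noting that $\wtsetof$ ignores both expressions and the inserted annotation $X$. No gaps.
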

\begin{proofsketch}
  By induction on the command typing derivation. Follows from the definition of $\wtsetof$ for \rulename{Skip, Assign, Out} and additionally uses the IH for \rulename{Seq, If, While}
\end{proofsketch}

\begin{thm}
  \label{thm:translation}
  If ~$\Gamma;\gpc \vdash c$ and $\Gamma;\gpc \vdash c \leadsto c'$, then $\forall \iota_\pc \sqsubseteq \gamma(\gpc).~ \Gamma;\iota_\pc\;\gpc \vdash c'$
\end{thm}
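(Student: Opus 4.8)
\noindent\emph{Proof approach.} The plan is to proceed by induction on the $\wg$ command‑typing derivation $\Gamma;\gpc \vdash c$ (equivalently, on the structure of $c$), inverting the translation derivation $\Gamma;\gpc \vdash c \leadsto c'$ in each case, since both judgments are syntax‑directed by $c$. Keeping the interval $\iota_\pc$ universally quantified in the statement is exactly what makes the induction go through: in the branch and loop cases the induction hypothesis must be applied at an interval different from the one we are handed.

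The structural cases are routine. For \rulename{Skip}, $c = c' = \eskip$ and \rulename{G-Skip} holds for every $\iota_\pc$. For \rulename{Seq}, inverting $\leadsto$ gives $c' = c_1';c_2'$ with $\Gamma;\gpc\vdash c_i\leadsto c_i'$; apply the IH to each $c_i$ at the \emph{same} $\iota_\pc$ and close with \rulename{G-Seq}. For \rulename{Assign}, the source rule gives $\Gamma(x)=\tau^\glab$, $\Gamma\vdash e:\tau^{\glab'}$, $\glab'\clabless\glab$, $\gpc\clabless\glab$, and \rulename{T-Assign} produces $c' = x := (\iota_1,\iota_2)^\glab\,e'$ with $\Gamma\vdash e\leadsto e':\tau^{\glab'}$ and $(\iota_1,\iota_2)=\refineof(\gamma(\glab'),\gamma(\glab))$. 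By Lemma~\ref{lem:translation}, $\Gamma\vdash e':\tau^{\glab'}$; by Lemma~\ref{lem:interval-refine}, $\iota_1\gsubtp\gamma(\glab')$ and $\iota_2\gsubtp\gamma(\glab)$, hence $(\iota_1,\iota_2)\vdash\tau^{\glab'}\csubtp\tau^\glab$ and $\Gamma\vdash(\iota_1,\iota_2)^\glab\,e':\tau^\glab$ by \rulename{G-Cast}. Combined with $\iota_\pc\gsubtp\gamma(\gpc)$ (the hypothesis) and $\gpc\clabless\glab$, rule \rulename{G-Assign} gives $\Gamma;\iota_\pc\;\gpc\vdash c'$. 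The \rulename{Out} case is identical, using \rulename{T-Out}, \rulename{G-Cast}, and \rulename{G-Out}.

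The two cases of interest are \rulename{If} and \rulename{While}. For \rulename{If}, inverting both derivations gives $\Gamma\vdash e\leadsto e':\tbool^\glab$, $\Gamma;\gpc\cjoin\glab\vdash c_i$, $\Gamma;\gpc\cjoin\glab\vdash c_i\leadsto c_i'$, and $c'=\eif^X\,e'\,\ethen\,c_1'\,\eelse\,c_2'$ with $X=\wtsetof(c_1')\cup\wtsetof(c_2')$. Set $\iota_c=\gamma(\glab)$. From $\iota_\pc\gsubtp\gamma(\gpc)$ and $\iota_c\gsubtp\gamma(\glab)$ (Lemma~\ref{lem:i-sub-g}), Lemma~\ref{lem:join-refine} yields $\iota_\pc\labjoin\iota_c\gsubtp\gamma(\gpc\cjoin\glab)$. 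The IH for each sub‑derivation of $c_i$ is quantified over all intervals for $\gpc\cjoin\glab$, so instantiating it at $\iota_\pc\labjoin\iota_c$ gives $\Gamma;\iota_\pc\labjoin\iota_c\;\gpc\cjoin\glab\vdash c_i'$. Together with $\Gamma\vdash e':\tbool^\glab$ (Lemma~\ref{lem:translation}) and the fact that the write set carried by $c'$ is exactly the one \rulename{G-If} demands for $\eif^X\,e'\,\ethen\,c_1'\,\eelse\,c_2'$, rule \rulename{G-If} gives the conclusion; Lemma~\ref{lem:wtset-trans} is available should the write sets need to be identified up to translation. \rulename{While} is handled the same way, with the loop body playing the role of both branches and \rulename{G-While}'s write set $\wtsetof(c')$ matching the one inserted by \rulename{T-While}.

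The only genuine obstacle is the interval bookkeeping around branches and loops, which is precisely why the statement quantifies over all $\iota_\pc\gsubtp\gamma(\gpc)$: entering a conditional replaces $\gpc$ by $\gpc\cjoin\glab$, and we must supply $\iota_\pc\labjoin\gamma(\glab)$ to the induction hypothesis and justify $\iota_\pc\labjoin\gamma(\glab)\gsubtp\gamma(\gpc\cjoin\glab)$ via Lemma~\ref{lem:join-refine}. Everything else reduces to the expression‑translation lemma (Lemma~\ref{lem:translation}) and the elementary interval facts of Lemmas~\ref{lem:i-sub-g} and~\ref{lem:interval-refine}.
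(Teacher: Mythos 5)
Your proposal is correct and follows essentially the same route as the paper: induction on the $\wg$ typing derivation, Lemma~\ref{lem:translation} for the inserted casts in \rulename{Assign}/\rulename{Out}, and the induction hypothesis instantiated at $\iota_\pc\labjoin\gamma(\glab)$ for \rulename{If}/\rulename{While}. Your explicit appeal to Lemma~\ref{lem:join-refine} to justify $\iota_\pc\labjoin\gamma(\glab)\gsubtp\gamma(\gpc\cjoin\glab)$ makes the branch case slightly more precise than the paper's citation, but the argument is the same.
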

\begin{proof}
  By induction on the typing derivation - $\Gamma;\gpc \vdash c$.
  \begin{description}
  \item [Case:] \rulename{Skip}
    \begin{tabbing}
      By definition of \rulename{T-Skip, G-Skip}
    \end{tabbing}
  \item [Case:] \rulename{Seq}
    \begin{tabbing}
      $\inferrule*{ 
        \Gamma;\; \gpc \vdash c_1
        \\   \Gamma;\; \gpc \vdash c_2
      }{ 
        \Gamma; \;\gpc  \vdash c_1; c_2
      }
      $
      \\By IH and inversion of \rulename{T-Seq},
      \\\,\,\,\,\=(1) \,\,\,\,\= $\Gamma;\iota_\pc\;\gpc \vdash c_1'$ and $\Gamma;\iota_\pc\;\gpc \vdash c_2'$
      \\such that $\iota_\pc \sqsubseteq \gamma(\gpc)$
      \\By (1), \rulename{G-Seq}, the conclusion holds
    \end{tabbing}
  \item [Case:] \rulename{Assign}
    \begin{tabbing}
      By Lemma~\ref{lem:translation},
      \\\,\,\,\,\=(1) \,\,\,\,\= $\Gamma \vdash x : \tau^{\glab}$ and $\Gamma \vdash e' : \tau^{\glab'}$
      \\By assumption,
      \\\>(2)\>$\iota_\pc \sqsubseteq \gamma(\gpc)$ 
      \\By inversion of \rulename{Assign},
      \\\>(3)\> $\gpc \clabless \glab$ and $\glab'\clabless\glab$
    \\ By assumption
      \\\ \>(4)\>  $(\iota_1, \iota_2) = \refineof(\gamma(\glab'), \gamma(\glab))$ 
      \\ By Lemma~\ref{lem:interval-refine} and $\sqsubseteq$ is transitive
       \\\>(5) \> $(\iota_1, \iota_2)\vdash \tau^{\glab'}\csubtp \tau^\glab$
       \\By (1) and (5),
       \\\>(6)\>$\Gamma \vdash (\iota_1, \iota_2)^\glab e' : \tau^\glab$
      \\By (1), (6),      the conclusion holds
    \end{tabbing}
  \item [Case:] \rulename{Out} This proof is similar to \rulename{Assign}.
  \item [Case:] \rulename{If} (similar for \rulename{While})
    \begin{tabbing}
      $\inferrule*[right=If]{
        \Gamma \vdash e :  \tbool^\glab 
        \\    \Gamma ; \;\gpc\cjoin \glab \vdash c_1 
        \\ \Gamma ; \;\gpc\cjoin \glab \vdash c_2
      }{ 
        \Gamma ; \;\gpc \vdash 
        \eif \; e\ \ethen\  c_1\ \eelse\  c_2 
      }$
      \\By Lemma~\ref{lem:translation},
      \\\,\,\,\,\=(1) \,\,\,\,\= $\Gamma \vdash e' : \tbool^{\glab}$
      \\By IH,
      \\\>(2)\>$\forall \iota_\pc \sqsubseteq \gamma(\gpc). ~\Gamma;\iota_\pc \labjoin \iota_c \;\gpc\cjoin \glab \vdash c_i$ \\
      such that $\iota_c = \gamma(\glab)$ and $i \in \{1,2\}$
      \\By inversion of \rulename{T-If}, 
      \\\>(3)\>$X = \wtsetof(c_1')\cup \wtsetof(c_2')$
      \\By (1), (2), (3), and Lemma~\ref{lem:i-sub-g}, the conclusion holds
    \end{tabbing}
  \end{description}  
\end{proof}

\subsection{Operational Semantics}
\label{sec:app-monitor-semantics}
Figures~\ref{fig:app-mon-semantics-exp} and~\ref{fig:app-mon-semantics-cmd} summarize the operational semantics of the monitor.
Those rules use auxiliary definitions in Figure~\ref{fig:app-store-aux-single}.
  

  

\begin{figure}[htbp]
    \flushleft
    \noindent{\framebox{$ \delta \sepidx{} e \evalsto v$}}
    \begin{mathpar}
      \inferrule*[right=M-Const]{
      }{
        \delta\sepidx{} (\iota\, u)^\glab \evalsto  (\iota\, u)^\glab
      }
      \and
      \inferrule*[right=M-Var]{
      }{
        \delta\sepidx{} x \evalsto  \delta(x)
      }
      \and
      \inferrule*[right=M-Bop]{
        \forall i \in \{1,2\},~\delta\sepidx{} e_i \evalsto (\iota_i\, u_i)^{\glab_i} \\
        \iota = \iota_1\labjoin\iota_2 \\ \glab = \glab_1\cjoin\glab_2 \\ u = u_1\bop u_2
      }{
        \delta\sepidx{} e_1\bop e_2 \evalsto   (\iota\;u)^{\glab}
      }
      \and
      \inferrule*[right=M-Cast]{
        \delta\sepidx{} e \evalsto (\iota\, u)^{\glab} \\
        \iota' = \iota \bowtie E
      }{
        \delta\sepidx{} E^{\glab'}\; e
        \evalsto  (\iota'\; u)^{\glab'}
      }
      \and
      \inferrule*[right=M-Cast-Err]{
        \delta\sepidx{} e \evalsto (\iota\, u)^{\glab} \\
        \iota \bowtie E = \eundef
      }{
        \delta\sepidx{} E^{\glab'}\; e
        \evalsto  \eabort
      }
    \end{mathpar}
    \caption{Monitor semantics for expressions}
    \label{fig:app-mon-semantics-exp}
\end{figure}

\begin{figure*}[b]
    \flushleft
    \noindent{\framebox{$ \kappa,\delta\sepidx{} c \stackrel{\alpha}{\stepsto} \kappa',\delta'\sepidx{} c'$}}
    \begin{mathpar}
      \inferrule*[right=M-Seq]{
        \kappa,\delta\sepidx{} c_1  \stackrel{\alpha}{\stepsto}  \kappa',\delta' \sepidx{} c'_1
      }{
        \kappa,\delta\sepidx{} c_1; c_2
        \stackrel{\alpha}{\stepsto}  \kappa',\delta'\sepidx{} c'_1; c_2
      }
      \and
      \inferrule*[right=M-Pc]{
        \kappa,\delta\sepidx{} c 
        \stackrel{\alpha}{\stepsto}  \kappa',\delta'\sepidx{}c'
      }{
        \kappa\rhd \iota_\pc\;\gpc,\delta\sepidx{} \{c\} 
        \stackrel{\alpha}{\stepsto}  \kappa'\rhd\iota_\pc\;\gpc,\delta'\sepidx{}\{c'\}
      }
      \and
      \inferrule*[right=M-Pop]{
      }{
        \iota_\pc\;\gpc\rhd\kappa,\delta\sepidx{} \{\eskip\} 
        \stepsto  \kappa,\delta\sepidx{} \eskip
      }
      \and
      \inferrule*[right=M-Skip]{
      }{
        \kappa,\delta\sepidx{} \eskip; c
        \stepsto  \kappa,\delta\sepidx{} c
      }
      \and
      \inferrule*[right=M-Assign]{
        \delta\sepidx{} e \evalsto v 
        \\ v' = \reflvof{}(\iota_\pc, v)
        \\ v'' = \updval(\labof(\delta(x)), v')
      }{
        \iota_\pc\;\gpc,\delta \sepidx{} x:= e
        \stepsto  \iota_\pc\;\gpc, 
        \delta[x\mapsto v'']\sepidx{} \eskip
      }
      \and
       \inferrule*[right=M-Out]{
        \delta\sepidx{} e \evalsto v 
        \\ v' = \reflvof{}(\iota_\pc, v)
        \\ v'' = \updval(\labof(\delta(x)), v')
      }{
        \iota_\pc\;\gpc,\delta \sepidx{} \eoutput(\lab, e)
        \stackrel{(\lab, v'')}{\stepsto}  \iota_\pc\;\gpc, 
        \delta\sepidx{} \eskip
      }
      \and
      \inferrule*[right=M-If]{
        \\ \iota'_\pc = \iota_\pc\labjoin \iota
        \\ \gpc' =\gpc\cjoin\glab \\
        c_i = c_1~\mbox{if}~b = \etrue\\
        c_i = c_2~\mbox{if}~b = \efalse
      }{
        \iota_\pc\;\gpc,\delta\sepidx{} \eif\; (\iota\;b)^\glab\ \ethen\ c_1\ \eelse\ c_2
        \stepsto  \iota'_\pc\;\gpc'\rhd \iota_\pc\;\gpc, \delta \sepidx{}  \{c_i\}
      }
      %
      \and
      \inferrule*[right=M-If-Refine]{
        \delta\sepidx{} e \evalsto v \\
      \delta' = \rflof{}(\delta, X, \iota_\pc\labjoin \labof(v))
      }{
        \iota_\pc\;\gpc,\delta\sepidx{} \eif^X\;e\ \ethen\ c_1\ \eelse\ c_2
        \stepsto  \iota_\pc\;\gpc, \delta' \sepidx{} \eif\;v\ \ethen\ c_1\ \eelse\ c_2
      }
      \and
      \inferrule*[right=M-While]{
      }{
        \iota_\pc\;\gpc,\delta\sepidx{} \ewhile^X\;e\ \edo\ c
        \stepsto   \iota_\pc\;\gpc, \delta \sepidx{}
        \eif^X\;e\ \ethen\ (c; \ewhile^X\;e\ \edo\ c)\ \eelse\ \m{skip}
      }
      \and
      \inferrule*[right=M-Seq-Err]{
        \kappa,\delta\sepidx{} c_1   \stepsto  \eabort
      }{
        \kappa,\delta\sepidx{} c_1;c_2
        \stepsto  \eabort
      }
      \and
      \inferrule*[right=M-If-Refine-Err]{
        \delta\sepidx{} e \evalsto v \\
        \rflof{}(\delta, X, \iota_\pc\labjoin \labof(v)) = \eundef
      }{
        \iota_\pc\;\gpc,\delta\sepidx{} \eif^X\;e\ \ethen\ c_1\ \eelse\ c_2
        \stepsto  \eabort
      }
      \and
      \inferrule*[right=M-Assign-Err]{
        \delta\sepidx{} e \evalsto v \\
        \reflvof{}(\iota_\pc, v) = \eundef
      }{
        \iota_\pc\;\gpc,\delta \sepidx{} x:= e
        \stepsto  \eabort
      }
       \and
      \inferrule*[right=M-Assign-Err2]{
        \delta\sepidx{} e \evalsto v \\
        v'= \reflvof{}(\iota_\pc, v) \\
        \updval(\labof(\delta(x)),v') = \eundef
      }{
        \iota_\pc\;\gpc,\delta \sepidx{} x:= e
        \stepsto  \eabort
      }
      
      \and
      \inferrule*[right=M-Out-Err]{
        \delta\sepidx{} e \evalsto v\\
        \reflvof{}(\iota_\pc, v) = \eundef 
      }{
        \iota_\pc\;\gpc,\delta \sepidx{} \eoutput(\lab, e)
        \stackrel{}{\stepsto}  \eabort
      }
 \and
      \inferrule*[right=M-Out-Err2]{
        \delta\sepidx{} e \evalsto v\\
         \reflvof{}(\iota_\pc, v) = v' \\
             \updval([\lab,\lab],v') = \eundef
      }{
        \iota_\pc\;\gpc,\delta \sepidx{} \eoutput(\lab, e)
        \stackrel{}{\stepsto}  \eabort
      }
      
    \end{mathpar}
    \caption{Monitor semantics for commands}
    \label{fig:app-mon-semantics-cmd}
\end{figure*}

\begin{figure}[h]

    \begin{mathpar}
      \inferrule*{ 
        \refineof(\iota_c,\iota) = (\_,\iota')
      }{
        \reflvof{}(\iota_c, (\iota\;u)^\glab) = (\iota'\;u)^\glab
      }
      \and
      \inferrule*{ 
        \refineof(\iota_c,\iota) = \eundef
      }{
        \reflvof{}(\iota_c, (\iota\;u)^\glab) = \eundef
      }
      \and
      \inferrule*{ }{
        \rflof{}(\delta, \cdot, \iota) = \delta
      }
      \and 
      \inferrule*{  
        \delta' = \rflof{}(\delta, X, \iota)
        \\ v' = \reflvof{}(\iota, v) \neq \eundef
      }{
        \rflof{}((\delta, x\mapsto v), (X, x), \iota) 
        = \delta', x \mapsto v'
      }
      \and 
      \inferrule*{  
        \reflvof{}(\iota, v) = \eundef
      }{
        \rflof{}((\delta, x\mapsto v), (X, x), \iota) 
        = \eundef
      }
      \and
      \and
       \inferrule*
      {
         \lab_{1l}\labjoin\lab_{2l} \labless \lab_{1r}
      }{
        \newlab([\lab_{1l}, \lab_{1r}], [\lab_{2l}, \lab_{2r}]) =
         [\lab_{1l}\labjoin\lab_{2l}, \lab_{1r}]
      }
      \and
       \inferrule*
      {
         \lab_{1l}\labjoin\lab_{2l} \nlabless \lab_{1r}
      }{
        \newlab([\lab_{1l}, \lab_{1r}], [\lab_{2l}, \lab_{2r}]) = \eundef
      }
    \end{mathpar}
    
  \[
 \updval(\iota_o, (\iota_n\, u_n)^\glab) =  (\newlab(\iota_o, \iota_n)\; u_n)^\glab 
\]
    \caption{Label-interval refinement operations for the monitor}
    \label{fig:app-store-aux-single}
\end{figure}


\subsection{Paired Execution}
\label{sec:app-paired}
Figure~\ref{fig:app-pair-semantics-c} shows the paired semantic rules,
which use auxiliary definitions in
Figures~\ref{fig:reflvof-pair},~\ref{fig:app-pair-lab-op},~\ref{fig:app-rdupd},~\ref{fig:app-pair-semantics-e}.
\begin{figure}[h]
  \flushleft
  \textbf{$\reflvof{i}~i\in\{1,2\}$: }
  \begin{mathpar}
    \inferrule{ 
      \refineof(\iota_c,\iota) = (\_,\iota')
      \\  \iota_i = \iota'
      \\\\ \iota_j=\iota
      \\ \{i,j\} = \{1, 2\}
    }{
      \reflvof{i}(\iota_c, (\iota\;u)^\glab) = \epair{\iota_1\;u}{\iota_2\; u}^\glab
    }
    \and
    \inferrule{ 
      \refineof(\iota_c,\iota_i) = (\_,\iota'_i)
      \\\\ \iota'_j = \iota_j
      \\ \{i,j\} = \{1, 2\}
    }{
      \reflvof{i}(\iota_c, \epair{\iota_1\;u_1}{\iota_2\;u_2}^\glab)
      = \epair{\iota'_1\;u_1}{\iota'_2\;u_2}^\glab
    }
    \and
    %
    \inferrule{ 
      \refineof(\iota_c,\iota) = \eundef
    }{
      \reflvof{i}(\iota_c, (\iota\;u)^\glab) = \eundef
    }
    \and
    \inferrule{ 
      \refineof(\iota_c,\iota_1) = \eundef \vee         \refineof(\iota_{c},\iota_2) = \eundef
    }{
      \reflvof{i}(\iota_c, \epair{\iota_1\;u_1}{\iota_2\;u_2}^\glab)
      = \eundef
    }
    %
  \end{mathpar}

  \flushleft
  \textbf{$\reflvof{}$: }
  \begin{mathpar}
    \inferrule{ 
      \refineof(\iota_c,\iota) = (\_,\iota')\\
    }{
      \reflvof{}(\iota_c, (\iota\;u)^\glab) = (\iota'\;u)^\glab
    }
    \and
    \inferrule{ 
      \forall i\in\{1,2\}. \,
      \refineof(\iota_{c},\iota_i) = (\_,\iota'_i)
    }{
      \reflvof{}(\iota_c, \epair{\iota_1\;u_1}{\iota_2\;u_2}^\glab)
      = \epair{\iota'_1\;u_1}{\iota'_2\;u_2}^\glab
    }
    \and
    \inferrule{ 
      \refineof(\iota_i,\iota) = (\_,\iota'_i)
    }{
      \reflvof{}(\epair{\iota_1}{\iota_2}, (\iota\;u)^\glab) = 
      \epair{\iota'_1\;u}{\iota'_2\;u}^\glab
    }
    \and
    \inferrule{ 
      \forall i\in\{1,2\}. \,
      \refineof(\iota_{ci},\iota_i) = (\_,\iota'_i)
    }{
      \reflvof{}(\epair{\iota_{c1}}{\iota_{c2}}, \epair{\iota_1\;u_1}{\iota_2\;u_2}^\glab)
      = \epair{\iota'_1\;u_1}{\iota'_2\;u_2}^\glab
    }
  \end{mathpar}
  \begin{mathpar}
    \inferrule{ 
      \refineof(\iota_c,\iota) = \eundef
    }{
      \reflvof{}(\iota_c, (\iota\;u)^\glab) = \eundef
    }
    \and
    \inferrule{ 
      \exists i\in\{1,2\}. \,
      \refineof(\iota_{c},\iota_i) = \eundef
    }{
      \reflvof{}(\iota_c, \epair{\iota_1\;u_1}{\iota_2\;u_2}^\glab)
      = \eundef
    }
    \and
    \inferrule{ 
      \exists i\in\{1,2\}. \,
      \refineof(\iota_i,\iota) = \eundef
    }{
      \reflvof{}(\epair{\iota_1}{\iota_2}, (\iota\;u)^\glab) = 
      \eundef
    }
    \and
    \inferrule{ 
      \exists i\in\{1,2\}. \,
      \refineof(\iota_{ci},\iota_i) = \eundef
    }{
      \reflvof{}(\epair{\iota_{c1}}{\iota_{c2}}, \epair{\iota_1\;u_1}{\iota_2\;u_2}^\glab)
      = \eundef
    }
  \end{mathpar}
    \caption{ $\reflvof{}$ for paired executions}
   \label{fig:reflvof-pair}
 \end{figure}

\begin{figure}[h]
  \flushleft
  \begin{mathpar}
    \inferrule{ }{
      \rflof{}(\delta, \cdot, \iota) = \delta
    }
    \and
    \inferrule{  
      \delta' = \rflof{}(\delta, X, \iota)
      \\ v' = \reflvof{}(\iota, v)
    }{
      \rflof{}((\delta, x\mapsto v), (X, x), \iota) 
      = \delta', x \mapsto v'
    }
    \and
    \inferrule{  
      \delta' = \rflof{}(\delta, X, \iota)
      \\ \reflvof{}(\iota, v) = \eundef
    }{
      \rflof{}((\delta, x\mapsto v), (X, x), \iota) 
      = \eundef
    }
  \end{mathpar}
  \begin{mathpar}
    \inferrule*{ }{
      \rflof{i}(\delta, \cdot, \Pi) = \delta
    }
    \and
    \inferrule*{  
      \delta' = \rflof{i}(\delta, X, \Pi)
      \\ v' = \reflvof{i}(\Pi, v)
    }{
      \rflof{i}((\delta, x^g\mapsto v), (X, x), \Pi) 
      = \delta', x^g\mapsto v'
    }
    \and
    \inferrule*{  
      \delta' = \rflof{i}(\delta, X, \Pi) \\
      \reflvof{i}(\Pi, v) = \eundef
    }{
      \rflof{i}((\delta, x^g\mapsto v), (X, x), \Pi) 
      = \eundef
    }
  \end{mathpar}
  \caption{$\rflof{}$ for  paired executions}
  \label{fig:app-pair-lab-op}
\end{figure}

\begin{figure}[h]
  
  \flushleft
  \textbf{Value and interval projections: }
  \begin{mathpar}
    \inferrule*{}{
      \proj{\epair{\iota_1\;u_1}{\iota_2\;u_2}^\glab}{i} = (\iota_i\;u_i)^\glab
      \quad \proj{(\iota\;u)^\glab}{i} = (\iota\;u)^\glab
      \quad \proj{\epair{\iota_1}{\iota_2}}{i} = \iota_i
      \quad \proj{\iota}{i} = \iota
      \quad \text{where} \; i = \{1, 2\}
    }
  \end{mathpar}
  \flushleft    
  \textbf{Read operations: }
  \begin{mathpar}
    \inferrule*{}{\eread ~v = v}
    \and
    \inferrule*{}{\eread_1 ~v = \proj{v}{1}}
    \and
    \inferrule*{}{\eread_2 ~v = \proj{v}{2}}
  \end{mathpar}
  \flushleft
  \textbf{Write operations:}


  \begin{mathpar}
    \inferrule*{ 
      v_n=\epair{\iota_1\; u_1}{\iota_2\; u_2}^\glab
      \\ \forall i\in{1,2},  \newlab(\iota_o, \iota_i) = \iota'_i
    }{
      \updval~\iota_o~v_n  = \epair{\iota'_1\; u_1}{\iota'_2\; u_2}^\glab
    }
    \and
    \inferrule*{ 
      \newlab(\iota_o, \iota_n) = \iota
    }{
      \updval~\iota_o~(\iota_n\;u_n)^\glab  = (\iota\;u_n)^\glab
    }
    \and 
    \inferrule*{
      v_n=\epair{\iota_1\; u_1}{\iota_2\; u_2}^\glab
      \\ \exists j\in[1,2], \newlab(\iota_o, \iota_j) = \eundef 
    }{
      \updval~\iota_o~v_n  = \eundef
    }
    \and
    \inferrule*{ 
      \newlab(\iota_o, \iota_n) = \eundef
    }{
      \updval~\iota_o~(\iota_n\;u_n)^\glab  = \eundef
    }
  \end{mathpar}
\end{figure}
  \begin{figure}[h]
  \flushleft
  \textbf{Simple write/output operations: }
   \begin{mathpar}
    \inferrule*{ 
      \labof(v_o) = \epair{\iota_1}{\iota_2}
      ~\mbox{or}~ v_n=\epair{\iota'_1\; u_1}{\iota'_2\; u_2}^\glab
     \\\forall i\in{1,2},  \newlab(\proj{\labof(v_o)}{i}, \proj{\labof(v_n)}{i}) = \iota''_i
    }{
      \eupdate~v_o~v_n  = \epair{\iota''_1\; u_1}{\iota''_2\; u_2}^\glab
    }
    \and
    \inferrule*{ 
        \newlab(\iota_o, \iota_n) = \iota
    }{
      \eupdate~(\iota_o\;u_o)^\glab~(\iota_n\;u_n)^\glab  = (\iota\;u_n)^\glab
    }
    \and
    \inferrule*{ 
     \labof(v_o) = \epair{\iota_1}{\iota_2}
      ~\mbox{or}~ v_n=\epair{\iota'_1\; u_1}{\iota'_2\; u_2}^\glab
     \\\\\exists i\in{1,2},  \newlab(\proj{\labof(v_o)}{i}, \proj{\labof(v_n)}{i}) = \eundef
    }{
      \eupdate~v_o~v_n  = \eundef
    }
    \and
    \inferrule*{ 
      \newlab(\iota_o, \iota_n) = \eundef
    }{
      \eupdate~(\iota_o\;u_o)^\glab~(\iota_n\;u_n)^\glab  = \eundef
    }
    \and
    \inferrule*{
      \proj{v_n}{1} = (\iota_1\; u_n)^{\glab'}
      \\ \newlab(\proj{\labof(v_o)}{1}, \iota_1) = \iota'_1
      \\ \proj{v_o}{2} = (\iota_2\; u_2)^{\glab}}{
      \eupdate_1~v_o~(\iota_n\; u_n)^\glab  = \epair{\iota'_1\; u_n}{\iota_2\; u_2}^\glab
    }
    \and
     \inferrule*{
      \proj{v_n}{2} = (\iota_2\; u_n)^{\glab'}
      \\ \newlab(\proj{\labof(v_o)}{2}, \iota_2) = \iota'_2
      \\ \proj{v_o}{1} = (\iota_1\; u_1)^{\glab}}{
      \eupdate_2~v_o~(\iota_n\; u_n)^\glab  = \epair{\iota_1\; u_1}{\iota'_2\; u_n}^\glab
    }
     \and
     \inferrule*{
      \proj{v_n}{1} = (\iota_1\; u_n)^{\glab'}
      \\ \newlab(\proj{\labof(v_o)}{1}, \iota_1) = \eundef
      }{
      \eupdate_1~v_o~(\iota_n\; u_n)^\glab  = \eundef
    }
    \and
     \inferrule*{
      \proj{v_n}{2} = (\iota_2\; u_n)^{\glab'}
      \\ \newlab(\proj{\labof(v_o)}{2}, \iota_2) = \eundef
}{
      \eupdate_2~v_o~(\iota_n\; u_n)^\glab  = \eundef
    }
   \end{mathpar}
   
  \flushleft
  \textbf{Binary operations: }
  \begin{mathpar}
    \inferrule*{
      v_1 = \epair {\iota_1\; u_1}{\iota'_1\; u'_1}^{\glab_1} \\\\
      v_2 = \epair {\iota_2\; u_2}{\iota'_2\; u'_2}^{\glab_2} \\\\
      \iota = (\iota_1\labjoin\iota_2) \\ \iota' = (\iota'_1\labjoin\iota'_2) \\\\
      u = (u_1\bop u_2) \\\\
      u' = (u'_1\bop u'_2) \\\\
      \glab = \glab_1\cjoin\glab_2 \\\\ v = \epair {\iota\; u}{\iota'\; u'}^{\glab} \\
    }{
      v_1 \bop v_2 = v
    }
    \and
    \inferrule*{
      v_i = \epair {\iota_i\; u_i}{\iota'_i\; u'_i}^{\glab_i} \\\\
      v_j = (\iota_j\; u_j)^{\glab_j} \\\\
      \iota = (\iota_i\labjoin\iota_j) \\     \iota' = (\iota'_i\labjoin\iota_j) \\\\
      u = (u_i\bop u_j) \\\\ u' = (u'_i\bop u_j) \\\\
      \glab = \glab_i\cjoin\glab_j \\ \{i,j\} = \{1,2\} \\\\ v = \epair {\iota\; u}{\iota'\; u'}^{\glab} \\
    }{
      v_1 \bop v_2 = v
    }
    \and
    \inferrule*{
      v_1 = (\iota_1\; u_1)^{\glab_1} \\\\
      v_2 = (\iota_2\; u_2)^{\glab_2} \\\\
      \iota = (\iota_1\labjoin\iota_2) \\\\ 
      u = (u_1\bop u_2) \\\\
      \glab = \glab_1\cjoin\glab_2 \\\\ v = (\iota\; u)^{\glab} \\
    }{
      v_1 \bop v_2 = v
    }
  \end{mathpar}
  \textbf{Cast:}
  \begin{mathpar}
    \inferrule*{ 
      v = (\iota\, u)^{\glab}  \\
      \iota' = \iota \bowtie E
    }{
      (E, \glab')\rhd v = (\iota'\; u)^{\glab'}
    }
    \and
    \inferrule*{ 
      v = \epair{\iota_1\;u_1}{\iota_2\;u_2}^\glab \\
      \forall i\in\{1,2\},~
      \iota'_i = \iota_i \bowtie E
    }{
      (E, \glab')\rhd v = \epair{\iota'_1\;u_1}{\iota'_2\;u_2}^{\glab'}
    }
    \and
    \inferrule*{ 
      v = (\iota\, u)^{\glab}  \\
      \iota \bowtie E = \eundef
    }{
      (E, \glab')\rhd v = \eundef
    }
    \and
    \inferrule*{ 
      v = \epair{\iota_1\;u_1}{\iota_2\;u_2}^\glab \\
      \exists i\in\{1,2\},~ \iota_i \bowtie E = \eundef
    }{
      (E, \glab')\rhd v = \eundef
    }
  \end{mathpar}
  \caption{Operations with pairs}
  \label{fig:app-rdupd}
  
\end{figure}

\begin{figure}[ht]
  \flushleft
  \textbf{Expression Semantics: }
  \noindent{\framebox{$ \delta \sepidx{i} e \evalsto v$}}
  \begin{mathpar}
    \inferrule*[right=P-Const]{
    }{
      \delta\sepidx{i} (\iota\, u)^\glab \evalsto  (\iota\, u)^\glab
    }
    \and
    \inferrule*[right=P-Var]{
    }{
      \delta\sepidx{i} x \evalsto  \eread_i\ \delta(x)
    }
    \and
    \inferrule*[right=P-Cast]{
      \delta\sepidx{i} e \evalsto v 
      \\ v' = (E, \glab)\rhd v
    }{
      \delta\sepidx{i} E^{\glab}~ e 
      \evalsto  v'
    }
    \and
    \inferrule*[right=P-Cast-Err]{
      \delta\sepidx{i} e \evalsto v 
      \\ (E, \glab)\rhd v = \eundef
    }{
      \delta\sepidx{i} E^{\glab}~ e 
      \evalsto  \eabort
    }
    \and
    \inferrule*[right=P-Bop]{
      \delta\sepidx{i} e_1 \evalsto v_1 \\
      \delta\sepidx{i} e_2 \evalsto v_2 \\
    }{
      \delta\sepidx{i} e_1\bop e_2 \evalsto v_1\bop v_2 
    }
  \end{mathpar}
  \caption{Operational semantics for expression evaluation in paired executions}
  \label{fig:app-pair-semantics-e}
  
\end{figure}

\begin{figure}[htbp]
  
  \flushleft
  \textbf{Command Semantics: } 
  \noindent{\framebox{$ \kappa,\delta \sepidx{i} c \stackrel{\alpha}{\stepsto}
      \kappa',\delta'\sepidx{i} c'$}}
  
  \begin{mathpar}
    \inferrule*[right=P-C-Pair]{
      \kappa_i\rhd \iota_\pc\labjoin\iota_i\;\gpc\cjoin\glab,\delta\sepidx{i} c_i 
      \stackrel{\alpha}{\stepsto}  \kappa'_i\rhd \iota_\pc\labjoin\iota_i\;\gpc\cjoin\glab,\delta'\sepidx{i}c'_i
      \\ c_j = c'_j
      \\ \kappa_j = \kappa'_j
      \\ \{i,j\} = \{1,2\}
    }{
      \iota_\pc\;\gpc,\delta\sepidx{} 
      \epair{\kappa_1, \iota_1, c_1}{\kappa_2, \iota_2, c_2}_\glab
      \stackrel{\alpha}{\stepsto}  \iota_\pc\;\gpc,\delta'\sepidx{} 
      \epair{\kappa'_1, \iota_1, c'_1}{\kappa'_2, \iota_2, c'_2}_{\glab}
    }
    \and
    \inferrule*[right=P-C-Pair-Err]{
      \kappa_i\rhd \iota_\pc\labjoin\iota_i\;\gpc\cjoin\glab,\delta\sepidx{i} c_i 
      \stackrel{}{\stepsto}  \eabort
      \\ \{i,j\} = \{1,2\}
    }{
      \iota_\pc\;\gpc,\delta\sepidx{} 
      \epair{\kappa_1, \iota_1, c_1}{\kappa_2, \iota_2, c_2}_\glab
      \stackrel{}{\stepsto}  \eabort
    }
    \and
    \inferrule*[right=P-Lift-If]{
      i = \{1, 2\}
      \\ c_j = c_1~\mbox{if}~u_1 = \etrue
      \\ c_j = c_2~\mbox{if}~u_1 = \efalse
      \\ c_k = c_1~\mbox{if}~u_2 = \etrue
      \\ c_k = c_2~\mbox{if}~u_2 = \efalse
    }{
      \iota_\pc\;\gpc,\delta\sepidx{}
      \eif\; \epair{\iota_1\;u_1}{\iota_2\;u_2}^\glab
      \ \ethen\ c_1\ \eelse\ c_2
      \stepsto  \iota_\pc\;\gpc,\delta\sepidx{} 
      \epair{\emptyset, \iota_{1}, c_j}{
        \emptyset, \iota_{2}, c_k}_g
    }
    \and
    \inferrule*[right=P-Skip-Pair]{
    }{
      \iota_\pc\; \gpc, \delta\sepidx{} \epair{\emptyset, \iota_1, \eskip}{\emptyset, \iota_2, \eskip}_g
      \stepsto \iota_\pc\; \gpc,\delta\sepidx{} \eskip
    }
    \and
    \inferrule*[right=P-Assign]{
        \delta\sepidx{i} e \evalsto v \\
        \\ v' = \reflvof{}(\iota_\pc, v)
      }{
        \iota_\pc\;\gpc,\delta \sepidx{i} x:= e
        \stepsto \\ \iota_\pc\;\gpc, 
        \delta[x\mapsto \eupdate_i\ \delta(x)\  v'] \sepidx{i} \m{skip}
      }
    \and
    \inferrule*[right=P-Assign-Err]{
      \delta\sepidx{i} e \evalsto v \\
        \reflvof{}(\iota_\pc, v) = \eundef 
    }{
      \iota_\pc\;\gpc,\delta \sepidx{i} x:= e
      \stepsto  \eabort
    }
   \and
    \inferrule*[right=P-Assign-Err2]{
      \delta\sepidx{i} e \evalsto v \\
      \reflvof{}(\iota_\pc, v) = v' \\
      \eupdate_i\ \delta(x)\  v' = \eundef
    }{
      \iota_\pc\;\gpc,\delta \sepidx{i} x:= e
      \stepsto  \eabort
    } 
    \and
    \inferrule*[right=P-Seq]{
      \kappa,\delta\sepidx{i} c_1  \stackrel{\alpha}{\stepsto}  \kappa',\delta' \sepidx{i} c'_1
    }{
      \kappa,\delta\sepidx{i} c_1; c_2
      \stackrel{\alpha}{\stepsto}  \kappa',\delta'\sepidx{i} c'_1; c_2
    }
    \and
    \inferrule*[right=P-Seq-Err]{
      \kappa,\delta\sepidx{i} c_1   \stepsto  \eabort
    }{
      \kappa,\delta\sepidx{i} c_1;c_2
      \stepsto  \eabort
    }
    \and
    \inferrule*[right=P-Pop]{
    }{
      \iota\;\glab\rhd\kappa,\delta\sepidx{i} \{\eskip\} 
      \stepsto  \kappa,\delta\sepidx{i} \eskip
    }
    \and
    \inferrule*[right=P-Skip]{
    }{
      \kappa,\delta\sepidx{i} \eskip; c
      \stepsto  \kappa,\delta\sepidx{i} c
    }
    \and
    \inferrule*[right=P-Pc]{
      \kappa,\delta\sepidx{i} c 
      \stackrel{\alpha}{\stepsto}  \kappa',\delta'\sepidx{i} c'
    }{
      \kappa\rhd \iota_\pc\;\gpc,\delta\sepidx{i} \{c\} 
      \stackrel{\alpha}{\stepsto}  \kappa'\rhd\iota_\pc\;\gpc,\delta'\sepidx{i}\{c'\}
    }
    \and
    \inferrule*[right=P-Pc-Err]{
      \kappa,\delta\sepidx{i} c 
      \stackrel{}{\stepsto}  \eabort
    }{
      \kappa\rhd \iota_\pc\;\gpc,\delta\sepidx{i} \{c\} 
      \stackrel{}{\stepsto}  \eabort
    }
    \and
    \inferrule*[right=P-If-Refine]{
      \delta\sepidx{i} e \evalsto v \\
      \rflof{i}(\delta, X, \iota_\pc\labjoin\labof(v)) = \eundef
    }{
      \iota_\pc\;\gpc,\delta\sepidx{i} \eif^X\;e\ \ethen\ c_1\ \eelse\ c_2
      \stepsto  \eabort
    }
    \and
     \inferrule*[right=P-Out]{
         \delta\sepidx{i} e \evalsto v 
         \\  v' = \reflvof{}(\iota_\pc, v)
         \\  v'' = \updval\ [\lab,\lab]\  v'
       }{
         \iota_\pc\;\gpc,\delta \sepidx{i} \eoutput(\lab, e)
         \stackrel{(i, \lab, v'')}{\stepsto} \iota_\pc\;\gpc, 
         \delta\sepidx{i} \eskip
         }
    \and
    \inferrule*[right=P-Out-Err]{
      \delta\sepidx{i} e \evalsto v \\
      \reflvof{}(\iota_\pc, v) = \eundef 
    }{
      \iota_\pc\;\gpc,\delta \sepidx{i} \eoutput(\lab, e)
      \stackrel{}{\stepsto}  \eabort
    }
    \and
    \inferrule*[right=P-Out-Err2]{
      \delta\sepidx{i} e \evalsto v \\
      \reflvof{}(\iota_\pc, v) = v' \\
      \updval\ [\lab,\lab]\  v' = \eundef
    }{
      \iota_\pc\;\gpc,\delta \sepidx{i} \eoutput(\lab, e)
      \stackrel{}{\stepsto}  \eabort
    }
    \and
    \inferrule*[right=P-If]{
      \\ \iota'_\pc = \iota_\pc\labjoin \iota
      \\ \gpc' =\gpc\cjoin\glab
      \\ c_j = c_1~\mbox{if}~b = \etrue
      \\ c_j = c_2~\mbox{if}~b = \efalse
    }{
      \iota_\pc\;\gpc,\delta\sepidx{i} \eif\; (\iota\;b)^\glab\ \ethen\ c_1\ \eelse\ c_2
      \stepsto  \iota'_\pc\;\gpc'\rhd \iota_\pc\;\gpc, \delta \sepidx{i}  \{c_j\}
    }
    \and
    \inferrule*[right=P-While]{
    }{
      \iota_\pc\;\gpc,\delta\sepidx{i} \ewhile^X\;e\ \edo\ c
      \stepsto   \iota_\pc\;\gpc, \delta \sepidx{i}
      \eif^X\;e\ \ethen\ (c; \ewhile^X\;e\ \edo\ c)\ \eelse\ \m{skip}
    }
  \end{mathpar}
  \caption{Operational semantics for paired executions}
  \label{fig:app-pair-semantics-c}
  
\end{figure}  

\subsection{Write-set}
\label{sec:app-write-set}
The function $\wtsetof$ is defined as:
\[
  \begin{array}{l}
  \mi{WtSet}(x) = \{x\} \\
  \mi{WtSet}(\m{skip}) = \emptyset \\
  \mi{WtSet}(\m{output}(\lab, e)) = \emptyset \\
  \mi{WtSet}(\{c\}) = \mi{WtSet}(c)\\
  \mi{WtSet}(E\; e) = \mi{WtSet}(e) \\ 
  \mi{WtSet}(c_1; c_2) = \mi{WtSet}(c_1)\cup  \mi{WtSet}(c_2) \\
  \mi{WtSet}(\ewhile^X \; e\; \edo\ c) = \mi{WtSet}(c)\\  
  \mi{WtSet}(\epair{\kappa_1,\iota_1,c_1}{\kappa_2, \iota_2, c_2}_g) = \mi{WtSet}(c_1)\cup  \mi{WtSet}(c_2)\\
  \mi{WtSet}(x := e_2) = \mi{WtSet}(x)\\
    \mi{WtSet}(\eif^X\; e\; \ethen\ c_1\ \eelse\; c_2) = \mi{WtSet}(c_1)\cup \mi{WtSet}(c_2)
  \end{array}
\]

\subsection{Well-formedness}
\label{sec:app-wf}
\begin{enumerate}
\item $\vdash v\ \m{wf}$, if
  \begin{enumerate}
  \item $v = \epair{v_1}{v_2}$, then $\forall i \in \{1,2\}. v_i = (\iota_i\;u_i)$
  \item $v = (\iota\;u)^\glab$
  \end{enumerate}
\item $\vdash \delta\ \m{wf}$, if $\forall x \in \delta, \vdash \delta(x)\ \m{wf}$
\item $\vdash\delta \sepidx{i} e\ \m{wf}$ for $i \in \{\cdot, 1, 2\}$ if $\vdash \delta\ \m{wf}$
\item $\vdash c\ \m{wf}$, when the following hold:
  \begin{enumerate}
  \item if $c=\epair{\kappa_1,\iota_1,c_1}{\kappa_2,\iota_2,c_2}_g$, then $\vdash c_1\ \m{wf}$, $\vdash c_2\ \m{wf}$, and 
    $c_1$ and $c_2$ do not contain pairs
  \item if $c =\; \eif\; e\; \ethen\; c_1\; \eelse\; c_2$, then $\vdash c_1\ \m{wf}$, $\vdash c_2\ \m{wf}$, and $c_1$ and $c_2$ do not contain pairs or braces
  \item if $c =\; \ewhile^X\; e\; \edo\; c$, then $\vdash c\ \m{wf}$ and $c$ does not contain pairs or braces
  \item if $c =\; c_1; c_2$ then $\vdash c_1\ \m{wf}$, $\vdash c_2\ \m{wf}$ and $c_2$ does not contain pairs or braces
  \item if $c =\{c_1\}$, then $\vdash c_1\ \m{wf}$
  \end{enumerate}
\item $\vdash \kappa, \delta \sepidx{i} c\ \m{wf}$ for $i \in \{\cdot, 1, 2\}$) if all of the following hold
  \begin{enumerate}
  \item $\vdash c\ \m{wf}$ and $\vdash \delta\ \m{wf}$
  \item if $i\in\{1,2\}$, then 
    $c$ does not contain pairs 
  \end{enumerate}
\end{enumerate}

\subsection{Theorems and Proofs for Well-formedness}
\begin{lem}[Restrict refines]\label{lem:restrict-refine}
~~
$\newlab(\iota_1, \iota_2) \sqsubseteq \iota_1$
\end{lem}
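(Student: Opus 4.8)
The plan is to prove this by directly unfolding the definition of $\newlab$ from Fig.~\ref{fig:store-aux-single} and then checking the two inequalities demanded by the interval-precision relation $\gsubtp$. First I would write $\iota_1 = [\lab_{1l}, \lab_{1r}]$ and $\iota_2 = [\lab_{2l}, \lab_{2r}]$. The statement is only meaningful in the case where $\newlab(\iota_1,\iota_2)$ is defined, so I would invoke the defining rule for $\newlab$: its side condition $\lab_{1l}\labjoin\lab_{2l} \labless \lab_{1r}$ holds and the result is $\newlab(\iota_1,\iota_2) = [\,\lab_{1l}\labjoin\lab_{2l},\ \lab_{1r}\,]$.

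Next I would recall that, by the definition of $\gsubtp$ on label-intervals (Fig.~\ref{fig:label-op}), $[\lab_a,\lab_b] \gsubtp [\lab_c,\lab_d]$ holds exactly when $\lab_c \labless \lab_a$ and $\lab_b \labless \lab_d$. Instantiating with $[\lab_a,\lab_b] = \newlab(\iota_1,\iota_2)$ and $[\lab_c,\lab_d] = \iota_1$, the two obligations become $\lab_{1l} \labless \lab_{1l}\labjoin\lab_{2l}$ and $\lab_{1r} \labless \lab_{1r}$. The first is the standard fact that the lattice join is an upper bound of its arguments; the second is reflexivity of $\labless$. Hence $\newlab(\iota_1,\iota_2) \gsubtp \iota_1$, which is the claim.

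There is essentially no obstacle here; the proof is a one-step unfolding plus two primitive lattice facts. The only point worth flagging is that the lemma implicitly presupposes $\newlab(\iota_1,\iota_2)\neq\eundef$, and it is precisely that definedness condition — the premise $\lab_{1l}\labjoin\lab_{2l}\labless\lab_{1r}$ of the $\newlab$ rule — that also guarantees the produced interval is valid, so no separate validity argument is needed. This lemma will presumably be used in the store-update and preservation arguments (analogously to how Lemma~\ref{lem:interval-refine} is used for $\refineof$) to show that $\updval$, being built from $\newlab$, only ever refines a stored value's label-interval rather than enlarging it.
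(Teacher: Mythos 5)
Your proof is correct and follows essentially the same route as the paper's: unfold $\newlab(\iota_1,\iota_2) = [\lab_{1l}\labjoin\lab_{2l},\ \lab_{1r}]$ and discharge the two $\gsubtp$ obligations via $\lab_{1l}\labless\lab_{1l}\labjoin\lab_{2l}$ and reflexivity of $\labless$. The extra remark about the definedness side condition is consistent with the paper's implicit treatment and adds nothing that would change the argument.
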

\begin{proof} By definition of $\newlab$, if $\iota_1 = [\ell_1, \ell_1']$ and $\iota_2 = [\ell_2, \ell_2']$, then
  $\newlab(\iota_1, \iota_2) = [\ell_1 \labjoin \ell_2, \ell_1']$.
  As $\ell_1 \labless (\ell_1 \labjoin \ell_2)$ and $\ell_1' \labless \ell_1'$, the conclusion holds.
\end{proof}

\begin{lem}
  \label{lem:wf-pair-pres}
  If $\forall i \in \{1,2\}, \mathcal{E} :: \kappa,\delta \sepidx{i}  c \stackrel{\alpha}{\stepsto} \kappa',\delta'\sepidx{i} c'$
  and $\vdash \kappa, \delta \sepidx{i} c\ \m{wf}$, 
  then
  \begin{itemize}
  \item[(a).] $c'$ does not contain pairs and
  \item[(b).] $\wtsetof(c') \subseteq \wtsetof(c)$
  \end{itemize}
\end{lem}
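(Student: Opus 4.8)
The plan is to fix a single index $i\in\{1,2\}$ and argue by induction on the derivation $\mathcal{E}$ of $\kappa,\delta\sepidx{i}c\stackrel{\alpha}{\stepsto}\kappa',\delta'\sepidx{i}c'$; the statement for the other index is symmetric, and since the lemma only asserts anything about steps to a genuine configuration $\kappa',\delta'\sepidx{i}c'$, the error rules of Fig.~\ref{fig:app-pair-semantics-c} are vacuously handled. The engine of the argument is clause~5(b) of well-formedness: because $i\in\{1,2\}$, the hypothesis $\vdash\kappa,\delta\sepidx{i}c\ \m{wf}$ forces $c$ to contain no pairs, hence neither does any sub-command of $c$. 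This immediately discharges every rule whose redex is a paired command or a pair value --- \rulename{P-C-Pair}, \rulename{P-C-Pair-Err}, \rulename{P-Skip-Pair}, and \rulename{P-Lift-If} --- so the induction only has to treat the ``single-run'' rules.

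Next I would dispatch the cases in which $c'$ is $\eskip$ or a syntactic sub-term of $c$: \rulename{P-Assign}, \rulename{P-Out}, \rulename{P-Pop}, \rulename{P-Skip}. In each, $c'$ contains no pairs because $c$ does not, and $\wtsetof(c')\subseteq\wtsetof(c)$ is read off the defining equations of $\wtsetof$ (for instance $\wtsetof(\eskip)=\emptyset\subseteq\wtsetof(x:=e)$, and $\wtsetof(c_2)\subseteq\wtsetof(\eskip;c_2)$). For \rulename{P-If} the reduct is $\{c_j\}$ with $c_j$ one of the two branches of $c$; it has no pairs, and $\wtsetof(\{c_j\})=\wtsetof(c_j)\subseteq\wtsetof(c_1)\cup\wtsetof(c_2)=\wtsetof(c)$. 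For \rulename{P-If-Refine} the premise evaluates $e$ to $v$; here I would insert a one-line sub-induction on expression evaluation (Fig.~\ref{fig:app-pair-semantics-e}) showing that at index $i$ the judgement $\delta\sepidx{i}e\evalsto v$ never produces a pair value, since \rulename{P-Var} reads the store through $\eread_i$ (which projects) and the $\bop$ and cast operations on non-pair arguments yield non-pair results. Consequently $c'=\eif\;v\;\ethen\;c_1\;\eelse\;c_2$ has no pairs and $\wtsetof(c')=\wtsetof(c_1)\cup\wtsetof(c_2)=\wtsetof(c)$. Finally, for \rulename{P-While}, $c'=\eif^X\;e\;\ethen\;(c_0;\ewhile^X\;e\;\edo\;c_0)\;\eelse\;\eskip$ introduces no pairs and $\wtsetof(c')=\wtsetof(c_0)=\wtsetof(\ewhile^X\;e\;\edo\;c_0)=\wtsetof(c)$.

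The two genuinely inductive cases are \rulename{P-Seq} and \rulename{P-Pc}. For $c=c_1;c_2$, well-formedness clauses~4(d) and~5 make $\kappa,\delta\sepidx{i}c_1$ a well-formed sub-configuration, so the IH gives that the reduct $c_1'$ has no pairs and $\wtsetof(c_1')\subseteq\wtsetof(c_1)$; clause~4(d) also gives that $c_2$ has no pairs, hence $c'=c_1';c_2$ has none and $\wtsetof(c')=\wtsetof(c_1')\cup\wtsetof(c_2)\subseteq\wtsetof(c_1)\cup\wtsetof(c_2)=\wtsetof(c)$. The case $c=\{c_1\}$ is analogous, using $\wtsetof(\{c_1'\})=\wtsetof(c_1')$ and clause~4(e).

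I expect the only mildly delicate point to be the observation that no reduction at index $i$ can re-introduce a pair. It rests on exactly two facts: $c$ starting free of pairs (clause~5(b)), which is precisely why the lemma restricts to $i\in\{1,2\}$; and the projecting behaviour of $\eread_i$ in \rulename{P-Var}, which keeps pair values out of expression evaluation and therefore out of the reducts produced by \rulename{P-If-Refine} and \rulename{P-If}. Everything else is routine bookkeeping against the equations defining $\wtsetof$.
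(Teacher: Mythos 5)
Your proposal is correct and follows essentially the same route as the paper: induction on the step derivation, with (a) read off the well-formedness condition (no pairs at index $i\in\{1,2\}$) and (b) checked case-by-case against the defining equations of $\wtsetof$, using the induction hypothesis only for \rulename{P-Seq} and \rulename{P-Pc}. Your explicit sub-argument that indexed expression evaluation never yields a pair (via $\eread_i$ in \rulename{P-Var}) makes precise a point the paper's proof leaves implicit under ``follows from the well-formedness definition,'' and which it formally records only later in its soundness lemmas.
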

\begin{proof}
  By induction on the structure of $\mathcal{E}$.
  \begin{description}
  \item [$c'$]: doesn't contain pairs:
    Follows from well-formedness definition for most rules.  \rulename{P-Seq, P-Pc} use the IH additionally.
  \item[$\wtsetof(c') \subseteq \wtsetof(c)$:]
    \begin{tabbing}
      \\\quad\= \rulename{P-Seq}\qquad\qquad\= By IH, $\wtsetof(c_1') \subseteq \wtsetof(c_1)$.\\
      \>\>Thus by definition of $\wtsetof$, \\$\wtsetof(c_1';c_2) \subseteq \wtsetof(c_1;c_2)$\\
      \>\rulename{P-Pc}\> By IH, $\wtsetof(c') \subseteq \wtsetof(c)$\\
      \>\rulename{P-Pop}\> $\wtsetof(\eskip) = \wtsetof(\eskip)$\\
      \>\rulename{P-Skip}\> $\wtsetof(\eskip) \subseteq$\\$\wtsetof(\eskip) \cup \wtsetof(c)$\\
      \>\rulename{P-Assign}\> $\wtsetof(\eskip) \subseteq \wtsetof(x) $\\
      \>\rulename{P-Out}\> $\wtsetof(\eskip) = \wtsetof(\eoutput)$\\
      \>\rulename{P-If}\> $\wtsetof(c_1) \subseteq \wtsetof(c_1) \cup \wtsetof(c_2)$\\
      \>\rulename{P-If-Refine}\> $\wtsetof(c_1) \cup \wtsetof(c_2) =$\\$ \wtsetof(c_1) \cup \wtsetof(c_2)$\\
      \>\rulename{P-While}\> $\wtsetof(c) \cup \wtsetof(c) \cup$\\$ \wtsetof(\eskip) = \wtsetof(c)$
    \end{tabbing}
  \end{description}
\end{proof}

\begin{lem}
  \label{lem:wf-label-pres}
  If ~$\,\vdash v\ \m{wf}$, $\vdash v'\ \m{wf}$ and~$\,\vdash \delta\ \m{wf}$, then
  \begin{enumerate}
  \item $\forall i \in \{\cdot, 1, 2\}, \iota.$
    $\,\vdash \reflvof{i} (\iota, v)\ \m{wf}$,  $\,\vdash \eupdate_i\ \iota\ v'\ \m{wf}$ and 
  \item $\forall i \in \{\cdot, 1, 2\}, \iota, X. \,\vdash \rflof{i} (\delta, X, \iota)\ \m{wf}$
  \end{enumerate}
\end{lem}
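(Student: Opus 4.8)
The plan is to establish all three well-formedness claims by a direct case analysis on the defining rules of $\reflvof{i}$, $\eupdate_i$, and $\rflof{i}$ (Fig.~\ref{fig:reflvof-pair} and Fig.~\ref{fig:app-rdupd}), together with an induction on the store for the $\rflof{i}$ part. The first step is to record the structural observation that each of these operations only rewrites label-intervals --- through $\newlab$ and $\refineof$, neither of which alters a raw value or a gradual label --- and changes the pairing structure of a value only in the explicitly listed ways. Consequently the only thing that must be verified against the definitions in Section~\ref{sec:app-wf} is that none of these operations ever produces a \emph{nested} pair, i.e.\ a pair one of whose components is itself a pair; and whenever an underlying $\refineof$ or $\newlab$ computation yields $\eundef$, the operation is undefined and there is nothing to prove.

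For claim~(1) and $\reflvof{i}(\iota, v)$ I would split on the index and on whether $v$ is a pair. If $i = \cdot$, the rules refine intervals componentwise and keep the shape of $v$, so $\vdash v\ \m{wf}$ gives the result at once. If $i \in \{1,2\}$ and $v = (\iota'\;u)^\glab$ is not a pair, the result is $\epair{\iota_1\;u}{\iota_2\;u}^\glab$, a pair of non-pairs, hence well-formed; if $i \in \{1,2\}$ and $v = \epair{\iota_1\;u_1}{\iota_2\;u_2}^\glab$ --- which by $\vdash v\ \m{wf}$ is already a pair of non-pairs --- the rule only refines one of $\iota_1,\iota_2$, so the result is again a pair of non-pairs. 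The analysis for $\eupdate_i$ applied to well-formed old and new values is analogous: when $i = \cdot$ the rules yield a pair exactly when one of the two inputs is a pair, and then well-formedness of the inputs forces every component of the result to be a non-pair, so the result is a well-formed pair (otherwise both inputs and the result are non-pairs); when $i \in \{1,2\}$ the result is always $\epair{\cdot}{\cdot}^\glab$ whose $i$-th component comes from the updated new value and whose other component is the preserved projection of the old value, both non-pairs by well-formedness of the inputs, hence a well-formed pair.

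For claim~(2), $\rflof{i}(\delta, X, \iota)$, I would induct on the structure of $\delta$ (equivalently, on the derivation of the $\rflof{i}$ judgement). The base case $\rflof{i}(\delta,\cdot,\iota) = \delta$ is immediate from $\vdash \delta\ \m{wf}$. In the step case $\rflof{i}((\delta, x\mapsto v),(X,x),\iota) = \rflof{i}(\delta,X,\iota),\, x\mapsto \reflvof{i}(\iota,v)$: the first component is well-formed by the induction hypothesis, the new binding is well-formed by claim~(1), and so the extended store is well-formed by the store clause of the well-formedness definition; the case $\reflvof{i}(\iota,v) = \eundef$ makes $\rflof{i}$ undefined, so nothing is required.

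I expect no genuine difficulty here; the one place that demands care is the $\eupdate_i$ analysis, whose defining rules fan out into several sub-cases according to which of the old value, the new value, and the index are paired, and where one must actually invoke well-formedness of the inputs to rule out the creation of a doubly-nested pair. I would therefore handle those sub-cases explicitly and leave the interval bookkeeping, which is irrelevant to well-formedness, implicit.
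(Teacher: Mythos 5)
Your proposal is correct and matches the paper's own argument, which is exactly a direct inspection of the defining rules of $\reflvof{i}$, $\eupdate_i$ (via $\newlab$/$\refineof$, which only touch intervals), together with an induction over the store/write-set for $\rflof{i}$. The paper states this only as a one-line proof sketch; your case split on the index and on the pairing structure of the inputs, including the observation that well-formedness of the inputs rules out nested pairs in the $\eupdate_i$ cases, is a faithful elaboration of that same route.
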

\begin{proof}
  By examining the respective definitions and induction for $\rflof{}$.
\end{proof}

\begin{lem}[Well-formedness preservation]
  \label{lem:wf-pres}
  If $\kappa,\delta \sepidx{i}  c \stackrel{\trace}{\stepsto^*} \kappa',\delta'\sepidx{i} c'$
  where $\vdash \kappa, \delta \sepidx{i} c\ \m{wf}$, \\
  then $\vdash \kappa', \delta' \sepidx{i} c'\ \m{wf}$
\end{lem}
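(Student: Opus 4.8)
The plan is to reduce the statement to its single-step form and iterate. Concretely, I would first prove the single-step lemma: if $\kappa,\delta \sepidx{i} c \stackrel{\alpha}{\stepsto} \kappa',\delta'\sepidx{i} c'$ and $\vdash \kappa,\delta \sepidx{i} c\ \m{wf}$ then $\vdash \kappa',\delta'\sepidx{i} c'\ \m{wf}$, and then obtain Lemma~\ref{lem:wf-pres} by a routine induction on the length of the reduction sequence $\stackrel{\trace}{\stepsto^*}$ (each intermediate configuration is again well-formed, so the single-step lemma reapplies). The single-step lemma is proved by induction on the derivation of $\kappa,\delta \sepidx{i} c \stackrel{\alpha}{\stepsto} \kappa',\delta'\sepidx{i} c'$ over the rules of Figs.~\ref{fig:app-mon-semantics-cmd} and~\ref{fig:app-pair-semantics-c}. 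Unfolding the definition of $\vdash\kappa,\delta\sepidx{i}c\ \m{wf}$ from Appendix~\ref{sec:app-wf}, there are three obligations in every case: (i) $\vdash\delta'\ \m{wf}$; (ii) $\vdash c'\ \m{wf}$, i.e.\ the structural nesting conditions; and (iii) if $i\in\{1,2\}$ then $c'$ contains no pairs.

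Obligation (i) is uniform. Every rule either leaves the store untouched, or merely decomposes it through a congruence rule (\rulename{M-Seq}, \rulename{P-Seq}, \rulename{M-Pc}, \rulename{P-Pc}), where the inductive hypothesis gives $\vdash\delta'\ \m{wf}$, or updates it only via $\reflvof{}$/$\reflvof{i}$, $\eupdate_i$, or $\rflof{i}$ (in \rulename{M-Assign}, \rulename{P-Assign}, \rulename{M-Out}, \rulename{P-Out}, \rulename{M-If-Refine}, \rulename{P-If-Refine}); in the latter cases the evaluated expression yields a well-formed value since the semantics never builds a nested pair, and then Lemma~\ref{lem:wf-label-pres} transports well-formedness of the store and value through each of these operations. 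Obligation (iii) is exactly Lemma~\ref{lem:wf-pair-pres}(a), noting additionally that the congruence rules preserve pair-freeness of the enclosing command when the hypothesis supplies it for the redex, and that \rulename{P-Lift-If}, \rulename{P-C-Pair}, \rulename{P-Skip-Pair} fire only at top level ($i=\cdot$), where (iii) is vacuous.

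Obligation (ii) is the substantive part and I would handle it case by case. For the congruence rules the inductive hypothesis gives $\vdash c'_1\ \m{wf}$ (resp.\ $\vdash c'\ \m{wf}$), while the side conditions ``$c_2$ is brace/pair-free in $c_1;c_2$'' and ``$c$ is well-formed in $\{c\}$'' are inherited unchanged, so the sequencing/brace clauses close the case. For \rulename{M-Skip}, \rulename{M-Pop}, \rulename{P-Pop}, \rulename{P-Skip}, \rulename{M-Assign}, \rulename{P-Assign}, \rulename{M-Out}, \rulename{P-Out}, \rulename{P-Skip-Pair} the result is $\m{skip}$ or a suffix of an already well-formed sequence, so nothing is left to check. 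For \rulename{M-If}/\rulename{P-If}, well-formedness of the redex $\eif\;v\;\ethen\;c_1\;\eelse\;c_2$ forces $c_1,c_2$ well-formed with no pairs or braces, hence $\{c_i\}$ is well-formed. For \rulename{M-If-Refine}/\rulename{P-If-Refine} the branches are untouched, so the $\eif$ clause is preserved. For \rulename{M-While}/\rulename{P-While} the body $c$ is well-formed and brace/pair-free, hence so are $\ewhile^X\;e\;\edo\;c$ and $c;\ewhile^X\;e\;\edo\;c$, making the unfolded $\eif^X\;e\;\ethen\;(c;\ewhile^X\;e\;\edo\;c)\;\eelse\;\m{skip}$ well-formed. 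For \rulename{P-Lift-If}, each of $c_j,c_k$ is one of the two (well-formed, pair-free, brace-free) branches of the original $\eif$, so the created pair $\epair{\emptyset,\iota_1,c_j}{\emptyset,\iota_2,c_k}_g$ meets the pair clause; and for \rulename{P-C-Pair}, applying the single-step lemma at index $i\in\{1,2\}$ to the stepping component together with Lemma~\ref{lem:wf-pair-pres}(a) yields $\vdash c_i'\ \m{wf}$ with $c_i'$ pair-free, the other component being unchanged, so the pair stays well-formed.

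The main obstacle is purely the bookkeeping around the nesting invariants: confirming that the rules which create, step inside, or collapse pairs (\rulename{P-Lift-If}, \rulename{P-C-Pair}, \rulename{P-Skip-Pair}) never violate ``pairs are not nested'' and ``branch bodies and loop bodies contain no pairs or braces'', and that the if-entry and loop-unfolding rules keep branch bodies brace-free. Once Lemma~\ref{lem:wf-pair-pres} and Lemma~\ref{lem:wf-label-pres} are in hand, no genuinely hard argument remains; the work is just threading the well-formedness clauses of Appendix~\ref{sec:app-wf} through each reduction rule.
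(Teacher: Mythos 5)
Your proposal is correct and follows essentially the same route as the paper: an outer induction on the length of the step sequence, with the single-step case proved by induction on the reduction derivation, discharging the store and pair-freeness obligations via Lemma~\ref{lem:wf-label-pres} and Lemma~\ref{lem:wf-pair-pres}(a) and using the inner induction hypothesis for the congruence rules (\rulename{P-Seq}, \rulename{P-Pc}) and the stepping component of \rulename{P-C-Pair}. The only cosmetic difference is that the paper additionally cites Lemma~\ref{lem:wf-pair-pres}(b) in the \rulename{P-C-Pair} case, which the stated well-formedness definition does not actually require, so your omission of it is not a gap.
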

\begin{proof}
  By induction of the number of steps in sequence. Base case follows by assumption.
  \begin{description}
  \item [Ind:] Holds for $n$ steps; To show for $n+1$ steps, i.e.,
    \\if $\kappa,\delta \sepidx{i}  c \stackrel{\trace}{\stepsto^n} \kappa',\delta'\sepidx{i} c'
          \stackrel{\alpha}{\stepsto}  \kappa'',\delta''\sepidx{i} c''$
    where $\vdash \kappa, \delta \sepidx{i} c\ \m{wf}$, 
    then $\vdash \kappa'', \delta'' \sepidx{i} c''\ \m{wf}$
    \begin{tabbing}
      By IH
      \\\,\,\,\,\=(1) \,\,\,\,\=$\vdash \kappa', \delta' \sepidx{i} c'\ \m{wf}$
      \\By (1), T.S. If $\kappa',\delta' \sepidx{i}  c' \stackrel{\alpha}{\stepsto} \kappa'',\delta''\sepidx{i} c''$
      \\where $\vdash \kappa', \delta' \sepidx{i} c'\ \m{wf}$, 
      \\then $\vdash \kappa'', \delta'' \sepidx{i} c''\ \m{wf}$
      \\Induction over the derivation. The proof follows from
      \\Lemma~\ref{lem:wf-pair-pres}(a) and~\ref{lem:wf-label-pres} in most cases.
      \\\rulename{P-Seq, P-Pc} use the IH additionally and
      \\\rulename{P-C-Pair} uses Lemma~\ref{lem:wf-pair-pres}(b) additionally.
    \end{tabbing}
  \end{description}
\end{proof}

\subsection{Soundness of Paired-Execution}
\label{sec:app-sound}
\begin{lem}
  \label{lem:store-proj-pres}
  $\forall x \in \delta, i \in \{1,2\}$, $\proj{\delta(x)}{i} = \proj{\delta}{i}(x)$ and $\proj{\eread\ \delta(x)}{i} = \eread_i\ \delta(x) = \eread\ \proj{\delta}{i}(x)$
\end{lem}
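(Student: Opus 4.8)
The plan is to prove both equalities by a routine structural induction on the store $\delta$, deriving the read-operation equalities as an immediate corollary of the first one once the defining equations for $\eread$ and $\eread_i$ (Fig.~\ref{fig:app-rdupd}) are unfolded.

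First I would show $\proj{\delta(x)}{i} = \proj{\delta}{i}(x)$ by induction on the structure of $\delta$ (using the store-projection clauses of Fig.~\ref{fig:proj}). The base case $\delta = \cdot$ is vacuous, since no $x$ is bound in the empty store. For the inductive case $\delta = \delta_0, y \mapsto v$, the store projection gives $\proj{\delta}{i} = \proj{\delta_0}{i}, y \mapsto \proj{v}{i}$, and I split on whether $x = y$. If $x = y$, then $\delta(x) = v$ and $\proj{\delta}{i}(x) = \proj{v}{i} = \proj{\delta(x)}{i}$. If $x \neq y$, both lookups skip the last binding, so the goal reduces to $\proj{\delta_0(x)}{i} = \proj{\delta_0}{i}(x)$, which is the induction hypothesis. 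Under the standard convention that lookup returns the most recently added binding, and since store projection processes that binding last as well, shadowing is handled correctly by inspecting the final entry first in both operations.

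For the second conjunct I would just unfold definitions: $\eread\ v = v$ for every value, so $\proj{\eread\ \delta(x)}{i} = \proj{\delta(x)}{i}$ and $\eread\ \proj{\delta}{i}(x) = \proj{\delta}{i}(x)$; and $\eread_i\ v = \proj{v}{i}$, so $\eread_i\ \delta(x) = \proj{\delta(x)}{i}$. The first conjunct then identifies $\proj{\delta(x)}{i}$ with $\proj{\delta}{i}(x)$, so all three expressions coincide. I do not expect any real obstacle here; the statement is a bookkeeping fact whose role is to let later proofs push projections past store accesses. The only point that needs a little care is lining up the convention for lookup over stores with repeated variable bindings with the convention baked into the store-projection clause, so that the $x = y$ case of the induction is justified; once that is observed, everything follows directly from the defining equations.
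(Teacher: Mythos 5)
Your proposal is correct and amounts to the same definitional bookkeeping as the paper's proof: the paper simply case-splits on whether $\delta(x)$ is a paired or plain value and unfolds the store-projection, value-projection, and $\eread$/$\eread_i$ equations, treating the commutation of lookup with store projection as immediate. Your explicit induction on $\delta$ for the first conjunct just spells out that implicit step (and, by using the uniform equations $\eread\,v = v$ and $\eread_i\,v = \proj{v}{i}$, you avoid even needing the pair/non-pair case split), so no gap.
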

\begin{proof}
  $\delta(x)$ can be a pair or normal value
  \begin{description}
  \item [Case:] $\delta(x) = (\iota\;u)^\glab$
    \\By store-projection $\proj{\delta}{i}(x) = \proj{(\iota\;u)^\glab}{i} = (\iota\;u)^\glab$
    \\By definition of $\eread$, $\eread\ \delta(x) = (\iota\;u)^\glab$ and
    \\$\eread_i\ \delta(x) = (\iota\;u)^\glab$ and $\eread\ \proj{\delta}{i}(x) = (\iota\;u)^\glab$
    \\By projection of values, the conclusion holds.
  \item [Case:] $\delta(x) = \epair{\iota_1\;u_1}{\iota_2\;u_2}^\glab$
    \\By store-projection definition: $\proj{\delta}{i}(x) = (\iota_i\;u_i)^\glab$
    \\By definition of $\eread$, $\eread\ \delta(x) = \epair{\iota_1\;u_1}{\iota_2\;u_2}^\glab$ and
    \\$\eread_i\ \delta(x) = (\iota_i\;u_i)^\glab$ and $\eread\ \proj{\delta}{i}(x) = (\iota_i\;u_i)^\glab$
    \\By projection of values, $\proj{\delta(x)}{i} = (\iota_i\;u_i)^\glab$ and $\proj{\eread\ \delta(x)}{i} = (\iota_i\;u_i)^\glab$
  \end{description}
\end{proof}

    
    

\begin{lem}
  \label{lem:update-proj-pres}
  $\forall i \in \{1,2\}$, $\proj{\eupdate\ v_o\ v_n}{i} = \eupdate\ \proj{v_o}{i}\ \proj{v_n}{i}$
\end{lem}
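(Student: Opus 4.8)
The plan is to prove this by a straightforward case analysis on the shapes of the old value $v_o$ and the new value $v_n$ — whether each is a singleton value $(\iota\,u)^\glab$ or a pair $\epair{\iota_1\,u_1}{\iota_2\,u_2}^\glab$ — following exactly the case split in the definition of $\eupdate$ in Fig.~\ref{fig:app-rdupd}. In each case I would unfold the applicable defining rule for $\eupdate\,v_o\,v_n$ together with the definition of value projection, and reduce both sides of the claimed equation to the same term.

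First I would dispatch the base case where both $v_o=(\iota_o\,u_o)^\glab$ and $v_n=(\iota_n\,u_n)^\glab$ are singletons: here $\eupdate\,v_o\,v_n=(\newlab(\iota_o,\iota_n)\,u_n)^\glab$, which is left unchanged by $\proj{\cdot}{i}$, while $\proj{v_o}{i}=v_o$ and $\proj{v_n}{i}=v_n$, so the right-hand side is literally the same expression. The remaining cases — $v_o$ a pair and $v_n$ a singleton, $v_o$ a singleton and $v_n$ a pair, and both pairs — all fall under the first clause of $\eupdate$, whose result is $\epair{\iota''_1\,u_1}{\iota''_2\,u_2}^\glab$ with $\iota''_k=\newlab(\proj{\labof(v_o)}{k},\proj{\labof(v_n)}{k})$. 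Projecting this at $i$ gives $(\iota''_i\,u_i)^\glab$, and I need only observe that $\proj{\labof(v_o)}{i}=\labof(\proj{v_o}{i})$ and $\proj{\labof(v_n)}{i}=\labof(\proj{v_n}{i})$ (immediate from the definitions of $\labof$ and of interval/value projection, whether or not the value is a pair — a singleton's interval is simply copied to every projection), and that the $k$-th raw component written in the rule coincides with the raw value of $\proj{v_n}{k}$ (in the mixed case where $v_n$ is a singleton it is that single raw value, duplicated). Hence $(\iota''_i\,u_i)^\glab=(\newlab(\labof(\proj{v_o}{i}),\labof(\proj{v_n}{i}))\,u_i)^\glab$, which is exactly $\eupdate\,\proj{v_o}{i}\,\proj{v_n}{i}$ by the singleton clause, since both projections are singletons.

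The one point requiring care is partiality: $\eupdate\,v_o\,v_n$ is $\eundef$ precisely when $\newlab$ fails on one of the two components, and a priori the outer operation could fail because of component $j\ne i$ while $\eupdate\,\proj{v_o}{i}\,\proj{v_n}{i}$ still succeeds, so the equation is not literally an identity of partial functions unless interpreted with $\proj{\eundef}{i}=\eundef$ and matched error clauses. I expect this mismatch to be the main obstacle, and I would discharge it in the way the lemma is actually used inside Theorem~\ref{thm:soundness}: there the paired configuration steps to a configuration rather than aborting, so the top-level $\eupdate$ is defined, which forces $\newlab$ on both components to be defined and hence both sides of the equation to be defined; the remaining equality is then the pointwise computation above. (Stating the lemma unconditionally just adds the error clauses of $\eupdate$ to the case analysis, which line up componentwise in the same manner.)
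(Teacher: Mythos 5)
Your proof is correct and takes essentially the same approach as the paper, which disposes of this lemma with the one-line sketch ``by examining the definitions'': your case split on the shapes of $v_o$ and $v_n$ following the clauses of $\eupdate$, together with the observation that $\proj{\labof(v)}{i}=\labof(\proj{v}{i})$ and that a singleton's interval and raw value are copied to both projections, is exactly that examination spelled out. Your treatment of the $\eundef$ corner case (one component's $\newlab$ failing while the other succeeds, discharged by reading the equation where the top-level update is defined, as in its uses in the soundness and completeness proofs) is more careful than anything the paper records and is the right reading.
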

\begin{proofsketch}
  By examining the definitions.
  \end{proofsketch}

\begin{lem} 
    \label{lem:refvl-updval-pres}
  $\forall i \in \{1,2\}$, $\proj{\updval{}(\iota, v)}{i} =
  \updval{}(\iota, \proj{v}{i})$.

\end{lem}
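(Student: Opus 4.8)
The plan is to prove this by a direct case analysis on the syntactic shape of the value $v$, exploiting the fact that both $\updval$ and the value projection $\proj{\cdot}{i}$ are defined by cases on whether their value argument is a simple value $(\iota'\;u)^\glab$ or a pair $\epair{\iota_1\;u_1}{\iota_2\;u_2}^\glab$. In each case I would unfold the relevant clauses of $\updval$ (from Fig.~\ref{fig:app-rdupd}) and of projection, and observe that the two sides reduce to the same term.

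First I would treat the case $v = (\iota_v\;u)^\glab$. Here $\proj{v}{i} = v$, so the right-hand side is $\updval(\iota,(\iota_v\;u)^\glab) = (\newlab(\iota,\iota_v)\;u)^\glab$, which is itself a simple value and therefore equal to its own projection; this matches $\proj{\updval(\iota,v)}{i}$ computed from the same clause. If $\newlab(\iota,\iota_v)=\eundef$, both sides reduce to $\eundef$ under the convention that projecting $\eundef$ yields $\eundef$. Next I would treat the case $v = \epair{\iota_1\;u_1}{\iota_2\;u_2}^\glab$. Applying the pair clause for $\updval$ gives $\updval(\iota,v) = \epair{\newlab(\iota,\iota_1)\;u_1}{\newlab(\iota,\iota_2)\;u_2}^\glab$ whenever $\newlab(\iota,\iota_j)$ is defined for both $j\in\{1,2\}$, so $\proj{\updval(\iota,v)}{i} = (\newlab(\iota,\iota_i)\;u_i)^\glab$. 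On the other hand $\proj{v}{i} = (\iota_i\;u_i)^\glab$, so $\updval(\iota,\proj{v}{i}) = (\newlab(\iota,\iota_i)\;u_i)^\glab$ by the simple-value clause, and the two coincide.

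I expect this lemma to be essentially a routine definitional unfolding, with no real obstacle. The only point needing a little care is the bookkeeping around the $\eundef$ cases: for a pair $v$, the operation $\updval(\iota,v)$ is defined exactly when $\newlab(\iota,\iota_1)$ and $\newlab(\iota,\iota_2)$ are both defined, so whenever the left-hand side is defined, $\updval(\iota,\proj{v}{i})$ is defined for each $i$ and equal to the corresponding projection; this is precisely the form in which the lemma is used in the soundness argument, where the top-level step (and hence $\updval(\iota,v)$) is already assumed to succeed.
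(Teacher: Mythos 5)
Your case analysis on the shape of $v$ (simple value vs.\ pair), unfolding the clauses of $\updval$ and of projection until both sides coincide, is correct and is exactly the argument the paper intends with its one-line sketch ``by examining the definitions.'' Your extra care with the $\eundef$ bookkeeping is fine and does not change the substance.
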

\begin{proofsketch}
  By examining the definitions.
  \end{proofsketch}

    

\begin{lem}
  \label{lem:reflv-proj-pres-pi}
  $\forall i \in \{1,2\}$, $\proj{\reflvof{}(\Pi, v)}{i} = \reflvof{}(\proj{\Pi}{i}, \proj{v}{i})$
\end{lem}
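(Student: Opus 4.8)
\emph{Proof plan.} The statement is a routine structural fact about how projection commutes with the label-refinement operation $\reflvof{}$, so the approach is a direct case analysis on the syntactic shapes of $\Pi$ and $v$, unfolding the four defining clauses of $\reflvof{}$ from Fig.~\ref{fig:reflvof-pair} together with the projection functions on intervals and values from Fig.~\ref{fig:app-rdupd}. There are exactly four combinations: $\Pi$ is either a single interval $\iota_c$ or a paired interval $\epair{\iota_{c1}}{\iota_{c2}}$, and $v$ is either a single value $(\iota\,u)^\glab$ or a paired value $\epair{\iota_1\,u_1}{\iota_2\,u_2}^\glab$; each of the four rules for $\reflvof{}$ matches precisely one of these, which is what makes the calculation mechanical. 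This mirrors the style of Lemma~\ref{lem:refvl-updval-pres} and Lemma~\ref{lem:update-proj-pres}.

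First I would handle the two cases where $\Pi = \iota_c$ is a single interval. If $v = (\iota\,u)^\glab$ is also single, then $\proj{\iota_c}{i} = \iota_c$ and $\proj{(\iota\,u)^\glab}{i} = (\iota\,u)^\glab$ are identities, and the single/single rule for $\reflvof{}$ gives the same result on both sides via the same $\refineof(\iota_c,\iota)$ computation. If $v = \epair{\iota_1\,u_1}{\iota_2\,u_2}^\glab$, then the left side evaluates to $\epair{\iota'_1\,u_1}{\iota'_2\,u_2}^\glab$ with $\refineof(\iota_c,\iota_j) = (\_,\iota'_j)$ for $j\in\{1,2\}$, and projecting at $i$ yields $(\iota'_i\,u_i)^\glab$; on the right, $\proj{v}{i} = (\iota_i\,u_i)^\glab$, so $\reflvof{}(\iota_c,(\iota_i\,u_i)^\glab)$ unfolds by the single/single rule to the same $(\iota'_i\,u_i)^\glab$, using exactly the $\refineof(\iota_c,\iota_i)$ call that already appeared on the left. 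The two cases where $\Pi = \epair{\iota_{c1}}{\iota_{c2}}$ are symmetric: $\proj{\Pi}{i} = \iota_{ci}$ is precisely the context interval that the paired rule for $\reflvof{}$ uses to refine the $i$-th component of $v$, so both sides again reduce to the same $\refineof(\iota_{ci},\cdot)$ result (when $v$ is single, the component value $u$ and the same $\iota$ are reused in both projections, so nothing else needs tracking).

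The only point that needs care, and the one I expect to be the main obstacle, is reconciling the $\eundef$ clauses. On a paired $\reflvof{}$ the result is $\eundef$ as soon as \emph{some} component refinement fails, whereas the projected, single-component $\reflvof{}$ is $\eundef$ only when that particular component's refinement fails; reading $\proj{\eundef}{i}$ as $\eundef$, these do not match component-by-component in general. To close this gap I would observe that the lemma is only invoked inside the soundness development on the non-error step rules (e.g. \rulename{P-Assign}, \rulename{P-Out}, \rulename{P-If-Refine}), where $\reflvof{}(\Pi,v)$ is already known to be defined; hence every component $\refineof$ call succeeds and the projections distribute exactly, and the genuinely-undefined case contributes nothing to the statements in which this lemma is used. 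With that scoping remark in place, the four-way case split goes through by unfolding definitions only.
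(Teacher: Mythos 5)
Your proposal matches the paper's proof: the paper establishes the lemma by exactly this four-way case analysis on the shapes of $\Pi$ and $v$, unfolding $\reflvof{}$ and the value/interval projections and observing that both sides reduce to the same $\refineof(\cdot,\cdot)$ computation. Your remark about the $\eundef$ clauses is also apt --- the paper's proof silently treats only the defined cases, so your scoping observation addresses a corner case the paper leaves implicit rather than constituting a different approach.
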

\begin{proof}
  $\Pi$ can be a pair of intervals or single interval and $v$ can be a pair or normal value. 
  \begin{description}
  \item [Case:] $\Pi = \epair{\iota_1}{\iota_2}$, $v = \epair{\iota_1'\;u_1}{\iota_2';u_2}^\glab$ 
    \begin{tabbing}
      By projection of values, 
      \\\,\,\,\,\=(1) \,\,\,\,\= $\proj{v}{i} = (\iota_i'\;u_i)^\glab$ and $\proj{\Pi}{i} = \iota_i$
      \\By (1), definition of $\reflvof{}$,
      \\\>(2)\> $\reflvof{}(\proj{\Pi}{i}, \proj{v}{i}) = (\iota_i''\;u_i)^\glab$ where
      \\$\refineof(\iota_i, \iota_i') = (\_, \iota_i'')$
      \\By definition of $\reflvof{}$, 
      \\\>(3)\>$\reflvof{}(\Pi, v) = \epair{\iota_1'''\;u_1}{\iota_2'''\;u_2}^\glab$ where
      \\$\refineof(\iota_i, \iota_i') = (\_, \iota_i''')$
      \\By (2), (3) and projection of values, the conclusion holds
    \end{tabbing}
  \item [Case:] $\Pi = \epair{\iota_1}{\iota_2}$, $v = (\iota\;u)^\glab$ 
    \begin{tabbing}
      By projection of values, 
      \\\,\,\,\,\=(1) \,\,\,\,\= $\proj{v}{i} = (\iota\;u)^\glab$ and $\proj{\Pi}{i} = \iota_i$
      \\By (1), definition of $\reflvof{}$,
      \\\>(2)\> $\reflvof{}(\proj{\Pi}{i}, \proj{v}{i}) = (\iota_i'\;u)^\glab$ where
      \\$\refineof(\iota_i, \iota) = (\_, \iota_i')$
      \\By definition of $\reflvof{}$, 
      \\\>(3)\>$\reflvof{}(\Pi, v) = \epair{\iota_1'\;u}{\iota_2'\;u}^\glab$ where
      \\$\refineof(\iota_i, \iota) = (\_, \iota_i')$
      \\By (2), (3) and projection of values, the conclusion holds
    \end{tabbing}
  \item [Case:] $\Pi = \iota$, $v = \epair{\iota_1\;u_1}{\iota_2;u_2}^\glab$ 
    \begin{tabbing}
      By projection of values, 
      \\\,\,\,\,\=(1) \,\,\,\,\= $\proj{v}{i} = (\iota_i\;u_i)^\glab$ and $\proj{\Pi}{i} = \iota$
      \\By (1), definition of $\reflvof{}$,
      \\\>(2)\> $\reflvof{}(\proj{\Pi}{i}, \proj{v}{i}) = (\iota_i'\;u_i)^\glab$ where
      \\$\refineof(\iota, \iota_i) = (\_, \iota_i')$
      \\By definition of $\reflvof{}$, 
      \\\>(3)\>$\reflvof{}(\Pi, v) = \epair{\iota_1'\;u_1}{\iota_2'\;u_2}^\glab$ where
      \\$\refineof(\iota, \iota_i) = (\_, \iota_i')$
      \\By (2), (3) and projection of values, the conclusion holds
    \end{tabbing}
  \item [Case:] $\Pi = \iota_c$, $v = (\iota\;u)^\glab$ 
    \begin{tabbing}
      By projection of values, 
      \\\,\,\,\,\=(1) \,\,\,\,\= $\proj{v}{i} = (\iota\;u)^\glab$ and $\proj{\Pi}{i} = \iota_c$
      \\By (1), definition of $\reflvof{}$,
      \\\>(2)\> $\reflvof{}(\proj{\Pi}{i}, \proj{v}{i}) = (\iota'\;u)^\glab$ where
      \\$\refineof(\iota_c, \iota) = (\_, \iota')$
      \\By definition of $\reflvof{}$, 
      \\\>(3)\>$\reflvof{}(\Pi, v) = (\iota'\;u)^\glab$ where
      \\$\refineof(\iota_c, \iota) = (\_, \iota')$
      \\By (2), (3) and projection of values, the conclusion holds
    \end{tabbing}
  \end{description}
\end{proof}

\begin{lem}
  \label{lem:reflv-proj-pres-pi-i}
  $\forall \{i,j\} \in \{1,2\}$, $\proj{\reflvof{i}(\Pi, v)}{i} = \reflvof{ }(\proj{\Pi}{i}, \proj{v}{i}) \wedge \proj{\reflvof{i}(\Pi, v)}{j} = \proj{v}{j}$
\end{lem}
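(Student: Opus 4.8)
The plan is to prove both conjuncts simultaneously by unfolding the definition of $\reflvof{i}$ (Figure~\ref{fig:reflvof-pair}) and performing a case analysis on the shapes of $\Pi$ and $v$, mirroring the case split already used for Lemma~\ref{lem:reflv-proj-pres-pi}. As with the other equations between these partial label operations, I read the statement under the tacit hypothesis that $\reflvof{i}(\Pi,v) \neq \eundef$; within each case I first observe that $\reflvof{i}(\Pi,v)$ is defined precisely when the underlying $\refineof$ call is, which coincides with the condition under which $\reflvof{}(\proj{\Pi}{i},\proj{v}{i})$ is defined, so only the defined sub-case requires computation.

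Concretely, for the representative case where $\Pi$ is a single interval $\iota_c$ (so $\proj{\Pi}{i} = \iota_c$) and $v = (\iota\;u)^\glab$ is not a pair: by the relevant clause $\reflvof{i}(\iota_c,v) = \epair{\iota_1\;u}{\iota_2\;u}^\glab$, where $\iota_i$ is the second component of $\refineof(\iota_c,\iota)$ and $\iota_j = \iota$. Projecting on $i$ yields $(\iota_i\;u)^\glab$, which, since $\proj{v}{i} = (\iota\;u)^\glab$, is exactly $\reflvof{}(\iota_c,\proj{v}{i})$; projecting on $j$ yields $(\iota\;u)^\glab = \proj{v}{j}$, giving the second conjunct. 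For the other representative case, $v = \epair{\iota_1\;u_1}{\iota_2\;u_2}^\glab$: here $\reflvof{i}(\iota_c,v) = \epair{\iota'_1\;u_1}{\iota'_2\;u_2}^\glab$, with $\iota'_i$ the second component of $\refineof(\iota_c,\iota_i)$ and $\iota'_j = \iota_j$; projecting on $i$ gives $(\iota'_i\;u_i)^\glab = \reflvof{}(\iota_c,(\iota_i\;u_i)^\glab) = \reflvof{}(\iota_c,\proj{v}{i})$, and projecting on $j$ gives $(\iota_j\;u_j)^\glab = \proj{v}{j}$. The cases where $\Pi$ is itself a pair of intervals (as arises when $\rflof{i}$ is fed a paired label, cf.\ \rulename{P-If-Refine}) are handled identically after replacing each use of $\iota_c$ by the $i$-th component $\proj{\Pi}{i}$, since $\reflvof{i}$ only ever consults the $i$-th component of its first argument.

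I do not expect any real obstacle; the content is purely definitional. The only points needing care are bookkeeping: keeping the roles of $i$ and $j$ distinct throughout; noticing the mild asymmetry in the non-paired-$v$ case, where $\reflvof{i}$ builds a pair whose $j$-component is a verbatim copy of the input interval so that the $j$-projection collapses back to $\proj{v}{j}$; and aligning the $\eundef$ side conditions so that neither conjunct is satisfied only vacuously. Since this is the same case split as in Lemma~\ref{lem:reflv-proj-pres-pi}, the argument can largely be transcribed from there with the index $i$ threaded through.
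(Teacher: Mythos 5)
Your proposal is correct and follows essentially the same route as the paper's proof: a definitional case analysis on the shapes of $\Pi$ and $v$, unfolding $\reflvof{i}$ and the value/interval projections, with the $j$-side equality falling out because $\reflvof{i}$ copies the $j$-component verbatim. The only difference is presentational—you work out two representative cases and note that the paired-$\Pi$ cases reduce to consulting $\proj{\Pi}{i}$, whereas the paper writes out all four $\Pi$/$v$ shape combinations explicitly—and your treatment of the $\eundef$ side conditions matches the paper's implicit restriction to the defined cases.
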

\begin{proof}
  $\Pi$ can be a pair of intervals or single interval and $v$ can be a pair or normal value. We show for $i=1, j=2$.  
  \begin{description}
  \item [Case:] $\Pi = \epair{\iota_{c1}}{\iota_{c2}}$, $v = \epair{\iota_1\;u_1}{\iota_2;u_2}^\glab$ 
    \begin{tabbing}
      By projection of values, 
      \\\,\,\,\,\=(1) \,\,\,\,\= $\proj{\Pi}{1} = \iota_{c1}$, $\proj{v}{1} = (\iota_1\;u_1)^\glab$ and $\proj{v}{2} = (\iota_2\;u_2)^\glab$
      \\\>(2)\> Let $\refineof(\iota_{c1}, \iota_1) = (\_, \iota_i')$
      \\By (1), definition of $\reflvof{i}$,
      \\\>(2)\> $\reflvof{1}(\proj{\Pi}{1}, \proj{v}{1}) = \epair{\iota_1'\;u_1}{\iota_1\;u_1}^\glab$
      \\By definition of $\reflvof{i}$, 
      \\\>(3)\>$\reflvof{1}(\Pi, v) = \epair{\iota_1'\;u_1}{\iota_2\;u_2}^\glab$ 
      \\By (1), (2), (3) and projection of values,
      \\\>(4)\>$\proj{\reflvof{1}(\Pi, v)}{1} = \reflvof{1}(\proj{\Pi}{1}, \proj{v}{1})$ and
      \\\>\>$\proj{\reflvof{1}(\Pi, v)}{2} = \proj{v}{2}$
    \end{tabbing}
  \item [Case:] $\Pi = \epair{\iota_{c1}}{\iota_{c2}}$, $v = (\iota\;u)^\glab$ 
    \begin{tabbing}
      By projection of values, 
      \\\,\,\,\,\=(1) \,\,\,\,\= $\proj{\Pi}{1} = \iota_{c1}$, $\proj{v}{1} = (\iota\;u)^\glab$ and $\proj{v}{2} = (\iota\;u)^\glab$
      \\\>(2)\> Let $\refineof(\iota_{c1}, \iota) = (\_, \iota')$
      \\By (1), definition of $\reflvof{i}$,
      \\\>(2)\> $\reflvof{1}(\proj{\Pi}{1}, \proj{v}{1}) = \epair{\iota'\;u}{\iota\;u}^\glab$
      \\By definition of $\reflvof{i}$, 
      \\\>(3)\>$\reflvof{1}(\Pi, v) = \epair{\iota'\;u}{\iota\;u}^\glab$ 
      \\By (1), (2), (3) and projection of values,
      \\\>(4)\>$\proj{\reflvof{1}(\Pi, v)}{1} = \reflvof{1}(\proj{\Pi}{1}, \proj{v}{1})$
      \\\>\>$= (\iota'\;u)^\glab$ and $\proj{\reflvof{1}(\Pi, v)}{2} = \proj{v}{2} = (\iota\;u)^\glab$
    \end{tabbing}
  \item [Case:] $\Pi = \iota$, $v = \epair{\iota_1\;u_1}{\iota_2;u_2}^\glab$ 
    \begin{tabbing}
      By projection of values, 
      \\\,\,\,\,\=(1) \,\,\,\,\= $\proj{\Pi}{1} = \iota$, $\proj{v}{1} = (\iota_1\;u_1)^\glab$ and $\proj{v}{2} = (\iota_2\;u_2)^\glab$
      \\\>(2)\> Let $\refineof(\iota, \iota_1) = (\_, \iota_1')$
      \\By (1), definition of $\reflvof{i}$,
      \\\>(2)\> $\reflvof{1}(\proj{\Pi}{1}, \proj{v}{1}) = \epair{\iota_1'\;u_1}{\iota\;u_1}^\glab$
      \\By definition of $\reflvof{i}$, 
      \\\>(3)\>$\reflvof{1}(\Pi, v) = \epair{\iota_1'\;u_1}{\iota_2\;u_2}^\glab$ 
      \\By (1), (2), (3) and projection of values,
      \\\>(4)\>$\proj{\reflvof{1}(\Pi, v)}{1} = \reflvof{1}(\proj{\Pi}{1}, \proj{v}{1})$
      \\\>\>$ = (\iota_1'\;u_1)^\glab$ and $\proj{\reflvof{1}(\Pi, v)}{2} = \proj{v}{2} = (\iota_2\;u_2)^\glab$
    \end{tabbing}
  \item [Case:] $\Pi = \iota_c$, $v = (\iota\;u)^\glab$ 
    \begin{tabbing}
      By projection of values, 
      \\\,\,\,\,\=(1) \,\,\,\,\= $\proj{\Pi}{1} = \iota_{c}$, $\proj{v}{1} = (\iota\;u)^\glab$ and $\proj{v}{2} = (\iota\;u)^\glab$
      \\\>(2)\> Let $\refineof(\iota_{c}, \iota) = (\_, \iota')$
      \\By (1), definition of $\reflvof{i}$,
      \\\>(2)\> $\reflvof{1}(\proj{\Pi}{1}, \proj{v}{1}) = \epair{\iota'\;u}{\iota\;u}^\glab$
      \\By definition of $\reflvof{i}$, 
      \\\>(3)\>$\reflvof{1}(\Pi, v) = \epair{\iota'\;u}{\iota\;u}^\glab$ 
      \\By (1), (2), (3) and projection of values,
      \\\>(4)\>$\proj{\reflvof{1}(\Pi, v)}{1} = \reflvof{1}(\proj{\Pi}{1}, \proj{v}{1})$
      \\\>\>$ = (\iota'\;u)^\glab$ and $\proj{\reflvof{1}(\Pi, v)}{2} = \proj{v}{2} = (\iota\;u)^\glab$
    \end{tabbing}
  \end{description}
\end{proof}

\begin{lem}
  \label{lem:rflof-proj-pres-pi-i}
  $\forall \delta, X, x. x \in X \implies x \in \delta$, we have $\forall \{i,j\} \in \{1,2\}$, $\proj{\rflof{i}(\delta, X, \Pi)}{i} = \rflof{}(\proj{\delta}{i}, X, \proj{\Pi}{i})$ and $\proj{\rflof{i}(\delta, X, \Pi)}{j} = \proj{\delta}{j}$
\end{lem}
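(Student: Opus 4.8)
The plan is to prove this by induction on the structure of the write set $X$ (equivalently, on the derivation of $\rflof{i}(\delta, X, \Pi)$), using the side condition $x \in X \Rightarrow x \in \delta$ to guarantee that each recursive call in the definition of $\rflof{i}$ acts on a store that actually binds the variable being processed. Throughout I will fix the two indices to be $i = 1$ and $j = 2$; the case $i = 2$, $j = 1$ is entirely symmetric. The two previously established ingredients I will lean on are the store-projection rule (Fig.~\ref{fig:proj}) and Lemma~\ref{lem:reflv-proj-pres-pi-i}, which already says that $\reflvof{i}$ commutes with projection on side $i$ and is the identity (after projection) on side $j$.

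First I would dispatch the base case $X = \cdot$: by definition $\rflof{1}(\delta, \cdot, \Pi) = \delta$ and $\rflof{}(\proj{\delta}{1}, \cdot, \proj{\Pi}{1}) = \proj{\delta}{1}$, so $\proj{\rflof{1}(\delta,\cdot,\Pi)}{1} = \proj{\delta}{1} = \rflof{}(\proj{\delta}{1},\cdot,\proj{\Pi}{1})$ and $\proj{\rflof{1}(\delta,\cdot,\Pi)}{2} = \proj{\delta}{2}$ both hold immediately.

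For the inductive step, write $X = (X', x)$ with $\delta = \delta_0, x \mapsto v$, and split on whether $\reflvof{1}(\Pi, v)$ is defined. In the defined case, unfold $\rflof{1}(\delta, X, \Pi) = \delta'', x \mapsto v'$ where $\delta'' = \rflof{1}(\delta_0, X', \Pi)$ and $v' = \reflvof{1}(\Pi, v)$. Projecting with the store-projection rule gives $\proj{\rflof{1}(\delta, X, \Pi)}{1} = \proj{\delta''}{1}, x \mapsto \proj{v'}{1}$; the induction hypothesis rewrites $\proj{\delta''}{1}$ as $\rflof{}(\proj{\delta_0}{1}, X', \proj{\Pi}{1})$, and Lemma~\ref{lem:reflv-proj-pres-pi-i} rewrites $\proj{v'}{1}$ as $\reflvof{}(\proj{\Pi}{1}, \proj{v}{1})$; matching this against the single-sided definition of $\rflof{}$ applied to $\proj{\delta}{1} = \proj{\delta_0}{1}, x \mapsto \proj{v}{1}$ yields exactly $\rflof{}(\proj{\delta}{1}, X, \proj{\Pi}{1})$, proving the first equation. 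For the second, projecting at index $2$ gives $\proj{\delta''}{2}, x \mapsto \proj{v'}{2}$; the IH gives $\proj{\delta''}{2} = \proj{\delta_0}{2}$ and Lemma~\ref{lem:reflv-proj-pres-pi-i} gives $\proj{v'}{2} = \proj{v}{2}$, so this equals $\proj{\delta_0}{2}, x \mapsto \proj{v}{2} = \proj{\delta}{2}$. In the undefined case, $\reflvof{1}(\Pi, v) = \eundef$ forces $\rflof{1}(\delta, X, \Pi) = \eundef$, and by Lemma~\ref{lem:reflv-proj-pres-pi-i} (the first projection of $\reflvof{1}$ agrees with $\reflvof{}(\proj{\Pi}{1}, \proj{v}{1})$ whenever either is defined) we also get $\reflvof{}(\proj{\Pi}{1}, \proj{v}{1}) = \eundef$, hence $\rflof{}(\proj{\delta}{1}, X, \proj{\Pi}{1}) = \eundef$, so the statement holds with both sides undefined.

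The only real obstacle is bookkeeping rather than mathematics: a store entry $v$ may be a paired or an unpaired value, so the $\reflvof{i}$ reduction splits into several subcases, and one must also make sure the set-versus-list treatment of $X$ in the definition of $\rflof{i}$ is handled consistently with the induction order. Both concerns are benign here — Lemma~\ref{lem:reflv-proj-pres-pi-i} already packages the paired/unpaired subcases uniformly, and $\rflof{i}$ refines each variable's value independently of the others, so the refinement is order-insensitive. What remains is just routine unfolding of definitions and applications of the store-projection equations.
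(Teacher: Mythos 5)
Your proposal is correct and follows the same route as the paper: the paper's proof is exactly an induction on the size of $X$ combined with Lemma~\ref{lem:reflv-proj-pres-pi-i} applied to each refined value. Your write-up simply spells out the unfolding of $\rflof{i}$, the use of the store-projection equations, and the $\eundef$ subcase in more detail than the paper's one-line argument.
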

\begin{proof} 
  By induction on the size of $X$ and applying Lemma~\ref{lem:reflv-proj-pres-pi-i}
\end{proof}

\begin{lem}
  \label{lem:rflof-proj-pres-pi}
  $\forall \delta, X. X \subseteq \delta$, we have $\forall i \in \{1,2\}$, $\proj{\rflof{}(\delta, X, \Pi)}{i} = \rflof{}(\proj{\delta}{i}, X, \proj{\Pi}{i})$
\end{lem}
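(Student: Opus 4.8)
The plan is to prove Lemma~\ref{lem:rflof-proj-pres-pi} by induction on the size of the write-set $X$ (equivalently, on the structure of the list used in the inductive definition of $\rflof{}$ for paired executions in Fig.~\ref{fig:app-pair-lab-op}). The argument parallels the one for the indexed version, Lemma~\ref{lem:rflof-proj-pres-pi-i}, except that at each step it invokes Lemma~\ref{lem:reflv-proj-pres-pi} (the projection-commutes property for $\reflvof{}$ over an arbitrary $\Pi$ and value $v$) in place of Lemma~\ref{lem:reflv-proj-pres-pi-i}. For the base case $X = \cdot$, the definition gives $\rflof{}(\delta, \cdot, \Pi) = \delta$ and $\rflof{}(\proj{\delta}{i}, \cdot, \proj{\Pi}{i}) = \proj{\delta}{i}$, so both sides of the claim are $\proj{\delta}{i}$.

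For the inductive step, I would write $X = (X', x)$ and, using the hypothesis $X \subseteq \delta$, split $\delta = (\delta', x \mapsto v)$. By the definition of $\rflof{}$, $\rflof{}(\delta, X, \Pi) = \rflof{}(\delta', X', \Pi),\ x \mapsto \reflvof{}(\Pi, v)$ (modulo the $\eundef$ case, discussed below). Applying the definition of store projection, $\proj{\rflof{}(\delta, X, \Pi)}{i} = \proj{\rflof{}(\delta', X', \Pi)}{i},\ x \mapsto \proj{\reflvof{}(\Pi, v)}{i}$. The induction hypothesis rewrites the first component as $\rflof{}(\proj{\delta'}{i}, X', \proj{\Pi}{i})$, and Lemma~\ref{lem:reflv-proj-pres-pi} rewrites the second as $\reflvof{}(\proj{\Pi}{i}, \proj{v}{i})$. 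Since $\proj{\delta}{i} = \proj{\delta'}{i},\ x \mapsto \proj{v}{i}$, unfolding the definition of $\rflof{}$ on the right-hand side of the claim yields exactly this expression, so the two sides coincide.

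The only nontrivial bookkeeping — and hence the \emph{main obstacle} — is the treatment of the $\eundef$ outcomes. When $\reflvof{}(\Pi, v) = \eundef$ for some variable processed during the recursion, $\rflof{}(\delta, X, \Pi)$ is $\eundef$, and I would rely on the standing convention (used throughout the appendix) that projecting $\eundef$ yields $\eundef$; the undefinedness clauses accompanying Lemma~\ref{lem:reflv-proj-pres-pi} then give $\reflvof{}(\proj{\Pi}{i}, \proj{v}{i}) = \eundef$, so the right-hand side $\rflof{}(\proj{\delta}{i}, X, \proj{\Pi}{i})$ collapses to $\eundef$ as well, keeping the equation valid. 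A secondary concern — that $\Pi$ may itself be a pair of intervals $\epair{\iota_1}{\iota_2}$ — requires no separate case analysis at this level, because Lemma~\ref{lem:reflv-proj-pres-pi} is already stated uniformly over all forms of $\Pi$ and $v$; the case split on the shape of $\Pi$ and of the stored value is confined to that lemma.
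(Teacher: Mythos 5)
Your proposal is correct and follows exactly the paper's argument: the paper proves this lemma by induction on the size of $X$, invoking Lemma~\ref{lem:reflv-proj-pres-pi} at each step, which is precisely your structure (your write-up just spells out the base case, the unfolding of $\rflof{}$ and store projection, and the $\eundef$ bookkeeping that the paper leaves implicit).
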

\begin{proof} 
  By induction on the size of $X$ and applying Lemma~\ref{lem:reflv-proj-pres-pi}
\end{proof}

\begin{lem}
  \label{lem:no-pair-value-op}
  If $v$ is not a pair, then
  \begin{enumerate}
  \item $\forall \iota, \iota \bowtie v$ is not a pair
  \item $\forall \iota, \reflvof{}(\iota, v)$ is not a pair
  \item $\forall \iota, \updval{}(\iota, v)$ is not a pair
  \end{enumerate}
\end{lem}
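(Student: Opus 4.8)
The plan is to prove the three claims by a direct case analysis on the shape of $v$. By hypothesis $v$ is not a pair, so it must be of the form $(\iota_v\;u)^{\glab}$ for some interval $\iota_v$, raw value $u$, and gradual label $\glab$. For each of the three operations, only the ``single value'' clause of its definition is applicable, and I would simply unfold that clause and observe that the result is again a value of the non-paired shape $(\,\cdot\;u')^{\glab}$ (or the distinguished error token $\eundef$), neither of which is a pair.

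Concretely: for claim~(1), the clause defining $\iota'\bowtie(\iota\;u)^{\glab}$ in Fig.~\ref{fig:label-op} yields $(\iota''\;u)^{\glab}$ with $\iota'' = \iota\bowtie\iota_v$ a single interval, and is otherwise undefined; hence $\iota\bowtie v$ is never a pair. For claim~(2), the top-level variant of $\reflvof{}$ in Fig.~\ref{fig:reflvof-pair} gives $(\iota'\;u)^{\glab}$ whenever $\refineof(\iota,\iota_v)=(\_,\iota')$ and $\eundef$ when $\refineof(\iota,\iota_v)=\eundef$; again no pair is produced. For claim~(3), the equation $\updval(\iota_o,(\iota_n\;u_n)^{\glab}) = (\newlab(\iota_o,\iota_n)\;u_n)^{\glab}$ from Fig.~\ref{fig:store-aux-single} (undefined when $\newlab(\iota_o,\iota_n)=\eundef$) directly returns a non-paired value.

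There is no genuine obstacle here; the lemma is essentially the syntactic observation that none of the three operations ever introduces a pair constructor, and that the paired clauses of their definitions fire only when the argument is already a pair. The only minor care needed is to state explicitly that the $\eundef$ outcomes are also covered by ``is not a pair,'' which holds because $\eundef$ is a distinguished token distinct from both paired and non-paired values; everything else is immediate from unfolding the relevant definitions.
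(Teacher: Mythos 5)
Your proof is correct and matches the paper's own argument, which is simply ``by examining the respective definitions'': unfolding the single-value clauses of $\bowtie$, $\reflvof{}$, and $\updval{}$ shows no pair constructor is ever introduced. Your additional note about the $\eundef$ outcomes is a harmless refinement of the same observation.
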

\begin{proof}
  By examining the respective definitions.
\end{proof}
  
\begin{lem}
  \label{lem:expr-store-proj}
  If $\forall i\in \{1, 2\}$, $\mathcal{E} :: ~\delta \sepidx{i} e \evalsto v_i$ and $\vdash \delta \sepidx{i} e\ \m{wf}$, then
  $\proj{\delta}{i} \sepidx{} e \evalsto v_i$ and $v_i$ is not a pair
\end{lem}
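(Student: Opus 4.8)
The plan is to fix an index $i\in\{1,2\}$ and induct on the derivation $\mathcal{E}$ of $\delta\sepidx{i} e\evalsto v_i$, carrying both conclusions ---that $\proj{\delta}{i}\sepidx{} e\evalsto v_i$ holds in the single-execution monitor semantics of Fig.~\ref{fig:mon-semantics-exp}, and that $v_i$ is not a pair--- through the induction at once. Since $v_i$ is an honest value (not $\eabort$), the last rule of $\mathcal{E}$ must be one of \rulename{P-Const}, \rulename{P-Var}, \rulename{P-Bop}, or \rulename{P-Cast}, so the error rules such as \rulename{P-Cast-Err} are excluded; in each case the corresponding single-execution rule (\rulename{M-Const}, \rulename{M-Var}, \rulename{M-Bop}, \rulename{M-Cast}) will be the one that fires on $\proj{\delta}{i}$.

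For \rulename{P-Const} we have $e=(\iota\;u)^\glab$ and $v_i=e$; projecting the store leaves $e$ untouched, so \rulename{M-Const} gives $\proj{\delta}{i}\sepidx{} e\evalsto v_i$, and $v_i$ is not a pair because the source expression carries no pair literal at its leaves (an invariant of index-$i$ configurations). For \rulename{P-Var}, $e=x$ and $v_i=\eread_i\ \delta(x)$; by Lemma~\ref{lem:store-proj-pres}, $\eread_i\ \delta(x)=\eread\ \proj{\delta}{i}(x)=\proj{\delta(x)}{i}$, so \rulename{M-Var} applied to $\proj{\delta}{i}$ produces exactly $v_i$, and $\proj{\delta(x)}{i}$ is never a pair by the definition of value projection (it is either the stored value itself when that is not a pair, or one side of a stored pair).

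For the inductive cases I would reason as follows. For \rulename{P-Bop}, $e=e_1\bop e_2$ with $\delta\sepidx{i} e_k\evalsto v_{ki}$ for $k=1,2$ and $v_i=v_{1i}\bop v_{2i}$; the induction hypothesis gives $\proj{\delta}{i}\sepidx{} e_k\evalsto v_{ki}$ with each $v_{ki}$ not a pair, so $v_{1i}\bop v_{2i}$ falls into the non-pair clause of the binary-operation definition and \rulename{M-Bop} derives $\proj{\delta}{i}\sepidx{} e\evalsto v_i$, a non-pair value. For \rulename{P-Cast}, $e=E^\glab\ e'$ with $\delta\sepidx{i} e'\evalsto v'$ and $v_i=(E,\glab)\rhd v'$; the induction hypothesis gives $\proj{\delta}{i}\sepidx{} e'\evalsto v'$ with $v'$ not a pair, say $v'=(\iota\;u)^{\glab'}$, and since $v_i\neq\eabort$ the intersection $\iota\bowtie E$ is defined, so \rulename{M-Cast} fires on $\proj{\delta}{i}$ and yields $(\iota\bowtie E\;u)^\glab=v_i$, which is not a pair by inspection of the cast definition (cf.\ Lemma~\ref{lem:no-pair-value-op}).

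The step I expect to be the main obstacle is keeping the ``$v_i$ is not a pair'' part, since that is the only conclusion that does not follow purely locally from each rule. It rests on two facts: (i) the invariant that an expression occurring in an index-$i$ configuration has no pair literal at its leaves, so \rulename{P-Const} cannot directly return a syntactic pair (the \rulename{P-Var} case, by contrast, needs no such assumption because $\eread_i$ always projects); and (ii) that $\eread_i$, the binary-operation merge, and the cast-on-value operation all preserve non-pairness, which is covered by their defining rules together with Lemma~\ref{lem:no-pair-value-op}. Once (i) and (ii) are in place, both conclusions propagate cleanly through the induction, and the companion lemma for commands can invoke this result wherever an expression is evaluated.
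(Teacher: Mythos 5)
Your proposal is correct and follows essentially the same route as the paper: induction on the evaluation derivation, with the variable case handled via the store/value projection facts (Lemma~\ref{lem:store-proj-pres}) and non-pairness propagated through \rulename{P-Bop} and \rulename{P-Cast} by the shape of the value operations. The only cosmetic difference is your appeal to a ``no pair literal at the leaves'' invariant in the \rulename{P-Const} case, which is unnecessary since the rule's pattern $(\iota\;u)^\glab$ already excludes pair constants (the paper instead invokes well-formedness of the store in the \rulename{P-Var} case to rule out nested pairs).
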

\begin{proof}
  By induction on the structure of $\mathcal{E}$.
  \begin{description}
    
  \item [Case:] \rulename{P-Const}
    \begin{tabbing}
      By \rulename{P-Const},
      \\\,\,\,\,\=(1) \,\,\,\,\= $\forall i \in \{1, 2\}$, $\proj{\delta}{i} \sepidx{} (\iota\;u)^\glab \evalsto (\iota\;u)^\glab$
    \end{tabbing}
  \item [Case:] \rulename{P-Var}
    \begin{tabbing}
      By \rulename{P-Var}
      \\\,\,\,\,\=(1) \,\,\,\,\= $\forall i \in \{1, 2\}$, $\proj{\delta}{i} \sepidx{} x \evalsto \eread\proj{\delta}{i}(x)$
      \\$\delta(x)$ is either a pair or normal value:
      \\{\bf Subcase I:} $\delta(x) = \epair{v_1}{v_2}^\glab$
      \\\>By definition of $\eread$ and value projection
      \\\>\>(I1)\quad\= $\eread_i \delta(x) = v_i$
      \\\>By (1), definition of store projection and $\eread$
      \\\>\>(I2)\quad\=$\proj{\delta}{i}(x) = v_i$ and $\eread\; v_i = v_i$
      \\\>By (I1) and (I2), $\eread_i \delta(x) = \eread{} \proj{\delta}{i}(x)$
      \\As $\vdash \delta \sepidx{i} e\ \m{wf}$, $\delta(x)$ cannot have nested pairs.
      \\Hence, $v_i$ is not a pair
      \\{\bf Subcase II:} $\delta(x) = v = (\iota\;u)^\glab$
      \\\>By definition of $\eread$ and value projection
      \\\>\>(II1)\quad\= $\eread_i \delta(x) = v$; $v$ is a normal value and not a pair
      \\\>By (1), definition of store projection and $\eread$
      \\\>\>(II2)\quad\=$\proj{\delta}{i}(x) = v$ and $\eread{}\; v = v$
      \\\>By (II1) and (II2), $\eread_i \delta(x) = \eread \proj{\delta}{i}(x)$
    \end{tabbing}
  \item [Case:] \rulename{P-Bop}
    \begin{tabbing}
      \\\,\,\,\,\=(1) \,\,\,\,\= $\inferrule*{
        \delta\sepidx{i} e_1 \evalsto v_1 \\
        \delta\sepidx{i} e_2 \evalsto v_2 \\
        v = v_1\bop v_2 
      }{
        \delta\sepidx{i} e_1\bop e_2 \evalsto v
      }$
      \\By IH
      \\\>(2)\> $\forall i \in \{1, 2\}$, $\proj{\delta}{i} \sepidx{} e_1 \evalsto v_1$ and $\proj{\delta}{i} \sepidx{} e_2 \evalsto v_2$
      \\\>\> and $v_1$ and $v_2$ are not pairs
      \\By (2), \rulename{P-Bop} and binary operation on values, \\\>\> the conclusion holds
    \end{tabbing}
  \item [Case:] \rulename{P-Cast}
    \begin{tabbing}
      \\\,\,\,\,\=(1) \,\,\,\,\= $\inferrule*{
        \delta\sepidx{i} e \evalsto v \\
        v' = (E,\glab) \rhd v \\
      }{
        \delta\sepidx{i} E^\glab e \evalsto v'
      }$
      \\By IH
      \\\>(2)\> $\forall i \in \{1, 2\}$, $\proj{\delta}{i} \sepidx{} e \evalsto v$ and $v$ is not a pair
      \\\>(3)\> Let $v = (\iota\;u)^{\glab'}$, then $v' = (\iota'\;u)^\glab$ such that \\\>\> $\iota' = \iota \bowtie E$
      \\By (2), (3) and \rulename{P-Cast}, the conclusion holds
    \end{tabbing}
  \end{description}
\end{proof}

\begin{lem}
  \label{lem:kappa-inv}
  If $\kappa, \delta \sepidx{} c \stackrel{\alpha}{\stepsto} \kappa', \delta' \sepidx{} c'$,
  where $\kappa = \bar{\kappa} \rhd \iota_\pc\;\gpc$, 
  then $\exists \bar{\kappa}'$ s.t. $\kappa' = \bar{\kappa}' \rhd \iota_\pc\;\gpc$
\end{lem}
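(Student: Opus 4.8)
The plan is to argue by induction on the derivation of the transition $\kappa, \delta \sepidx{} c \stackrel{\alpha}{\stepsto} \kappa', \delta' \sepidx{} c'$. Since the statement speaks only of transitions whose target is a genuine configuration, every instance of a rule that steps to $\eabort$ is vacuous. For the remaining rules, the observation that drives the argument is that a single step modifies the $\pc$ stack in at most one of three ways: it leaves the stack unchanged, it pushes one fresh frame on top (\rulename{M-If}, \rulename{P-If}), or it pops the top frame (\rulename{M-Pop}, \rulename{P-Pop}); in every case the distinguished bottom frame $\iota_\pc\,\gpc$ is left intact, and one exhibits the required $\bar\kappa'$ accordingly.

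The routine cases are as follows. For \rulename{M-Skip} and for the pair rules \rulename{P-C-Pair}, \rulename{P-Lift-If}, \rulename{P-Skip-Pair}, \rulename{P-Assign}, \rulename{P-Out}, as well as \rulename{M-Assign}, \rulename{M-Out}, \rulename{M-If-Refine}, \rulename{M-While}, the outgoing $\pc$ stack is syntactically the incoming one, so we take $\bar\kappa' = \bar\kappa$ (in the latter group the incoming stack is the single frame $\iota_\pc\,\gpc$, forcing $\bar\kappa = \emptyset$). For \rulename{M-If}/\rulename{P-If} the incoming stack is a single frame $\iota_\pc\,\gpc$ and the outgoing one is $\iota'_\pc\,\gpc' \rhd \iota_\pc\,\gpc$, so $\bar\kappa = \emptyset$ and $\bar\kappa' = \iota'_\pc\,\gpc'$ works. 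For \rulename{M-Pc}/\rulename{P-Pc} the rule already has the shape $\kappa_0 \rhd \iota_\pc\,\gpc \stepsto \kappa_0' \rhd \iota_\pc\,\gpc$, so the bottom frame is retained verbatim and no appeal to the induction hypothesis is needed; for \rulename{M-Seq}/\rulename{P-Seq} the stack transforms exactly as in the premise, and the induction hypothesis applied to the premise supplies $\bar\kappa'$.

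The one case needing care — and the main obstacle — is \rulename{M-Pop}/\rulename{P-Pop}, which pops the top frame: $\iota'_\pc\,\gpc' \rhd \kappa_{\mathrm{rest}}, \delta \sepidx{} \{\eskip\} \stepsto \kappa_{\mathrm{rest}}, \delta \sepidx{} \eskip$. A priori, if $\kappa_{\mathrm{rest}}$ were empty this would pop the distinguished bottom frame and the conclusion would fail. I plan to exclude this by appealing to the invariant that a braced command $\{c\}$ — the only shape on which \rulename{M-Pop} fires — is introduced solely by \rulename{M-If}, which always pushes a new frame onto the pre-existing (non-empty) stack; hence a braced command is always run under a stack of at least two frames, so $\kappa_{\mathrm{rest}}$ is non-empty, the bottom frame $\iota_\pc\,\gpc$ lies within it, say $\kappa_{\mathrm{rest}} = \bar\kappa' \rhd \iota_\pc\,\gpc$, and $\kappa' = \kappa_{\mathrm{rest}}$ has the required form. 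This invariant is easily maintained by the operational semantics (the initial stack is a single frame and \rulename{M-If} is the only rule creating braces) and is best recorded as a small auxiliary lemma; once it is in hand, the remainder of the argument is a direct reading of the rule shapes.
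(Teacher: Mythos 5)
Your overall strategy---induction on the step derivation with a case analysis on rule shapes---is the same as the paper's (its proof of Lemma~\ref{lem:kappa-inv} is an unelaborated induction), and your routine cases are fine, including the observation that \rulename{M-Pc}/\rulename{P-Pc} preserve the bottom frame verbatim and that \rulename{M-Seq}/\rulename{P-Seq} need the induction hypothesis. The gap is in the one case you rightly single out, \rulename{M-Pop}/\rulename{P-Pop}. You propose to exclude the bad scenario (the pop removing the distinguished frame) by an invariant that a braced command always executes under a stack of at least two frames, justified by ``the initial stack is a single frame and only \rulename{M-If} creates braces.'' That is a reachability property of whole executions, and it is not derivable from the hypotheses of this lemma: the lemma quantifies over an arbitrary configuration subject only to the shape condition $\kappa=\bar{\kappa}\rhd\iota_\pc\;\gpc$, with no assumption that the configuration is reachable, well formed, or well typed. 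Indeed, under your own permissive reading (you allow $\bar{\kappa}=\emptyset$, e.g.\ for \rulename{M-Assign}), the configuration $\iota_\pc\;\gpc,\delta\sepidx{}\{\eskip\}$ satisfies the hypothesis and, if \rulename{M-Pop} may fire there, steps to the empty stack---so the corner case cannot be dismissed by an auxiliary lemma about executions; it would force you to strengthen the statement of Lemma~\ref{lem:kappa-inv}, and the added hypothesis would then have to be discharged at its use site (the \rulename{P-Seq} case of Lemma~\ref{lem:pres-onestep}), where only typing and safety assumptions, not an execution history, are available.

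What actually closes the case is local. \rulename{M-Pop} removes exactly the top-most frame, while the hypothesis places $\iota_\pc\;\gpc$ at the bottom; so whenever $\bar{\kappa}$ is nonempty the popped frame lies in $\bar{\kappa}$ and the conclusion holds with $\bar{\kappa}'$ the remainder of $\bar{\kappa}$ (possibly empty). The only residual scenario is $\bar{\kappa}=\emptyset$ with $c=\{\eskip\}$, and the right way to rule it out is either to read the hypothesis as requiring the distinguished frame to sit strictly below the top (making the case immediate), or to use the typing that is actually present where the lemma is applied: \rulename{R-Pop} types $\{c\}$ only under a stack $\kappa\rhd\iota\;\gpc$ whose upper part $\kappa$ itself types the body, and $\vdash_r$ has no rule for the empty stack, so a well-typed braced command always sits under at least two frames---a typing fact, not the execution-history argument you sketch. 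Replace your auxiliary ``invariant of the operational semantics'' with one of these two moves and the proof goes through.
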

\begin{proof}
  By induction on the structure of the derivation.
\end{proof}

\begin{lem}
  \label{lem:store-proj}
  If $\forall \{i,j\} \in \{1, 2\}$,
  $\kappa, \delta \sepidx{i} c \stackrel{\alpha}{\stepsto} \kappa', \delta' \sepidx{i} c'$, $\vdash \kappa,\delta \sepidx{i} c\ \m{wf}$
  then $\kappa, \proj{\delta}{i} \sepidx{} c \stackrel{\proj{\alpha}{i}}{\stepsto} 
  \kappa', \proj{\delta'}{i} \sepidx{} c'$, $\proj{\delta}{j} = \proj{\delta'}{j}$ and $\proj{\alpha}{j} = \cdot$
\end{lem}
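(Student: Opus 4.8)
The plan is to proceed by induction on the derivation of
  $\kappa,\delta \sepidx{i} c \stackrel{\alpha}{\stepsto} \kappa',\delta'\sepidx{i} c'$.
  First observe that, since $\vdash \kappa,\delta\sepidx{i}c\ \m{wf}$ with $i\in\{1,2\}$ forces
  $c$ to contain no pairs, the last rule of the derivation is never \rulename{P-C-Pair},
  \rulename{P-Lift-If}, or \rulename{P-Skip-Pair}; moreover, since the conclusion ends in a
  configuration (not $\eabort$), none of the error rules apply. This leaves \rulename{P-Seq},
  \rulename{P-Pc}, \rulename{P-Pop}, \rulename{P-Skip}, \rulename{P-Assign}, \rulename{P-Out},
  \rulename{P-If}, \rulename{P-If-Refine}, and \rulename{P-While}. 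For each case I would exhibit the
  corresponding \emph{unindexed} monitor rule of Fig.~\ref{fig:mon-semantics-cmd}
  (\rulename{M-Seq}, \rulename{M-Pc}, \ldots) applied to $\proj{\delta}{i}$, and then check the
  two invariants about the $j$-projection.

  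The congruence cases \rulename{P-Seq} and \rulename{P-Pc} follow directly from the induction
  hypothesis applied to the premise, composed with \rulename{M-Seq} resp.\ \rulename{M-Pc};
  well-formedness of the sub-configuration is read off the $\m{wf}$ clauses for sequencing and
  braces. The control-flow cases \rulename{P-Pop}, \rulename{P-Skip}, \rulename{P-If} (whose
  branch condition is a plain value, as the rule matches only such), and \rulename{P-While} are
  immediate: the store is unchanged, $\alpha = \cdot$, and the matching unindexed rule has exactly
  the same shape, so all three conclusions are trivial.

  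The substantive cases are \rulename{P-Assign}, \rulename{P-Out}, and \rulename{P-If-Refine},
  each of which evaluates an expression. Here I would first invoke Lemma~\ref{lem:expr-store-proj}
  to get $\proj{\delta}{i}\sepidx{}e\evalsto v$ and, crucially, that $v$ is \emph{not} a pair,
  hence neither is $v' = \reflvof{}(\iota_\pc,v)$ by Lemma~\ref{lem:no-pair-value-op}; expression
  evaluation leaves the store untouched, so the $j$-projection is unaffected by this part. Then:
  for \rulename{P-If-Refine} the store change is $\delta' = \rflof{i}(\delta, X, \iota_\pc\labjoin\labof(v))$,
  and Lemma~\ref{lem:rflof-proj-pres-pi-i} gives both
  $\proj{\delta'}{i} = \rflof{}(\proj{\delta}{i},X,\iota_\pc\labjoin\labof(v))$ and
  $\proj{\delta'}{j} = \proj{\delta}{j}$, which is exactly what \rulename{M-If-Refine} on
  $\proj{\delta}{i}$ requires; for \rulename{P-Assign} the store change is
  $x\mapsto \eupdate_i\,\delta(x)\,v'$, and I would use a small companion to
  Lemma~\ref{lem:update-proj-pres} --- proved by inspecting the $\eupdate_i$ clauses of
  Fig.~\ref{fig:app-rdupd} and using that $v'$ is not a pair --- stating that
  $\proj{\eupdate_i\,v_o\,v_n}{i} = \eupdate\,\proj{v_o}{i}\,v_n$ and
  $\proj{\eupdate_i\,v_o\,v_n}{j} = \proj{v_o}{j}$ whenever $v_n$ is not a pair; \rulename{P-Out}
  is like \rulename{P-Assign} except that the emitted action is $(i,\lab,v'')$, whose
  $i$-projection is the action $(\lab,v'')$ emitted by the corresponding unindexed rule and whose
  $j$-projection is $\cdot$ by the definition of trace projection. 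In every case the unindexed
  rule then applies to $\proj{\delta}{i}\sepidx{}c$, yielding the first conclusion, while the
  commutation facts just cited give $\proj{\delta'}{j} = \proj{\delta}{j}$ and
  $\proj{\alpha}{j} = \cdot$.

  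I expect the main obstacle to be the bookkeeping around the auxiliary store operations:
  verifying that $\eupdate_i$, $\reflvof{i}$, and $\rflof{i}$ each commute with $i$-projection
  while leaving the $j$-projection fixed, and in particular ensuring that the ``non-pair'' side
  conditions these commutation facts need are supplied by Lemma~\ref{lem:expr-store-proj} (an
  expression evaluated under a fixed index produces a non-pair value). Everything else is a
  routine rule-by-rule comparison with Fig.~\ref{fig:mon-semantics-cmd}.
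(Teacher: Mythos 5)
Your proposal is correct and follows essentially the same route as the paper's proof: induction on the step derivation, congruence and trivial cases as you describe, and the substantive cases (\rulename{P-Assign}, \rulename{P-Out}, \rulename{P-If-Refine}) discharged via Lemma~\ref{lem:expr-store-proj}, the non-pair preservation facts, trace projection, and commutation of $\rflof{i}$ with projection (Lemma~\ref{lem:rflof-proj-pres-pi-i}). The only cosmetic difference is that the paper establishes the $\eupdate_i$/projection commutation inline, by subcases on whether $\delta(x)$ is a pair, rather than as the separate companion lemma you propose.
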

\begin{proof}
  By induction on the structure of command-evaluation derivation.
  \begin{description}
  \item [Case:] \rulename{P-Seq}
    \begin{tabbing}
      $\inferrule*{
        \kappa,\delta\sepidx{i} c_1  \stackrel{\alpha}{\stepsto}  \kappa',\delta' \sepidx{i} c'_1
      }{
        \kappa,\delta\sepidx{i} c_1; c_2
        \stackrel{\alpha}{\stepsto}  \kappa',\delta'\sepidx{i} c'_1; c_2
      }$
      \\By IH 
      \\\,\,\,\,\=(1) \,\,\,\,\= $\forall \{i,j\} \in \{1, 2\}$, 
      $\kappa,\proj{\delta}{i} \sepidx{}  c_1$
      \\\>\> $\stackrel{\proj{\alpha}{i}}{\stepsto} \kappa',\proj{\delta'}{i}\sepidx{} c'_1$,
      \\\>\> $\proj{\delta}{j} = \proj{\delta'}{j}$ and $\proj{\alpha}{j} = \cdot$, 
      \\By (1) and definition of \rulename{P-Seq}
      \\\>(2) \>$\forall \{i,j\} \in \{1, 2\}$, 
      $\kappa,\proj{\delta}{i} \sepidx{}  c_1;c_2 \stackrel{\proj{\alpha}{i}}{\stepsto}$\\\>\> $ \kappa',\proj{\delta'}{i}\sepidx{} c'_1;c_2$
      \\\>\> and $\proj{\delta}{j} = \proj{\delta'}{j}$ and $\proj{\alpha}{j} = \cdot$,
    \end{tabbing}
    
  \item [Case:] \rulename{P-Pc}
    \begin{tabbing}
      $\inferrule*{
        \kappa,\delta\sepidx{i} c 
        \stackrel{\alpha}{\stepsto}  \kappa',\delta'\sepidx{i} c'
      }{
        \kappa\rhd \iota_\pc\;\gpc,\delta\sepidx{i} \{c\} 
        \stackrel{\alpha}{\stepsto}  \kappa'\rhd\iota_\pc\;\gpc,\delta'\sepidx{i}\{c'\}
      }$
      \\By IH 
      \\\,\,\,\,\=(1) \,\,\,\,\= $\forall \{i,j\} \in \{1, 2\}$, 
      $\kappa,\proj{\delta}{i} \sepidx{}  c \stackrel{\proj{\alpha}{i}}{\stepsto} \kappa',\proj{\delta'}{i}\sepidx{} c'$ and \\\>\> $\proj{\delta}{j} = \proj{\delta'}{j}$ and $\proj{\alpha}{j} = \cdot$, 
      \\By (1) and definition of \rulename{P-Pc}, the conclusion holds
    \end{tabbing}
    
  \item [Case:] \rulename{P-Pop, P-Skip, P-If, P-While}
    \begin{tabbing}
      By definition of \rulename{P-Pop, P-Skip, P-If, P-While}
    \end{tabbing}

  \item [Case:] \rulename{P-Assign} 
    \begin{tabbing}
      By Lemma~\ref{lem:expr-store-proj}
      \\\,\,\,\,\=(1) \,\,\,\,\= $\forall i \in \{1, 2\}$, 
      $\proj{\delta}{i} \sepidx{} e \evalsto v_0$ such that $v_0 = v$
      \\\>\> and $v$ is not a pair
      \\By definition of \rulename{P-Assign} and Lemma~\ref{lem:no-pair-value-op},
      \\\>(1a)\> $v' = v'_0$ where $v'_0$ is obtained by
      \\\>\> operations on $v_0$ and is not a pair
      \\By assumption
      \\\>(2)\> $\delta' = \delta[x \mapsto \eupdate_i\ \delta(x)\ v']$, 
      \\By (1), and definition of \rulename{P-Assign},
      \\\>(2)\>$\delta'' = \proj{\delta}{i}[x \mapsto \eupdate\ \proj{\delta}{i}(x)\ v']$,
      \\T.S. $\proj{\delta'}{i} = \delta''$ or $\forall x. \proj{\delta'(x)}{i} = \delta''(x)$
      \\We show for $i = 1, j = 2$, similar for $i=2, j=1$
      \\{\bf Subcase I:} Suppose $i = 1, j = 2$, $v' = (\iota\;u)^\glab$
      \\\>\> and $\delta(x) = \epair{\iota_1\;u_1}{\iota_2\;u_2}^\glab$
      \\\>By store projection
      \\\>\>(I1)\quad\= $\proj{\delta}{1}(x) = (\iota_1\;u_1)^\glab$ and $\proj{\delta}{2}(x) = (\iota_2\;u_2)^\glab$
      \\\>By definition of $\eupdate$ and value projection
      \\\>\>(I2)\quad\= $\delta'(x) = \epair{\iota'\;u}{\iota_2\;u_2}^\glab$
      \\\>\>\> where $\newlab(\iota_1, \iota) = \iota'$ and $\proj{\delta'}{2}(x) = (\iota_2\;u_2)^\glab$
      \\\>By (I1), definition of $\eupdate$
      \\\>\>(I3)\quad\= $\delta''(x) = (\iota''\;u)^\glab$ where
       $\newlab(\iota_1, \iota) = (\_, \iota'')$
      \\\>By (I2) and (I3), $\proj{\delta'(x)}{1} = \delta''(x)$, $\proj{\delta}{2}(x) = \proj{\delta'}{2}(x)$ 
      \\\>\>\> and $\forall y. y \neq x \implies \proj{\delta(y)}{i} = \proj{\delta}{i}(y)$
      \\{\bf Subcase II:} Suppose $i = 1, j = 2$, $v' = (\iota\;u)^\glab$ and $\delta(x) = (\iota_1\;u_1)^\glab$
      \\\>By store projection
      \\\>\>(I1)\quad\= $\proj{\delta}{1}(x) = (\iota_1\;u_1)^\glab$ and $\proj{\delta}{2}(x) = (\iota_1\;u_1)^\glab$
      \\\>By definition of $\eupdate$ and value projection
      \\\>\>(I2)\quad\= $\delta'(x) = \epair{\iota'\;u}{\iota_1\;u_1}^\glab$ where $\newlab(\iota_1, \iota) = (\_, \iota')$ and $\proj{\delta'}{2}(x) = (\iota_1\;u_1)^\glab$
      \\\>By (I1), definition of $\eupdate$
      \\\>\>(I3)\quad\= $\delta''(x) = (\iota''\;u)^\glab$ where $\newlab(\iota_1, \iota) = (\_, \iota'')$
      \\\>By (I2) and (I3), $\proj{\delta'(x)}{1} = \delta''(x)$, $\proj{\delta}{2}(x) = \proj{\delta'}{2}(x)$ and $\forall y. y \neq x \implies \proj{\delta(y)}{i} = \proj{\delta}{i}(y)$
    \end{tabbing}

  \item [Case:] \rulename{P-Out}
    \begin{tabbing}
      By Lemma~\ref{lem:expr-store-proj},
      \\\,\,\,\,\=(1) \,\,\,\,\= $\forall i \in \{1, 2\}$, 
      $\proj{\delta}{i} \sepidx{} e \evalsto v_0$ such that $v_0 = v$ and $v$ is not a pair
      \\By definition of \rulename{P-Out} and Lemma~\ref{lem:no-pair-value-op},
      \\\>(1a)\> $v_1 = v'_1$ where $v'_1$ is obtained by operations on $v_0$ and is not a pair
      \\By definition of trace-projection,
      \\\>(2)\>$\proj{i, \ell, v_1}{i} = (\ell, v_1)$ and $\proj{i, \ell, v_1}{j} = \cdot$ where $i \neq j$
    \end{tabbing}
    
  \item [Case:] \rulename{P-If-Refine}
    \begin{tabbing}
      By Lemma~\ref{lem:expr-store-proj}
      \\\,\,\,\,\=(1) \,\,\,\,\= $\forall i \in \{1, 2\}$, 
      $\proj{\delta}{i} \sepidx{} e \evalsto v$ and $v$ is not a pair,
      \\By Lemma~\ref{lem:rflof-proj-pres-pi}
      \\\>(2)\> $\forall \{i,j\} \in \{1, 2\}$, $\proj{\rflof{i}(\delta, X, \iota_\pc\labjoin\labof(v))}{i} = \rflof{}(\proj{\delta}{i}, X, \iota_\pc\labjoin\labof(v))$
      \\\>\>and $\proj{\rflof{i}(\delta, X, \iota_\pc \labjoin\labof(v))}{j} = \proj{\delta}{j}$
      \\By (1), (2) and definition of \rulename{P-If-Refine}, the conclusion holds
    \end{tabbing}

  \end{description}
\end{proof}

\begin{lem}[Expression soundness] 
  \label{lem:expr-soundness}
  If $\mathcal{E} :: \delta \sepidx{} e \evalsto v$ and $\vdash \delta \sepidx{} e\ \m{wf}$ then $\forall i \in \{1, 2\}$,
  $\proj{\delta \sepidx{} e}{i} \evalsto \proj{v}{i}$
\end{lem}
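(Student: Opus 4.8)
The plan is to prove this by induction on the structure of the derivation $\mathcal{E} :: \delta \sepidx{} e \evalsto v$, which must end in one of \rulename{P-Const}, \rulename{P-Var}, \rulename{P-Bop}, or \rulename{P-Cast}. Since $\proj{\delta \sepidx{} e}{i} = \proj{\delta}{i} \sepidx{} e$ by the definition of expression-configuration projection, in each case the goal reduces to exhibiting a derivation of $\proj{\delta}{i} \sepidx{} e \evalsto \proj{v}{i}$. The hypothesis $\vdash \delta \sepidx{} e\ \m{wf}$ unfolds to $\vdash \delta\ \m{wf}$; expression evaluation does not change the store, so this same well-formedness holds for the premise(s) of the last rule, and the induction hypothesis applies to every subderivation with no extra bookkeeping.

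The base cases are immediate. For \rulename{P-Const}, $v = (\iota\;u)^\glab$ and $\proj{v}{i} = (\iota\;u)^\glab$, so \rulename{P-Const} again gives $\proj{\delta}{i} \sepidx{} (\iota\;u)^\glab \evalsto (\iota\;u)^\glab$. For \rulename{P-Var}, the top-level rule yields $v = \eread\ \delta(x) = \delta(x)$; well-formedness guarantees $\delta(x)$ has no nested pairs, so Lemma~\ref{lem:store-proj-pres} applies and gives $\proj{v}{i} = \proj{\delta(x)}{i} = \proj{\delta}{i}(x) = \eread\ \proj{\delta}{i}(x)$, whence \rulename{P-Var} concludes. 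For \rulename{P-Bop}, the premises are $\delta \sepidx{} e_1 \evalsto v_1$, $\delta \sepidx{} e_2 \evalsto v_2$, and $v = v_1 \bop v_2$; the induction hypothesis gives $\proj{\delta}{i} \sepidx{} e_k \evalsto \proj{v_k}{i}$ for $k\in\{1,2\}$, and a routine inspection of the three cases of the value-level $\bop$ operation in Fig.~\ref{fig:app-rdupd} shows $\proj{v_1 \bop v_2}{i} = \proj{v_1}{i} \bop \proj{v_2}{i}$, since projection simply selects the $i$-th component of the pointwise operations, so \rulename{P-Bop} applies.

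The remaining case, \rulename{P-Cast}, is where the only real care is needed. Its premise is $\delta \sepidx{} e' \evalsto v_0$ with $v = (E,\glab)\rhd v_0$; the induction hypothesis gives $\proj{\delta}{i} \sepidx{} e' \evalsto \proj{v_0}{i}$, and it remains to establish $\proj{(E,\glab)\rhd v_0}{i} = (E,\glab)\rhd \proj{v_0}{i}$, including the point that the right-hand side is not $\eundef$. This is the main obstacle: the cast on values is partial. When $v_0 = (\iota\;u)^{\glab'}$ the equality is immediate from the definition, since the cast touches only $\iota$. When $v_0 = \epair{\iota_1\;u_1}{\iota_2\;u_2}^{\glab'}$, the cast is defined componentwise as $\epair{\iota_1\bowtie E\;u_1}{\iota_2\bowtie E\;u_2}^{\glab}$ and is total precisely when every $\iota_k\bowtie E \neq \eundef$; because $\mathcal{E}$ produced the genuine value $v$ (not $\eabort$), each component succeeded, so $(E,\glab)\rhd\proj{v_0}{i} = (E,\glab)\rhd(\iota_i\;u_i)^{\glab'} = (\iota_i\bowtie E\;u_i)^{\glab} = \proj{v}{i}$. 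With this commutation fact and the $\bop$ one above—both proved exactly as in Lemmas~\ref{lem:update-proj-pres} and~\ref{lem:refvl-updval-pres}, by examining the definitions—\rulename{P-Cast} gives $\proj{\delta}{i} \sepidx{} E^{\glab}\, e' \evalsto \proj{v}{i}$, completing the induction.
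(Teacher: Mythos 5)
Your proof is correct and follows essentially the same route as the paper: induction on the evaluation derivation, with the work concentrated in showing that projection commutes with the value-level $\bop$ and cast operations (case-splitting on pair versus non-pair values), and the variable case handled via the store-projection fact. The only cosmetic difference is that you invoke Lemma~\ref{lem:store-proj-pres} for the \rulename{P-Var} case where the paper inlines the same case analysis, which changes nothing of substance.
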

\begin{proof}
  By induction on the structure of $\mathcal{E}$.
  \begin{description}
    
  \item [Case:] \rulename{P-Const}
    \begin{tabbing}
      By \rulename{P-Const} and projection of values
    \end{tabbing}
    
  \item [Case:] \rulename{P-Var}
    \begin{tabbing}
      By assumption
      \\\,\,\,\,\=(1) \,\,\,\,\= $v = \eread\ \delta(x)$
      \\T.S. $\forall i \in \{1, 2\}$, $\eread\ \proj{\delta}{i}(x) = \proj{\eread\ \delta(x)}{i}$
      \\We show for $i=1$, the proof is similar for $i=2$
      \\{\bf Subcase I:} $\delta(x) = \epair{\iota_1\;u_1}{\iota_2\;u_2}^\glab$
      \\\>By definition of $\eread$
      \\\>\>(I1)\quad\= $\proj{\eread\ \delta(x)}{1} = (\iota_1\;u_1)^\glab$
      \\\>By store-projection definition and definition of $\eread$
      \\\>\>(I2)\quad\= $\eread\ \proj{\delta}{1}(x) = (\iota_1\;u_1)^\glab$
      \\\>By (I1) and (I2), the conclusion holds
      \\{\bf Subcase II:} $\delta(x) = (\iota\;u)^\glab$
      \\\>By definition of $\eread$
      \\\>\>(II1)\quad\= $\proj{\eread\ \delta(x)}{1} = (\iota\;u)^\glab$
      \\\>By store-projection definition and definition of $\eread$
      \\\>\>(II2)\quad\= $\eread\ \proj{\delta}{1}(x) = (\iota\;u)^\glab$
      \\\>By (II1) and (II2), the conclusion holds
    \end{tabbing}
    
  \item [Case:] \rulename{P-Bop}
    \begin{tabbing}
      \,\,\,\,\=(1) \,\,\,\,\= $\inferrule*{
        \delta\sepidx{} e_1 \evalsto v_1 \\
        \delta\sepidx{} e_2 \evalsto v_2 \\
        v = v_1\bop v_2 
      }{
        \delta\sepidx{} e_1\bop e_2 \evalsto v
      }$
      \\By (1) and IH
      \\\>(2)\> $\forall i \in \{1, 2\}$, $\proj{\delta}{i} \sepidx{} e_1 \evalsto \proj{v_1}{i}$ and
      $\proj{\delta}{i} \sepidx{} e_2 \evalsto \proj{v_2}{i}$
      \\T.S. $\forall i \in \{1, 2\}$, $\proj{v_1}{i} \bop \proj{v_2}{i} = \proj{v_1 \bop v_2}{i}$
      \\We show for $i=1$, the proof is similar for $i=2$
      \\{\bf Subcase I:} $v_1 = \epair{\iota_1\;u_1}{\iota_1'\;u_1'}^{\glab_1}$,
      $v_2 = \epair{\iota_2\;u_2}{\iota_2'\;u_2'}^{\glab_2}$
      \\\>By definition of $\bop$,
      \\\>\>(I1)\quad\= $v_1 \bop v_2 = \epair{\iota\;u}{\_}^\glab$ and $\proj{v_1 \bop v_2}{1} = (\iota\;u)^\glab$
      \\\>\>\quad\>      where $\iota = \iota_1 \labjoin \iota_2$, $u = u_1 \bop u_2$ and $\glab = \glab_1 \cjoin \glab_2$
      \\\>By value-projection definition 
      \\\>\>(I2)\quad\= $\proj{v_1}{1} = (\iota_1\;u_1)^\glab$ and $\proj{v_2}{1} = (\iota_2\;u_2)^\glab$
      \\\>By (I2) and definition of $\bop$,
      \\\>\>(I3)\quad\= $\proj{v_1}{1} \bop \proj{v_2}{1} = (\iota\;u)^\glab$ 
      where $\iota = \iota_1 \labjoin \iota_2$, $u = u_1 \bop u_2$ and $\glab = \glab_1 \cjoin \glab_2$
      \\\>By (I1) and (I3), the conclusion holds
      \\{\bf Subcase II:} $v_1 = \epair{\iota_1\;u_1}{\iota_1'\;u_1'}^{\glab_1}$,
      $v_2 = (\iota_2\;u_2)^{\glab_2}$, similar for $v_1 = (\iota_1\;u_1)^{\glab_1}$, $v_2 = \epair{\iota_2\;u_2}{\iota_2'\;u_2'}^{\glab_2}$,
      \\\>By definition of $\bop$,
      \\\>\>(II1)\quad\= $v_1 \bop v_2 = \epair{\iota\;u}{\_}^\glab$ and $\proj{v_1 \bop v_2}{1} = (\iota\;u)^\glab$
      \\\>\>\quad\>      where $\iota = \iota_1 \labjoin \iota_2$, $u = u_1 \bop u_2$ and $\glab = \glab_1 \cjoin \glab_2$
      \\\>By value-projection definition 
      \\\>\>(II2)\quad\= $\proj{v_1}{1} = (\iota_1\;u_1)^\glab$ and $\proj{v_2}{1} = (\iota_2\;u_2)^\glab$
      \\\>By (II2) and definition of $\bop$,
      \\\>\>(II3)\quad\= $\proj{v_1}{1} \bop \proj{v_2}{1} = (\iota\;u)^\glab$ 
      where $\iota = \iota_1 \labjoin \iota_2$, $u = u_1 \bop u_2$ and $\glab = \glab_1 \cjoin \glab_2$
      \\\>By (II1) and (II3), the conclusion holds
      \\{\bf Subcase III:} $v_1 = (\iota_1\;u_1)^{\glab_1}$,
      $v_2 = (\iota_2\;u_2)^{\glab_2}$
      \\\>By definition of $\bop$,
      \\\>\>(III1)\quad\= $v_1 \bop v_2 = (\iota\;u)^\glab$ and $\proj{v_1 \bop v_2}{1} = (\iota\;u)^\glab$
      \\\>\>\quad\>where $\iota = \iota_1 \labjoin \iota_2$, $u = u_1 \bop u_2$ and $\glab = \glab_1 \cjoin \glab_2$
      \\\>By value-projection definition 
      \\\>\>(III2)\quad\= $\proj{v_1}{1} = (\iota_1\;u_1)^\glab$ and $\proj{v_2}{1} = (\iota_2\;u_2)^\glab$
      \\\>By (III2) and definition of $\bop$,
      \\\>\>(III3)\quad\= $\proj{v_1}{1} \bop \proj{v_2}{1} = (\iota\;u)^\glab$ 
      where $\iota = \iota_1 \labjoin \iota_2$, $u = u_1 \bop u_2$ and $\glab = \glab_1 \cjoin \glab_2$
      \\\>By (III1) and (III3), the conclusion holds
    \end{tabbing}
    
  \item [Case:] \rulename{P-Cast}
    \begin{tabbing}
      \,\,\,\,\=(1) \,\,\,\,\= $\inferrule*{\delta\sepidx{} e \evalsto v 
        \\ v' = (E, \glab)\rhd v
      }{
        \delta\sepidx{} E^{\glab} e 
        \evalsto  v'
      }$
      \\By IH and (1)
      \\\>(2)\> $\forall i \in \{1, 2\}$, $\proj{\delta \sepidx{} e}{i} \evalsto \proj{v}{i}$
      \\We show for $i=1$, the proof is similar for $i=2$
      \\{\bf Subcase I:} $v = \epair{\iota_1\;u_1}{\iota_2\;u_2}^{\glab'}$,
      \\\>By definition of $\rhd$ cast operation,
      \\\>\>(I1)\quad\= $v' = \epair{\iota_1'\;u_1}{\iota_2'\;u_2}^\glab$ and $\proj{v'}{1} = (\iota_1'\;u_1)^\glab$ where $\iota_1' = \iota_1 \bowtie E$
      \\\>By value-projection definition and cast operation,
      \\\>\>(I2)\quad\= $\proj{v}{1} = (\iota_1\;u_1)^{\glab'}$ and $v_1'' = (\iota_1''\;u_1)^\glab$ where $\iota_1'' = \iota_1 \bowtie E$
      \\\>By (I1) and (I2), the conclusion holds
      \\{\bf Subcase II:} $v = (\iota\;u)^{\glab'}$,
      \\\>By definition of cast
      \\\>\>(II1)\quad\= $v' = (\iota'\;u)^\glab$ and $\proj{v'}{1} = (\iota'\;u)^\glab$ where $\iota' = \iota \bowtie E$
      \\\>By definition of value-projection and cast 
      \\\>\>(II2)\quad\= $\proj{v}{1} = (\iota\;u)^{\glab'}$ and $v'' = (\iota''\;u)^\glab$ where $\iota'' = \iota \bowtie E$
      \\\>By (II1) and (II2), the conclusion holds
    \end{tabbing}
  \end{description}
\end{proof}

\begin{lem}[Soundness] 
  \label{lem:soundness}
  If $\kappa,\delta \sepidx{}  c \stackrel{\alpha}{\stepsto} \kappa',\delta'\sepidx{} c'$
  where $\vdash \kappa, \delta \sepidx{} c\ \m{wf}$, \\
  then $\forall i \in \{1, 2\}$, 
  $\proj{\kappa,\delta \sepidx{}  c}{i} \stackrel{\proj{\alpha}{i}}{\stepsto} 
  \proj{\kappa',\delta'\sepidx{} c'}{i}$, 
  or $\proj{\kappa,\delta \sepidx{}  c}{i} = \proj{\kappa',\delta'\sepidx{} c'}{i}$ and $\proj{\alpha}{i} = \cdot$
\end{lem}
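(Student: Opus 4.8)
\emph{Proof plan.}
The plan is to prove the statement by induction on the derivation of the paired step $\kappa,\delta \sepidx{} c \stackrel{\alpha}{\stepsto} \kappa',\delta'\sepidx{} c'$, fixing $i \in \{1,2\}$ and reusing the commutation lemmas between $\proj{\cdot}{i}$ and the runtime operations; the error rules (those concluding with $\eabort$) are vacuous here since the hypothesis already forces a proper target configuration. For the congruence rules \rulename{P-Seq} and \rulename{P-Pc} the conclusion follows directly from the induction hypothesis and the clauses of command-configuration projection that push $\proj{\cdot}{i}$ through $c_1;c_2$ and $\{c\}$ (re-firing \rulename{P-Seq}, resp.\ \rulename{P-Pc}, on the projected configuration), with Lemma~\ref{lem:kappa-inv} used to see that the surrounding $\pc$ frame is unchanged; \rulename{P-Pop} and \rulename{P-Skip} hold by inspection with $\proj{\alpha}{i}=\cdot$. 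For \rulename{P-If} and \rulename{P-While} the branch condition is a non-pair value --- this is where $\vdash \kappa,\delta\sepidx{} c\ \m{wf}$ together with Lemma~\ref{lem:expr-store-proj} and Lemma~\ref{lem:expr-soundness} are used to exclude the pair case --- so projection carries through and the same rule fires; for \rulename{P-If-Refine} the extra ingredient is that $\rflof{}$ commutes with projection, i.e.\ Lemma~\ref{lem:rflof-proj-pres-pi}.

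The store-effecting leaf rules \rulename{P-Assign} and \rulename{P-Out} go through by first applying expression soundness (Lemma~\ref{lem:expr-soundness}) to obtain $\proj{\delta}{i}\sepidx{} e \evalsto \proj{v}{i}$, and then observing that each transformation used to build the target commutes with $\proj{\cdot}{i}$: $\reflvof{}$ by Lemma~\ref{lem:reflv-proj-pres-pi}, $\updval$ by Lemma~\ref{lem:refvl-updval-pres}, and $\eupdate$ by Lemma~\ref{lem:update-proj-pres}; for \rulename{P-Out} one additionally uses the trace-projection clause $\proj{(\lab,v'')}{i}=(\lab,\proj{v''}{i})$. Since $\iota_\pc$ at top level is a single (non-paired) interval, $\proj{\iota_\pc}{i}=\iota_\pc$, so the $\pc$ interval causes no complication and the projected run makes exactly the matching step emitting $\proj{\alpha}{i}$. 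The freshly-creating rules \rulename{P-Lift-If} and \rulename{P-Skip-Pair} are handled by checking that the projection of the resulting command pair is reached, respectively, by firing \rulename{P-If} (which supplies the extra $\pc$ frame and brace wrapper) and \rulename{P-Pop} (which removes them) on the projected configuration.

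The principal case --- and the one I expect to be the main obstacle --- is \rulename{P-C-Pair}, where the source command is a pair $\epair{\kappa_1,\iota_1,c_1}{\kappa_2,\iota_2,c_2}_\glab$ and one component $c_k$ takes a step. I would case-split on whether the projection index $i$ equals the stepping index $k$. If $i \neq k$: by Lemma~\ref{lem:store-proj} the component step leaves $\proj{\delta}{i}$ unchanged and contributes nothing to the trace at index $i$, while the command-pair projection clause yields the same configuration before and after, so the second disjunct (projection unchanged) holds with $\proj{\alpha}{i}=\cdot$. If $i = k$: Lemma~\ref{lem:store-proj} turns the branch-$k$ component step into a step of $\kappa_k\rhd(\iota_\pc\labjoin\iota_k)\,(\gpc\cjoin\glab),\proj{\delta}{k}\sepidx{} c_k$ emitting $\proj{\alpha}{k}$, and re-wrapping it with the outer $\iota_\pc\gpc$ via \rulename{P-Pc} reproduces the projected target prescribed by the command-pair projection clause. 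The delicate point --- where the case analysis must be carried out most carefully --- is aligning the brace bookkeeping in that projection clause (which wraps a non-$\eskip$ component in $\{\cdot\}$ but leaves a component equal to $\eskip$ bare) with the \rulename{P-Pc}/\rulename{P-Pop} discipline, i.e.\ showing that the transition between the $\{c\}$ form and the bare $\eskip$ form of the projection matches exactly one monitor step (or none); I would isolate this as a small auxiliary observation about how the monitor closes out a branch.
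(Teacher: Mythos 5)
Your proposal matches the paper's proof essentially step for step: induction on the step derivation, expression soundness plus the projection-commutation lemmas ($\reflvof{}$, $\updval$, $\eupdate$, $\rflof{}$) for the leaf rules, re-firing \rulename{P-If}/\rulename{P-Pop} on projections for \rulename{P-Lift-If}/\rulename{P-Skip-Pair}, and Lemma~\ref{lem:store-proj} with an $i=k$ versus $i\neq k$ split (yielding the second disjunct for the non-stepping side) in the \rulename{P-C-Pair} case, including the separate treatment of the $\eskip$/non-$\eskip$ brace bookkeeping that the paper handles via subcases. No gaps; the minor extras you invoke (Lemma~\ref{lem:kappa-inv}, excluding pair guards in \rulename{P-If}) are harmless.
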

\begin{proof}

  By induction on the structure of the command derivation.
  \begin{description}
    
  \item [Case:] \rulename{P-Seq}
    \begin{tabbing}
      $\inferrule*{
        \kappa,\delta\sepidx{} c_1  \stackrel{\alpha}{\stepsto}  \kappa',\delta' \sepidx{} c'_1
      }{
        \kappa,\delta\sepidx{} c_1; c_2
        \stackrel{\alpha}{\stepsto}  \kappa',\delta'\sepidx{} c'_1; c_2
      }$
      \\By IH 
      \\\,\,\,\,\=(1) \,\,\,\,\= $\forall i \in \{1, 2\}$, 
      $\proj{\kappa,\delta \sepidx{}  c_1}{i} \stackrel{\proj{\alpha}{i}}{\stepsto} 
      \proj{\kappa',\delta'\sepidx{} c'_1}{i}$, 
      or $\proj{\kappa,\delta \sepidx{}  c_1}{i} = \proj{\kappa',\delta'\sepidx{} c_1'}{i}$
      \\By (1), projection of commands and definition of \rulename{P-Seq}
      \\\>(2) \>$\forall i \in \{1, 2\}$, 
      $\proj{\kappa,\delta \sepidx{}  c_1;c_2}{i} \stackrel{\proj{\alpha}{i}}{\stepsto} 
      \proj{\kappa',\delta'\sepidx{} c_1';c_2}{i}$, 
      or $\proj{\kappa,\delta \sepidx{}  c_1;c_2}{i} = \proj{\kappa',\delta'\sepidx{} c_1';c_2}{i}$  
    \end{tabbing}
    
  \item [Case:] \rulename{P-Pc}
    \begin{tabbing}
      $\inferrule{
        \kappa,\delta\sepidx{} c 
        \stackrel{\alpha}{\stepsto}  \kappa',\delta'\sepidx{} c'
      }{
        \kappa\rhd \iota_\pc\;\gpc,\delta\sepidx{} \{c\} 
        \stackrel{\alpha}{\stepsto}  \kappa'\rhd\iota_\pc\;\gpc,\delta'\sepidx{}\{c'\}
      }$
      \\By IH 
      \\\,\,\,\,\=(1) \,\,\,\,\= $\forall i \in \{1, 2\}$, 
      $\proj{\kappa,\delta \sepidx{}  c}{i} \stackrel{\proj{\alpha}{i}}{\stepsto} 
      \proj{\kappa',\delta'\sepidx{} c'}{i}$, 
      or $\proj{\kappa,\delta \sepidx{}  c}{i} = \proj{\kappa',\delta'\sepidx{} c'}{i}$
      \\By (1), projection of commands and definition of \rulename{P-Pc}
      \\\>(2) \>$\forall i \in \{1, 2\}$, 
      $\proj{\kappa\rhd \iota_\pc\;\gpc,\delta \sepidx{}  \{c\}}{i} \stackrel{\proj{\alpha}{i}}{\stepsto} 
      \proj{\kappa' \rhd \iota_\pc\;\gpc,\delta'\sepidx{} \{c'\}}{i}$\\
      or $\proj{\kappa\rhd \iota_\pc\;\gpc,\delta \sepidx{}  \{c\}}{i} = \proj{\kappa' \rhd \iota_\pc\;\gpc,\delta'\sepidx{} \{c'\}}{i}$  
    \end{tabbing}

  \item [Case:] \rulename{P-Pop, P-Skip, P-If, P-While}
    \begin{tabbing}
      By projection of commands, well-formedness definition and \\
      definition of \rulename{P-Pop, P-Skip, P-If, P-While}
    \end{tabbing}
    
  \item [Case:] \rulename{P-Assign}
    \begin{tabbing}
      By Lemma~\ref{lem:expr-soundness}
      \\\,\,\,\,\=(1) \,\,\,\,\= $\forall i \in \{1, 2\}$, 
      $\proj{\delta \sepidx{} e}{i} \evalsto v_i$ such that $v_i = \proj{v}{i}$,
      \\By projection of interval operations, Lemma~\ref{lem:reflv-proj-pres-pi},
      \\\>(2) \>$\forall i \in \{1, 2\}$, $v' = \reflvof{}(\iota_\pc, v)$ and $v_i' = \reflvof{}(\iota_\pc, \proj{v}{i})$ and $\proj{v'}{i} = v_i'$,
      \\By (2),
      \\\>(3)\>$\proj{\delta'(x)}{i} = \proj{\delta'}{i}(x)$
      \\By (1), projection of commands and definition of \rulename{P-Assign},
      \\\>(4) \>$\delta'(x) = \eupdate\ \delta(x)\ v'$ and $\delta'_i(x) = \eupdate\ \proj{\delta}{i}(x)\ v_i'$; 
      \\By (3), (4) and Lemma~\ref{lem:update-proj-pres}, the conclusion follows
    \end{tabbing}

  \item [Case:] \rulename{P-Out} 
    \begin{tabbing}
      By Lemma~\ref{lem:expr-soundness}
      \\\,\,\,\,\=(1) \,\,\,\,\= $\forall i \in \{1, 2\}$, 
      $\proj{\delta \sepidx{} e}{i} \evalsto \proj{v}{i}$,
      \\By projection of interval operations, Lemmas~\ref{lem:refvl-updval-pres} and~\ref{lem:reflv-proj-pres-pi},
      \\\>(2) \>$\forall i \in \{1, 2\}$, $v' = \reflvof{}(\iota_\pc, v)$, $v'' = \updval([\ell, \ell],v')$
      \\\>\> $v''_i = \proj{\updval([\ell, \ell],v')}{i} = \updval([\ell, \ell],\proj{v'}{i})$,
      and  $\proj{v''}{i} = v_i''$
      \\By (1), (2), projection of commands and traces, and definition of \rulename{P-Out},
      \\\>(3) \>$\forall i \in \{1, 2\}$, 
      $\proj{(\ell, v'')}{i} = (\ell, v''_i)$, 
    \end{tabbing}
    
  \item [Case:] \rulename{P-If-Refine}
    \begin{tabbing}
      By Lemma~\ref{lem:expr-soundness}
      \\\,\,\,\,\=(1) \,\,\,\,\= $\forall i \in \{1, 2\}$, 
      $\proj{\delta \sepidx{} e}{i} \evalsto \proj{v}{i}$,
      \\By Lemma~\ref{lem:rflof-proj-pres-pi}
      \\\>(2)\> $\forall i \in \{1, 2\}$, $\proj{\rflof{}(\delta, X, \iota_\pc\labjoin\labof(v))}{i} = \rflof{}(\proj{\delta}{i}, X, \iota_\pc\labjoin\labof(\proj{v}{i}))$ 
      \\By (1), (2), projection of commands and definition of \rulename{P-If-Refine}, the conclusion holds
    \end{tabbing}

  \item [Case:] \rulename{P-C-Pair} where $c_1$ and $c_2$ are not $\eskip$
    \begin{tabbing}
      We show for $i=1, j=2$. The proof is similar for $i=2, j=1$
      \\By assumption and Lemma~\ref{lem:store-proj}
      \\\,\,\,\,\=(1) \,\,\,\,\=
      $\kappa_1 \rhd (\iota_\pc \labjoin \iota_1)\;(\glab \cjoin \gpc), \proj{\delta}{1} \sepidx{}  c_1 \stackrel{\alpha}{\stepsto} 
      \kappa'_1 \rhd (\iota_\pc \labjoin \iota_1)\;(\glab \cjoin \gpc), \proj{\delta'}{1} \sepidx{}  c_1'$,
      \\\>\>and $\kappa_2 = \kappa_2'$, $c_2 = c_2'$, $\proj{\delta}{2} = \proj{\delta'}{2}$
      \\By projection of commands
      \\\>(2)\> $\proj{\iota_\pc\;\gpc, \delta \sepidx{} \epair{\kappa_1, \iota_1, c_1}{\kappa_2, \iota_2, c_2}_\glab}{1} = 
      \kappa_1\rhd  (\iota_\pc\labjoin\iota_1)\;(\gpc\cjoin\glab) 
      \rhd \iota_\pc\;\gpc, \proj{\delta}{1} \sepidx{} \{c_1\}$
      \\\>(3)\> $\proj{\iota_\pc\;\gpc, \delta' \sepidx{} \epair{\kappa_1', \iota_1, c_1'}{\kappa_2', \iota_2, c_2'}_\glab}{1} = 
      \kappa_1'\rhd  (\iota_\pc\labjoin\iota_1)\;(\gpc\cjoin\glab) 
      \rhd \iota_\pc\;\gpc, \proj{\delta'}{1} \sepidx{} \{c_1'\}$
      \\\>(4)\> $\proj{\iota_\pc\;\gpc, \delta \sepidx{} \epair{\kappa_1, \iota_1, c_1}{\kappa_2, \iota_2, c_2}_\glab}{2} = 
      \kappa_2\rhd  (\iota_\pc\labjoin\iota_2)\;(\gpc\cjoin\glab) 
      \rhd \iota_\pc\;\gpc, \proj{\delta}{2} \sepidx{} \{c_2\}$
      \\\>(5)\> $\proj{\iota_\pc\;\gpc, \delta' \sepidx{} \epair{\kappa_1', \iota_1, c_1'}{\kappa_2', \iota_2, c_2'}_\glab}{2} = 
      \kappa_2\rhd  (\iota_\pc\labjoin\iota_2)\;(\gpc\cjoin\glab) 
      \rhd \iota_\pc\;\gpc, \proj{\delta}{2} \sepidx{} \{c_2\}$
      \\By definition of \rulename{P-Pc}
      \\\>(6)\> If $\kappa, \delta \sepidx{} c \stackrel{\alpha}{\stepsto} \kappa', \delta' \sepidx{} c'$ then $\kappa \rhd \iota_\pc\;\gpc, \delta \sepidx{} \{c\} \stackrel{\alpha}{\stepsto} \kappa' \rhd \iota_\pc\;\gpc, \delta' \sepidx{} \{c'\}$
      \\By (1), (2), (3) and (6)
      \\\>(7)\> $\kappa_1\rhd  (\iota_\pc\labjoin\iota_1)\;(\gpc\cjoin\glab) 
      \rhd \iota_\pc\;\gpc, \proj{\delta}{1} \sepidx{} \{c_1\} \stackrel{\alpha}{\stepsto} \kappa_1'\rhd  (\iota_\pc\labjoin\iota_1)\;(\gpc\cjoin\glab) 
      \rhd \iota_\pc\;\gpc, \proj{\delta'}{1} \sepidx{} \{c_1'\}$
      \\By (4), (5)
      \\\>(8)\> $ \kappa_2\rhd  (\iota_\pc\labjoin\iota_2)\;(\gpc\cjoin\glab) 
      \rhd \iota_\pc\;\gpc, \proj{\delta}{2} \sepidx{} \{c_2\} =  \kappa_2' \rhd  (\iota_\pc\labjoin\iota_2)\;(\gpc\cjoin\glab) 
      \rhd \iota_\pc\;\gpc, \proj{\delta'}{2} \sepidx{} \{c'_2\}$
      \\By (7) and (8), the conclusion holds
    \end{tabbing}

  \item [Case:] \rulename{P-C-Pair} where $c_1 = \eskip$ and $c_2$ is not $\eskip$. Similar for the symmetric case.
    \begin{tabbing}
      We show for $i=2, j=1$. 
      \\By assumption and Lemma~\ref{lem:store-proj}
      \\\,\,\,\,\=(1) \,\,\,\,\=
      $\kappa_2 \rhd (\iota_\pc \labjoin \iota_2)\;(\glab \cjoin \gpc), \proj{\delta}{2} \sepidx{}  c_2 \stackrel{\alpha}{\stepsto} 
      \kappa'_2 \rhd (\iota_\pc \labjoin \iota_2)\;(\glab \cjoin \gpc), \proj{\delta'}{2} \sepidx{}  c_2'$,
      \\\>\>and $\kappa_1 = \kappa_1'$, $\proj{\delta}{1} = \proj{\delta'}{1}$
      \\By projection of commands
      \\\>(2)\> $\proj{\iota_\pc\;\gpc, \delta \sepidx{} \epair{\kappa_1, \iota_1, \eskip}{\kappa_2, \iota_2, c_2}_\glab}{1} = 
      \kappa_1\rhd  (\iota_\pc\labjoin\iota_1)\;(\gpc\cjoin\glab) 
      \rhd \iota_\pc\;\gpc, \proj{\delta}{1} \sepidx{} \eskip$
      \\\>(3)\> $\proj{\iota_\pc\;\gpc, \delta' \sepidx{} \epair{\kappa_1, \iota_1, \eskip}{\kappa_2', \iota_2, c_2'}_\glab}{1} = 
      \kappa_1 \rhd  (\iota_\pc\labjoin\iota_1)\;(\gpc\cjoin\glab) 
      \rhd \iota_\pc\;\gpc, \proj{\delta'}{1} \sepidx{} \eskip$
      \\\>(4)\> $\proj{\iota_\pc\;\gpc, \delta \sepidx{} \epair{\kappa_1, \iota_1, \eskip}{\kappa_2, \iota_2, c_2}_\glab}{2} = 
      \kappa_2\rhd  (\iota_\pc\labjoin\iota_2)\;(\gpc\cjoin\glab) 
      \rhd \iota_\pc\;\gpc, \proj{\delta}{2} \sepidx{} \{c_2\}$
      \\\>(5)\> $\proj{\iota_\pc\;\gpc, \delta' \sepidx{} \epair{\kappa_1, \iota_1, \eskip}{\kappa_2', \iota_2, c_2'}_\glab}{2} = 
      \kappa_2'\rhd  (\iota_\pc\labjoin\iota_2)\;(\gpc\cjoin\glab) 
      \rhd \iota_\pc\;\gpc, \proj{\delta'}{2} \sepidx{} \{c_2'\}$
      \\By \rulename{P-Pc}
      \\\>(6)\> $\inferrule*{\kappa, \delta \sepidx{} c \stackrel{\alpha}{\stepsto} \kappa', \delta' \sepidx{} c'}{\kappa \rhd \iota_\pc\;\gpc, \delta \sepidx{} \{c\} \stackrel{\alpha}{\stepsto} \kappa' \rhd \iota_\pc\;\gpc, \delta' \sepidx{} \{c'\}}$
      \\By (1), (4), (5) and (6)
      \\\>(7)\> $\kappa_2\rhd  (\iota_\pc\labjoin\iota_2)\;(\gpc\cjoin\glab) 
      \rhd \iota_\pc\;\gpc, \proj{\delta}{2} \sepidx{} \{c_2\} \stackrel{\alpha}{\stepsto} \kappa_2'\rhd  (\iota_\pc\labjoin\iota_2)\;(\gpc\cjoin\glab) 
      \rhd \iota_\pc\;\gpc, \proj{\delta'}{2} \sepidx{} \{c_2'\}$
      \\By (1), (2), (3)
      \\\>(8)\> $ \kappa_1\rhd  (\iota_\pc\labjoin\iota_1)\;(\gpc\cjoin\glab) 
      \rhd \iota_\pc\;\gpc, \proj{\delta}{1} \sepidx{} \eskip =  \kappa_1 \rhd  (\iota_\pc\labjoin\iota_1)\;(\gpc\cjoin\glab) 
      \rhd \iota_\pc\;\gpc, \proj{\delta'}{1} \sepidx{} \eskip$
      \\By (7) and (8), the conclusion holds
    \end{tabbing}

  \item [Case:] \rulename{P-Skip-Pair}
    \begin{tabbing}
      By projection of commands
      \\\,\,\,\,\=(1) \,\,\,\,\=
      $\proj{\iota_\pc\;\gpc, \delta \sepidx{} \epair{\emptyset, \iota_1, \eskip}{\emptyset, \iota_2, \eskip}_\glab}{i} = 
      (\iota_\pc\labjoin\iota_i)\;(\gpc\cjoin\glab) 
      \rhd \iota_\pc\;\gpc, \proj{\delta}{i} \sepidx{} \{\eskip\}$
      \\By definition of \rulename{P-Pop}
      \\\>(2)\> $\iota_\pc\;\gpc \rhd \kappa, \delta \sepidx{} \{\eskip\} \stackrel{}{\stepsto} \kappa, \delta \sepidx{} \eskip$
      \\By (1), (2)
      \\\>(3)\> $\proj{\iota_\pc\;\gpc, \delta \sepidx{} \epair{\emptyset, \iota_1, \eskip}{\emptyset, \iota_2, \eskip}_\glab}{i} \stackrel{}{\stepsto} \iota_\pc\;\gpc, \proj{\delta}{i} \sepidx{} \eskip$
      \\By (3) and projection of commands, the conclusion holds
    \end{tabbing}

  \item [Case:] \rulename{P-Lift-If}
    \begin{tabbing}
      By \rulename{P-Lift-If}
      \\\,\,\,\,\=(1) \,\,\,\,\= $\iota_\pc\;\gpc, \delta \sepidx{} \eif\; \epair{\iota_1\;u_1}{\iota_2\;u_2}^\glab\ \ethen\ c_1\ \eelse\ c_2 \stackrel{}{\stepsto} \iota_\pc\;\gpc, \delta \sepidx{} \epair{\emptyset, \iota_1, c_j}{\emptyset, \iota_2, c_k}_\glab$
      \\By projection of commands
      \\\>(2) \>
      $\proj{\iota_\pc\;\gpc, \delta \sepidx{} \eif\; \epair{\iota_1\;u_1}{\iota_2\;u_2}^\glab\ \ethen\ c_1\ \eelse\ c_2}{i} =    
      \iota_\pc\;\gpc, \proj{\delta}{i} \sepidx{} \eif\; (\iota_i\;u_i)^\glab\ \ethen\ c_1\ \eelse\ c_2$
      \\\>(3)\>$\proj{\iota_\pc\;\gpc, \delta \sepidx{} \epair{\emptyset, \iota_1, c_j}{\emptyset, \iota_2, c_k}_\glab}{1} = 
      (\iota_\pc\labjoin\iota_1)\;(\gpc\cjoin\glab) 
      \rhd \iota_\pc\;\gpc, \proj{\delta}{1} \sepidx{} \{c_{j}\}$
      \\\>\> where $c_j$ is $c_1$ if $u_1 = \etrue$ and $c_j$ is $c_2$ if $u_1 = \efalse$ 
      \\\>(4)\>$\proj{\iota_\pc\;\gpc, \delta \sepidx{} \epair{\emptyset, \iota_1, c_j}{\emptyset, \iota_2, c_k}_\glab}{2} = 
      (\iota_\pc\labjoin\iota_2)\;(\gpc\cjoin\glab) 
      \rhd \iota_\pc\;\gpc, \proj{\delta}{2} \sepidx{} \{c_{k}\}$
      \\\>\> where $c_k$ is $c_1$ if $u_2 = \etrue$ and $c_j$ is $c_2$ if $u_2 = \efalse$ 
      \\By \rulename{P-If}
      \\\>(5)\>$\iota_\pc\;\gpc, \proj{\delta}{1} \sepidx{} \eif (\iota_1\;u_1)^\glab\ \ethen\ c_1\ \eelse\ c_2 \stepsto 
      (\iota_\pc\labjoin\iota_1)\;(\gpc\cjoin\glab) \rhd \iota_\pc\;\gpc, \proj{\delta}{1} \sepidx{} \{c_j\}$ 
      \\\>\>where $c_j$ is $c_1$ if $u_1 = \etrue$ and $c_j$ is $c_2$ if $u_1 = \efalse$ 
      \\\>(6)\>$\iota_\pc\;\gpc, \proj{\delta}{2} \sepidx{} \eif (\iota_2\;u_2)^\glab\ \ethen\ c_1\ \eelse\ c_2 \stepsto 
      (\iota_\pc\labjoin\iota_2)\;(\gpc\cjoin\glab) \rhd \iota_\pc\;\gpc, \proj{\delta}{2} \sepidx{} \{c_k\}$ 
      \\\>\>where $c_k$ is $c_1$ if $u_2 = \etrue$ and $c_k$ is $c_2$ if $u_2 = \efalse$ 
      \\By (2-6), the conclusion holds
    \end{tabbing}

  \end{description}
\end{proof}



\begin{thm}[Soundness]
  \label{thm:soundness}
  If $\kappa,\delta \sepidx{}  c \stackrel{\trace}{\stepsto^*} \kappa',\delta'\sepidx{} c'$
  where $\vdash \kappa, \delta \sepidx{} c\ \m{wf}$, \\
  then $\forall i \in \{1, 2\}$, 
  $\proj{\kappa,\delta \sepidx{}  c}{i} \stackrel{\proj{\trace}{i}}{\stepsto^*} 
  \proj{\kappa',\delta'\sepidx{} c'}{i}$, 
\end{thm}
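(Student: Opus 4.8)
I would prove this by induction on the number $n$ of steps in the multi-step reduction $\kappa,\delta \sepidx{} c \stackrel{\trace}{\stepsto^*} \kappa',\delta'\sepidx{} c'$. The substantive work is already done in the single-step soundness lemma (Lemma~\ref{lem:soundness}), which itself goes by case analysis on the reduction rule and relies on the projection-commutation facts for the store and interval operations (e.g.\ Lemmas~\ref{lem:reflv-proj-pres-pi}, \ref{lem:rflof-proj-pres-pi}, \ref{lem:update-proj-pres}, \ref{lem:expr-soundness}); the theorem is then just an inductive wrapper around it. The one extra ingredient the induction needs is that well-formedness is preserved along reductions, i.e.\ Lemma~\ref{lem:wf-pres}, so that the induction hypothesis and the single-step lemma can be applied to intermediate configurations.

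The base case $n=0$ is immediate: $\kappa'=\kappa$, $\delta'=\delta$, $c'=c$, and $\trace=\cdot$, so $\proj{\trace}{i}=\cdot$ and the goal is the zero-step reduction from $\proj{\kappa,\delta\sepidx{}c}{i}$ to itself.

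For the inductive step, I would split the $(n{+}1)$-step run as $\kappa,\delta\sepidx{}c \stackrel{\trace_0}{\stepsto^{n}} \kappa'',\delta''\sepidx{}c'' \stackrel{\alpha}{\stepsto} \kappa',\delta'\sepidx{}c'$, with $\trace$ equal to $\trace_0$ extended by $\alpha$ (and $\trace=\trace_0$ when $\alpha$ is silent). From $\vdash\kappa,\delta\sepidx{}c\ \m{wf}$, Lemma~\ref{lem:wf-pres} gives $\vdash\kappa'',\delta''\sepidx{}c''\ \m{wf}$; the induction hypothesis then yields $\proj{\kappa,\delta\sepidx{}c}{i} \stackrel{\proj{\trace_0}{i}}{\stepsto^*} \proj{\kappa'',\delta''\sepidx{}c''}{i}$ for each $i\in\{1,2\}$. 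Applying Lemma~\ref{lem:soundness} to the final step (using $\vdash\kappa'',\delta''\sepidx{}c''\ \m{wf}$ again) gives, for each $i$, either $\proj{\kappa'',\delta''\sepidx{}c''}{i} \stackrel{\proj{\alpha}{i}}{\stepsto} \proj{\kappa',\delta'\sepidx{}c'}{i}$, or $\proj{\kappa'',\delta''\sepidx{}c''}{i} = \proj{\kappa',\delta'\sepidx{}c'}{i}$ together with $\proj{\alpha}{i}=\cdot$. In both cases I append this to the run from the induction hypothesis; using the inductive definition of trace projection ($\proj{\trace_0,\alpha}{i} = \proj{\trace_0}{i},\proj{\alpha}{i}$, collapsing to $\proj{\trace_0}{i}$ when $\proj{\alpha}{i}=\cdot$) this is exactly $\proj{\kappa,\delta\sepidx{}c}{i} \stackrel{\proj{\trace}{i}}{\stepsto^*} \proj{\kappa',\delta'\sepidx{}c'}{i}$, as required.

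The only point that needs care is the ``stutter'' disjunct of Lemma~\ref{lem:soundness}: a step of the paired configuration that happens entirely inside the branch $j\ne i$ — for instance a \rulename{P-C-Pair} step advancing $c_j$, or an output tagged with $j$ — projects to no step and no visible event on side $i$. This is not really an obstacle, since $\stackrel{\trace}{\stepsto^*}$ admits zero-length segments and trace projection already discards $j$-tagged events when $i\ne j$; it just means the two disjuncts of the single-step lemma are absorbed uniformly into the reflexive--transitive closure. Apart from that, the theorem is routine concatenation of reductions and traces, all discharged by Lemmas~\ref{lem:soundness} and~\ref{lem:wf-pres}.
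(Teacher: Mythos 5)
Your proposal is correct and follows essentially the same route as the paper's own proof: induction on the number of steps, applying the induction hypothesis to the first $n$ steps, Lemma~\ref{lem:wf-pres} to recover well-formedness of the intermediate configuration, and the single-step Lemma~\ref{lem:soundness} together with trace projection for the final step, with the step-or-stutter disjunction absorbed into $\stepsto^*$ exactly as you describe. No gaps.
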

\begin{proof}
  By induction on the number of steps in the sequence. Base case follows from assumption.
  
\noindent  Inductive Case: Holds for $n$ steps; To show for $n+1$ steps, i.e.,
    \\if $\kappa, \delta \sepidx{} c \stackrel{\trace}{\stepsto^n} \kappa', \delta' \sepidx{} c' \stackrel{\alpha}{\stepsto} \kappa'', \delta'' \sepidx{} c'' $, \\then $\forall i \in \{1, 2\}$, 
    $\proj{\kappa,\delta \sepidx{}  c}{i} \stackrel{\trace_{i}}{\stepsto^j} 
    \kappa_{0i},\delta_{0i}\sepidx{} c_{0i} \stackrel{\alpha_{i}}{\stepsto} 
    \kappa_{1i},\delta_{1i}\sepidx{} c_{1i}$, such that $\proj{\kappa'', \delta'' \sepidx c''}{i} = \kappa_{1i},\delta_{1i}\sepidx{} c_{1i}$ 
    \begin{tabbing}
      By induction hypothesis,
      \\\,\,\,\,\=(1) \,\,\,\,\=$\forall i \in\{1,2\}, \kappa_{0i},\delta_{0i}\sepidx{} c_{0i} = \proj{\kappa', \delta' \sepidx{} c'}{i}$ and $\trace_i = \proj{\trace}{i}$
      \\By Lemma~\ref{lem:soundness}, Lemma~\ref{lem:wf-pres}, (1) and projection of traces, the conclusion holds
    \end{tabbing}
\end{proof}

\subsection{Completeness of Paired-Execution}
\label{sec:app-comp}
\begin{lem}[Expression Completeness]
  \label{lem:exp-complete}
  If~$\forall i\in \{1, 2\}$, $\proj{\delta}{i} \sepidx{} e \evalsto
  v_i$, then $\exists v'$ s.t. $\delta \sepidx{} e \evalsto v'$
  with $\proj{v'}{i} = v_i$
\end{lem}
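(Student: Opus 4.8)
The plan is to prove this by structural induction on $e$: in each case I would invert the two hypothesized derivations $\proj{\delta}{i}\sepidx{} e \evalsto v_i$ to see how $v_i$ is assembled from its sub-evaluations, apply the induction hypothesis to the subexpressions, and then reassemble a single (non-indexed) evaluation $\delta\sepidx{} e \evalsto v'$ whose result projects back to the $v_i$. Since $\proj{\delta\sepidx{} e}{i} = \proj{\delta}{i}\sepidx{} e$ and a source expression contains no syntactic value-pairs, the hypothesized derivations live in the plain semantics of Fig.~\ref{fig:mon-semantics-exp}, and the value returned is always either non-paired or (in the variable case) the projection of a paired store value.

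For $e = (\iota\,u)^\glab$, both projected derivations use \rulename{M-Const}, so $v_i = (\iota\,u)^\glab$; taking $v' = (\iota\,u)^\glab$ with \rulename{P-Const} works since $\proj{v'}{i} = (\iota\,u)^\glab = v_i$. For $e = x$, \rulename{M-Var} gives $v_i = \proj{\delta}{i}(x)$; taking $v' = \delta(x)$ via \rulename{P-Var} works by Lemma~\ref{lem:store-proj-pres}, which gives $\proj{\delta(x)}{i} = \proj{\delta}{i}(x)$. For $e = e_1\bop e_2$, inverting \rulename{M-Bop} yields $\proj{\delta}{i}\sepidx{} e_k \evalsto w_{ki}$ for $k\in\{1,2\}$ with $v_i = w_{1i}\bop w_{2i}$; the IH on $e_1$ and $e_2$ gives $v_1',v_2'$ with $\delta\sepidx{} e_k\evalsto v_k'$ and $\proj{v_k'}{i} = w_{ki}$, so \rulename{P-Bop} gives $\delta\sepidx{} e\evalsto v_1'\bop v_2' =: v'$, and it remains to note $\proj{v_1'\bop v_2'}{i} = \proj{v_1'}{i}\bop\proj{v_2'}{i}$. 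For $e = E^\glab\, e_0$, the hypothesis excludes \rulename{M-Cast-Err}, so inverting \rulename{M-Cast} gives $\proj{\delta}{i}\sepidx{} e_0 \evalsto (\iota_i\,u_i)^{\glab_i}$ with $v_i = (\iota_i\bowtie E\; u_i)^\glab$; the IH gives $v_0'$ with $\delta\sepidx{} e_0 \evalsto v_0'$ and $\proj{v_0'}{i} = (\iota_i\,u_i)^{\glab_i}$; I would then check $(E,\glab)\rhd v_0' \neq \eundef$ --- its definedness is exactly the conjunction of definedness of the cast on $\proj{v_0'}{1}$ and on $\proj{v_0'}{2}$, i.e.\ of $v_1$ and $v_2$, both given --- and that $\proj{(E,\glab)\rhd v_0'}{i} = (E,\glab)\rhd\proj{v_0'}{i} = v_i$, after which \rulename{P-Cast} closes the case.

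The main obstacle, routine but load-bearing, is the small collection of ``projection distributes over value-level operations'' facts invoked above: $\proj{v_1\bop v_2}{i} = \proj{v_1}{i}\bop\proj{v_2}{i}$, $\proj{(E,\glab)\rhd v}{i} = (E,\glab)\rhd\proj{v}{i}$, together with the matching definedness transfer for the cast operation. Each follows from a short case split on whether the operand(s) are paired, by unfolding the definitions of the value-level binary operation and cast from Fig.~\ref{fig:app-rdupd}; these are precisely the subcase computations already carried out inside the soundness proof of Lemma~\ref{lem:expr-soundness}, so I would either cite them or extract them as standalone lemmas and reuse them here.
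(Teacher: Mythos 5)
Your proposal is correct and follows essentially the same route as the paper's proof: induction over the expression (equivalently, the syntax-directed evaluation derivation), Lemma~\ref{lem:store-proj-pres} for the variable case, and the ``projection commutes with $\bop$ and with the cast operation'' facts established by the same pair/non-pair case splits the paper carries out inline (and which indeed mirror the subcases of Lemma~\ref{lem:expr-soundness}). Your explicit definedness check for $(E,\glab)\rhd v_0'$ in the cast case is a small refinement of the paper's terser treatment, not a different argument.
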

\begin{proof}
  By induction on the structure of the expression-evaluation derivation. 
  \begin{description}
  \item [Case:] \rulename{P-Const}
    \begin{tabbing}
      By \rulename{P-Const} and projection of values
    \end{tabbing}
    
  \item [Case:] \rulename{P-Var}
    \begin{tabbing}
      By assumption
      \\\,\,\,\,\=(1) \,\,\,\,\= $v_i = \eread\ \proj{\delta}{i}(x)$ and $v' = \eread\ \delta(x)$
      \\By (1) and Lemma~\ref{lem:store-proj-pres},
      \\\>(2)\>$\proj{\eread\ \delta(x)}{i} = \eread\ \proj{\delta}{i}(x)$ and $\proj{v'}{i} = v_i$
    \end{tabbing}
    
  \item [Case:] \rulename{P-Bop}
    \begin{tabbing}
      \,\,\,\,\=(1) \,\,\,\,\= $\inferrule*{
        \proj{\delta}{i} \sepidx{} e_1 \evalsto v_{1i} \\
        \proj{\delta}{i} \sepidx{} e_2 \evalsto v_{2i} \\
        v_i = v_{1i} \bop v_{2i} 
      }{
        \proj{\delta}{i} \sepidx{} e_1\bop e_2 \evalsto v_i
      }$
      \\By (1) and IH
      \\\>(2)\> $\delta \sepidx{} e_1 \evalsto v_1$, $\delta \sepidx{} e_2 \evalsto v_2$, $v_{1i} = \proj{v_1}{i}$ and $v_{2i} = \proj{v_2}{i}$
      \\T.S. $\forall i \in \{1, 2\}$, $\proj{v_1 \bop v_2}{i} = \proj{v_1}{i} \bop \proj{v_2}{i}$
      \\We show for $i=1$, the proof is similar for $i=2$
      \\{\bf Subcase I:} $v_1 = \epair{\iota_1\;u_1}{\iota_1'\;u_1'}^{\glab_1}$,
      $v_2 = \epair{\iota_2\;u_2}{\iota_2'\;u_2'}^{\glab_2}$
      \\\>By definition of $\bop$,
      \\\>\>(I1)\quad\= $v_1 \bop v_2 = \epair{\iota\;u}{\_}^\glab$ and $\proj{v_1 \bop v_2}{1} = (\iota\;u)^\glab$
      \\\>\>\quad\>      where $\iota = \iota_1 \labjoin \iota_2$, $u = u_1 \bop u_2$ and $\glab = \glab_1 \cjoin \glab_2$
      \\\>By value-projection definition 
      \\\>\>(I2)\quad\= $\proj{v_1}{1} = (\iota_1\;u_1)^\glab$ and $\proj{v_2}{1} = (\iota_2\;u_2)^\glab$
      \\\>By (I2) and definition of $\bop$,
      \\\>\>(I3)\quad\= $\proj{v_1}{1} \bop \proj{v_2}{1} = (\iota\;u)^\glab$ 
      where $\iota = \iota_1 \labjoin \iota_2$, $u = u_1 \bop u_2$ and $\glab = \glab_1 \cjoin \glab_2$
      \\\>By (I1) and (I3), the conclusion holds
      \\{\bf Subcase II:} $v_1 = \epair{\iota_1\;u_1}{\iota_1'\;u_1'}^{\glab_1}$,
      $v_2 = (\iota_2\;u_2)^{\glab_2}$, similarly for $v_1 = (\iota_1\;u_1)^{\glab_1}$, $v_2 = \epair{\iota_2\;u_2}{\iota_2'\;u_2'}_{\glab_2}$,
      \\\>By definition of $\bop$,
      \\\>\>(II1)\quad\= $v_1 \bop v_2 = \epair{\iota\;u}{\_}^\glab$ and $\proj{v_1 \bop v_2}{1} = (\iota\;u)^\glab$
      \\\>\>\quad\>      where $\iota = \iota_1 \labjoin \iota_2$, $u = u_1 \bop u_2$ and $\glab = \glab_1 \cjoin \glab_2$
      \\\>By value-projection definition 
      \\\>\>(II2)\quad\= $\proj{v_1}{1} = (\iota_1\;u_1)^\glab$ and $\proj{v_2}{1} = (\iota_2\;u_2)^\glab$
      \\\>By (II2) and definition of $\bop$,
      \\\>\>(II3)\quad\= $\proj{v_1}{1} \bop \proj{v_2}{1} = (\iota\;u)^\glab$ 
      where $\iota = \iota_1 \labjoin \iota_2$, $u = u_1 \bop u_2$ and $\glab = \glab_1 \cjoin \glab_2$
      \\\>By (II1) and (II3), the conclusion holds
      \\{\bf Subcase III:} $v_1 = (\iota_1\;u_1)^{\glab_1}$,
      $v_2 = (\iota_2\;u_2)^{\glab_2}$
      \\\>By definition of $\bop$,
      \\\>\>(III1)\quad\= $v_1 \bop v_2 = (\iota\;u)^\glab$ and $\proj{v_1 \bop v_2}{1} = (\iota\;u)^\glab$
      \\\>\>\quad\>where $\iota = \iota_1 \labjoin \iota_2$, $u = u_1 \bop u_2$ and $\glab = \glab_1 \cjoin \glab_2$
      \\\>By value-projection definition 
      \\\>\>(III2)\quad\= $\proj{v_1}{1} = (\iota_1\;u_1)^\glab$ and $\proj{v_2}{1} = (\iota_2\;u_2)^\glab$
      \\\>By (III2) and definition of $\bop$,
      \\\>\>(III3)\quad\= $\proj{v_1}{1} \bop \proj{v_2}{1} = (\iota\;u)^\glab$ 
      where $\iota = \iota_1 \labjoin \iota_2$, $u = u_1 \bop u_2$ and $\glab = \glab_1 \cjoin \glab_2$
      \\\>By (III1) and (III3), the conclusion holds
    \end{tabbing}
    
  \item [Case:] \rulename{P-Cast}
    \begin{tabbing}
      \,\,\,\,\=(1) \,\,\,\,\=
      $\inferrule*{\proj{\delta}{i} \sepidx{} e \evalsto v_i 
        \\ v_i' = (E, \glab)\rhd v_i
      }{
        \proj{\delta}{i} \sepidx{} E^{\glab} e 
        \evalsto  v_i'
      }$
      \\By IH and (1)
      \\\>(2)\> $\forall i \in \{1, 2\}$, $\delta \sepidx{} e \evalsto v$ and $\proj{v}{i} = v_i$
      \\We show for $i=1$, the proof is similar for $i=2$
      \\{\bf Subcase I:} $v = \epair{\iota_1\;u_1}{\iota_2\;u_2}^{\glab'}$,
      \\\>By definition of $\rhd$ cast operation,
      \\\>\>(I1)\quad\= $v' = \epair{\iota_1'\;u_1}{\iota_2'\;u_2}^\glab$ and $\proj{v'}{1} = (\iota_1'\;u_1)^\glab$ where $\iota_1' = \iota_1 \bowtie E$
      \\\>By value-projection definition and cast operation,
      \\\>\>(I2)\quad\= $\proj{v}{1} = (\iota_1\;u_1)^{\glab'}$ and $v_1' = (\iota_1''\;u_1)^\glab$ where $\iota_1'' = \iota_1 \bowtie E$
      \\\>By (I1) and (I2), the conclusion holds
      \\{\bf Subcase II:} $v = (\iota\;u)^{\glab'}$,
      \\\>By definition of cast
      \\\>\>(II1)\quad\= $v' = (\iota'\;u)^\glab$ and $\proj{v'}{1} = (\iota'\;u)^\glab$ where $\iota' = \iota \bowtie E$
      \\\>By definition of value-projection and cast 
      \\\>\>(II2)\quad\= $\proj{v}{1} = (\iota\;u)^{\glab'}$ and $v''_1 = (\iota''\;u)^\glab$ where $\iota'' = \iota \bowtie E$
      \\\>By (II1) and (II2), $\proj{v'}{1} = v''_1$
    \end{tabbing}
  \end{description}
\end{proof}

\begin{lem}
  \label{lem:wf-proj}
  If~ $\vdash \kappa, \delta \sepidx{} c \ \m{wf}$ and $c$ does not contain pairs or braces, then
  $\forall i\in \{1, 2\}$, $\proj{\kappa, \delta \sepidx{} c}{i} = \kappa, \proj{\delta}{i} \sepidx{} c$ 
\end{lem}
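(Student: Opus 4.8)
\textbf{Proof plan for Lemma~\ref{lem:wf-proj}.}
The plan is to proceed by induction on the structure of the command $c$, noting that the well-formedness hypothesis $\vdash \kappa, \delta \sepidx{} c \ \m{wf}$ together with the assumption that $c$ contains neither pairs nor braces rules out exactly the cases of the command-configuration projection in Fig.~\ref{fig:proj} that do something non-trivial (namely the pair case, the brace case, and the paired-$\eskip$ case). So the argument is essentially a case analysis showing that for every remaining syntactic form of $c$ the projection function is defined by a clause that simply pushes $\proj{\cdot}{i}$ through to $\delta$ and leaves $\kappa$ and $c$ untouched.

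First I would handle the base cases. For $c = \eskip$, the clause $\proj{\iota_\pc\;\gpc, \delta \sepidx{} \m{skip}}{i} = \iota_\pc\;\gpc, \proj{\delta}{i} \sepidx{}\m{skip}$ gives the result directly; here I use that a well-formed configuration with $i$ at top level has $\kappa$ of the form $\iota_\pc\;\gpc$ (or more generally I only need the projection clause to match). For $c = x := e$ and $c = \m{output}(\lab,e)$, the corresponding clauses in Fig.~\ref{fig:proj} again just distribute the projection over $\delta$. For $c = \eif^X\; e\; \ethen\; c_1\; \eelse\; c_2$ and $c = \ewhile^X\; e\; \edo\; c_0$, the projection clauses leave the command entirely intact and only project $\delta$, so nothing is required of $c_1, c_2, c_0$. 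For $c = \eif\; v\; \ethen\; c_1\; \eelse\; c_2$ (the runtime if without write-set), the projection replaces $v$ by $\proj{v}{i}$; but since $c$ contains no pairs, $v$ is a non-pair value, so $\proj{v}{i} = v$ by the value projection in Fig.~\ref{fig:app-rdupd}, and the claim follows.

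For the inductive cases I would dispatch $c = c_1 ; c_2$ using the projection clause $\proj{\kappa, \delta \sepidx{} c_1;c_2}{i} = \kappa', \delta' \sepidx{} c'_1; c_2$ where $\proj{\kappa, \delta \sepidx{} c_1}{i} = \kappa', \delta' \sepidx{} c'_1$, together with the IH applied to $c_1$ (which is well-formed and, by the $\m{wf}$ clause for sequencing, contains no pairs or braces); this gives $\kappa' = \kappa$, $\delta' = \proj{\delta}{i}$, $c'_1 = c_1$, hence the whole projection is $\kappa, \proj{\delta}{i} \sepidx{} c_1; c_2$. The remaining cases, namely $c = \epair{\cdots}{\cdots}$, $c = \{c_0\}$, and $c = \epair{\emptyset,\iota_1,\eskip}{\emptyset,\iota_2,\eskip}_\glab$, are excluded outright by the hypothesis that $c$ contains no pairs and no braces, so there is nothing to prove there. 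The only mild subtlety, which is the ``main obstacle'' such as it is, is bookkeeping: I must make sure the $\m{wf}$ assumptions are correctly propagated to subterms (so that the IH applies) and that the statement's two hypotheses — well-formedness and absence of pairs/braces — together exactly match the set of projection clauses that behave homomorphically; once that matching is spelled out, every case is immediate from the relevant definition.
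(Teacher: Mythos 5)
Your proposal is correct and follows essentially the same route as the paper's (sketched) proof: induction on the structure of $c$, with each case discharged directly from the command-configuration projection clauses and the sequencing case using the induction hypothesis. Your extra observation that in the runtime $\eif\;v\;\ldots$ case the value $v$ cannot be a pair (so $\proj{v}{i}=v$) is a correct detail the paper's sketch leaves implicit.
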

\begin{proofsketch}
  By induction on the structure of $c$. Most cases use the projection of commands and respective rules. $c_1;c_2$ use the IH additionally. 
\end{proofsketch}

\begin{lem}
  \label{lem:proj-complete}
  If~ $\forall i\in \{1, 2\}$, $\proj{\kappa, \delta \sepidx{} c}{i} \stackrel{\alpha_i}{\stepsto}
  \kappa'_i, \delta'_i \sepidx{} c'_i$ or $\proj{\kappa, \delta \sepidx{} c}{i} = \kappa_0, \delta_0 \sepidx{} \eskip$, and $\vdash \kappa, \delta \sepidx{} c \ \m{wf}$ then 
  $\exists \kappa', \delta', c'$ s.t. 
    $\kappa, \delta \sepidx{} c \stackrel{\trace}{\stepsto^*} \kappa', \delta' \sepidx{} c'$
    with $\proj{\kappa', \delta' \sepidx{} c'}{i} = \kappa'_i, \delta'_i \sepidx{} c'_i$ and $\proj{\trace}{i} = \alpha_i$ 
\end{lem}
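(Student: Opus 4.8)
The plan is to prove Lemma~\ref{lem:proj-complete} by structural induction on the command $c$, using the well-formedness hypothesis $\vdash\kappa,\delta\sepidx{}c\ \m{wf}$ to restrict the shape of $c$ (no nested pairs, and no pairs or braces in the tail of a sequence or the body of a branch/loop). For each form of $c$ I determine which rule each projection $\proj{\kappa,\delta\sepidx{}c}{i}$ takes and exhibit a matching sequence of paired steps on $\kappa,\delta\sepidx{}c$; the emitted paired trace $\trace$ will then satisfy $\proj{\trace}{i}=\alpha_i$ because any action produced by \rulename{P-Out} inside a pair carries the index of the run that produced it, so it projects to $\alpha_i$ on that run and to nothing on the other.

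The routine cases are those where $c$ has no pair at top level, so that $\proj{\kappa,\delta\sepidx{}c}{i}$ is $\kappa,\proj{\delta}{i}\sepidx{}c$ (Lemma~\ref{lem:wf-proj}) when $c$ has no braces, and differs only by a common stack prefix when it does. Since $c$ is shared by the two projections, both take the same monitor rule and I lift it to the corresponding paired rule: for $\eskip;c''$ use \rulename{P-Skip}; for $x:=e$ and $\eoutput(\lab,e)$ use \rulename{P-Assign}/\rulename{P-Out} after invoking Expression Completeness (Lemma~\ref{lem:exp-complete}) to obtain a single (possibly paired) value $v$ with $\proj{v}{i}=v_i$, plus the projection-commutation facts for $\reflvof{}$, $\updval$, $\rflof{}$ and $\eupdate$ (Lemmas~\ref{lem:reflv-proj-pres-pi}, \ref{lem:refvl-updval-pres}, \ref{lem:rflof-proj-pres-pi}, \ref{lem:update-proj-pres}) to check that the stores line up; for $\eif^X\,e\,\ethen\,c_1\,\eelse\,c_2$, $\ewhile^X\,e\,\edo\,c''$, $\eif\,v\,\ethen\,c_1\,\eelse\,c_2$ with $v$ not a pair, and $\{c''\}$ use \rulename{P-If-Refine}/\rulename{P-While}/\rulename{P-If}/\rulename{P-Pc}; and for $c_1;c_2$ appeal to the induction hypothesis on $c_1$ (or to \rulename{P-Skip} when $c_1=\eskip$). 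In each of these a single paired step suffices.

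Two cases are genuinely about pairs. If $c=\eif\,v\,\ethen\,c_1\,\eelse\,c_2$ with $v=\epair{\iota_1\,u_1}{\iota_2\,u_2}^\glab$, each projection reads $\proj{v}{i}=(\iota_i\,u_i)^\glab$ and steps by \rulename{P-If} to $\kappa'=(\iota_\pc\labjoin\iota_i)\,(\gpc\cjoin\glab)\rhd\iota_\pc\,\gpc$, $\proj{\delta}{i}\sepidx{}\{c_{j_i}\}$ with $j_i$ chosen by $u_i$; the paired config takes the single step \rulename{P-Lift-If} to $\epair{\emptyset,\iota_1,c_{j_1}}{\emptyset,\iota_2,c_{j_2}}_\glab$, and the command-configuration projection rule for a pair shows that its $i$-th projection is exactly the above, with $\trace=\cdot$. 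If $c=\epair{\emptyset,\iota_1,\eskip}{\emptyset,\iota_2,\eskip}_\glab$, a single \rulename{P-Skip-Pair} step yields $\iota_\pc\,\gpc,\delta\sepidx{}\eskip$, whose $i$-th projection matches the \rulename{M-Pop} image of the projection of $c$.

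The main obstacle is the general paired-command case $c=\epair{\kappa_1,\iota_1,c_1}{\kappa_2,\iota_2,c_2}_\glab$ with some $c_i\neq\eskip$, because the two runs may be out of sync: the projection of $c$ for side $i$ is $\kappa_i\rhd(\iota_\pc\labjoin\iota_i)\,(\gpc\cjoin\glab)\rhd\iota_\pc\,\gpc,\proj{\delta}{i}\sepidx{}\{c_i\}$ (or bare $\eskip$ when $c_i=\eskip$), and I must turn the single-sided monitor step it takes into a \rulename{P-C-Pair} step on $c$. This needs what is essentially the converse of the store-projection lemma (Lemma~\ref{lem:store-proj}) --- either as a separate lemma or by an inner induction on the monitor derivation mirroring Lemma~\ref{lem:soundness}: a non-indexed step of $\kappa_i\rhd(\iota_\pc\labjoin\iota_i)(\gpc\cjoin\glab)\rhd\iota_\pc\,\gpc,\proj{\delta}{i}\sepidx{}\{c_i\}$ must be reflected as an indexed step $\kappa_i\rhd(\iota_\pc\labjoin\iota_i)(\gpc\cjoin\glab),\delta\sepidx{i}c_i\stepsto\ldots$ on the paired store that fixes the $j$-th projection. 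The delicate point is when exactly one side, say $1$, has reached $\eskip$ while side $2$ is still running: the paired config can only take a \rulename{P-C-Pair} step on side $2$, so side $1$ must be matched by the ``is $\eskip$'' disjunct of the hypothesis, relying on the well-formedness invariant that a finished side has empty local stack together with the fact (again Lemma~\ref{lem:store-proj}) that stepping side $2$ leaves $\proj{\delta}{1}$ unchanged. Getting the stack/brace bookkeeping for \rulename{P-C-Pair}/\rulename{P-Pc}/\rulename{M-Pop} aligned with this desynchronization --- and, in Theorem~\ref{thm:completeness} which consumes this lemma, charging the administrative \rulename{M-Skip}/\rulename{M-Pop} steps of an already-finished projection to a decreasing measure --- is where the real work lies; when both sides step, two \rulename{P-C-Pair} steps are taken, $\trace$ concatenates the two index-tagged actions, and $\proj{\trace}{i}$ recovers $\alpha_i$.
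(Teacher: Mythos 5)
Your proposal is correct and follows essentially the same route as the paper's proof: the same structural induction on $c$, the same use of Lemma~\ref{lem:wf-proj} and the projection-commutation lemmas (Lemmas~\ref{lem:exp-complete}, \ref{lem:reflv-proj-pres-pi}, \ref{lem:refvl-updval-pres}, \ref{lem:rflof-proj-pres-pi}, \ref{lem:update-proj-pres}) for the non-pair cases, \rulename{P-Lift-If} and \rulename{P-Skip-Pair} for the value-pair and skip-pair cases, and the same split of the command-pair case into both-sides-running (two \rulename{P-C-Pair} steps, index-tagged actions) versus one side already at $\eskip$ (matched by the skip disjunct, with Lemma~\ref{lem:store-proj} guaranteeing the finished side's projection is untouched). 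The only divergence is presentational: where you propose an explicit ``converse of Lemma~\ref{lem:store-proj}'' (or an inner induction mirroring Lemma~\ref{lem:soundness}) to reflect the projected monitor step as an indexed step on the paired store, the paper simply asserts the two \rulename{P-C-Pair} steps and then reconciles their endpoints with Lemma~\ref{lem:store-proj} and determinism of the monitor semantics, so your extra lemma is a harmless---arguably more rigorous---packaging of the same argument.
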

\begin{proof}
  By induction on the structure of $c$. If $\proj{\kappa, \delta \sepidx{} c}{1} = \kappa_0, \delta_0 \sepidx{} \eskip$ and $\proj{\kappa, \delta \sepidx{} c}{2} = \kappa_0, \delta_0 \sepidx{} \eskip$, then $c = \eskip$ (from the projection of command-configurations). If $c=\eskip$, then $\kappa, \delta \sepidx{} c$ takes $0$ steps and the conclusion follows from the assumption. If at least one of the projected runs steps:
  \begin{description}
    
  \item [Case:] $c = c_1;c_2$
    \begin{tabbing}
      By projection of commands,
      \\\,\,\,\,\=(1) \,\,\,\,\= $\inferrule*{\proj{\kappa, \delta \sepidx{} c_1}{i} = \kappa_i, \delta_i \sepidx{} c_{1i}}{\proj{\kappa, \delta \sepidx{} c_1;c_2}{i} = \kappa_i, \delta_i \sepidx{} c_{1i};c_2}$
      \\By \rulename{P-Seq}
      \\\>(2)\>
      $\inferrule*{
        \kappa_i,\delta_i\sepidx{} c_{1i}  \stackrel{\alpha_i}{\stepsto}  \kappa_i',\delta_i' \sepidx{} c'_{1i}
      }{
        \kappa_i,\delta_i \sepidx{} c_{1i}; c_2
        \stackrel{\alpha_i}{\stepsto}  \kappa_i',\delta_i'\sepidx{} c'_{1i}; c_2
      }$
      \\By (2),
      \\\>(3)\>$\proj{\kappa,\delta\sepidx{} c_{1}}{i}  \stackrel{\alpha_i}{\stepsto}  \kappa_i',\delta_i' \sepidx{} c'_{1i}$
      \\By (3) and IH,
      \\\>(4)\> $\kappa,\delta \sepidx{}  c_{1} \stackrel{\trace}{\stepsto^*} \kappa',\delta' \sepidx{} c'_1$ and $\forall i \in \{1, 2\}$, $\proj{\kappa', \delta' \sepidx{} c_1'}{i} = \kappa_i', \delta_i' \sepidx{} c_{1i}'$ and $\proj{\trace}{i} = \alpha_i$, (or)
      \\\>\> $\proj{\kappa, \delta \sepidx{} c_1}{i} = \kappa_0, \delta_0 \sepidx{} \eskip$
      \\By (1), (2), (4)
      \\\>(5) \>$\forall i \in \{1, 2\}$, $\proj{\kappa', \delta' \sepidx{} c_1';c_2}{i} = \kappa_i', \delta_i' \sepidx{} {c_{1i}';c_2}$, and  $\proj{\trace}{i} = \alpha_i$
      \\By (5), the conclusion holds
    \end{tabbing}
    
  \item [Case:] $c = \{c\}$
    \begin{tabbing}
      By projection of commands,
      \\\,\,\,\,\=(1) \,\,\,\,\= $\inferrule*{\proj{\kappa, \delta \sepidx{} c}{i} = \kappa_i, \delta_i \sepidx{} c_i}{\proj{\kappa \rhd \iota_\pc\;\gpc, \delta \sepidx{} \{c\}}{i} = \kappa_i \rhd \iota_\pc\;\gpc, \delta_i \sepidx{} \{c_{i}\}}$
      \\By \rulename{P-Pc},
      \\\>(2)\>$\inferrule*{
        \kappa_i,\delta_i\sepidx{} c_i
        \stackrel{\alpha_i}{\stepsto}  \kappa_i',\delta_i'\sepidx{} c_i'
      }{
        \kappa_i \rhd \iota_\pc\;\gpc,\delta_i\sepidx{} \{c_i\} 
        \stackrel{\alpha_i}{\stepsto}  \kappa_i'\rhd\iota_\pc\;\gpc,\delta_i'\sepidx{}\{c_i'\}
      }$
      \\By (2), IH,
      \\\>(3)\> $\kappa,\delta \sepidx{}  c \stackrel{\trace}{\stepsto^*} \kappa',\delta' \sepidx{} c'$ and
      $\forall i \in \{1, 2\}$,  $\proj{\kappa', \delta' \sepidx{} c'}{i} = \kappa_i', \delta_i' \sepidx{} c_i'$ and $\proj{\trace}{i} = \alpha_i$, 
      \\By (1), (2) and (3), the conclusion holds
    \end{tabbing}

  \item [Case:] $c = \eskip;c$
    \begin{tabbing}
      By projection of commands,
      \\\,\,\,\,\=(1) \,\,\,\,\= $\inferrule*{\proj{\iota_\pc\;\gpc, \delta \sepidx{} \eskip}{i} = \iota_\pc\;\gpc, \proj{\delta}{i} \sepidx{} \eskip}{\proj{\iota_\pc\;\gpc, \delta \sepidx{} \eskip;c}{i} = \iota_\pc\;\gpc, \proj{\delta}{i} \sepidx{} \eskip;c}$
      \\By \rulename{P-Skip},
      \\\>(2)\>$\inferrule*{ }{
        \iota_\pc\;\gpc, \proj{\delta}{i} \sepidx{} \eskip;c \stepsto \iota_\pc\;\gpc, \proj{\delta}{i} \sepidx{} c}$
      \\\>(3)\>$\inferrule*{ }{
        \iota_\pc\;\gpc, \delta \sepidx{} \eskip;c \stepsto \iota_\pc\;\gpc, \delta \sepidx{} c}$
      \\T.S. $\proj{\iota_\pc\;\gpc, \delta \sepidx{} c}{i} = \iota_\pc\;\gpc, \proj{\delta}{i} \sepidx{} c$
      \\\>(4)\> From well-formedness definition, $c$ does not contain pairs or braces
      \\From Lemma~\ref{lem:wf-proj} and (4), the conclusion holds
    \end{tabbing}
  
  \item [Case:] $c = \{\eskip\}$
    \begin{tabbing}
      By projection of commands,
      \\\,\,\,\,\=(1) \,\,\,\,\= $\inferrule*{ 
        \proj{\iota\;\glab, \delta \sepidx{} \eskip}{i} = \iota\;\glab, \proj{\delta}{i} \sepidx{} \eskip
      }{
        \proj{\iota\;\glab\rhd \iota_\pc\; \gpc, \delta \sepidx{} \{\eskip\}}{i} = \iota\;\glab\rhd
        \iota_\pc\;\gpc, \proj{\delta}{i} \sepidx{} \{\eskip\}
      }$
      \\The rule forces $\kappa = \iota\;\glab$
      \\By \rulename{P-Pop},
      \\\>(2)\>$\inferrule*{ }{
        \iota\;\glab \rhd \iota_\pc\;\gpc, \delta \sepidx{} \{\eskip\} \stepsto \iota_\pc\;\gpc, \delta \sepidx{} \eskip}$
      \\\>(3)\>$\inferrule*{ }{
        \iota\;\glab \rhd \iota_\pc\;\gpc, \proj{\delta}{i} \sepidx{} \{\eskip\} \stepsto \iota_\pc\;\gpc, \proj{\delta}{i} \sepidx{} \eskip}$
      \\By (1), (2) and (3), the conclusion holds
    \end{tabbing}

  \item [Case:] $c = \eif (\iota\;b)^\glab\ \ethen\ c_1\ \eelse\ c_2$
    \begin{tabbing}
      By projection of commands,
      \\\,\,\,\,\=(1) \,\,\,\,\= $\inferrule*{ }{
        \proj{\iota_\pc\;\gpc, \delta \sepidx{} \eif\; (\iota\;b)^\glab\; \ethen\; c_1\; \eelse\; c_2 }{i} = 
        \iota_\pc\;\gpc, \proj{\delta}{i} \sepidx{} \eif\; (\iota\;b)^\glab\; \ethen\; c_1\; \eelse\; c_2 
      }
      $
      \\By \rulename{P-If},
      \\\>(2)\>$\inferrule*{
        \\ \iota'_\pc = \iota_\pc\labjoin \iota
        \\ \gpc' =\gpc\cjoin\glab
        \\ c_j = c_1~\mbox{if}~b = \etrue
        \\ c_j = c_2~\mbox{if}~b = \efalse
      }{
        \iota_\pc\;\gpc,\proj{\delta}{i} \sepidx{} \eif\; (\iota\;b)^\glab\ \ethen\ c_1\ \eelse\ c_2
        \stepsto  \iota'_\pc\;\gpc'\rhd \iota_\pc\;\gpc, \proj{\delta}{i} \sepidx{}  \{c_j\}
      }$
      \\\>(3)\>$\inferrule*{
        \\ \iota'_\pc = \iota_\pc\labjoin \iota
        \\ \gpc' =\gpc\cjoin\glab
        \\ c_j = c_1~\mbox{if}~b = \etrue
        \\ c_j = c_2~\mbox{if}~b = \efalse
      }{
        \iota_\pc\;\gpc,\delta \sepidx{} \eif\; (\iota\;b)^\glab\ \ethen\ c_1\ \eelse\ c_2
        \stepsto  \iota'_\pc\;\gpc'\rhd \iota_\pc\;\gpc, \delta \sepidx{}  \{c_j\}
      }$
      \\Suppose $b = \etrue$. Similar for $b=\efalse$
      \\T.S. $\proj{\iota'_\pc\;\gpc'\rhd \iota_\pc\;\gpc, \delta \sepidx{}  \{c_1\}}{i} = \iota'_\pc\;\gpc'\rhd \iota_\pc\;\gpc, \proj{\delta}{i} \sepidx{}  \{c_1\}$
      \\By projection of commands,
      \\\>(4)\> $\inferrule*{ 
        \proj{\iota'_\pc\;\gpc', \delta \sepidx{} c_1}{i} = \iota'_\pc\;\gpc', \delta' \sepidx{} c_1'
      }{
        \proj{\iota'_\pc\;\gpc'\rhd \iota_\pc\;\gpc, \delta \sepidx{} \{c_1\}}{i} = \iota'_\pc\;\gpc'\rhd 
        \iota_\pc\;\gpc, \delta' \sepidx{} \{c_1'\}
      }
      $
      \\\>(5)\>By well-formedness definition, $c_1$ does not contain pairs or braces.
      \\ By Lemma~\ref{lem:wf-proj} and (5), the conclusion holds
    \end{tabbing}
    
  \item [Case:] $c = \ewhile^X\ e\ \edo\ c$
    \begin{tabbing}
      By projection of commands and \rulename{P-While}
    \end{tabbing}

  \item [Case:] $c = x := e$
    \begin{tabbing}
      By projection of commands, 
      \\T.S. $\proj{\delta'}{i} = \delta_i'$ where $\delta_i' = \proj{\delta}{i}[x \mapsto \eupdate\ \proj{\delta}{i}(x)\ v_i']$ and $\delta' = \delta[x \mapsto \eupdate\ \delta(x)\ v']$
      \\ By Lemma~\ref{lem:exp-complete}
      \\\,\,\,\,\=(1) \,\,\,\,\= $\forall i \in \{1, 2\}$, if $\proj{\delta}{i} \sepidx{} e \evalsto v_i$, then 
      $\delta \sepidx{} e \evalsto v$ such that $\proj{v}{i} = v_i$,
      \\By (1), Lemma~\ref{lem:reflv-proj-pres-pi},
      \\\>(2)\> $\proj{v'}{i} = \proj{\reflvof{}(\iota_\pc, v)}{i} = \reflvof{}(\iota_\pc, \proj{v}{i}) = v_i'$
      \\By (2), Lemma~\ref{lem:update-proj-pres} and~\ref{lem:store-proj-pres}
      \\\>(3)\> $\proj{v''}{i} = \proj{\eupdate\ \delta(x)\ v'}{i} = \eupdate\ \proj{\delta}{i}(x)\ \proj{v'}{i} = v_i''$
      \\\>(4)\> $\delta'(x) = v''$ and $\delta_i'(x) = v_i''$
      \\By (3) and (4) the conclusion holds
    \end{tabbing}

  \item [Case:] $c = \eif^X\ e\ \ethen\ c_1\ \eelse\ c_2$
    \begin{tabbing}
      By projection of commands,
      \\T.S. $\proj{\delta'}{i} = \delta_i'$ where $\delta_i' = \rflof{}(\proj{\delta}{i}, X, \iota_\pc\labjoin\labof(v_i))$
      and $\delta' = \rflof{}(\delta, X, \iota_\pc\labjoin\labof(v))$
      \\ By Lemma~\ref{lem:exp-complete}
      \\\,\,\,\,\=(1) \,\,\,\,\= $\proj{v}{i} = v_i$,
      \\By (1), Lemma~\ref{lem:rflof-proj-pres-pi}
      \\\>(2)\> $\proj{\rflof{ }(\delta, X, \iota_\pc\labjoin\labof(v))}{i} = \rflof{}(\proj{\delta}{i}, X, \iota_\pc\labjoin\labof(\proj{v}{i}))$
      \\By (1), (2), the conclusion holds
    \end{tabbing}

  \item [Case:] $c = \eoutput(\ell, e)$
    \begin{tabbing}
      By projection of commands, $\proj{\iota_\pc\;\gpc, \delta \sepidx{} \eskip}{i} = \iota_\pc\;\gpc, \proj{\delta}{i} \sepidx{} \eskip$.
      \\T.S. $\proj{(\ell, v'')}{i} = (\ell, v_i'')$ 
      \\ By Lemma~\ref{lem:exp-complete}
      \\\,\,\,\,\=(1) \,\,\,\,\= $\forall i \in \{1, 2\}$, if $\proj{\delta}{i} \sepidx{} e \evalsto v_i$, then 
      $\delta \sepidx{} e \evalsto v$ such that $\proj{v}{i} = v_i$,
      \\By (1), Lemma~\ref{lem:reflv-proj-pres-pi}
      \\\>(2)\> $\proj{v'}{i} = \proj{\reflvof{}(\iota_\pc, v)}{i} = \reflvof{}(\iota_\pc, \proj{v}{i}) = v_i'$
      \\By (2),Lemma~\ref{lem:refvl-updval-pres}
      \\\>(3)\> $\proj{v''}{i} = \proj{\updval([\ell, \ell], v')}{i} = \updval{}([\ell, \ell], \proj{v'}{i}, [\ell, \ell]) = v_i''$
      \\By (3) and projection of traces, the conclusion holds
    \end{tabbing}
    
  \item [Case:] $c = \epair{\kappa_1,\iota_1,c_1}{\kappa_2,\iota_2,c_2}$ where $c_1$ and $c_2$ are not $\eskip$
    \begin{tabbing}
      By projection of commands
      \\\,\,\,\,\=(1) \,\,\,\,\=$\proj{\iota_\pc\;\gpc, \delta \sepidx{} \epair{\kappa_1, \iota_1, c_1}{\kappa_2, \iota_2, c_2}_\glab}{i} = 
      \kappa_i\rhd  (\iota_\pc\labjoin\iota_i)\;(\gpc\cjoin\glab) \rhd \iota_\pc\;\gpc, \proj{\delta}{i} \sepidx{} \{c_i\}$
      \\By assumption, \rulename{P-Pc} and (1)
      \\\>(2)\>$\kappa_i\rhd  (\iota_\pc\labjoin\iota_i)\;(\gpc\cjoin\glab) \rhd \iota_\pc\;\gpc, \proj{\delta}{i} \sepidx{} \{c_i\} \stackrel{\alpha_i}{\stepsto} \kappa_i'' \rhd \iota_\pc\;\gpc, \delta_i'' \sepidx{} \{c_i''\}$
      \\\>(3)\>$\kappa_i\rhd  (\iota_\pc\labjoin\iota_i)\;(\gpc\cjoin\glab), \proj{\delta}{i} \sepidx{} c_i \stackrel{\alpha_i}{\stepsto} \kappa_i'', \delta_i'' \sepidx{} c_i''$
      \\Suppose $i =1, j=2$ followed by $i=2,j=1$. Similar for the symmetric case
      \\\>(4)\>$\iota_\pc\;\gpc, \delta \sepidx{} \epair{\kappa_1, \iota_1, c_1}{\kappa_2, \iota_2, c_2}_\glab \stackrel{\alpha'}{\stepsto} \iota_\pc\;\gpc, \delta' \sepidx{} \epair{\kappa_1', \iota_1, c_1'}{\kappa_2, \iota_2, c_2}_\glab$
      \\\>\>$\stackrel{\alpha''}{\stepsto} \iota_\pc\;\gpc, \delta'' \sepidx{} \epair{\kappa_1', \iota_1, c_1'}{\kappa_2', \iota_2, c_2'}_\glab$
      \\By (4), Lemma~\ref{lem:store-proj},
      \\\>(5)\>$\kappa_1 \rhd (\iota_\pc\labjoin\iota_1\;\gpc \cjoin g), \proj{\delta}{1} \sepidx{} c_1 \stackrel{\proj{\alpha'}{1}}{\stepsto} \kappa_1' \rhd (\iota_\pc\labjoin\iota_1\;\gpc \cjoin g), \proj{\delta'}{1} \sepidx{} c_1'$,
      \\\>\>$\proj{\delta}{2} = \proj{\delta'}{2}$ and $\proj{\alpha'}{2} = \cdot$
      \\\>(6)\>$\kappa_2 \rhd (\iota_\pc\labjoin\iota_2\;\gpc \cjoin g), \proj{\delta'}{2} \sepidx{} c_2 \stackrel{\proj{\alpha''}{2}}{\stepsto} \kappa_2' \rhd (\iota_\pc\labjoin\iota_2\;\gpc \cjoin g), \proj{\delta''}{2} \sepidx{} c_2'$,
      \\\>\>$\proj{\delta'}{1} = \proj{\delta''}{1}$ and $\proj{\alpha''}{1} = \cdot$
      \\\>(7)\>$\iota_\pc\;\gpc, \delta \sepidx{} \epair{\kappa_1, \iota_1, c_1}{\kappa_2, \iota_2, c_2}_\glab \stackrel{\alpha', \alpha''}{\stepsto^*} \iota_\pc\;\gpc, \delta'' \sepidx{} \epair{\kappa_1', \iota_1, c_1'}{\kappa_2', \iota_2, c_2'}_\glab$
      \\T.S. $\proj{\iota_\pc\;\gpc, \delta'' \sepidx{} \epair{\kappa_1', \iota_1, c_1'}{\kappa_2', \iota_2, c_2'}_\glab}{i} = \kappa_i'' \rhd \iota_\pc\;\gpc, \delta_i'' \sepidx{} \{c_i''\}$ and $\proj{\alpha', \alpha''}{i} = \alpha_i$
      \\By projection of commands,
      \\\>(8)\>$\proj{\iota_\pc\;\gpc, \delta'' \sepidx{} \epair{\kappa_1', \iota_1, c_1'}{\kappa_2', \iota_2, c_2'}_\glab}{i} = 
      \kappa_i' \rhd  (\iota_\pc\labjoin\iota_i)\;(\gpc\cjoin\glab) \rhd \iota_\pc\;\gpc, \proj{\delta''}{i} \sepidx{} \{c_i'\}$
      \\By (8), T.S. $\kappa_i' \rhd (\iota_\pc\labjoin\iota_i)\;(\gpc\cjoin\glab) = \kappa_i''$, $\proj{\delta''}{i} = \delta''$, $c_i' = c_i''$ and $\proj{\alpha', \alpha''}{i} = \alpha_i$
      \\By (3), (5), (6)
      \\\>(9)\>$\kappa''_1 = \kappa_1' \rhd (\iota_\pc\labjoin\iota_1) \;(\gpc\cjoin\glab)$, $\delta_1'' = \proj{\delta'}{1}$, $c_1' = c_1''$, $\alpha_1 = \proj{\alpha'}{1}$
      \\\>(10)\>$\kappa''_2 = \kappa_2' \rhd (\iota_\pc\labjoin\iota_2) \;(\gpc\cjoin\glab)$, $\delta_2'' = \proj{\delta''}{2}$, $c_2' = c_2''$, $\alpha_2 = \proj{\alpha''}{2}$
      \\By (5), (6), (9), (10) and projection of traces, the conclusion holds
    \end{tabbing}

  \item [Case:] $c = \epair{\kappa_1,\iota_1,c_1}{\kappa_2,\iota_2,c_2}$ where $c_1 = \eskip$ and $c_2$ is not. Similar for $c_2 = \eskip$ and $c_1 \neq \eskip$
    \begin{tabbing}
      By projection of commands
      \\\,\,\,\,\=(1) \,\,\,\,\=$\proj{\iota_\pc\;\gpc, \delta \sepidx{} \epair{\kappa_1, \iota_1, \eskip}{\kappa_2, \iota_2, c_2}_\glab}{1} = 
      \kappa_1\rhd  (\iota_\pc\labjoin\iota_1)\;(\gpc\cjoin\glab) \rhd \iota_\pc\;\gpc, \proj{\delta}{1} \sepidx{} \eskip$
      \\\>(2)\>$\proj{\iota_\pc\;\gpc, \delta \sepidx{} \epair{\kappa_1, \iota_1, \eskip}{\kappa_2, \iota_2, c_2}_\glab}{2} = 
      \kappa_2\rhd  (\iota_\pc\labjoin\iota_2)\;(\gpc\cjoin\glab) \rhd \iota_\pc\;\gpc, \proj{\delta}{2} \sepidx{} \{c_2\}$
      \\(1) does not proceed as the projection of $c$ gives $\eskip$
      \\By assumption, \rulename{P-Pc} and (2)
      \\\>(3)\>$\kappa_2\rhd  (\iota_\pc\labjoin\iota_2)\;(\gpc\cjoin\glab) \rhd \iota_\pc\;\gpc, \proj{\delta}{2} \sepidx{} \{c_2\} \stackrel{\alpha_2}{\stepsto} \kappa_2'' \rhd \iota_\pc\;\gpc, \delta_2'' \sepidx{} \{c_2''\}$
      \\\>(4)\>$\kappa_2\rhd  (\iota_\pc\labjoin\iota_2)\;(\gpc\cjoin\glab), \proj{\delta}{2} \sepidx{} c_2 \stackrel{\alpha_2}{\stepsto} \kappa_2'', \delta_2'' \sepidx{} c_2''$
      \\Suppose $i =2, j=1$. 
      \\\>(5)\>$\iota_\pc\;\gpc, \delta \sepidx{} \epair{\kappa_1, \iota_1, \eskip}{\kappa_2, \iota_2, c_2}_\glab \stackrel{\alpha'}{\stepsto} \iota_\pc\;\gpc, \delta' \sepidx{} \epair{\kappa_1, \iota_1, \eskip}{\kappa_2', \iota_2', c_2'}_\glab$
      \\By Lemma~\ref{lem:store-proj},
      \\\>(6)\>$\kappa_2 \rhd (\iota_\pc\labjoin\iota_2\;\gpc \cjoin g), \proj{\delta}{2} \sepidx{} c_2 \stackrel{\proj{\alpha'}{2}}{\stepsto} \kappa_2' \rhd (\iota_\pc\labjoin\iota_2\;\gpc \cjoin g), \proj{\delta'}{2} \sepidx{} c_2'$,
      \\\>\>$\proj{\delta'}{1} = \proj{\delta}{1}$ and $\proj{\alpha'}{1} = \cdot$
      \\T.S. $\proj{\iota_\pc\;\gpc, \delta' \sepidx{} \epair{\kappa_1, \iota_1, \eskip}{\kappa_2', \iota_2, c_2'}_\glab}{1} = \kappa_1\rhd  (\iota_\pc\labjoin\iota_1)\;(\gpc\cjoin\glab) \rhd \iota_\pc\;\gpc, \proj{\delta}{1} \sepidx{} \eskip$ and
      \\$\proj{\iota_\pc\;\gpc, \delta' \sepidx{} \epair{\kappa_1, \iota_1, \eskip}{\kappa_2', \iota_2, c_2'}_\glab}{2} = \kappa_2'' \rhd \iota_\pc\;\gpc, \delta_2'' \sepidx{} \{c_2''\}$ and $\proj{\alpha'}{1} =\cdot $ and $\proj{\alpha'}{2} =  \alpha_2$ 
      \\By projection of commands,
      \\\>(7)\>$\proj{\iota_\pc\;\gpc, \delta' \sepidx{} \epair{\kappa_1, \iota_1, \eskip}{\kappa_2', \iota_2, c_2'}_\glab}{1} = 
      \kappa_1 \rhd  (\iota_\pc\labjoin\iota_1)\;(\gpc\cjoin\glab) \rhd \iota_\pc\;\gpc, \proj{\delta'}{1} \sepidx{} \eskip$
      \\\>(8)\>$\proj{\iota_\pc\;\gpc, \delta' \sepidx{} \epair{\kappa_1, \iota_1, \eskip}{\kappa_2', \iota_2, c_2'}_\glab}{2} = 
      \kappa_2' \rhd  (\iota_\pc\labjoin\iota_2)\;(\gpc\cjoin\glab) \rhd \iota_\pc\;\gpc, \proj{\delta'}{2} \sepidx{} \{c_2'\}$
      \\By (8), T.S. $\kappa_2' \rhd (\iota_\pc\labjoin\iota_2)\;(\gpc\cjoin\glab) = \kappa_2''$, $\proj{\delta'}{2} = \delta''$, $c_2' = c_2''$ and $\proj{\alpha'}{2} = \alpha_2$
      \\By (4) and (6)
      \\\>(9)\>$\kappa''_2 = \kappa_2' \rhd (\iota_\pc\labjoin\iota_2) \;(\gpc\cjoin\glab)$, $\delta'' = \proj{\delta'}{2}$, $c_2' = c_2''$, $\alpha_2 = \proj{\alpha'}{2}$
      \\By (6), (7), (9) and projection of traces, the conclusion holds 
    \end{tabbing}
    
  \item [Case:] $c = \epair{\emptyset, \iota_1, \eskip}{\emptyset, \iota_2, \eskip}$
    \begin{tabbing}
      By projection of commands
      \\\,\,\,\,\=(1) \,\,\,\,\=
      $\proj{\iota_\pc\;\gpc, \delta \sepidx{} \epair{\emptyset, \iota_1, \eskip}{\emptyset, \iota_2, \eskip}_\glab}{i} = 
      (\iota_\pc\labjoin\iota_i)\;(\gpc\cjoin\glab) 
      \rhd \iota_\pc\;\gpc, \proj{\delta}{i} \sepidx{} \{\eskip\}$
      \\\>(2)\> $\proj{\iota_\pc\;\gpc, \delta \sepidx{} \eskip}{i} = \iota_\pc\;\gpc, \proj{\delta}{i} \sepidx{} \eskip$
      \\By (1) and definition of \rulename{P-Pop}
      \\\>(3)\> $(\iota_\pc\labjoin\iota_i)\;(\gpc\cjoin\glab) \rhd \iota_\pc\;\gpc, \proj{\delta}{i} \sepidx{} \{\eskip\} \stackrel{}{\stepsto} \iota_\pc\;\gpc, \proj{\delta}{i} \sepidx{} \eskip$
      \\By (2) and (3), the conclusion holds
    \end{tabbing}

  \item [Case:] $c = \eif\ \epair{\iota_1\;u_1}{\iota_2\;u_2}^\glab\ \ethen\ c_1\ \eelse\ c_2$
    \begin{tabbing}
      By \rulename{P-Lift-If}
      \\\,\,\,\,\=(1) \,\,\,\,\= $\proj{\iota_\pc\;\gpc, \delta \sepidx{} \eif\; \epair{\iota_1\;u_1}{\iota_2\;u_2}^\glab\ \ethen\ c_1\ \eelse\ c_2}{i} \stackrel{}{\stepsto} \proj{\iota_\pc\;\gpc, \delta \sepidx{} \epair{\emptyset, \iota_1, c_j}{\emptyset, \iota_2, c_k}_\glab}{i}$
      \\By projection of commands
      \\\>(2) \>
      $\proj{\iota_\pc\;\gpc, \delta \sepidx{} \eif\; \epair{\iota_1\;u_1}{\iota_2\;u_2}^\glab\ \ethen\ c_1\ \eelse\ c_2}{i} =    
      \iota_\pc\;\gpc, \proj{\delta}{i} \sepidx{} \eif\; (\iota_i\;u_i)^\glab\ \ethen\ c_1\ \eelse\ c_2$
      \\\>(3)\>$\proj{\iota_\pc\;\gpc, \delta \sepidx{} \epair{\emptyset, \iota_1, c_j}{\emptyset, \iota_2, c_k}_\glab}{1} = 
      (\iota_\pc\labjoin\iota_1)\;(\gpc\cjoin\glab) 
      \rhd \iota_\pc\;\gpc, \proj{\delta}{1} \sepidx{} \{c_{j}\}$
      \\\>\> where $c_j$ is $c_1$ if $u_1 = \etrue$ and $c_j$ is $c_2$ if $u_1 = \efalse$ 
      \\\>(4)\>$\proj{\iota_\pc\;\gpc, \delta \sepidx{} \epair{\emptyset, \iota_1, c_j}{\emptyset, \iota_2, c_k}_\glab}{2} = 
      (\iota_\pc\labjoin\iota_2)\;(\gpc\cjoin\glab) 
      \rhd \iota_\pc\;\gpc, \proj{\delta}{2} \sepidx{} \{c_{k}\}$
      \\\>\> where $c_k$ is $c_1$ if $u_2 = \etrue$ and $c_j$ is $c_2$ if $u_2 = \efalse$ 
      \\By \rulename{P-If}
      \\\>(5)\>$\iota_\pc\;\gpc, \proj{\delta}{1} \sepidx{} \eif (\iota_1\;u_1)^\glab\ \ethen\ c_1\ \eelse\ c_2 \stepsto 
      (\iota_\pc\labjoin\iota_1)\;(\gpc\cjoin\glab) \rhd \iota_\pc\;\gpc, \proj{\delta}{1} \sepidx{} \{c_j\}$ 
      \\\>\>where $c_j$ is $c_1$ if $u_1 = \etrue$ and $c_j$ is $c_2$ if $u_1 = \efalse$ 
      \\\>(6)\>$\iota_\pc\;\gpc, \proj{\delta}{2} \sepidx{} \eif (\iota_2\;u_2)^\glab\ \ethen\ c_1\ \eelse\ c_2 \stepsto 
      (\iota_\pc\labjoin\iota_2)\;(\gpc\cjoin\glab) \rhd \iota_\pc\;\gpc, \proj{\delta}{2} \sepidx{} \{c_k\}$ 
      \\\>\>where $c_k$ is $c_1$ if $u_2 = \etrue$ and $c_k$ is $c_2$ if $u_2 = \efalse$ 
      \\By (2), (3), (4), (5), (6), the conclusion holds
    \end{tabbing}
  \end{description}
\end{proof}

\begin{thm}[Completeness]
  \label{thm:completeness}
  If $\forall i\in \{1, 2\}$, $\proj{\kappa, \delta \sepidx{} c}{i} \stackrel{\trace_i}{\stepsto^*} 
  \kappa_i, \delta_i \sepidx{} \m{skip}$ and $\vdash \kappa, \delta \sepidx{} c \ \m{wf}$, then $\exists \kappa', \delta'$ s.t. 
  $\kappa, \delta \sepidx{} c \stackrel{\trace}{\stepsto^*} \kappa', \delta' \sepidx{} \m{skip}$
  with $\proj{\kappa', \delta' \sepidx{} \m{skip}}{i} = \kappa_i, \delta_i \sepidx{} \m{skip}$ and $\proj{\trace}{i} = \trace_i$
\end{thm}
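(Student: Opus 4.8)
The plan is to prove the Completeness theorem (Theorem~\ref{thm:completeness}) by induction on the combined length of the two projected reduction sequences, using Lemma~\ref{lem:proj-complete} as the single-step engine. First I would set up the induction measure as $n = |\trace_1| + |\trace_2|$, the total number of steps taken by the two projected executions (where length here means the number of $\stepsto$ transitions, not the length of the trace of actions, since silent steps also count). In the base case $n = 0$, both projections are already at $\m{skip}$, so by the command-configuration projection rules (Fig.~\ref{fig:proj}) the only command $c$ whose projection is $\m{skip}$ on both sides is $c = \m{skip}$ itself; then $\kappa,\delta\sepidx{} c$ takes zero steps and the conclusion is immediate with $\kappa' = \kappa$, $\delta' = \delta$, $\trace = \cdot$.

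For the inductive step, at least one projection, say $\proj{\kappa,\delta\sepidx{} c}{i}$, is not yet at $\m{skip}$ and therefore takes a step. I would apply Lemma~\ref{lem:proj-complete}: since each projection either takes a step or is already $\m{skip}$, the lemma yields a paired configuration $\kappa',\delta'\sepidx{} c'$ with $\kappa,\delta\sepidx{} c \stepsto^* \kappa',\delta'\sepidx{} c'$ (emitting some $\trace$) such that $\proj{\kappa',\delta'\sepidx{} c'}{i}$ equals the configuration reached after the step(s) that Lemma~\ref{lem:proj-complete} consumed, and $\proj{\trace}{i}$ equals the corresponding prefix $\alpha_i$. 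Concretely, Lemma~\ref{lem:proj-complete} handles one ``unit'' of projected progress; the key is that after invoking it the remaining projected runs from $\proj{\kappa',\delta'\sepidx{} c'}{1}$ and $\proj{\kappa',\delta'\sepidx{} c'}{2}$ to $\m{skip}$ are strictly shorter in total, because at least one projection made genuine progress. I would then invoke Lemma~\ref{lem:wf-pres} to obtain $\vdash \kappa',\delta'\sepidx{} c'\ \m{wf}$, so the induction hypothesis applies to $\kappa',\delta'\sepidx{} c'$ with the shorter residual projected runs, giving $\kappa'',\delta''$ with $\kappa',\delta'\sepidx{} c' \stackrel{\trace'}{\stepsto^*} \kappa'',\delta''\sepidx{}\m{skip}$, $\proj{\kappa'',\delta''\sepidx{}\m{skip}}{i} = \kappa_i,\delta_i\sepidx{}\m{skip}$, and $\proj{\trace'}{i} = \trace_i'$ (the residual traces). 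Concatenating the two reduction sequences and using that trace projection distributes over concatenation ($\proj{\trace,\trace'}{i} = \proj{\trace}{i},\proj{\trace'}{i}$, read off the trace-projection definition in Fig.~\ref{fig:proj}) closes the case.

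The main obstacle I anticipate is ensuring that the induction measure genuinely decreases across the appeal to Lemma~\ref{lem:proj-complete}. That lemma is stated so that from a paired configuration it produces \emph{some} finite paired reduction whose projections match a \emph{single} step on each side (or no step, if a side is $\m{skip}$); the subtle point is the case of a paired command $\epair{\kappa_1,\iota_1,c_1}{\kappa_2,\iota_2,c_2}_\glab$ where the lemma's construction runs the first component to a step and then the second, so the paired side takes possibly two or more \rulename{P-C-Pair}/\rulename{P-Pc} steps while each projection advances by one. I need the measure to be the \emph{sum of the two projected lengths} (not the paired length), which strictly decreases by at least one regardless of how many paired steps were taken, and I must verify via Soundness (Lemma~\ref{lem:soundness}) or direct inspection that no projection can ``overshoot'' --- i.e., the residual projected runs are exactly the original runs minus the consumed prefixes, so they remain valid finite runs ending at $\m{skip}$. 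A secondary care point is the interaction with $\m{skip}$-pairs and the \rulename{P-Skip-Pair}/\rulename{P-Pop} bookkeeping: when one component reaches $\m{skip}$ but the other has not, the projection of that component is already $\m{skip}$ (per the projection rule that strips a finished component), so Lemma~\ref{lem:proj-complete} correctly treats it as the ``no step'' alternative, and I must confirm this alignment holds exactly, which is precisely what the projection-of-pairs clauses in Fig.~\ref{fig:proj} are designed to guarantee.
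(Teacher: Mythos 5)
Your proposal matches the paper's own proof: the paper also proceeds by induction on the number of steps $n_1,n_2$ of the two projected runs, uses Lemma~\ref{lem:proj-complete} to realize one unit of projected progress in the paired execution, appeals to the Soundness theorem to confirm the resulting projections (and trace prefixes) align exactly before applying the induction hypothesis, and handles the base case by observing that both projections being $\m{skip}$ forces $c=\m{skip}$. Your explicit use of Lemma~\ref{lem:wf-pres} to restore well-formedness before re-applying the induction hypothesis is a detail the paper leaves implicit, but it is the same argument.
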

\begin{proof}
  Suppose $\forall i\in \{1, 2\}$, $\proj{\kappa, \delta \sepidx{} c}{i} \stackrel{\trace_i}{\stepsto^{n_i}} \kappa_i, \delta_i \sepidx{} \m{skip}$. 
  By induction over the number of steps $n_1, n_2$ in the two runs. 

  \begin{description}  
  \item [Base Case:] 
    \begin{tabbing}
      \\\,\,\,\,\=(1) \,\,\,\,\=$\forall i\in \{1, 2\}$, $\proj{\kappa, \delta \sepidx{} c}{i} = \kappa_i, \delta_i \sepidx{} \m{skip}$
      \\By (1)
      \\\>(2)\>$c = \eskip$ and $\proj{\delta}{i} = \delta_i$
      \\By assumption
      \\\>(3)\>$\kappa, \delta \sepidx{} c = \kappa', \delta' \sepidx{} \eskip$
      \\By (2), (3)
      \\\>(4)\>$\proj{\kappa', \delta' \sepidx{} \eskip}{i} = \kappa_i, \delta_i \sepidx{} \eskip$
    \end{tabbing}

  \item [Inductive Case:] 
    \begin{tabbing}
      \\By assumption
      \\\,\,\,\,\=(1)\,\,\,\,\=  $\proj{\kappa, \delta \sepidx{} c}{i} \stackrel{\alpha_i}{\stepsto} \kappa_i', \delta_i' \sepidx{} c_i'$ or $\proj{\kappa, \delta \sepidx{} c}{i}$ ends in $\eskip$ and doesn't step
      \\By Lemma~\ref{lem:proj-complete},
      \\\>(2)\> $\exists \kappa_0, \delta_0, c_0.  \kappa, \delta \sepidx{} c \stackrel{\trace'}{\stepsto^*} \kappa_0, \delta_0 \sepidx{} c_0$
      \\By (1), (2), and Theorem~\ref{thm:soundness},
      \\\>(3)\>$\forall i\in \{1, 2\}$, $\proj{\kappa_0, \delta_0 \sepidx{} c_0}{i} = \kappa_i', \delta_i' \sepidx{} c_i'$ and $\proj{\trace'}{i} = \alpha_i$
      \\Conclusion follows by IH
    \end{tabbing}
    
  \end{description}

\end{proof}

\subsection{Preservation}
\label{sec:app-pres}



We define the following constraints on configurations to facilitate
proofs related to paired values and commands. We start by defining
$\iota\in H(\lab_A)$, $\Pi\in H(\lab_A)$ and $\kappa\in H(\lab_A)$
for any observer at level $\lab_A$. 
\begin{mathpar}
  \inferrule*[right=$\iota$-H]{
    \iota= [\lab_l, \lab_r]  \\ 
    \lab_l\not\labless\lab_A
  }{\iota\in H(\lab_A)
  }
  \and
  \inferrule*[right=$\Pi$-H]{
    \Pi = \epair{\iota_1}{\iota_2}
    \\ \iota_i\in H(\lab_A)~\mbox{where}~i\in\{1,2\}
  }{\Pi\in H(\lab_A)
  }
  \and
  \inferrule*[right=$\kappa$-H]{
    \kappa = \iota\;g \rhd \kappa' 
    \\ \iota \in H(\lab_A)
    \\ (\kappa' \in H(\lab_A) \vee \kappa' = \emptyset)
  }{\kappa\in H(\lab_A)
  }
\end{mathpar}
We say a configuration is safe (written
$\vdash \kappa, \delta \sepidx{i} c\ \m{sf}$ for
$i \in \{\cdot, 1, 2\}$) if all of the following hold
\begin{enumerate}
\item if $i\in\{1,2\}$, then $\kappa\in H(\lab_A)$, 
   $\forall x \in \wtsetof(c)$, $\labof(\delta(x))\in H(\lab_A)$ 
\item if $c=\eif\;\epair{\_}{\_}\;\ethen\; c_1\;\eelse\;c_2$,
  then $\forall x\in\wtsetof(c)$, $\labof(\delta(x))\in H(\lab_A)$
\item if $c=\epair{\kappa_1,\iota_1,c_1}{\kappa_2,\iota_2,c_2}_g$,
  then $\forall i\in\{1,2\}$, 
$\iota_i\vdash g\in H(\lab_A)$,
  and $\forall x\in\wtsetof(c)$, $\labof(\delta(x))\in H(\lab_A)$ 
\end{enumerate}

\begin{lem}[Refinement maintains high label]\label{lem:refine-high}
$\iota_1\sqsubseteq\iota_2$ and $\iota_2\in H(\lab_A)$ imply
$\iota_1\in H(\lab_A)$
\end{lem}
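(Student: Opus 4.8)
\textbf{Proof plan for Lemma~\ref{lem:refine-high}.}
The plan is to unfold the two relevant definitions and conclude by a single application of transitivity of $\labless$. Write $\iota_1 = [\lab_l, \lab_r]$ and $\iota_2 = [\lab_l', \lab_r']$. By the definition of $\gsubtp$ on label-intervals (Fig.~\ref{fig:label-op}), $\iota_1 \gsubtp \iota_2$ means precisely $\lab_l' \labless \lab_l$ and $\lab_r \labless \lab_r'$; only the first conjunct will matter. By the definition of $\iota_2 \in H(\lab_A)$ (rule $\iota$-H), we have $\lab_l' \nlabless \lab_A$.

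To establish $\iota_1 \in H(\lab_A)$, it suffices to show $\lab_l \nlabless \lab_A$. I would argue by contradiction: suppose $\lab_l \labless \lab_A$. Combining this with $\lab_l' \labless \lab_l$ and using transitivity of the lattice ordering $\labless$, we get $\lab_l' \labless \lab_A$, contradicting $\lab_l' \nlabless \lab_A$. Hence $\lab_l \nlabless \lab_A$, and by rule $\iota$-H this gives $\iota_1 \in H(\lab_A)$.

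The only thing to be careful about is the orientation of $\gsubtp$ on intervals: ``more precise'' means the interval has a \emph{higher} lower bound (and a lower upper bound), so refinement can only raise the lower endpoint, which is exactly what keeps the interval ``high''. There is no real obstacle here — the statement is an immediate consequence of transitivity of $\labless$ once the definitions are expanded — so I would keep the proof to these few lines rather than a case analysis.

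\begin{proof}
Let $\iota_1 = [\lab_l, \lab_r]$ and $\iota_2 = [\lab_l', \lab_r']$. From $\iota_1 \gsubtp \iota_2$ we obtain $\lab_l' \labless \lab_l$. From $\iota_2 \in H(\lab_A)$ we obtain $\lab_l' \nlabless \lab_A$. Suppose, for contradiction, that $\lab_l \labless \lab_A$. Then by transitivity of $\labless$, $\lab_l' \labless \lab_A$, contradicting $\lab_l' \nlabless \lab_A$. Hence $\lab_l \nlabless \lab_A$, and therefore $\iota_1 = [\lab_l, \lab_r] \in H(\lab_A)$.
\end{proof}
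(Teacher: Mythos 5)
Your proof is correct and follows essentially the same route as the paper, which leaves this lemma as a one-line sketch (``by examining the definitions of the operations''): you unfold the interval-precision rule and the $\iota$-H rule and close with transitivity of $\labless$, which is exactly the intended argument. The orientation of $\gsubtp$ (refinement raises the lower endpoint) is handled correctly, so nothing further is needed.
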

\begin{proofsketch} By examining the definitions of the operations.
\end{proofsketch}

\begin{lem}[Consistent subtyping maintains high]\label{lem:csubtp-high}
$\iota\vdash g\in H(\lab_A)$, $E \vdash g\clabless g'$, and
imply $\iota\bowtie E \vdash g'\in H(\lab_A)$
\end{lem}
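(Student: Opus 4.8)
The plan is to unfold both hypotheses, read off the lower bound of $\iota\bowtie E$ explicitly, and then combine a one-line transitivity argument (for the ``$\nlabless\lab_A$'' conjunct) with Lemma~\ref{lem:evd-subtyping} (for the ``$\gsubtp\gamma(g')$'' conjunct). First I would invert $\iota\vdash g\in H(\lab_A)$ to get $\iota=[\lab_l,\lab_r]$ with $\lab_l\nlabless\lab_A$ and $\iota\gsubtp\gamma(g)$, and invert $E\vdash g\clabless g'$ (the evidence-based consistent-subtyping judgment of Fig.~\ref{fig:label-op}, at the label level) to get $E=(\iota_1,\iota_2)$ with $\iota_1\gsubtp\gamma(g)$, $\iota_2\gsubtp\gamma(g')$, and $g\clabless g'$. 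Write $\iota_1=[\lab_{1l},\lab_{1r}]$ and $\iota_2=[\lab_{2l},\lab_{2r}]$. The conclusion is vacuous when $\iota\bowtie E=\eundef$ (the surrounding cast step aborts), so assume $\iota\bowtie E$ is defined.

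Next I would compute $\iota\bowtie E$ from its definition: $\iota\bowtie E=\iota_2'$ where $\refineof(\iota_1\bowtie\iota,\iota_2)=(\iota_1',\iota_2')$. Unfolding $\bowtie$ on intervals gives $\iota_1\bowtie\iota=[\lab_{1l}\labjoin\lab_l,\ \lab_{1r}\labmeet\lab_r]$, and then the $\refineof$ clause yields $\iota_2'=[\lab_{2l}\labjoin\lab_{1l}\labjoin\lab_l,\ \lab_{2r}]$. So the lower bound of $\iota\bowtie E$ is $\lab_{2l}\labjoin\lab_{1l}\labjoin\lab_l$, which dominates $\lab_l$. Since $\lab_l\nlabless\lab_A$ and $\labless$ is transitive, $\lab_{2l}\labjoin\lab_{1l}\labjoin\lab_l\nlabless\lab_A$ (otherwise $\lab_l\labless\lab_{2l}\labjoin\lab_{1l}\labjoin\lab_l\labless\lab_A$ would contradict the hypothesis); this is exactly the argument used in Lemma~\ref{lem:refine-high}.

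For the remaining premise of the $H(\lab_A)$-membership rule, namely $\iota\bowtie E\gsubtp\gamma(g')$, I would apply Lemma~\ref{lem:evd-subtyping} directly, instantiated with $\glab_1=g$, $\glab_2=g'$, using the inverted facts $E\vdash g\csubtp g'$ and $\iota\gsubtp\gamma(g)$. Combining this with the lower-bound fact from the previous paragraph, the inference rule defining $\cdot\vdash g'\in H(\lab_A)$ applies to $\iota\bowtie E$, giving $\iota\bowtie E\vdash g'\in H(\lab_A)$.

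The only real point of care — and thus the ``hard'' (really just fiddly) part — is the bookkeeping in the middle step: correctly tracking which endpoint becomes the lower bound of $\iota_2'$ through the nested $\bowtie$/$\refineof$ definitions, and observing that the argument never needs $\refineof(\iota_1\bowtie\iota,\iota_2)$ to be total — only that, when it is defined, its second component's lower bound is a join that includes $\lab_l$. Everything else reduces to transitivity of $\labless$ and a single invocation of Lemma~\ref{lem:evd-subtyping}.
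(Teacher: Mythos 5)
Your proposal is correct and follows essentially the same route as the paper's proof: invert the $H(\lab_A)$ judgment, observe from the definition of $\iota\bowtie E$ that its lower bound dominates $\lab_l$ (you just unfold the $\bowtie$/$\refineof$ computation explicitly where the paper cites the definition), conclude $\nlabless\lab_A$ by transitivity, and obtain $\iota\bowtie E \sqsubseteq \gamma(g')$ via Lemma~\ref{lem:evd-subtyping}.
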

\begin{proof}
\begin{tabbing}
\\\quad\= By inversion, $ \iota = [\lab_l, \lab_r]$,
    $\lab_l \not\labless \lab_A$
    $\iota \sqsubseteq \gamma(g)$
\\\> Assume $\iota\bowtie E = (\lab_x, \lab_y)$, 
\\\> By the definition of $\iota\bowtie E$,  $\lab_l\labless \lab_x$
\\\> Therefore, $\lab_x\not\labless \lab_A$
\\\> By Lemma~\ref{lem:evd-subtyping}, $\iota \bowtie E \sqsubseteq \gamma(\glab')$.
\\\> Therefore, $\iota\bowtie E \vdash g'\in H(\lab_A)$
\end{tabbing}
\end{proof}

\begin{lem}[Join maintains high]\label{lem:join-high}
$\iota\vdash g\in H(\lab_A)$, $\iota'\sqsubseteq\gamma(g')$
 imply $\iota\labjoin\iota' \vdash (g\cjoin g')\in H(\lab_A)$
\end{lem}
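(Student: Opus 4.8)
The plan is to prove Lemma~\ref{lem:join-high} by unfolding the definitions of $H(\lab_A)$, of the label-interval join $\labjoin$, and of $\gamma$, and then verifying the two conditions required for $\iota \labjoin \iota' \vdash (g \cjoin g') \in H(\lab_A)$: namely that the lower bound of the resulting interval is not below $\lab_A$, and that the resulting interval is contained in $\gamma(g \cjoin g')$. First I would invert the hypothesis $\iota \vdash g \in H(\lab_A)$ to obtain $\iota = [\lab_l, \lab_r]$ with $\lab_l \nlabless \lab_A$ and $\iota \gsubtp \gamma(g)$. Writing $\iota' = [\lab'_l, \lab'_r]$, the definition of $\labjoin$ gives $\iota \labjoin \iota' = [\lab_l \labjoin \lab'_l, \lab_r \labjoin \lab'_r]$.

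For the first condition, observe that $\lab_l \labless \lab_l \labjoin \lab'_l$, so if $\lab_l \labjoin \lab'_l \labless \lab_A$ held, transitivity would force $\lab_l \labless \lab_A$, contradicting the hypothesis; hence $\lab_l \labjoin \lab'_l \nlabless \lab_A$. For the second condition, I would invoke Lemma~\ref{lem:join-refine} (the ``Join refine'' lemma): since $\iota \gsubtp \gamma(g)$ and $\iota' \gsubtp \gamma(g')$, we get $\iota \labjoin \iota' \gsubtp \gamma(g \cjoin g')$. Combining these two facts with the inference rule defining $\iota'' \vdash g'' \in H(\lab_A)$ yields the conclusion $\iota \labjoin \iota' \vdash (g \cjoin g') \in H(\lab_A)$.

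The argument is essentially routine bookkeeping over the definitions; the only mild subtlety is making sure the monotonicity step $\lab_l \labless \lab_l \labjoin \lab'_l$ is justified directly from the lattice join being an upper bound, and that Lemma~\ref{lem:join-refine} is applicable here (it requires exactly the two precision hypotheses we have, one of which comes from inverting the $H$-membership and the other is given outright). I expect no real obstacle; the main thing to be careful about is not accidentally needing the symmetric $\iota' \vdash g' \in H(\lab_A)$ — we only need $\iota'$ to be a valid witness for $g'$ (i.e. $\iota' \gsubtp \gamma(g')$), which is precisely what is assumed, and the ``high'' status propagates purely from the fact that join only raises the lower bound.
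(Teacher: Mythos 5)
Your proof is correct and follows essentially the same route as the paper's: invert the $H(\lab_A)$ hypothesis, use the fact that the join only raises the lower bound (plus transitivity of $\labless$) to keep the interval high, and apply Lemma~\ref{lem:join-refine} for containment in $\gamma(g \cjoin g')$. The paper's proof is just a terser version of the same argument.
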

\begin{proof}
\begin{tabbing}
\\\quad\= By inversion, $ \iota = [\lab_l, \lab_r]$,
    $\lab_l \not\labless \lab_A$,     $\iota \sqsubseteq \gamma(g)$
\\\> Assume $\iota\labjoin\iota' = (\lab_x, \lab_y)$, 
\\\> By the definition of $\iota\labjoin\iota' $,  $\lab_l\labless \lab_x$
\\\> Therefore, $\lab_x\not\labless \lab_A$
\\\> By Lemma~\ref{lem:join-refine}, 
 $\iota \labjoin \iota' \sqsubseteq \gamma(\glab\cjoin\glab')$.
\\\> Therefore, $\iota\labjoin\iota'  \vdash \glab\cjoin\glab'\in H(\lab_A)$
\end{tabbing}
\end{proof}

\begin{lem}[Preservation (cast)]
\label{lem:preservation-cast}
$\Gamma\vdash v: U_1$,  $E \vdash U_1\csubtp U_2$ 
and $U_2=\tau^{\glab'}$ imply $\Gamma\vdash (E, \glab')\rhd v: U_2$.
\end{lem}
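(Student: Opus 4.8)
The plan is to case on the structure of $v$ --- by well-formedness $v$ is either a single value $(\iota\;u)^\glab$ or a paired value $\epair{\iota_1\;u_1}{\iota_2\;u_2}^\glab$ --- and, in each case, to read off the shape of $(E,\glab')\rhd v$ from its definition in Fig.~\ref{fig:app-rdupd}. I restrict attention to the clauses where $(E,\glab')\rhd v$ is a value; the two $\eundef$ clauses correspond to \rulename{M-Cast-Err}/\rulename{P-Cast-Err}, which step to $\eabort$ and therefore never arise when establishing preservation for a non-aborting reduction. Write $U_1=\tau^{\glab_1}$ and $E=(\iota^E_1,\iota^E_2)$; inverting $E\vdash U_1\csubtp U_2$ with $U_2=\tau^{\glab'}$ gives, from the premises of the consistent-subtyping rule in Fig.~\ref{fig:label-op}, $\iota^E_1\sqsubseteq\gamma(\glab_1)$, $\iota^E_2\sqsubseteq\gamma(\glab')$, $\glab_1\clabless\glab'$, and that the underlying type $\tau$ agrees on both sides.

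For the single-value case I would invert $\Gamma\vdash(\iota\;u)^\glab:U_1$ by \rulename{G-Const}, obtaining $\glab=\glab_1$, $\tau=\Gamma(u)$, and $\iota\sqsubseteq\gamma(\glab_1)$. By the cast clause for non-paired values, $(E,\glab')\rhd(\iota\;u)^\glab=(\iota'\;u)^{\glab'}$ where $\iota'=\iota\bowtie E$. Applying Lemma~\ref{lem:evd-subtyping} (evidence subtyping) to $\iota\sqsubseteq\gamma(\glab_1)$ and $E\vdash\glab_1\csubtp\glab'$ (the label-level content of $E\vdash U_1\csubtp U_2$) yields $\iota'=\iota\bowtie E\sqsubseteq\gamma(\glab')$. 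Since the raw value $u$ is unchanged, $\Gamma(u)=\tau$ still holds, so \rulename{G-Const} gives $\Gamma\vdash(\iota'\;u)^{\glab'}:\Gamma(u)^{\glab'}=\tau^{\glab'}=U_2$.

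For the paired-value case I would invert $\Gamma\vdash\epair{\iota_1\;u_1}{\iota_2\;u_2}^\glab:U_1$ by \rulename{R-V-Pair}, obtaining $\glab=\glab_1$ and, for each $i\in\{1,2\}$, $\Gamma\vdash(\iota_i\;u_i)^{\glab_1}:\tau^{\glab_1}$ (hence $\iota_i\sqsubseteq\gamma(\glab_1)$ and $\Gamma(u_i)=\tau$) together with $\iota_i\vdash\glab_1\in H(\lab_A)$. By the cast clause for pairs, $(E,\glab')\rhd v=\epair{\iota'_1\;u_1}{\iota'_2\;u_2}^{\glab'}$ with $\iota'_i=\iota_i\bowtie E$. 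For each $i$, Lemma~\ref{lem:evd-subtyping} gives $\iota'_i\sqsubseteq\gamma(\glab')$, so \rulename{G-Const} gives $\Gamma\vdash(\iota'_i\;u_i)^{\glab'}:\tau^{\glab'}$; and Lemma~\ref{lem:csubtp-high} (consistent subtyping maintains high) applied to $\iota_i\vdash\glab_1\in H(\lab_A)$ and $E\vdash\glab_1\clabless\glab'$ gives $\iota'_i\vdash\glab'\in H(\lab_A)$. Re-applying \rulename{R-V-Pair} then yields $\Gamma\vdash\epair{\iota'_1\;u_1}{\iota'_2\;u_2}^{\glab'}:\tau^{\glab'}=U_2$.

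The argument is short; the only points that require care are (i) that the pair-typing rule \rulename{R-V-Pair} demands the intervals be ``high'', so the cast must be shown to preserve this, which is exactly the content of Lemma~\ref{lem:csubtp-high}; and (ii) the treatment of the $\eundef$ clauses of the cast, which forces the lemma to be read as a statement about successful casts, consistent with how it is invoked inside the preservation proof, where aborting casts are handled by the error rules. Base-type preservation and the monotonicity/transitivity facts about $\sqsubseteq$ and $\bowtie$ are immediate from the definitions in Fig.~\ref{fig:label-op}, via Lemmas~\ref{lem:interval-refine} and~\ref{lem:evd-subtyping}.
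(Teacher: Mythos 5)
Your proposal is correct and follows essentially the same route as the paper's proof: case on whether $v$ is a single or paired value, invert the value typing and the evidence-based consistent subtyping, use Lemma~\ref{lem:evd-subtyping} to get $\iota\bowtie E\sqsubseteq\gamma(\glab')$, and in the pair case additionally use Lemma~\ref{lem:csubtp-high} to preserve the high-interval side condition before re-applying the pair typing rule. Your explicit remark about the $\eundef$ clauses is a harmless clarification the paper leaves implicit.
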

\begin{proof}
By examining the definition of $(E, \glab')\rhd v$.
\begin{description}
\item[Case:] $v= (\iota\; u)^\glab$
\begin{tabbing}
By assumptions, 
\\
\quad\= (1)\quad\=  
$\iota' = \iota \bowtie E$ and $(E, \glab')\rhd v = (\iota'\;
u)^{\glab'}$
\\By inversion of typing of $v$
\\\>(2)\> $U_1 = \tau^\glab$, $\iota\sqsubseteq \gamma(\glab)$
\\By inversion of subtyping
\\\>(3)\> $E \vdash \glab\csubtp \glab'$
\\By Lemma~\ref{lem:evd-subtyping} on (1), (3)
\\\>(4)\> $\iota' \sqsubseteq \gamma(\glab')$
\\By applying the same typing rule using (4), the conclusion holds
\end{tabbing}
\item[Case:] $v= \epair{\iota_1\; u_1}{\iota_1\; u_1}^\glab$
\begin{tabbing}
By assumptions, 
\\
\quad\= (1)\quad\=  
$\iota'_i = \iota_i \bowtie E$ ($i\in\{1,2\}$) and $(E, \glab')\rhd v =  \epair{\iota'_1\;u_1}{\iota'_2\;u_2}^{\glab'}$
\\By inversion of typing of $v$, 
\\\>(2)\> $U_1 = \tau^\glab$, 
 and for all $i\in\{1,2\}$ $\iota_i\sqsubseteq \gamma(\glab)$,
 $\iota_i\vdash g\in H(\lab_A)$
\\By inversion of subtyping
\\\>(3)\> $E \vdash \glab\csubtp \glab'$
\\By Lemma~\ref{lem:evd-subtyping} on (1), (3)
\\\>(4)\> $\iota' _i\sqsubseteq \gamma(\glab')$
\\By Lemma~\ref{lem:csubtp-high} 
\\\>(5)\> $\iota'_i\vdash g'\in H(\lab_A)$
\\By applying the same typing rule using (4) and (5), the conclusion holds
\end{tabbing}
\end{description}
\end{proof}

\begin{lem}[Preservation (bop)]
\label{lem:preservation-bop}
If $\forall i\in\{1,2\}$, $\Gamma\vdash v_i: \tau^{\glab_i}$, 
then $\Gamma\vdash v_1\bop v_2 : \tau^{\glab_1\cjoin\glab_2}$.
\end{lem}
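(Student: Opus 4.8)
The plan is to proceed by case analysis on the syntactic forms of $v_1$ and $v_2$, mirroring the three defining cases of the operation $v_1 \bop v_2$ in Fig.~\ref{fig:app-rdupd}: (a) both $v_1$ and $v_2$ are ordinary values $(\iota_i\,u_i)^{\glab_i}$; (b) exactly one of them, say $v_1$, is a pair; and (c) both are pairs. In each case I invert the hypotheses $\Gamma \vdash v_i : \tau^{\glab_i}$ using the value typing rules (\rulename{G-Const} for ordinary values and \rulename{R-V-Pair} for pairs), which give, for every interval $\iota$ occurring inside $v_i$, that $\iota \sqsubseteq \gamma(\glab_i)$, and in the pair cases additionally that $\iota \vdash \glab_i \in H(\lab_A)$. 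Inversion also shows that every raw value occurring in $v_1$, resp.\ $v_2$, has type $\tau$, so $u_1 \bop u_2$ again has type $\tau$, matching the $\Gamma(u)$ side condition of the value typing rules.

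For case (a), set $\iota = \iota_1 \labjoin \iota_2$ and $\glab = \glab_1 \cjoin \glab_2$, so that $v_1 \bop v_2 = (\iota\,(u_1 \bop u_2))^\glab$. By Lemma~\ref{lem:join-refine}, from $\iota_1 \sqsubseteq \gamma(\glab_1)$ and $\iota_2 \sqsubseteq \gamma(\glab_2)$ I get $\iota \sqsubseteq \gamma(\glab)$, and then \rulename{G-Const} yields $\Gamma \vdash v_1 \bop v_2 : \tau^\glab$, as required.

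For cases (b) and (c), the result is again a pair $\epair{\iota'\,(u_a \bop u_b)}{\iota''\,(u_c \bop u_d)}^\glab$ whose two component intervals are each the join of an interval drawn from $v_1$ with one drawn from $v_2$, and $\glab = \glab_1 \cjoin \glab_2$. For each component interval I apply Lemma~\ref{lem:join-refine} to establish $\iota' \sqsubseteq \gamma(\glab)$ (and likewise for $\iota''$), and Lemma~\ref{lem:join-high} to establish $\iota' \vdash \glab \in H(\lab_A)$ (and likewise for $\iota''$), using the $H(\lab_A)$ fact extracted from the pair operand(s) and the $\sqsubseteq \gamma(\cdot)$ fact for the other operand. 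Rule \rulename{R-V-Pair} then concludes $\Gamma \vdash v_1 \bop v_2 : \tau^\glab$. The symmetric subcase of (b), in which $v_2$ is the pair, is identical after swapping roles.

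The argument is essentially bookkeeping; the only step needing any care is the preservation of the ``high'' invariant on the component intervals of a paired result in cases (b) and (c), which is exactly what Lemma~\ref{lem:join-high} delivers — the subtle point being that when only one operand is a pair, the high-ness of both component intervals of the result is inherited from that single pair operand, with the non-pair operand contributing only the $\sqsubseteq \gamma(\cdot)$ premise of the lemma.
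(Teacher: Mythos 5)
Your proposal is correct and follows essentially the same route as the paper, whose proof is exactly a sketch of this argument: case analysis on the defining clauses of $v_1 \bop v_2$, with Lemma~\ref{lem:join-refine} supplying the $\sqsubseteq \gamma(\glab_1\cjoin\glab_2)$ premise and Lemma~\ref{lem:join-high} preserving the $H(\lab_A)$ invariant needed by \rulename{R-V-Pair} in the pair cases. Your observation about the mixed case — that both component intervals of the result inherit high-ness from the single pair operand while the non-pair operand only contributes the $\sqsubseteq \gamma(\cdot)$ premise — is exactly how the lemmas are meant to be combined.
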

\begin{proofsketch}
By examining the definitions of $v_1\bop v_2$. Apply
Lemma~\ref{lem:join-refine} and Lemma~\ref{lem:join-high}.
\end{proofsketch}

\begin{lem}[Preservation (expression)] \label{lem:preservation-exp}
If $\ee::\delta \sepidx{i} e \evalsto v$, $\vdash \delta : \Gamma$, and 
$\Gamma \vdash e$, then $\Gamma \vdash v$
\end{lem}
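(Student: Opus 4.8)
The plan is to proceed by induction on the structure of the evaluation derivation $\ee :: \delta \sepidx{i} e \evalsto v$, inverting the typing derivation for $e$ in each case and invoking the matching value-level preservation lemma. There are only four expression rules in play (\rulename{P-Const}, \rulename{P-Var}, \rulename{P-Bop}, \rulename{P-Cast}), and for the top-level index $i = \cdot$ these coincide with \rulename{M-Const}, \rulename{M-Var}, \rulename{M-Bop}, \rulename{M-Cast} after erasing the index, so the argument is uniform in $i \in \{\cdot, 1, 2\}$. Throughout, I write $U$ for the gradual type assigned to $e$ by the given typing derivation and aim to show $\Gamma \vdash v : U$.

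First I would dispatch the two non-recursive cases. For \rulename{P-Const}, $v$ is literally the value $(\iota\;u)^\glab$ occurring in $e$, so the very instance of \rulename{G-Const} that typed $e$ also types $v$. For \rulename{P-Var}, $e = x$ and $v = \eread_i\,\delta(x)$; inverting \rulename{T-S-Ind} in $\vdash \delta : \Gamma$ gives $U = \Gamma(x)$ and $\Gamma \vdash \delta(x) : \Gamma(x)$. If $\delta(x)$ is not a pair, then $\eread_i\,\delta(x) = \delta(x)$ and we are done; if $\delta(x) = \epair{\iota_1\;u_1}{\iota_2\;u_2}^\glab$, then $\eread_i\,\delta(x) = \proj{\delta(x)}{i} = (\iota_i\;u_i)^\glab$, and inverting \rulename{R-V-Pair} yields $\Gamma \vdash (\iota_i\;u_i) : \tau^\glab$ together with $\iota_i \sqsubseteq \gamma(\glab)$, so \rulename{G-Const} re-types the projected component at $\tau^\glab = U$.

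For the two inductive cases I would invert the appropriate typing rule, apply the induction hypothesis to the subderivations, and then close with an auxiliary lemma. In \rulename{P-Bop}: inverting \rulename{G-Bop} gives $\Gamma \vdash e_j : \tau^{\glab_j}$ for $j \in \{1,2\}$ and $U = \tau^{\glab_1 \cjoin \glab_2}$; the IH on the two premises gives $\Gamma \vdash v_j : \tau^{\glab_j}$, and Lemma~\ref{lem:preservation-bop} gives $\Gamma \vdash v_1 \bop v_2 : \tau^{\glab_1 \cjoin \glab_2} = U$. In \rulename{P-Cast}: inverting \rulename{G-Cast} gives $\Gamma \vdash e : \tau^{\glab_1}$, $E \vdash \tau^{\glab_1} \csubtp \tau^\glab$, and $U = \tau^\glab$; the IH on the premise gives $\Gamma \vdash v : \tau^{\glab_1}$, and Lemma~\ref{lem:preservation-cast} gives $\Gamma \vdash (E,\glab)\rhd v : \tau^\glab = U$.

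The only step that needs genuine care is \rulename{P-Var}: it is the sole case where the index $i$ matters, and it hinges on the fact that projecting one component out of a well-typed paired value produces a well-typed value at the same underlying gradual type — which is exactly what inversion of \rulename{R-V-Pair} supplies (including the side condition $\iota_i \sqsubseteq \gamma(\glab)$ needed to re-apply \rulename{G-Const}). Everything else is a mechanical combination of inversion, the induction hypothesis, and Lemmas~\ref{lem:preservation-bop} and~\ref{lem:preservation-cast}, so I expect the write-up to be short.
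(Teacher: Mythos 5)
Your proposal is correct and follows essentially the same route as the paper: induction on the structure of the evaluation derivation, closing the \rulename{P-Cast} and \rulename{P-Bop} cases with Lemma~\ref{lem:preservation-cast} and Lemma~\ref{lem:preservation-bop} respectively. The only difference is that the paper dismisses \rulename{P-Const} and \rulename{P-Var} as straightforward, whereas you spell out the paired-value projection in the variable case via inversion of \rulename{R-V-Pair}, which is exactly the right justification.
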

\begin{proof}
By induction on the structure of $\ee$. The proof is straightforward when $\ee$ ends in
\rulename{P-Const} or \rulename{P-Var}.
\begin{description}
\item[Case:] $\ee$ ends in \rulename{P-Cast}
\begin{tabbing}
By assumptions, 
\\
\quad\= (1)\quad\=  
 $\inferrule*{\delta\sepidx{} e \evalsto v  
  \\ v' = (E, \glab')\rhd v
    }{
      \delta\sepidx{} E^{\glab'} e \evalsto  v'}$
\\\>(2)\> $\vdash \delta : \Gamma$, and  $\Gamma \vdash E^{\glab'} e: U_2$
\\By inversion of typing
\\\>(3)\> $\Gamma\vdash e: U_1 $,  $E \vdash U_1\csubtp U_2$ and $U_2 =\tau^{\glab'}$
\\By I.H. on $e$
\\\>(4)\>$\Gamma\vdash v: U_1$
\\By Lemma~\ref{lem:preservation-cast}, $\Gamma\vdash v': U_2$
\end{tabbing}
\item[Case:] $\ee$ ends in \rulename{P-Bop}
\begin{tabbing}
By assumptions, 
\\
\quad\= (1)\quad\=  
 $\inferrule*{\delta\sepidx{i} e_1 \evalsto v_1 \\
      \delta\sepidx{i} e_2 \evalsto v_2 \\
      v = v_1\bop v_2 
    }{
      \delta\sepidx{i} e_1\bop e_2 \evalsto v
    }$
\\\>(2)\> $\vdash \delta : \Gamma$, and  $\Gamma \vdash e_1\bop e_2 : U$
\\By inversion of typing, $\forall i\in\{1,2\}$
\\\>(3)\> $\Gamma\vdash e_i: \tau^{\glab_i} $,  and $U =\tau^{\glab_1\cjoin\glab_2}$
\\By I.H. on $e_i$
\\\>(4)\>$\Gamma\vdash v_i: \tau^{\glab_i}$
\\By Lemma~\ref{lem:preservation-bop}, $\Gamma\vdash v: U$
\end{tabbing}
\end{description}
\end{proof}

\begin{lem}[PC refinement]\label{lem:pc-refinement}
If $\Gamma ;  \iota_\pc\;\gpc \vdash c$, and
$\iota\sqsubseteq\iota_\pc$, then $\Gamma ;  \iota\;\gpc \vdash c$.
\end{lem}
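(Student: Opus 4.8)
The plan is to prove the statement by structural induction on the derivation of $\Gamma ; \iota_\pc\;\gpc \vdash c$, inspecting each command typing rule of Fig.~\ref{fig:evidence-gradual-typing}. The guiding observation is that the pc-interval $\iota_\pc$ occurs in only two ways: as the side condition $\iota_\pc \gsubtp \gamma(\gpc)$ in \rulename{G-Assign} and \rulename{G-Out}, and in the joined form $\iota_\pc \labjoin \iota_c$ that is threaded into the recursive premises of \rulename{G-If} and \rulename{G-While}, where $\iota_c = \gamma(\glab)$ is fixed by the guard. Replacing $\iota_\pc$ by a more precise $\iota \gsubtp \iota_\pc$ preserves both uses, so each rule can be re-applied after pushing the refinement through.

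Concretely, \rulename{G-Skip} is immediate, and \rulename{G-Seq} follows by applying the induction hypothesis to both subcommands, since each is typed with the same pc-interval $\iota_\pc$. For \rulename{G-Assign} and \rulename{G-Out}, the only premise mentioning the pc-interval is $\iota_\pc \gsubtp \gamma(\gpc)$; from $\iota \gsubtp \iota_\pc$ and transitivity of $\gsubtp$ we obtain $\iota \gsubtp \gamma(\gpc)$, while the remaining premises ($\gpc \clabless \glab$ and the expression typings) are untouched, so the rule fires with $\iota$ in place of $\iota_\pc$. For \rulename{G-If} and \rulename{G-While}, the subderivations have pc-interval $\iota_\pc \labjoin \iota_c$; I would first establish $\iota \labjoin \iota_c \gsubtp \iota_\pc \labjoin \iota_c$, then invoke the induction hypothesis on each subcommand to retype it under $\iota \labjoin \iota_c$, leaving the side conditions $\iota_c = \gamma(\glab)$ and the write-set $X$ unchanged; the original rule then reconstructs the conclusion with pc-interval $\iota$.

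The one fact that needs a separate (routine) argument, and which I expect to be the main obstacle, is monotonicity of the interval join $\labjoin$ with respect to precision $\gsubtp$ in its first argument: $\iota_1 \gsubtp \iota_2$ implies $\iota_1 \labjoin \iota \gsubtp \iota_2 \labjoin \iota$. Unfolding the definitions from Fig.~\ref{fig:label-op}, write $\iota_1 = [\lab_1, \lab_1']$, $\iota_2 = [\lab_2, \lab_2']$ with $\lab_2 \labless \lab_1$ and $\lab_1' \labless \lab_2'$, and $\iota = [\lab_3, \lab_3']$; monotonicity of the lattice join $\labjoin$ yields $\lab_2 \labjoin \lab_3 \labless \lab_1 \labjoin \lab_3$ and $\lab_1' \labjoin \lab_3' \labless \lab_2' \labjoin \lab_3'$, which is exactly $[\lab_1 \labjoin \lab_3, \lab_1' \labjoin \lab_3'] \gsubtp [\lab_2 \labjoin \lab_3, \lab_2' \labjoin \lab_3']$. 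I would state this as a one-line auxiliary lemma and cite it in the \rulename{G-If}/\rulename{G-While} cases; with it in hand the whole induction is direct.
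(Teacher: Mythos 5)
Your proof is correct and follows essentially the same route as the paper, which proves this lemma by a (sketched) induction over the typing derivation of $c$; your case analysis fills in exactly the details that induction requires. The auxiliary monotonicity fact you isolate ($\iota_1 \gsubtp \iota_2$ implies $\iota_1 \labjoin \iota \gsubtp \iota_2 \labjoin \iota$) is already available in the paper as part of Lemma~\ref{lem:closed-under-refinement}, so you could cite it instead of reproving it.
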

\begin{proofsketch}
By induction over the typing derivation of $c$.
\end{proofsketch}

\begin{lem}\label{lem:preservation-refinelv}
If $\vdash v: \tau^\glab$, $\reflvof{i}(\Pi, v) = v'$,
and when $i\in\{1,2\}$ or $\Pi = \epair{\iota}{\iota'}$, $\Pi\in H(\lab_A)$, 
then $\vdash v': \tau^\glab$ and $\labof(v') \in H(\lab_A)$ when $i\in{1,2}$.
\end{lem}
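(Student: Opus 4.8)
The plan is to prove Lemma~\ref{lem:preservation-refinelv} by case analysis on the definition of $\reflvof{i}$ (Fig.~\ref{fig:reflvof-pair}), splitting on whether the index $i$ is a branch index in $\{1,2\}$ or is omitted ($\cdot$), and on whether $v$ is a simple value $(\iota\;u)^\glab$ or a paired value $\epair{\iota_1\;u_1}{\iota_2\;u_2}^\glab$, and (when relevant) whether the context argument $\Pi$ is a single interval or a pair $\epair{\iota_c}{\iota_c'}$. In each case the result $v'$ is built from sub-intervals produced by $\refineof$; since $\reflvof{i}$ is only defined (returns something other than $\eundef$) when all the relevant $\refineof$ calls succeed, I get concrete intervals $\iota_k' $ with $\refineof(\cdot,\iota_k) = (\_,\iota_k')$ for the refined components, and the unrefined components are copied verbatim.

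First I would establish the typing claim $\vdash v' : \tau^\glab$. The raw values $u$ (or $u_1,u_2$) and the gradual label $\glab$ are unchanged by $\reflvof{i}$, so the only obligation in re-applying \rulename{R-V-Pair} (or \rulename{G-Const}) is that each component interval of $v'$ still satisfies $\iota_k' \sqsubseteq \gamma(\glab)$, and, in the paired case, $\iota_k' \vdash \glab \in H(\lab_A)$. For the $\sqsubseteq \gamma(\glab)$ part: by inversion on $\vdash v : \tau^\glab$ I have $\iota_k \sqsubseteq \gamma(\glab)$; Lemma~\ref{lem:interval-refine} gives $\refineof(\cdot,\iota_k) \sqsubseteq \iota_k$, hence $\iota_k' \sqsubseteq \iota_k$, and transitivity of $\sqsubseteq$ closes it. (Components that are copied verbatim are trivial.) For the $H(\lab_A)$ part in the paired case, I do the same thing but invoke Lemma~\ref{lem:refine-high} (``refinement maintains high label'') using $\iota_k' \sqsubseteq \iota_k$ and the hypothesis $\iota_k \vdash \glab \in H(\lab_A)$ obtained by inversion on the typing of the original pair.

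Next I would handle the second conclusion, $\labof(v') \in H(\lab_A)$ when $i\in\{1,2\}$. When $i\in\{1,2\}$, the result $v'$ is always a \emph{pair} (inspect the two $i$-indexed clauses of $\reflvof{i}$: the simple-value clause produces $\epair{\iota_i\;u}{\iota_j\;u}^\glab$ with $\iota_i$ the refined one, and the paired clause produces $\epair{\iota_1'u_1}{\iota_2'u_2}^\glab$). So $\labof(v') = \epair{\iota_1''}{\iota_2''}$ for appropriate $\iota_k''$, and $\labof(v')\in H(\lab_A)$ means each $\iota_k'' \in H(\lab_A)$. The hypothesis supplies $\Pi \in H(\lab_A)$; for the refined component $\iota_i''$ I use the fact that $\refineof$ only raises the lower bound of its second argument (reading off the $\refineof$ definition in Fig.~\ref{fig:label-op}: $\refineof([\lab_{1l},\lab_{1r}],[\lab_{2l},\lab_{2r}]) = (\ldots,[\lab_{2l}\labjoin\lab_{1l},\lab_{2r}])$), so its lower bound is $\labjoin$ed with something, hence stays not-$\labless \lab_A$; for the component that is the $\Pi$-interval copied in (e.g. $\iota_j = \iota$ in the simple clause), membership in $H(\lab_A)$ is immediate from $\Pi\in H(\lab_A)$. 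Alternatively, and more uniformly, I can observe $\labof(v') \sqsubseteq$ (pair of $\Pi$-intervals) componentwise and apply Lemma~\ref{lem:refine-high}, but a direct reading of the lower-bound behavior is cleanest. For the $i = \cdot$ case there is nothing to prove for this second clause. The main obstacle, such as it is, is simply keeping the bookkeeping straight across the roughly six sub-cases of $\reflvof{i}$ — which component is refined, which is copied, and whether $\Pi$ is single or paired — and making sure the $H(\lab_A)$ hypothesis (``when $i\in\{1,2\}$ or $\Pi$ is a pair'') is actually available in exactly the cases where it is needed; no single step is mathematically deep, since Lemmas~\ref{lem:interval-refine} and~\ref{lem:refine-high} do the real work.
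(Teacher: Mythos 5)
Your overall strategy (exhaustive case analysis over the clauses of $\reflvof{i}$, with Lemma~\ref{lem:interval-refine}, transitivity of $\sqsubseteq$, and Lemma~\ref{lem:refine-high} doing the work) is the same as the paper's, but there is a concrete gap in the one case where the lemma is actually delicate: $i\in\{1,2\}$ with a simple value. In the clause $\reflvof{i}(\iota_c,(\iota\;u)^\glab)=\epair{\iota_1\;u}{\iota_2\;u}^\glab$, the unrefined component $\iota_j$ is the \emph{value's} original interval $\iota$, not ``the $\Pi$-interval copied in'' as you claim; so its membership in $H(\lab_A)$ is not ``immediate from $\Pi\in H(\lab_A)$''. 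Nothing in the stated hypotheses makes $\iota$ high: take $v=([\bot,\top]\;u)^?$, $\Pi=[H,H]$, $i=1$, which yields $v'=\epair{[H,\top]\;u}{[\bot,\top]\;u}^?$; the second interval is not in $H(\lab_A)$, so neither $\labof(v')\in H(\lab_A)$ nor the \rulename{R-V-Pair} obligation for $\vdash v':\tau^\glab$ can be discharged as you describe. What rescues this in the development is that $\reflvof{i}$ with $i\in\{1,2\}$ is only invoked (via $\rflof{i}$ in \rulename{P-If-Refine}) on variables in the write set, for which the safety invariant already guarantees $\labof(\delta(x))\in H(\lab_A)$; in effect the argument needs the extra premise $\labof(v)\in H(\lab_A)$ when $i\in\{1,2\}$, which your proof silently (and incorrectly) replaces by the misread of the definition. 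The paper's one-line sketch glosses over the same point, but your write-up asserts a specific justification that is wrong.

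A related, smaller issue: your typing paragraph discharges the $H(\lab_A)$ premises of \rulename{R-V-Pair} only ``by inversion on the typing of the original pair''. That covers the clauses where $v$ is already a pair, but not the clauses where a simple $v$ becomes a pair (the $i\in\{1,2\}$ clause above, and the $i=\cdot$, $\Pi=\epair{\iota}{\iota'}$ clause). For the latter the fix is the lower-bound observation you already use for the second conclusion — the refined interval's lower bound is $\labjoin$ed with the (high) lower bound of the corresponding $\Pi$ component, so it cannot be $\labless\lab_A$ — and you should state it as part of the typing case as well; for the former, no such fix is available from the hypotheses alone, which is exactly the gap described above.
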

\begin{proof}
By examining all the rules. Use Lemma~\ref{lem:interval-refine} to
show that the resulting intervals are still H.
\end{proof}

\begin{lem}\label{lem:preservation-refinevl}
\label{lem:preservation-updval}
If $\vdash v: \tau^\glab$, $\updval{}(\iota, v) = v'$,
then $\vdash v': \tau^\glab$ and $\labof(v')\sqsubseteq\iota$. 
\end{lem}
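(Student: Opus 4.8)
The plan is to prove both conclusions by a short case split on the shape of $v$ (a plain value $(\iota_0\,u_0)^\glab$ or a paired value $\epair{\iota_1\,u_1}{\iota_2\,u_2}^\glab$), since $\updval$ is defined clause-by-clause on that shape in Fig.~\ref{fig:store-aux-single} and Fig.~\ref{fig:app-rdupd}. In each clause $\updval$ leaves the raw value(s) and the gradual label $\glab$ unchanged and only rewrites each component interval $\iota_k$ to $\newlab(\iota,\iota_k)$, and the hypothesis that $\updval(\iota,v)$ is defined (not $\eundef$) tells us each $\newlab(\iota,\iota_k)$ is a valid interval. So the whole argument reduces to two facts about $\newlab$, and no induction is needed because $\updval$ is non-recursive.

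First, $\labof(v')\gsubtp\iota$: this is immediate from Lemma~\ref{lem:restrict-refine} ($\newlab(\iota,\iota_k)\gsubtp\iota$), using that $\labof$ of a pair is the pair of its component intervals and that $\gsubtp$ on pairs is componentwise. Second, $\vdash v':\tau^\glab$: I would invert $\vdash v:\tau^\glab$ with \rulename{G-Const} (resp.\ \rulename{R-V-Pair}) to obtain $\iota_k\gsubtp\gamma(\glab)$ for each $k$, plus $\iota_k\vdash\glab\in H(\lab_A)$ in the paired case, and then check these side conditions survive the rewrite. Writing $\iota=[\lab_{\iota l},\lab_{\iota r}]$ and $\iota_k=[\lab_{kl},\lab_{kr}]$, we have $\newlab(\iota,\iota_k)=[\lab_{\iota l}\labjoin\lab_{kl},\lab_{\iota r}]$: its upper bound is that of $\iota$ and its lower bound is the join $\lab_{\iota l}\labjoin\lab_{kl}\sqsupseteq\lab_{kl}$. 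Thus the lower-bound side of $\newlab(\iota,\iota_k)\gsubtp\gamma(\glab)$ follows from $\iota_k\gsubtp\gamma(\glab)$ and monotonicity of $\labjoin$, and the upper-bound side needs the standing invariant $\iota\gsubtp\gamma(\glab)$ — available at every call site of $\updval$ because $\iota$ is always $\labof(\delta(x))$ for the variable $x$ being written and $\Gamma(x)=\tau^\glab$, so store typing (\rulename{T-S-Ind}) gives $\labof(\delta(x))\gsubtp\gamma(\glab)$. Once $\newlab(\iota,\iota_k)\gsubtp\gamma(\glab)$ is established, the $H(\lab_A)$ premise in the paired case also holds, since the lower bound of $\newlab(\iota,\iota_k)$ dominates $\lab_{kl}\nlabless\lab_A$ and hence is itself $\nlabless\lab_A$; this is exactly the premise of the rule defining $\iota\vdash\glab\in H(\lab_A)$.

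The one genuinely delicate point — and the main obstacle — is precisely that the bare statement as written needs the extra hypothesis $\iota\gsubtp\gamma(\glab)$: without it the typing conjunct can fail for a concrete $\glab$, because $\newlab$ preserves the (possibly too-large) upper bound of the old interval $\iota$. So the first step of the write-up is to make that hypothesis explicit; it is discharged wherever the lemma is used (the \rulename{M-Assign}/\rulename{P-Assign} cases of Theorem~\ref{thm:preservation}) from the well-formedness and typing of the store. Everything else is a mechanical, clause-by-clause check of the $\updval$ definitions, citing only Lemma~\ref{lem:restrict-refine}, monotonicity of $\labjoin$, and inversion of value typing.
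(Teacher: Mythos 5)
Your proposal is correct and follows essentially the same route as the paper's (one-line) proof: a clause-by-clause check of the defining equations of $\updval$, with the refinement conjunct $\labof(v')\gsubtp\iota$ coming from the restriction property of $\newlab$ (Lemma~\ref{lem:restrict-refine}; the paper's sketch cites Lemma~\ref{lem:bowtie-refine}, but since $\updval$ is built from $\newlab$ rather than $\bowtie$, the fact you invoke is the pertinent one). Your additional observation is also correct and worth making explicit: because $\newlab(\iota,\iota_k)$ inherits its \emph{upper} bound from $\iota$, the typing conjunct $\vdash v':\tau^\glab$ does not follow from $\vdash v:\tau^\glab$ alone. For instance $v=([L,L]\,3)^L$ and $\iota=[L,H]$ give $\updval(\iota,v)=([L,H]\,3)^L$, which is not typable at $\tau^L$; so the statement implicitly relies on the side condition $\iota\gsubtp\gamma(\glab)$ that you add, and the paper's sketch silently assumes it. One small correction about where that condition is discharged: in the preservation proof the assignment rule \rulename{P-Assign} uses $\eupdate_i$ and is handled by the separate Lemma~\ref{lem:preservation-upd}, where the needed bound comes from $\vdash v_o:\tau^\glab$; the present lemma on $\updval$ is invoked in the output case (\rulename{P-Out}) with $\iota=[\lab,\lab]$, where the condition holds because \rulename{G-Out} forces the expression's label to be exactly $\lab$, so $\gamma(\glab)=[\lab,\lab]=\iota$ (and in \rulename{M-Assign} it holds via store typing, as you say). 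With that relocation, your argument---validity of the new interval from definedness of $\newlab$, the lower bound of $\gamma(\glab)$ by monotonicity of $\labjoin$, the upper bound from $\iota\gsubtp\gamma(\glab)$, and preservation of the $H(\lab_A)$ witness in the paired clause because the new lower bound dominates the old one---is complete and matches the intended proof.
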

\begin{proofsketch} 
By examining all the rules. Use Lemma~\ref{lem:bowtie-refine}. 
\end{proofsketch}

\begin{lem}\label{lem:preservation-rfl}
If $\vdash \delta: \Gamma$ and $\rflof{i}(\delta, X, \Pi) = \delta'$
and when $i\in\{1,2\}$ or $\Pi = \epair{\iota}{\iota'}$, $\Pi\in H(\lab_A)$, 
then $\vdash \delta': \Gamma$ and 
when $\Pi\in H(\lab_A)$, $\forall x\in X$, $\labof(\delta'(x))\in  H(\lab_A)$
\end{lem}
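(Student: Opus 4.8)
The plan is to prove Lemma~\ref{lem:preservation-rfl} by induction on the structure of the write-set $X$, mirroring the inductive definition of $\rflof{i}$ given in Fig.~\ref{fig:store-aux-single} (for $i=\cdot$) and Fig.~\ref{fig:app-pair-lab-op} (for $i\in\{1,2\}$). The base case is $X = \cdot$: then $\rflof{i}(\delta,\cdot,\Pi) = \delta$, so $\delta' = \delta$ and $\vdash\delta':\Gamma$ holds by assumption; the conclusion about $\labof(\delta'(x))\in H(\lab_A)$ is vacuous since there are no $x\in X$. For the inductive step, write $X = (X', x)$ and $\delta = (\delta'', x\mapsto v)$, so that by definition $\rflof{i}((\delta'',x\mapsto v),(X',x),\Pi) = \delta''', x\mapsto v'$ where $\delta''' = \rflof{i}(\delta'',X',\Pi)$ and $v' = \reflvof{i}(\Pi,v)$ (the rule requires $v'\neq\eundef$, which is part of the hypothesis that $\delta'$ exists). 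From $\vdash\delta:\Gamma$ I invert the store typing rule \rulename{T-S-Ind} to get $\vdash\delta'':\Gamma''$ and $\Gamma''\vdash v:U$ with $\Gamma = \Gamma'', x:U$ and $U = \tau^\glab$; then the induction hypothesis applied to $\delta''$, $X'$ gives $\vdash\delta''':\Gamma''$ and the high-label property for all $x'\in X'$.

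The remaining work is to handle the newly-refined binding $x\mapsto v'$. I would apply Lemma~\ref{lem:preservation-refinelv} to $v = (\iota\,u)^\glab$ (or a paired value), using the side condition that when $i\in\{1,2\}$ or $\Pi$ is a paired interval $\epair{\iota}{\iota'}$, we have $\Pi\in H(\lab_A)$ --- this is exactly the hypothesis of the present lemma, so it transfers directly. Lemma~\ref{lem:preservation-refinelv} then yields $\vdash v':\tau^\glab$, i.e. $\Gamma''\vdash v':U$, and re-applying \rulename{T-S-Ind} gives $\vdash (\delta''', x\mapsto v'):(\Gamma'', x:U) = \Gamma$, which is $\vdash\delta':\Gamma$. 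For the high-label conclusion, note it is only claimed under the extra assumption $\Pi\in H(\lab_A)$; under that assumption Lemma~\ref{lem:preservation-refinelv} (the clause for the interval being high) gives $\labof(v')\in H(\lab_A)$, covering $x$, and the induction hypothesis covers every other variable in $X'$.

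One subtlety to get right is the quantifier discipline on $\Pi\in H(\lab_A)$: the hypothesis "when $i\in\{1,2\}$ or $\Pi = \epair{\iota}{\iota'}$, $\Pi\in H(\lab_A)$" must be threaded unchanged into each recursive call (since $\Pi$ and $i$ do not change in the recursion), and the conclusion "$\Pi\in H(\lab_A)$ implies $\forall x\in X.\ \labof(\delta'(x))\in H(\lab_A)$" must be stated as a conditional that the induction hypothesis also carries. Since $\Pi$ and $i$ are held fixed throughout, this is a matter of bookkeeping rather than a genuine difficulty. The only mild obstacle is making sure the $\eundef$ cases are excluded correctly: the statement presupposes $\rflof{i}(\delta,X,\Pi) = \delta'$ (a well-defined store), so each $\reflvof{i}(\Pi, v)$ along the way is non-$\eundef$, and I invoke Lemma~\ref{lem:preservation-refinelv} only on those. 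I expect the main (and only slightly tricky) step to be the case split inside Lemma~\ref{lem:preservation-refinelv} itself between single values and paired values and between $i=\cdot$ and $i\in\{1,2\}$ --- but that lemma is assumed, so here it reduces to citing it with the right instantiation of $\Pi$ and $i$.
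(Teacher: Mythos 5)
Your proposal is correct and follows essentially the same route as the paper, whose entire proof is ``by induction over the size of $X$, applying Lemma~\ref{lem:preservation-refinelv}''; your write-up simply spells out the base case, the inversion of \rulename{T-S-Ind}, and the bookkeeping of the fixed $\Pi$ and $i$ that the paper leaves implicit. No gaps beyond those already present in the paper's own one-line argument.
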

\begin{proof}
By induction over the size of $X$ and apply Lemma~\ref{lem:preservation-refinelv}.
\end{proof}


\begin{lem}\label{lem:preservation-upd}
If $\vdash v_n: \tau^\glab$, $\vdash v_o: \tau^\glab$,
and when $i\in\{1,2\}$, $\labof(v_o) \in H(\lab_A)$
 and $\labof(v_n) \in H(\lab_A)$, 
then $\vdash \eupdate_i\ v_o\  v_n : \tau^\glab$
\end{lem}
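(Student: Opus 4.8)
\textbf{Proof proposal for Lemma~\ref{lem:preservation-upd}.}
The plan is to proceed by case analysis on the definition of $\eupdate_i\ v_o\ v_n$ as given in Fig.~\ref{fig:app-rdupd}, mirroring the structure already used in Lemma~\ref{lem:preservation-updval} (the single-value $\updval$ case) and Lemma~\ref{lem:preservation-upd}'s companion lemmas. The key observation is that $\eupdate_i$ is defined by whether $i$ is a projection index ($i\in\{1,2\}$) or top-level ($i=\cdot$), and by whether $v_o$ and $v_n$ are pairs or plain values. In every branch the raw value is taken from $v_n$ and the resulting interval(s) are computed by $\newlab$ applied to (projections of) $\labof(v_o)$ and $\labof(v_n)$; so the type $\tau$ and gradual label $\glab$ are unchanged, and the only real content is (a) the new interval is still $\sqsubseteq\gamma(\glab)$, and (b) when $i\in\{1,2\}$, each component interval is still in $H(\lab_A)$.

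First I would dispatch the top-level case $i=\cdot$: if neither $v_o$ nor $v_n$ is a pair, $\eupdate\ v_o\ v_n = (\newlab(\iota_o,\iota_n)\,u_n)^\glab$, and Lemma~\ref{lem:restrict-refine} gives $\newlab(\iota_o,\iota_n)\sqsubseteq\iota_o$; since $\vdash v_o:\tau^\glab$ gives $\iota_o\sqsubseteq\gamma(\glab)$, transitivity of $\sqsubseteq$ closes it. If one of them is a pair, the definition applies $\newlab$ componentwise to the projections $\proj{\labof(v_o)}{k}$ and $\proj{\labof(v_n)}{k}$; from the typing of $v_o$ and $v_n$ (inverting \rulename{R-V-Pair} where a pair is present, using $\proj{\cdot}{k}$ facts from Fig.~\ref{fig:app-rdupd}) each projected source interval is $\sqsubseteq\gamma(\glab)$, so again Lemma~\ref{lem:restrict-refine} plus transitivity gives each component $\sqsubseteq\gamma(\glab)$, and reapplying \rulename{R-V-Pair} yields $\vdash\eupdate\ v_o\ v_n:\tau^\glab$ (the $H(\lab_A)$ premises of \rulename{R-V-Pair} are inherited from the assumptions on $v_o,v_n$, using Lemma~\ref{lem:restrict-refine} together with Lemma~\ref{lem:refine-high}). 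Then I would do the indexed case $i\in\{1,2\}$: here $v_n$ is required to be a plain value $(\iota_n\,u_n)^\glab$ (by well-formedness, individual branches contain no nested pairs), $\proj{v_n}{i}=(\iota_n\,u_n)^{\glab}$, and the result is a pair whose $i$-th component is $\newlab(\proj{\labof(v_o)}{i},\iota_n)$ and whose $j$-th component is the old $j$-th projection of $v_o$. Typing-wise: the $i$-th interval is $\sqsubseteq\proj{\labof(v_o)}{i}\sqsubseteq\gamma(\glab)$ by Lemma~\ref{lem:restrict-refine} and transitivity; the $j$-th interval is just $\proj{\labof(v_o)}{j}$, already $\sqsubseteq\gamma(\glab)$ from $\vdash v_o:\tau^\glab$. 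For the $H(\lab_A)$ requirement of \rulename{R-V-Pair}: the $i$-th component is in $H(\lab_A)$ by Lemma~\ref{lem:refine-high} applied to $\newlab(\proj{\labof(v_o)}{i},\iota_n)\sqsubseteq\proj{\labof(v_o)}{i}\in H(\lab_A)$ (using the assumption $\labof(v_o)\in H(\lab_A)$, which componentwise gives $\proj{\labof(v_o)}{i}\in H(\lab_A)$); the $j$-th component is $\proj{\labof(v_o)}{j}\in H(\lab_A)$ directly. So \rulename{R-V-Pair} reapplies and gives $\vdash\eupdate_i\ v_o\ v_n:\tau^\glab$.

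The main obstacle I anticipate is bookkeeping around the projection/interval operations rather than anything deep: carefully matching which of the four $\eupdate$ rules (and which of the $\eupdate_i$ rules) fires given the pair/non-pair shapes of $v_o$ and $v_n$, and invoking the right projection identities ($\proj{\labof(v)}{k}$, $\proj{v}{k}$) from Fig.~\ref{fig:app-rdupd} so that the componentwise arguments line up. A secondary subtlety is that $\newlab$ is partial: since the lemma's hypothesis asserts $\eupdate_i\ v_o\ v_n = v'$ actually produces a value (not $\eundef$), I can assume every $\newlab$ invocation in the chosen branch succeeds, so I need not reason about the $\eundef$ sub-rules. With those in hand, each case is a two- or three-line application of Lemma~\ref{lem:restrict-refine}, Lemma~\ref{lem:refine-high}, transitivity of $\sqsubseteq$, and reapplication of the value typing rule, exactly paralleling Lemmas~\ref{lem:preservation-refinelv} and~\ref{lem:preservation-updval}.
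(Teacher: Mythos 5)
Your proposal matches the paper's proof in structure and substance: both proceed by case analysis on the definition of $\eupdate_i\ v_o\ v_n$ (top-level update with and without pairs, and the indexed case $i\in\{1,2\}$), argue that the resulting interval(s) stay below $\gamma(\glab)$ via a refinement lemma plus transitivity of $\sqsubseteq$, carry highness through with Lemma~\ref{lem:refine-high}, and reapply the value/pair typing rule (\rulename{R-V-Pair}). If anything, your appeal to Lemma~\ref{lem:restrict-refine} is the more apt citation since $\eupdate$ is defined via $\newlab$ (the paper instead cites Lemmas~\ref{lem:interval-refine} and~\ref{lem:bowtie-refine}), and your treatment is at least as detailed as the paper's own proof.
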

\begin{proof} 
By examining the definition of $\eupdate_i\ v_o\  v_n$
\begin{description}
  \item [Case:] $\eupdate \ v_o\  v_n$ where either $v_o$ or $v_n$ is
    a pair
    \begin{tabbing}
      By inversion of typing rules,  $\labof(v_o)\in H(\lab_A)$ 
       or  $\labof(v_n)\in H(\lab_A)$
       \\By Lemma~\ref{lem:interval-refine},  Lemma~\ref{lem:bowtie-refine}, and 
        Lemma~\ref{lem:refine-high}, 
          $\labof(\eupdate \ v_o\  v_n)\in H(\lab_A)$
      \\ Therefore \rulename{V-Pair} applies and $\vdash \eupdate\ v_o\  v_n : \tau^\glab$
    \end{tabbing}

\item [Case:] $\eupdate \ v_o\  v_n$ where neither $v_o$ nor $v_n$ is
    a pair 
    \begin{tabbing}
      We use Lemma~\ref{lem:interval-refine},  Lemma~\ref{lem:bowtie-refine}, and transitivity of
      $\sqsubseteq$ and apply value typing rule directly.
    \end{tabbing}

\item[Case:] $\eupdate_i \ v_o\  v_n$, where $i\in\{1, 2\}$
\begin{tabbing}
 By assumption, $\labof(v_o) \in H(\lab_A)$ and $\labof(v_n) \in
 H(\lab_A)$
\\ By Lemma~\ref{lem:interval-refine},  Lemma~\ref{lem:bowtie-refine}, and 
        Lemma~\ref{lem:refine-high},  the updated value's interval is
        in $H(\lab_A)$
\\  Therefore \rulename{V-Pair} applies and $\vdash \eupdate_i\ v_o\  v_n : \tau^\glab$
\end{tabbing}
\end{description}

\end{proof}

\begin{lem}[Preservation one-step]
  \label{lem:pres-onestep}
If $\ee::\kappa, \delta \sepidx{i} c \stackrel{\alpha}{\stepsto}
\kappa', \delta' \sepidx{i} c'$, 
$\vdash \delta : \Gamma$ and
$\de::\Gamma; \kappa \vdash_r c$, 
and $\vdash \kappa, \delta\sepidx{i} c\ \m{sf}$ 
then  $\vdash \delta' : \Gamma$, 
$\Gamma; \kappa' \vdash_r c'$, $\vdash \alpha$, and 
 $\vdash \kappa', \delta'\sepidx{i} c'\ \m{sf}$ 
\end{lem}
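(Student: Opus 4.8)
The plan is to proceed by induction on the derivation $\mathcal{E}$ of the one-step reduction $\kappa, \delta \sepidx{i} c \stackrel{\alpha}{\stepsto} \kappa', \delta' \sepidx{i} c'$, doing a case analysis on the last rule applied. For each rule I will need to re-establish four things about the result: the store still types against $\Gamma$, the command still types ($\Gamma; \kappa' \vdash_r c'$), any emitted action is well-typed, and the safety conditions are preserved. The store- and command-typing obligations are the heart of the argument; the action-typing and safety parts are mostly bookkeeping once the interval invariants are in place.

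First I would dispatch the structural/congruence rules --- \rulename{P-Seq}, \rulename{P-Pc}, \rulename{P-C-Pair} --- by inverting the relevant typing rule (\rulename{R-C-Seq}, \rulename{R-Pop}, \rulename{R-C-Pair}), applying the induction hypothesis to the sub-reduction, and reassembling. For \rulename{P-C-Pair} I must be careful that the local $\pc$ stack $\kappa_i$ stays in $H(\lab_A)$ (so that \rulename{R-C-Pair}'s second premise $\iota_i \vdash g \in H(\lab_A)$ and the safety conditions still hold); this follows since the reduction only refines intervals, so Lemma~\ref{lem:refine-high} keeps high labels high, together with Lemma~\ref{lem:kappa-inv} which guarantees the bottom frame of $\kappa_i$ is unchanged. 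Next I would handle the ``administrative'' rules \rulename{P-Pop}, \rulename{P-Skip}, \rulename{P-Skip-Pair}, \rulename{P-While}: these don't touch the store, and the typing of the residual command follows from inverting \rulename{R-Pop} / \rulename{R-End} / \rulename{R-C-Pair} and possibly re-deriving a \rulename{G-While}-to-\rulename{G-If} typing.

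The substantive cases are \rulename{P-Assign}, \rulename{P-Out}, \rulename{P-If} (and \rulename{P-If-Refine}), \rulename{P-Lift-If}, and \rulename{M-Cast} inside expression evaluation. For these I would lean on the expression-preservation lemma (Lemma~\ref{lem:preservation-exp}) to type the evaluated value $v$, then chain the value-operation preservation lemmas: Lemma~\ref{lem:preservation-refinelv} for $\reflvof{}(\iota_\pc,v)$, Lemma~\ref{lem:preservation-updval} for $\updval$, Lemma~\ref{lem:preservation-upd} for $\eupdate_i$, and Lemma~\ref{lem:preservation-rfl} for $\rflof{i}(\delta,X,\cdot)$; each of these both preserves the gradual type and, when fed a high interval, yields a high result. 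For \rulename{P-Assign} in a branch ($i\in\{1,2\}$) the safety hypothesis gives $\iota_\pc \in H(\lab_A)$ and $\labof(\delta(x)) \in H(\lab_A)$ for $x$ in the write-set, which is exactly what Lemmas~\ref{lem:preservation-refinelv} and~\ref{lem:preservation-upd} require to conclude the updated value is high and hence that the pair value it produces is well-typed by \rulename{R-V-Pair}. For \rulename{P-Out} I additionally invoke the trace-typing rules \rulename{T-A-Out}/\rulename{T-A-OutI}, needing $\labof(v'') \labless [\lab,\lab]$ from Lemma~\ref{lem:preservation-updval} (since $v'' = \updval\,[\lab,\lab]\,v'$) and, in the indexed case, $\lab \not\labless \lab_A$ which follows from $\iota_\pc \in H(\lab_A)$ and the evidence on $v'$. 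For \rulename{P-If} I must show the new frame $\iota_\pc \labjoin \iota \; \gpc\cjoin\glab$ is consistent with \rulename{R-C-If}'s premise via Lemma~\ref{lem:join-refine}, and for \rulename{P-Lift-If} I must establish the new command-pair is well-typed, which requires that each branch condition interval $\iota_i$ is in $H(\lab_A)$ --- this is precisely safety condition (2) for $\eif\;\epair{\_}{\_}\;\ethen\cdots$, so the hypothesis $\m{sf}$ hands it to me, and then I re-type each $c_j$ under the augmented $\pc$ using \rulename{PC-refinement} (Lemma~\ref{lem:pc-refinement}) if needed.

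The main obstacle I anticipate is the \rulename{P-If-Refine} case together with maintaining the \emph{safety} invariant across branch entry: when entering $\eif^X$, the monitor calls $\rflof{i}(\delta, X, \iota_\pc\labjoin\labof(v))$, and I need to argue that after this step the write-set variables all have high intervals so that the subsequent \rulename{P-If} / \rulename{P-Lift-If} step can legitimately form a high pair. This is delicate because $\labof(v)$ may itself be a pair interval, and whether $\iota_\pc \labjoin \labof(v)$ lands in $H(\lab_A)$ depends on the branch condition being high --- which is only guaranteed when $v$ is actually a pair (the case that matters for noninterference); in the non-pair case the step simply stays at top level and safety is vacuous for that clause. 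So the argument forks on whether $v$ is a pair, and in the pair case I use Lemma~\ref{lem:join-high} to push the high-ness of $\labof(v)$ through the join, then Lemma~\ref{lem:preservation-rfl} to conclude $\forall x\in X.\ \labof(\delta'(x)) \in H(\lab_A)$. Getting the bookkeeping of which safety clause applies at each intermediate command shape exactly right --- $\eif^X$, then $\eif\;v$, then a command pair --- is the part most likely to need care, but no single step is mathematically deep.
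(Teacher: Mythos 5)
Your proposal follows essentially the same route as the paper's proof: a case analysis over the step rules (the paper phrases it as induction on the structure of $c$, which produces the same cases), with assignments, outputs, and branch entry discharged via Lemmas~\ref{lem:preservation-exp}, \ref{lem:preservation-refinelv}, \ref{lem:preservation-updval}, \ref{lem:preservation-upd}, and \ref{lem:preservation-rfl} together with the high-label lemmas, and the structural rules handled by the induction hypothesis. One small correction: in the \rulename{P-Lift-If} case the highness of each $\iota_i$ needed for \rulename{R-C-Pair} comes from inverting the typing of the paired branch condition (\rulename{R-V-Pair}), not from safety clause (2), which instead supplies the write-set highness used for the new pair's safety condition --- otherwise your plan is the paper's argument.
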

\begin{proof}
  By induction on the structure of $c$ 
  \begin{description}
  \item [Case:] $c = \m{skip}$ 
      \\There is no rule to step $\m{skip}$, so the conclusion holds trivially.
  \item [Case:] $c = c_1;c_2$ 
   \begin{tabbing}
     By inversion of $\de$
      \\ \quad\= (1)~~\= $\kappa = \kappa'\rhd \iota_\pc\;\glab_\pc$,
    $\Gamma; \iota_\pc\;\glab_\pc\vdash c_2$, and $\Gamma; \kappa
    \vdash_r c_1$ or ($\Gamma; \iota_\pc\;\glab_\pc \vdash c_1$ 
    and $\kappa= \iota_\pc\;\glab_\pc$)
      \\ By examining $\ee$, there are two cases: \rulename{P-Seq} or
     \rulename{P-Skip} applies
     \\{\bf Subcase I:} $\ee$ ends in \rulename{P-Seq}
     \\\> By assumption
     \\\>\>(I2)\quad\= $\inferrule*{
      \kappa,\delta\sepidx{i} c_1  \stackrel{\alpha}{\stepsto}  \kappa',\delta' \sepidx{i} c'_1
    }{
      \kappa,\delta\sepidx{i} c_1; c_2
      \stackrel{\alpha}{\stepsto}  \kappa',\delta'\sepidx{i} c'_1; c_2
    }$
    \\\> By I.H. on $c_1$
    \\\>\>(I3)\>$\vdash \delta' : \Gamma$, 
    $\Gamma; \kappa' \vdash_r c'_1$  and $\vdash \alpha$
    \\\> By Lemma~\ref{lem:kappa-inv}, (I3), (1), either
    \rulename{R-End} or \rulename{R-C-Seq} applies
    \\\>\>(I4)\> $\Gamma; \kappa' \vdash_r c'_1; c_2$  
     \\{\bf Subcase II:} $\ee$ ends in \rulename{P-Skip}
     \\\>\>(II2)\>$\kappa, \delta \sepidx{i} \m{skip}; c_2 
       \stepsto \kappa, \delta \sepidx{i} c_2$
       \\\>By (1) and $c_1=\m{skip}$
       \\\>\>(II3)\> $\kappa = \iota_\pc\;\glab_\pc$
       \\\>By (1) and (II3) and \rulename{R-End}
       \\\>\>(II4)\>  $\Gamma; \kappa\vdash_r c_2$
    \end{tabbing}

 
  \item [Case:] $c = x\, :=\,e$ 
\begin{tabbing}
     By inversion of $\de$
      \\ \quad\= (1)~~\= $\kappa = \iota_\pc\;\gpc$,
      $\Gamma \vdash x : \tau^\glab$,
      $\Gamma \vdash e : \tau^\glab$, 
      and $\gpc\clabless \glab$
\\ By examining $\ee$, only \rulename{P-Assign} applies
  \\\>(2)\> $\inferrule*{
      \delta\sepidx{i} e \evalsto v' 
      \\ v'' = \reflvof{}(\iota_\pc, v')
      \\ \delta' = \delta[x\mapsto \eupdate_i\ \delta(x)\  v'']
    }{
      \iota_\pc\;\gpc,\delta \sepidx{i} x:= e
      \stepsto  \iota_\pc\;\gpc, 
      \delta' \sepidx{i} \m{skip}
    }$
\\By Lemma~\ref{lem:preservation-exp},
$\Gamma \vdash v' : \tau^\glab$, 
\\By Lemma~\ref{lem:preservation-refinelv},
$\Gamma \vdash v'' : \tau^\glab$, and $\labof(v'') \in H(\lab_A)$ when
$i\in{1,2}$
\\By assumption, $\labof(\delta(x)) \in H(\lab_A)$
\\By Lemma~\ref{lem:preservation-upd}, 
$\Gamma \vdash \eupdate_i\ \delta(x)\  v'' : \tau^\glab$
\\By store typing $\vdash \delta':\Gamma$
\end{tabbing}


  \item [Case:] $c = \eoutput(\lab, e)$
\begin{tabbing}
     By inversion of $\de$
      \\ \quad\= (1)~~\= $\kappa = \iota_\pc\;\gpc$,
   $\Gamma \vdash e: \tau^\glab$, 
   $ \glab\clabless\lab$, 
   $\gpc \clabless\lab$
 \\ By examining $\ee$, only \rulename{P-Out} applies
 \\\>(2)\> $\inferrule*{
      \delta\sepidx{i} e \evalsto v' 
      \\ v'' = \reflvof{}(\iota_\pc, v')
      \\ v_1 = \updval{}([\lab,\lab], v'')
    }{
      \iota_\pc\;\gpc,\delta \sepidx{i} \eoutput(\lab, e)
      \stackrel{(i, \lab, v_1)}{\stepsto}  \iota_\pc\;\gpc, 
      \delta\sepidx{i} \eskip
    }$
\\By Lemma~\ref{lem:preservation-exp}
\\\>(3)\>$\Gamma \vdash v': \tau^\glab$
\\By Lemma~\ref{lem:preservation-refinelv}
\\\>(4)\>$\Gamma \vdash v'': \tau^\glab$  and
$\labof(v'')\in H(\lab_A)$ when $i\in\{1,2\}$
\\By Lemma~\ref{lem:preservation-updval}
\\\>(5)\>$\Gamma \vdash v_1: \tau^\glab$ and
$\labof(v_1)\sqsubseteq[\lab,\lab]$
\\ By (5)
\\\>(6)\> $\labof(v_1) \labless [\lab, \lab]$
\\By (4), $\labof(v_1)\sqsubseteq\labof(v'')$, and Lemma~\ref{lem:refine-high}
\\\>(7)\> $\labof(v_1)\in H(\lab_A)$ when $i\in\{1,2\}$
\\By \rulename{T-A-*}, $\vdash (i, \lab, v_1)$
 \end{tabbing}


  \item [Case:]  $c = \eif^X e\, \ethen\, c_1\, \eelse \, c_2$
\begin{tabbing}
  By inversion of $\de$
  \\ \quad\= (1)~~\= $\kappa = \iota_\pc\;\gpc$,
  $\Gamma  \vdash e :  \tbool^{\glab_c}$, 
  $\iota_c = \gamma(\glab_c)$,
  \\\>(2)\>  $\Gamma ;  \iota_\pc\labjoin\iota_c\;\gpc\cjoin \glab_c \vdash c_i$,
  where $i\in\{1,2\}$ and 
  \\\>(3)\> $X = \wtsetof(c_1)\cup \wtsetof(c_2)$
  \\ By examining $\ee$, only \rulename{P-If-Refine} applies
  \\\>(4)\>$\inferrule*{
    \delta\sepidx{i} e \evalsto v \\
    \delta' = \rflof{i}(\delta, X, \iota_\pc\labjoin\labof(v))
  }{
    \iota_\pc\;\gpc,\delta\sepidx{i} \eif^X\;e\ \ethen\ c_1\ \eelse\ c_2
    \stepsto  \iota_\pc\;\gpc, \delta' \sepidx{i} \eif\;v\ \ethen\ c_1\ \eelse\ c_2
  }$
  \\By Lemma~\ref{lem:preservation-exp}
  \\\>(5)\> $\Gamma  \vdash v :  \tbool^{\glab_c}$, 
  \\By (1) and \rulename{R-C-If} 
  \\\>(6)\> $\Gamma; \iota_\pc\;\gpc\vdash_r \eif\;v\ \ethen\
  c_1\ \eelse\ c_2$
  \\By Lemma~\ref{lem:preservation-rfl}, $\vdash \delta':
  \Gamma$
  \\ By  $\vdash \iota_\pc\;\gpc, \delta \sepidx{i}  c: \m{sf}$
  \\\>(7)\> $\iota_\pc\in H(\lab_A)$ when $i\in\{1,2\}$
  \\ By Lemma~\ref{lem:preservation-rfl}, 
  \\\>(8)\> when $i\in\{1,2\}$ or $\labof(v)\in H(\lab_A)$,  $\forall x\in X$, $\labof(\delta'(x))\in  H(\lab_A)$. 
  \\By $\m{sf}$ definition,  
 $\vdash \iota_\pc\;\gpc, \delta'\sepidx{i} \eif\;v\ \ethen\ c_1\ \eelse\ c_2\ \m{sf}$ 
\end{tabbing}


  \item [Case:]  $c = \eif\, v\, \ethen\, c_1\, \eelse \, c_2$
 \begin{tabbing}
     By inversion of $\de$
      \\ \quad\= (1)~~\= $\kappa = \iota_\pc\;\gpc$,
    $\Gamma  \vdash e :  \tbool^{\glab_c}$, 
    $\iota_c = \gamma(\glab_c)$,
  $\Gamma ;  \iota_\pc\labjoin\iota_c\;\gpc\cjoin \glab_c \vdash c_i$,
  where $i\in\{1,2\}$
      \\ By examining $\ee$, there are three cases:
      \rulename{P-Lift-If}, \rulename{P-If} applies
      \\{\bf Subcase I:} $v = (\iota\;\etrue)^\glab$ 
     \\\>By assumption, $v = (\iota\;\etrue)^\glab$
      \\\>\>(I2)\quad\= $\inferrule*{
      \\ \iota'_\pc = \iota_\pc\labjoin \iota
      \\ \gpc' =\gpc\cjoin\glab
    }{
      \iota_\pc\;\gpc,\delta\sepidx{i} \eif\; (\iota\;\etrue)^\glab\ \ethen\ c_1\ \eelse\ c_2
      \stepsto  \iota'_\pc\;\gpc'\rhd \iota_\pc\;\gpc, \delta \sepidx{i}  \{c_1\}
    }$
      \\\>By inversion of typing for $e$
      \\\>\>(I3)\> $g_c = g$ 
      \\\>By Lemma~\ref{lem:pc-refinement} and (I3), and (1)
      \\\>\>(I4)\> $\Gamma ;  \iota_\pc\labjoin\iota'_c\;\gpc\cjoin
      \glab_c \vdash c_1$
      \\\>By (I4) and \rulename{R-Pop}, $\Gamma; \iota'_\pc\;\gpc'\rhd
      \iota_\pc\;\gpc \vdash \{c_1\}$.
     
     \\{\bf Subcase II:} $v = (\iota\;\efalse)^\glab$, the proof
     is similar to the previous case.

      \\{\bf Subcase III:} $\ee$ ends in \rulename{P-Lift-If},
      \\\>By assumption, $v = \epair{\iota_1\;u_1}{\iota_2\;u_2}^\glab$
      \\\> \>(III2)\> $\inferrule*{
        i = \{1, 2\}
        \\ c_j = c_1~\mbox{if}~u_1 = \etrue
        \\ c_j = c_2~\mbox{if}~u_1 = \efalse
        \\\\ c_k = c_1~\mbox{if}~u_2 = \etrue
        \\ c_k = c_2~\mbox{if}~u_2 = \efalse
      }{
        \iota_\pc\;\gpc,\delta\sepidx{}
        \eif\; \epair{\iota_1\;u_1}{\iota_2\;u_2}^\glab
        \ \ethen\ c_1\ \eelse\ c_2
        \stepsto  \iota_\pc\;\gpc,\delta\sepidx{} 
        \epair{\emptyset, \iota_{1}, c_j}{
          \emptyset, \iota_{2}, c_k}_g
      }$
      \\\>By inversion of typing for $v$
      \\\>\>(III3)\> $g=g_c$, $\iota_i = \gamma(\glab)$, 
       and $\iota_i\vdash g\in H(\lab_A)$
       \\\>By similar arguments used in the previous subcase,
       \\\>\>(III4)\> $\Gamma ;  \iota_\pc\labjoin\iota_i\;\gpc\cjoin
       \glab_c \vdash c_i$
       \\\>By (III3), (III4) and \rulename{R-C-Pair}, 
  $\Gamma; \iota_\pc\;\gpc \vdash 
        \epair{\emptyset, \iota_{1}, c_j}{
          \emptyset, \iota_{2}, c_k}_g$
        \\\> By $\vdash \iota_\pc\;\gpc, \delta \sepidx{i}  c: \m{sf}$
        \\\>\>(III5)\> $\forall x\in \wtsetof(c_i)$,
        $\labof(\delta(x))\in  H(\lab_A)$
        \\\> By $\iota_i \in H(\lab_A)$,  
        \\\>\>(III6)\>$\iota_i \in H(\lab_A)$, 
        \\\> By (III5) and (III6), $\vdash \iota_\pc\;\gpc, \delta
        \sepidx{} \epair{\emptyset, \iota_{1}, c_j}{
          \emptyset, \iota_{2}, c_k}_g: \m{sf}$
   \end{tabbing}


  \item [Case:]  $c = \ewhile^X e\, \edo\, c'$
   \begin{tabbing}
     By inversion of $\de$
      \\ \quad\= (1)~~\= $\kappa = \iota_\pc\;\gpc$,
    $\Gamma  \vdash e :  \tbool^\glab$, 
    $\iota_c = \gamma(\glab)$,
  $\Gamma ;  \iota_\pc\labjoin\iota_c\;\gpc\cjoin \glab \vdash c'$,
   $X = \wtsetof(c')$
      \\ By examining $\ee$, only \rulename{P-While} applies
      \\\>(2)\>$\iota_\pc\;\gpc,\delta\sepidx{i} \ewhile^X\;e\ \edo\ c'
      \stepsto   \iota_\pc\;\gpc, \delta \sepidx{i}
      \eif^X\;e\ \ethen\ (c'; \ewhile^X\;e\ \edo\ c')\ \eelse\ \m{skip}$
      \\To type the resulting if statement, we need to show the following:
      \\\>\> $\Gamma ;  \iota_\pc\labjoin\iota_c\;\gpc\cjoin \glab 
                          \vdash \ewhile^X\;e\ \edo\ c'$
       \\Because expression typing does not use pc context and
       \\ $\labjoin$ ($\cjoin$) the same interval (label) twice does not
       change the result, 
       \\the conclusion holds
   \end{tabbing}

  \item [Case:]  $c = \{c'\}$ 
    \begin{tabbing}
     By inversion of $\de$
      \\ \quad\= (1)~~\= $\kappa = \kappa_1\rhd \iota_\pc\;\gpc$,
    $\Gamma; \kappa_1\vdash c'$ 
      \\ By examining $\ee$, there are two cases: \rulename{P-Pc} or
     \rulename{P-Pop} applies
     \\{\bf Subcase I:} $\ee$ ends in \rulename{P-Pop}
     \\\>By $c'= \m{skip}$ and (1)
      \\\>\>(I2)\quad\=$\kappa_1 = \iota\; \glab$
     \\\> By assumption
     \\\>\>(I3)\> $\inferrule*{   }{
      \iota\;\glab\rhd \iota_\pc\;\gpc,\delta\sepidx{i} \{\eskip\} 
      \stepsto \iota_\pc\;\gpc,\delta\sepidx{i} \eskip
      }$
      \\\>By \rulename{G-Skip} and \rulename{R-End}
      \\\>\>(I4)\> $\Gamma; \iota_\pc\;\gpc\vdash_r \m{skip}$
   
     \\{\bf Subcase II:} $\ee$ ends in \rulename{P-Pc}
     \\\> By assumption
     \\\>\>(II2)\> $\inferrule*{\kappa,\delta\sepidx{i} c 
       \stackrel{\alpha}{\stepsto}  \kappa',\delta'\sepidx{}c' }{
       \kappa\rhd \iota_\pc\;\gpc,\delta\sepidx{i} \{c\} 
       \stackrel{\alpha}{\stepsto}  \kappa'\rhd\iota_\pc\;\gpc,\delta'\sepidx{i}\{c'\}
           }$
      \\\>By I.H. on $c$
      \\\>\>(II3)\> $\vdash \delta' : \Gamma$, 
      $\Gamma; \kappa' \vdash_r c'$  and $\vdash \alpha$
    \\\> By \rulename{R-Pop}
    \\\>\>(II4)\> $\Gamma; \kappa' \rhd \iota_\pc\;\gpc\vdash_r \{c'\}$  
    \end{tabbing}
    

  \item [Case:]   $c=\epair{\kappa_1, \iota_1, c_1}{\kappa_2, \iota_2, c_2}_\glab$ 
\begin{tabbing}
     By inversion of $\de$
      \\ \quad\= (1)~~\= $\kappa = \iota_\pc\;\gpc$,
       $\Gamma;\kappa_i\rhd(\iota_\pc\labjoin\iota_i)\; (\gpc\cjoin g)
       \vdash_r c_i$,  and $\iota_i\vdash \glab\in\ H$, for  $i\in\{1,2\}$
      \\ By examining $\ee$, there are two cases: \rulename{P-Skip-Pair} or
     \rulename{P-C-Pair} applies
     \\{\bf Subcase I:} $\ee$ ends in \rulename{P-Skip-Pair}
      \\\>\>(I2)\quad\=$\kappa_i = \emptyset$
     \\\> By assumption
     \\\>\>(I3)\> $\inferrule*{ }{
      \iota_\pc\; \gpc, \delta\sepidx{} \epair{\emptyset, \iota_1,
        \eskip }{\emptyset, \iota_2, \eskip}_g    
      \stepsto \iota_\pc\; \gpc,\delta\sepidx{} \eskip }$
      \\\>By \rulename{G-Skip} and \rulename{R-End}
      \\\>\>(I4)\> $\Gamma; \iota_\pc\;\gpc\vdash_r \m{skip}$
   
     \\{\bf Subcase II:} $\ee$ ends in \rulename{P-C-Pair}
     \\\> By assumption
     \\\>\>(II2)\> $\inferrule* {
      \kappa_i\rhd \iota_\pc\labjoin\iota_i\;\gpc\cjoin\glab,\delta
       \sepidx{i} c_i  \stackrel{\alpha}{\stepsto}  
        \kappa'_i\rhd \iota_\pc\labjoin\iota_i\;\gpc\cjoin\glab,\delta'\sepidx{i}c'_i
      \\\\ c_j = c'_j
      \\ \kappa_j = \kappa'_j
      \\ \{i,j\} = \{1,2\}
    }{
      \iota_\pc\;\gpc,\delta\sepidx{} 
      \epair{\kappa_1, \iota_1, c_1}{\kappa_2, \iota_2, c_2}_\glab
      \stackrel{\alpha}{\stepsto}  \iota_\pc\;\gpc,\delta'\sepidx{} 
      \epair{\kappa'_1, \iota_1, c'_1}{\kappa'_2, \iota_2, c'_2}_{\glab}
    }$
      \\\>Assume $c_1$ takes a step. The other case when $c_2$ takes a
      step can be proven similarly.
      \\ By I.H. on $c_1$
      \\\>\>(II3)\> $\vdash \delta' : \Gamma$, 
      $\Gamma; \kappa'  \rhd \iota_\pc\labjoin\iota_i\;\gpc\cjoin\glab\vdash_r c'_1$  
       and $\vdash \alpha$
    \\\> By \rulename{R-C-Pair}, (II3), and (1)
    \\\>\>(II4)\> $\Gamma; \iota_\pc\;\gpc
     \vdash_r \epair{\kappa'_1, \iota_1, c'_1}{\kappa'_2, \iota_2, c'_2}_{\glab}$
    \end{tabbing}
  \end{description}
\end{proof}

\begin{lem}
  \label{lem:preservation}
If $\kappa, \delta \sepidx{} c \stackrel{\alpha}{\stepsto} \kappa', \delta' \sepidx{} c'$
with $\vdash \kappa, \delta, c$ and $\vdash \kappa, \delta \sepidx{} c \ \m{sf}$, then
$\vdash \kappa', \delta', c'$ and $\vdash \kappa', \delta' \sepidx{} c' \ \m{sf}$ 
and $\vdash \alpha$
\end{lem}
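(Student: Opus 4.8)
The plan is to derive Lemma~\ref{lem:preservation} as an immediate corollary of the indexed single-step preservation result, Lemma~\ref{lem:pres-onestep}, which may be assumed. First I would invert the configuration typing hypothesis $\vdash \kappa, \delta, c$: the only rule deriving it is \rulename{T-Conf} (Fig.~\ref{fig:stc-typing}), so there is a typing environment $\Gamma$ with $\vdash \delta : \Gamma$ and $\Gamma; \kappa \vdash_r c$. I would then instantiate Lemma~\ref{lem:pres-onestep} at the top-level index $i = \cdot$, feeding it the step $\kappa, \delta \sepidx{} c \stackrel{\alpha}{\stepsto} \kappa', \delta' \sepidx{} c'$, the store-typing judgment $\vdash \delta : \Gamma$, the command-typing derivation $\Gamma; \kappa \vdash_r c$, and the safety hypothesis $\vdash \kappa, \delta \sepidx{} c\ \m{sf}$. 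This yields $\vdash \delta' : \Gamma$, $\Gamma; \kappa' \vdash_r c'$, $\vdash \alpha$, and $\vdash \kappa', \delta' \sepidx{} c'\ \m{sf}$. Re-applying \rulename{T-Conf} with the \emph{same} $\Gamma$ to $\vdash \delta' : \Gamma$ and $\Gamma; \kappa' \vdash_r c'$ gives $\vdash \kappa', \delta', c'$, and together with $\vdash \alpha$ and the preserved safety judgment this is exactly the conclusion.

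Two small points warrant a sentence of checking. First, Lemma~\ref{lem:pres-onestep} is stated for $i \in \{\cdot, 1, 2\}$, so the top-level case $i = \cdot$ is genuinely covered; moreover the clauses in the definition of $\m{sf}$ behave correctly at $i = \cdot$: the first clause is vacuous (its guard $i \in \{1,2\}$ fails), while the remaining clauses are precisely the conditions Lemma~\ref{lem:pres-onestep} threads through, so the $\m{sf}$ used in the hypothesis and in the conclusion here is the same notion it preserves. Second, the two invocations of \rulename{T-Conf} must use one and the same $\Gamma$; this is immediate because no operational rule changes the domain of the store or the gradual label decorating any store location (assignments and refinements only narrow label-intervals, never the $\Gamma$-assigned types), as guaranteed by Lemmas~\ref{lem:preservation-refinelv}, \ref{lem:preservation-updval}, \ref{lem:preservation-rfl}, and~\ref{lem:preservation-upd} which are invoked inside Lemma~\ref{lem:pres-onestep}.

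I do not expect a genuine obstacle at this level: all the substantive work lives in Lemma~\ref{lem:pres-onestep}, whose proof is a case analysis on the structure of $c$ (and on which operational rule fires), appealing to expression preservation (Lemma~\ref{lem:preservation-exp}), the PC-refinement lemma (Lemma~\ref{lem:pc-refinement}), the $\pc$-stack shape invariant (Lemma~\ref{lem:kappa-inv}), and the label-interval refinement lemmas listed above --- in particular the \rulename{P-Assign}, \rulename{P-If-Refine}, and \rulename{P-C-Pair}/\rulename{P-Lift-If} cases, where one must verify that the refined store remains well-typed and that the $H(\lab_A)$ components of $\m{sf}$ are re-established for a newly created paired command. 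Relative to that, the present lemma is only a repackaging step, and (for completeness) Theorem~\ref{thm:preservation} then follows from it by a routine induction on the length of the reduction sequence, using trace typing (\rulename{T-T-Ind}) to accumulate $\vdash \trace$.
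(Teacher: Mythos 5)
Your proposal is correct and matches the paper's proof, which simply derives the lemma from Lemma~\ref{lem:pres-onestep}; your version just spells out the \rulename{T-Conf} inversion/re-application and the instantiation at $i = \cdot$ explicitly.
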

\begin{proof}
  Follows from Lemma~\ref{lem:pres-onestep}
\end{proof}

\begin{thm}[Preservation]
  \label{thm:preservation}
  If $\kappa, \delta \sepidx{} c \stackrel{\trace}{\stepsto^*} \kappa', \delta' \sepidx{} c'$
with $\vdash \kappa, \delta, c$ and $\vdash \kappa, \delta \sepidx{} c \ \m{sf}$, then $\vdash \kappa', \delta', c'$
and $\vdash \trace$
\end{thm}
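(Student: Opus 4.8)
The plan is to prove Theorem~\ref{thm:preservation} by induction on the length $n$ of the reduction sequence $\kappa, \delta \sepidx{} c \stackrel{\trace}{\stepsto^*} \kappa', \delta' \sepidx{} c'$, carrying along a \emph{strengthened} invariant that includes both configuration typing ($\vdash \kappa, \delta, c$) and safety ($\vdash \kappa, \delta \sepidx{} c\ \m{sf}$). Strengthening is necessary because the single-step result, Lemma~\ref{lem:preservation}, consumes the safety hypothesis in order to re-establish configuration typing and well-typedness of the emitted action; so the statement I actually induct on is: if an $n$-step reduction starts from a configuration that is both well-typed and safe, then the final configuration is well-typed and safe and the emitted trace $\trace$ is well-typed ($\vdash \trace$).

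For the base case $n = 0$ we have $\kappa' = \kappa$, $\delta' = \delta$, $c' = c$, and $\trace$ is the empty trace: well-typedness and safety hold by assumption, and $\vdash \cdot$ holds by \rulename{T-T-Emp}. For the inductive step I would peel off the first step, decomposing the sequence as $\kappa, \delta \sepidx{} c \stackrel{\alpha}{\stepsto} \kappa_1, \delta_1 \sepidx{} c_1 \stackrel{\trace'}{\stepsto^{n-1}} \kappa', \delta' \sepidx{} c'$, with $\trace = \alpha, \trace'$ when $\alpha$ is an output action and $\trace = \trace'$ when $\alpha$ is silent (the multi-step relation appends only non-silent actions, which matches the fact that trace typing in Fig.~\ref{fig:stc-typing} only covers $(\lab, v)$ and $(i, \lab, v)$). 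Applying Lemma~\ref{lem:preservation} to the first step yields $\vdash \kappa_1, \delta_1, c_1$, $\vdash \kappa_1, \delta_1 \sepidx{} c_1\ \m{sf}$, and $\vdash \alpha$. The induction hypothesis then applies to the remaining $n-1$ steps, whose source configuration is now known to be well-typed and safe, giving $\vdash \kappa', \delta', c'$, $\vdash \kappa', \delta' \sepidx{} c'\ \m{sf}$, and $\vdash \trace'$. Finally \rulename{T-T-Ind} combines $\vdash \alpha$ with $\vdash \trace'$ to conclude $\vdash \alpha, \trace' = \trace$, and when $\alpha$ is silent $\trace = \trace'$ is already typed.

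I do not expect a real obstacle at the level of the theorem itself: all the substance lives in Lemma~\ref{lem:pres-onestep} (from which Lemma~\ref{lem:preservation} follows), which proceeds by case analysis on $c$ and leans on the expression-preservation lemma (Lemma~\ref{lem:preservation-exp}), the cast/binary-operation preservation lemmas, the preservation lemmas for $\reflvof{}$, $\updval$, and $\rflof{}$, and the ``high''-preservation lemmas (Lemmas~\ref{lem:refine-high}, \ref{lem:csubtp-high}, \ref{lem:join-high}). The only genuinely delicate points at the theorem level are bookkeeping ones: (i) safety must be an \emph{invariant} threaded through the induction rather than merely a conclusion, since Lemma~\ref{lem:preservation} needs it; and (ii) one must check that the trace convention in $\stackrel{\trace}{\stepsto^*}$ lines up with the cons-structure of trace typing, so that \rulename{T-T-Ind} is directly applicable at each step (if instead one prefers to peel off the \emph{last} step, a trivial trace-concatenation lemma, $\vdash \trace_1$ and $\vdash \trace_2$ imply $\vdash \trace_1, \trace_2$, closes the gap). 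Everything else is routine.
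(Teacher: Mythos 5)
Your proposal is correct and follows essentially the same route as the paper: induction on the length of the reduction sequence, discharging each step with the single-step result (Lemma~\ref{lem:preservation}) and assembling the trace with \rulename{T-T-Ind}, the only cosmetic difference being that you peel off the first step while the paper peels off the last. Your explicit strengthening of the induction invariant to carry safety is in fact the careful way to do it---the paper's stated IH only mentions configuration typing even though Lemma~\ref{lem:preservation} needs (and re-establishes) safety at the intermediate configuration---so your bookkeeping point (i) is well taken.
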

\begin{proof}
  By induction on the number of steps in the sequence. Base case is trivial and follows from assumption.

\noindent Inductive Case: Holds for $n$ steps; To show for $n+1$ steps, i.e.,
  if $\kappa, \delta \sepidx{} c \stackrel{\trace}{\stepsto^n} \kappa', \delta' \sepidx{} c' \stackrel{\alpha}{\stepsto} \kappa'', \delta'' \sepidx{} c'' $, then $\vdash \kappa'', \delta'', c''$ and $\vdash \trace, \alpha$
    \begin{tabbing}
      \,\,\,\,\=(IH) \,\,\,\,\=If $\kappa, \delta \sepidx{} c \stackrel{\trace}{\stepsto^n} \kappa', \delta' \sepidx{} c'$
      with $\vdash \kappa, \delta, c$, then $\vdash \kappa', \delta', c'$ and $\vdash \trace$
      \\By Lemma~\ref{lem:preservation} and IH
      \\\>(1)\>$\vdash \kappa'', \delta'', c''$ and $\vdash \trace$ and $\vdash \alpha$
      \\By \rulename{T-T-Ind}, conclusion holds
    \end{tabbing}
\end{proof}

\subsection{Noninterference}
\label{sec:app-ni}
We first define when a gradual label of an initial store location is not
observable by the attacker. Formally: $g\in H(\lab_A)$ iff $g=\lab$ and $\lab\not\labless\lab_A$. 

We define merging of two stores ($\delta_1 \bowtie \delta_2$) as below:
\begin{mathpar}
  \inferrule*[right=MgS-Emp]{ }{
    \Gamma \vdash \cdot \bowtie \cdot = \cdot
  }
  \and
  \inferrule*[right=MgS-H]{
    \Gamma \vdash \delta_1 \bowtie \delta_2 = \delta 
    \\  \mathit{lab}(\Gamma(x)) \in H(\lab_A)
    \\ v_i = (\iota_i\; u_i)^\glab (i\in\{1,2\}
  }{
    \Gamma \vdash \delta_1, x \mapsto v_1 \bowtie \delta_2, x \mapsto v_2
    = \delta, x \mapsto \epair{\iota_1\;u_1}{\iota_2\; u_2}^\glab
  }
  \and
  \inferrule*[right=MgS-L]{ \Gamma \vdash \delta_1 \bowtie \delta_2 = \delta \\
    \mathit{lab}(\Gamma(x)) \not\in H(\lab_A) \\ v_1 = v_2 = v}{
    \Gamma \vdash \delta_1, x \mapsto v_1 \bowtie \delta_2, x \mapsto v_2
    = \delta, x \mapsto v
  }
\end{mathpar}

\begin{lem}\label{lem:trace-proj-eq}
  If  
  $\vdash \trace $
  then $\vdash \proj{\trace}{1}\approx_{\lab_A}
  \proj{\trace}{2}$ 
\end{lem}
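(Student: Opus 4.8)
The plan is to proceed by induction on the derivation of $\vdash \trace$. The base case is rule \rulename{T-T-Emp}, where $\trace = \cdot$; then $\proj{\trace}{1} = \proj{\trace}{2} = \cdot$ by the trace-projection definition, and $\vdash \cdot \approx_{\lab_A} \cdot$ holds by \rulename{EqT-E}. For the inductive step the derivation ends in \rulename{T-T-Ind}, so $\trace = \alpha, \trace'$ with sub-derivations of $\vdash \alpha$ and $\vdash \trace'$, and the induction hypothesis gives $\vdash \proj{\trace'}{1} \approx_{\lab_A} \proj{\trace'}{2}$. Since $\vdash \alpha$ is derivable only by \rulename{T-A-Out} or \rulename{T-A-OutI}, I split on the shape of $\alpha$ (in particular there are no silent actions inside a well-typed trace).

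If $\alpha = (i, \lab, v)$ for some $i \in \{1,2\}$, inverting \rulename{T-A-OutI} yields $\lab \nlabless \lab_A$. Taking $i = 1$ (the case $i = 2$ is symmetric), the trace-projection definition gives $\proj{\trace}{1} = (\lab, v) :: \proj{\trace'}{1}$ and $\proj{\trace}{2} = \proj{\trace'}{2}$; applying \rulename{EqT-Hl} to the induction hypothesis with $\lab \nlabless \lab_A$ yields $\vdash \proj{\trace}{1} \approx_{\lab_A} \proj{\trace}{2}$ (for $i = 2$, use \rulename{EqT-Hr} instead).

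If $\alpha = (\lab, v)$, inverting \rulename{T-A-Out} gives $\vdash v : U$ and $\vdash \labof(v) \labless [\lab, \lab]$, and $\proj{\trace}{k} = (\lab, \proj{v}{k}) :: \proj{\trace'}{k}$ for $k \in \{1,2\}$. When $\lab \nlabless \lab_A$, I apply \rulename{EqT-Hl} and then \rulename{EqT-Hr} to the induction hypothesis. When $\lab \labless \lab_A$, I first argue that $v$ cannot be a paired value: if $v = \epair{\iota_1\,u_1}{\iota_2\,u_2}^\glab$, then inverting \rulename{R-V-Pair} gives $\iota_i \vdash \glab \in H(\lab_A)$ for $i \in \{1,2\}$, so the lower endpoint of each $\iota_i$ is not $\preccurlyeq \lab_A$; but $\labof(v) = \epair{\iota_1}{\iota_2}$ and $\vdash \labof(v) \labless [\lab,\lab]$ (using the extension of $\labless$ to paired intervals from the appendix) force those endpoints to be $\labless \lab \labless \lab_A$, a contradiction. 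Hence $v = (\iota\,u)^\glab$ is an ordinary value, so $\proj{v}{1} = \proj{v}{2} = v$ by the value-projection definition, and \rulename{EqT-L} applies with $\lab \labless \lab_A$ and $v_1 = v_2 = v$, discharging the goal.

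The main obstacle is precisely that last sub-case, where $\lab \labless \lab_A$ and we must rule out $v$ being a pair: this is where the design pays off, since a paired value is well-typed only when both of its component intervals are ``high'' (their lower endpoints are not $\preccurlyeq \lab_A$), whereas \rulename{T-A-Out} requires the emitted value's interval to lie below the channel label $\lab$, and the two demands are jointly unsatisfiable when $\lab$ is observable. Everything else is routine: unfolding the trace-, value-, and interval-projection definitions as claimed, and noting that $\vdash \alpha$ admits no shapes beyond $(\lab,v)$ and $(i,\lab,v)$.
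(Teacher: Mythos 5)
Your proof is correct and follows essentially the same route as the paper's: induction over the trace, inversion of the action typing $\vdash \alpha$, and closing each case with \rulename{EqT-L}, or \rulename{EqT-Hl} followed by \rulename{EqT-Hr}. The only difference is that you make explicit why a low output $(\lab, v)$ with $\lab \labless \lab_A$ cannot carry a paired value (inverting \rulename{R-V-Pair} and using interval validity against $\vdash \labof(v) \labless [\lab,\lab]$), a step the paper's proof leaves implicit and only justifies informally in the surrounding text.
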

\begin{proof}
  By induction on the structure of $\trace$. The base case is trivial. 
  \begin{description}
  \item[Case:] $\trace=\alpha,\trace'$
    \begin{tabbing}
      By inversion of $\vdash\trace$
      \\
      \quad\= (1)\quad\=   $\vdash \alpha$ and $\vdash \trace'$
      \\By I.H. on $\trace'$
      \\\>(2)\> $\vdash \proj{\trace'}{1}\approx_{\lab_A}  \proj{\trace'}{2}$ 
      \\By inversion of $\vdash \alpha$, we have two cases
      \\{\bf Subcase I:} $\alpha = (\lab, v)$
      \\\>By projection rules
      \\\>\>(I3)\quad\= $\proj{\alpha, \trace'}{1}= \alpha,
      \proj{\trace'}{1}$, and $\proj{\alpha, \trace'}{2}= \alpha,
      \proj{\trace'}{2}$
      \\\> By applying either \rulename{EqT-L} or \rulename{EqT-H-l} then
      \rulename{EqT-H-r} on (2), 
      $\vdash \proj{\trace}{1}\approx_{\lab_A} \proj{\trace}{2}$ 
      \\{\bf Subcase II:} $\alpha = (i, \lab, v)$, we show the case when
      $i=1$, the other case when $i=2$ is similar
      \\\> By projection rules
      \\\>\>(II3)\> $\proj{\alpha, \trace'}{1}= (\lab,v),
      \proj{\trace'}{1}$, and $\proj{\alpha, \trace'}{2}= \proj{\trace'}{2}$
      \\\> By typing of $\alpha$
      \\\>\>(II4)\>$\vdash \lab \not\labless \lab_A$
      \\\>By (II4) and \rulename{EqT-H-l} and (2), $\vdash \proj{\trace}{1}\approx_{\lab_A} \proj{\trace}{2}$ 
    \end{tabbing}
  \end{description}
\end{proof}

\begin{lem}
  \label{lem:store-projection}
  If $\vdash \delta_1 \approx_{\lab_A} \delta_2 : \Gamma$ then $\exists
  \delta$ s.t. $\Gamma \vdash\delta_1 \bowtie \delta_2 = \delta$, $\vdash \delta : \Gamma$ and
  $\proj{\delta}{i} = \delta_i$
\end{lem}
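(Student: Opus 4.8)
The plan is to prove the lemma by induction on the derivation of the store‑equivalence judgment $\vdash \delta_1 \approx_{\lab_A} \delta_2 : \Gamma$, of which there are exactly two rules, \rulename{EqS-Emp} and \rulename{EqS-Ind}. The base case (\rulename{EqS-Emp}) is immediate: take $\delta = \cdot$; then $\Gamma \vdash \cdot \bowtie \cdot = \cdot$ by \rulename{MgS-Emp}, $\vdash \cdot : \cdot$ by \rulename{T-S-Emp}, and $\proj{\cdot}{i} = \cdot$ by the store‑projection rules. For the inductive step we have $\delta_1 = \delta_1', x\mapsto v_1$, $\delta_2 = \delta_2', x\mapsto v_2$, and $\Gamma = \Gamma', x:U$, with sub‑derivations $\vdash \delta_1'\approx_{\lab_A}\delta_2':\Gamma'$ and $\vdash v_1\approx_{\lab_A}v_2:U$. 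Applying the induction hypothesis to the former yields $\delta'$ with $\Gamma'\vdash\delta_1'\bowtie\delta_2'=\delta'$, $\vdash\delta':\Gamma'$, and $\proj{\delta'}{i}=\delta_i'$.

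Next I case‑split on the last rule of the value‑equivalence derivation. If it is \rulename{EqV-L}, then $v_1 = v_2$ is a single non‑pair value $(\iota\;u)^\glab$ with $\glab\clabless\lab_A$ and $\vdash (\iota\;u)^\glab : U$; from $\glab\clabless\lab_A$ one checks that $\glabof(U)=\glab\notin H(\lab_A)$ (if $\glab=\lab$ this is $\lab\labless\lab_A$, and $?\notin H(\lab_A)$ outright), so \rulename{MgS-L} extends the merge to $\delta = \delta', x\mapsto v_1$, which is well‑typed by \rulename{T-S-Ind}; since $v_1$ is not a pair, $\proj{v_1}{i}=v_1=v_i$, hence $\proj{\delta}{i}=\delta_i', x\mapsto v_i=\delta_i$. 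If it is \rulename{EqV-H}, then $v_1=(\iota_1\;u_1)^\glab$, $v_2=(\iota_2\;u_2)^\glab$ with $\iota_i\vdash\glab\in H(\lab_A)$ and $\vdash(\iota_i\;u_i)^\glab : U$ for each $i$; I apply \rulename{MgS-H} to produce $\delta = \delta', x\mapsto\epair{\iota_1\;u_1}{\iota_2\;u_2}^\glab$, invoke \rulename{R-V-Pair} (whose three premises are exactly those of \rulename{EqV-H}, after inverting \rulename{G-Const} to recover $\Gamma\vdash\iota_i\;u_i:\tau^\glab$ and $\iota_i\sqsubseteq\gamma(\glab)$) to obtain $\vdash\delta:\Gamma$, and observe that $\proj{\epair{\iota_1\;u_1}{\iota_2\;u_2}^\glab}{i}=(\iota_i\;u_i)^\glab=v_i$, so again $\proj{\delta}{i}=\delta_i$.

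The one place needing care is the side condition $\glabof(\Gamma(x))\in H(\lab_A)$ demanded by \rulename{MgS-H}. Here $\glabof(\Gamma(x))=\glab$, and the witness $\iota_i\vdash\glab\in H(\lab_A)$ gives $\iota_i=[\lab_l,\lab_r]$ with $\lab_l\not\preccurlyeq\lab_A$ and $\iota_i\sqsubseteq\gamma(\glab)$. When $\glab$ is a concrete label $\lab$, $\iota_i\sqsubseteq\gamma(\lab)=[\lab,\lab]$ forces $\iota_i=[\lab,\lab]$, so $\lab=\lab_l\not\preccurlyeq\lab_A$ and hence $\glab\in H(\lab_A)$ in the store‑location sense; for a $?$-labeled location, the standard well‑formedness of the initial store ($?$-locations carry $[\bot,\top]$ and identical values in the two runs) means only \rulename{EqV-L} is available there, so this case does not genuinely arise (and even when $v_1=v_2$ one may simply fall back on \rulename{MgS-L}, whose side condition holds vacuously since $?\notin H(\lab_A)$). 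Everything else is a routine unfolding of the merge, store‑typing, and store/value‑projection definitions together with the induction hypothesis, so I expect this side‑condition bookkeeping — not any deep argument — to be the only real obstacle.
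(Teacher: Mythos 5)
Your proof follows essentially the same route as the paper's: induction on the derivation of $\vdash \delta_1 \approx_{\lab_A} \delta_2 : \Gamma$, the induction hypothesis on the store tail, a case split on \rulename{EqV-L} versus \rulename{EqV-H}, and then \rulename{MgS-L} / \rulename{MgS-H} together with \rulename{T-S-Ind} (via \rulename{R-V-Pair} for the merged pair) and the value/store projection definitions. The only place you diverge is the side-condition discussion for \rulename{MgS-H}. Your argument for concrete labels (that $\iota_i \sqsubseteq \gamma(\lab) = [\lab,\lab]$ forces $\iota_i = [\lab,\lab]$, hence $\lab \not\labless \lab_A$ and so $\lab \in H(\lab_A)$) is correct and is actually more explicit than the paper, whose proof applies \rulename{MgS-H} without checking that premise. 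However, your dismissal of the $?$-labeled case leans on ``standard well-formedness of the initial store'' and on $v_1 = v_2$, neither of which is a hypothesis of the lemma: \rulename{EqV-H} does permit $x:\tau^{?}$ with $u_1 \neq u_2$ and high intervals, in which case \rulename{MgS-L} is unavailable and, under the literal definition $g \in H(\lab_A)$ iff $g = \lab$ with $\lab \not\labless \lab_A$, the \rulename{MgS-H} premise fails. That wrinkle lies in the paper's definitions (its own proof silently applies \rulename{MgS-H} there), so your proof is not worse off than the paper's, but the justification you offer for skipping that case is not one the lemma statement licenses; it would be cleaner to either read the \rulename{MgS-H} premise as witnessed by $\iota_i \vdash \glab \in H(\lab_A)$ or flag the mismatch rather than import an initial-store assumption.
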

\begin{proof}
  By induction on the structure of the equivalence definition. Base case is trivial

\noindent Inductive Case: \rulename{EqS-Ind}. 
    \begin{tabbing}
      By assumption
      \\
      \quad\= (1)\quad\=  $\inferrule*{ 
        \ee_1:: \vdash \delta_1 \approx_{\lab_A} \delta_2: \Gamma
        \\ \ee_2::\vdash v_1\approx_{\lab_A} v_2 : U
      }{
        \vdash \delta_1, x\mapsto v_1 \approx_{\lab_A} \delta_2, x\mapsto
        v_2 :\Gamma, x:U
      }$
      \\ By I.H. on $\ee_1$
      \\\>(2)\> $\exists
      \delta$ s.t. $\Gamma \vdash\delta_1 \bowtie \delta_2 = \delta$, $\vdash \delta : \Gamma$ and
      $\proj{\delta}{i} = \delta_i$
      \\By inversion of $\ee_2$, there are two subcases:
      \\{\bf Subcase I:} $\ee_2$ ends in \rulename{EqV-L}
      \\\> By assumption
      \\\>\>(I3)\quad\= $v_i = (\iota\; u)^\glab$, $\glab\clabless
      \lab_A$,  $\iota\sqsubseteq \gamma(\glab)$ and $\vdash (\iota\;
      u)^\glab : U$ 
      \\\> By casing on $g$ and (I3), 
      \\\>\>(I4)\> $g\not\in H(\lab_A)$
      \\\>By \rulename{MgS-L} on (I3), (I4), (2)
      \\\>\>(I5)\> $\Gamma, x:U \vdash\delta_1, x\mapsto v_1 \bowtie
      \delta_2,  x\mapsto v_1 = \delta, x\mapsto (\iota\; u)^\glab$, 
      \\\>By \rulename{T-S-Ind}, (2), and (I3)
      \\\>\>(I6)\> $\vdash \delta, x\mapsto (\iota\; u)^\glab: \Gamma, x: U$ 
      \\{\bf Subcase II:} $\ee_2$ ends in \rulename{EqV-H}
      \\\> By assumption
      \\\>\>(II3)\quad\= $v_i = (\iota_i\; u_i)^\glab$, $\iota_i \vdash
      \glab\in H(\lab_A)$,  and $\vdash (\iota_i\;u_i)^\glab : U$  where $i\in\{1,2\}$
      \\\>By \rulename{MgS-H} on (II3),  (2)
      \\\>\>(II4)\> $\Gamma, x:U \vdash\delta_1, x\mapsto v_1 \bowtie
      \delta_2,  x\mapsto v_1 = \delta, x\mapsto \epair{\iota_2\;u_2}{\iota_2\;u_2}^\glab$, 
      \\\>By \rulename{T-S-Ind}, (2), and (II3)
      \\\>\>(II5)\> $\vdash \delta, x\mapsto \epair{\iota_1\;u_1}{\iota_2\;u_2}^\glab: \Gamma, x: U$ 
    \end{tabbing}
\end{proof}



\begin{thm}[Noninterference]
  Given an adversary label $\lab_A$, a program $c$, and two stores $\delta_1$,
  $\delta_2$, s.t., $\vdash \delta_1\approx_{\lab_A}\delta_2: \Gamma$,
  and $\Gamma; [\bot, \bot]\; \bot\vdash c$, and $\forall i\in\{1,2\}$, $[\bot,\bot]\; \bot,
  \delta_i \sepidx{} c \stackrel{\trace_i}{\stepsto^*}
  \kappa_i,\delta'_i\sepidx{}\m{skip}$, then $\vdash \trace_1\approx_{\lab_A}\trace_2$.
\end{thm}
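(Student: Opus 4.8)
\emph{Approach.} The plan is to reduce the theorem to the already-established Soundness, Completeness, and Preservation results for the paired semantics, following the FlowML-style recipe sketched in the introduction. From the two $\lab_A$-equivalent input stores I will build a single \emph{merged} store $\delta$ and run the paired execution of $c$ on it; Completeness then forces the paired run to terminate while mirroring both single runs and producing a single trace $\trace$ with $\proj{\trace}{i}=\trace_i$; Preservation shows that $\trace$ is well-typed ($\vdash\trace$); and the trace-projection lemma turns well-typedness of $\trace$ into $\lab_A$-equivalence of its two projections, which is exactly the conclusion.

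\emph{Main steps.} First, apply Lemma~\ref{lem:store-projection} to $\vdash\delta_1\approx_{\lab_A}\delta_2:\Gamma$ to obtain a store $\delta$ with $\Gamma\vdash\delta_1\bowtie\delta_2=\delta$, $\vdash\delta:\Gamma$, and $\proj{\delta}{i}=\delta_i$ for $i\in\{1,2\}$. Write $\kappa_0=[\bot,\bot]\,\bot$. Since $c$ is a source program it contains no command pairs, no scoping braces, and no intermediate $\eif\;v\;\ethen\ldots$ node, so $\vdash\kappa_0,\delta\sepidx{} c\ \m{wf}$ holds, and (because the index is $\cdot$ and $c$ is neither a command pair nor a branch on a pair) all clauses of $\vdash\kappa_0,\delta\sepidx{} c\ \m{sf}$ hold vacuously; from $\Gamma;\kappa_0\vdash c$ with \rulename{R-End} and $\vdash\delta:\Gamma$ with \rulename{T-Conf} we also get $\vdash\kappa_0,\delta,c$. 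By Lemma~\ref{lem:wf-proj}, $\proj{\kappa_0,\delta\sepidx{} c}{i}=\kappa_0,\delta_i\sepidx{} c$, which is precisely the starting configuration of the $i$-th given run, so $\forall i$, $\proj{\kappa_0,\delta\sepidx{} c}{i}\stackrel{\trace_i}{\stepsto^*}\kappa_i,\delta'_i\sepidx{}\m{skip}$. Now Completeness (Theorem~\ref{thm:completeness}) supplies $\kappa',\delta'$ with $\kappa_0,\delta\sepidx{} c\stackrel{\trace}{\stepsto^*}\kappa',\delta'\sepidx{}\m{skip}$ and $\proj{\trace}{i}=\trace_i$. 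Preservation (Theorem~\ref{thm:preservation}), applied with $\vdash\kappa_0,\delta,c$ and $\vdash\kappa_0,\delta\sepidx{} c\ \m{sf}$, yields $\vdash\trace$. Finally, Lemma~\ref{lem:trace-proj-eq} gives $\vdash\proj{\trace}{1}\approx_{\lab_A}\proj{\trace}{2}$; rewriting by $\proj{\trace}{i}=\trace_i$ gives $\vdash\trace_1\approx_{\lab_A}\trace_2$.

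\emph{Expected difficulty.} Essentially all the technical weight sits inside the three theorems being invoked, so the assembly is short; the step that needs genuine care is verifying the hypotheses on the \emph{initial} merged configuration. The key point is $\vdash\delta:\Gamma$: every variable whose $\Gamma$-label is high ($\glab\in H(\lab_A)$) must receive a paired value both of whose label-intervals are high, while every $\lab_A$-visible variable must receive identical values in the two runs. This is exactly what the \rulename{EqV-H}/\rulename{EqV-L} cases of store equivalence guarantee, matched by the \rulename{MgS-H}/\rulename{MgS-L} rules used inside Lemma~\ref{lem:store-projection}; without it Preservation's safety precondition would fail and $\vdash\trace$ could not be concluded. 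A secondary subtlety — the reason the statement is termination-\emph{insensitive} — is that Completeness is applicable only because both single runs are assumed to reach $\m{skip}$.
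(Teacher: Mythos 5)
Your proof is correct and follows essentially the same route as the paper's: merge the two stores via Lemma~\ref{lem:store-projection}, apply Completeness to obtain a terminating paired run whose projections are the two given runs, apply Preservation to get $\vdash \trace$, and finish with Lemma~\ref{lem:trace-proj-eq}. The only cosmetic differences are that the paper re-derives $\trace_i = \proj{\trace}{i}$ via Soundness plus determinism of the semantics whereas you read it directly off the Completeness statement, and that you explicitly discharge the $\m{wf}$/$\m{sf}$ and initial-projection side conditions that the paper leaves implicit — both fine.
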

\begin{proof}
  ~
  \begin{tabbing}
    \quad\=By assumptions and Lemma~\ref{lem:store-projection}
    \\\>
    \quad\= (1)\quad\= $\exists\delta$ s.t. $\Gamma \vdash \delta_1
    \bowtie \delta_2 = \delta$, $\vdash \delta : \Gamma$ and
    $\proj{\delta}{i} = \delta_i$
    \\\>By Completeness Theorem (Theorem~\ref{thm:completeness}) 
    \\\>\>(2)\>  $\exists\kappa'$, $\delta'$, $\trace$, s.t. 
    $[\bot,\bot]\; \bot,
    \delta \sepidx{} c \stackrel{\trace}{\stepsto^*}
    \kappa',\delta' \sepidx{}\m{skip}$ 
    \\\> By Soundness Theorem (Theorem~\ref{thm:soundness})
    \\\>\>(3)\> $\proj{[\bot,\bot]\; \bot, \delta \sepidx{} c}{i}
    \stackrel{\proj{\trace}{i}}{\stepsto^*}
    \proj{\kappa,\delta'\sepidx{}\m{skip}}{i}$
    \\\>By the operational semantic rules are deterministic and (3)
    \\\>\>(4)\> $\trace_i = \proj{\trace}{i}$
    \\\>By assumption and (1) and \rulename{T-Conf}
    \\\>\>(5)\> $\vdash [\bot, \bot]\; \bot, \delta, c$
    \\\>By Preservation Theorem (Theorem~\ref{thm:preservation}), (5), and
    (2)
    \\\>\>(6)\> $\vdash \trace$
    \\\>By Lemma~\ref{lem:trace-proj-eq} and (6)
    \\\>\>(7)\> $\vdash \proj{\trace}{1} \approx_{\lab_A} \proj{\trace}{2}$
    \\\>By (4) and (7), $\vdash \trace_1 \approx_{\lab_A} \trace_2$
  \end{tabbing}
\end{proof}

\subsection{Gradual Guarantees}
\label{sec:app-gg}
\begin{lem}[Operations are closed under refinement]~\label{lem:closed-under-refinement}
  \begin{enumerate}[1]
  \item if  $\iota_1\sqsubseteq \iota'_1$ and $\iota_2\sqsubseteq \iota'_2$,
    then $\iota_1\bowtie\iota_2 \sqsubseteq \iota'_1\bowtie\iota'_2$.
  \item $\iota_1\sqsubseteq\iota'_1$ and $\iota_2\sqsubseteq\iota'_2$,
    then $\iota_1\labjoin\iota_2\sqsubseteq\iota'_1\labjoin\iota'_2$.
  \item $\iota_1\sqsubseteq\iota'_1$ and $\iota_2\sqsubseteq\iota'_2$,
    then $\refineof(\iota_1,\iota_2)\sqsubseteq
    \refineof(\iota'_1,\iota'_2)$.
  \item if $\iota_1\sqsubseteq \iota'_1$ and $ E\sqsubseteq E'$,
    then $\iota_1\bowtie E \sqsubseteq \iota'_1\bowtie E'$.
  \item $\glab_1\sqsubseteq\glab'_1$ and $\glab_2\sqsubseteq\glab'_2$,
    then $\glab_1\cjoin\glab_2\sqsubseteq\glab'_1\cjoin\glab'_2$.
  \end{enumerate}
\end{lem}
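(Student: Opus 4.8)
The plan is to prove each of the five closure properties in Lemma~\ref{lem:closed-under-refinement} separately, by unfolding the definitions of the interval/label operations given in Fig.~\ref{fig:label-op} and Fig.~\ref{fig:precision}, and checking that precision (the component-wise ordering $\iota_1 = [\lab_1,\lab_2] \gsubtp [\lab_1',\lab_2'] = \iota_1'$ iff $\lab_1' \labless \lab_1$ and $\lab_2 \labless \lab_2'$) is preserved. The common structure of every case is: write $\iota = [\lab_l,\lab_r]$, $\iota' = [\lab_l',\lab_r']$ etc., use the hypotheses to get the pointwise inequalities on lower and upper endpoints, and then push them through the monotonicity of $\labjoin$ and $\labmeet$ (the lattice join and meet) on the endpoints. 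I would first dispatch the easy items: (2) $\iota_1\labjoin\iota_2 = [\lab_{1l}\labjoin\lab_{2l}, \lab_{1r}\labjoin\lab_{2r}]$ is monotone in each endpoint, so from $\lab_{1l}' \labless \lab_{1l}$, $\lab_{2l}' \labless \lab_{2l}$ we get $\lab_{1l}'\labjoin\lab_{2l}' \labless \lab_{1l}\labjoin\lab_{2l}$ for the lower endpoint, and symmetrically for the (contravariant) upper endpoint; (1) $\iota_1\bowtie\iota_2 = [\lab_{1l}\labjoin\lab_{2l}, \lab_{1r}\labmeet\lab_{2r}]$ is likewise monotone in the right direction for each of the two endpoints; and (5) $\cjoin$ on gradual labels is handled by a short case analysis on whether either label is $?$, using the rules in Fig.~\ref{fig:clabelop} together with $\lab \gsubtp ?$.

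For item (4), $\iota_1\bowtie E$ where $E = (\iota_1,\iota_2)$ is defined via $\refineof(\iota_1\bowtie\iota_a, \iota_b) = (\_,\iota_2')$, so item (4) reduces to items (1) and (3): $E \gsubtp E'$ means componentwise precision on the two intervals of the pair, item (1) gives precision of the intersection, and item (3) gives precision of the $\refineof$ output. Hence I would prove (3) before (4). Item (3) is the delicate one: $\refineof([\lab_{1l},\lab_{1r}],[\lab_{2l},\lab_{2r}]) = ([\lab_{1l},\lab_{1r}\labmeet\lab_{2r}],[\lab_{2l}\labjoin\lab_{1l},\lab_{2r}])$ \emph{when the side condition holds}, and is $\eundef$ otherwise. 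So the statement of (3) has to be read with the convention that precision involving $\eundef$ is vacuous (or that $\eundef \gsubtp x$ always). The real work is: assuming the side conditions hold for the more precise inputs $\iota_1,\iota_2$, one must check they hold for the less precise $\iota_1',\iota_2'$ --- and this is precisely where it could go wrong, because enlarging the intervals could in principle break $\lab_{1l}\labless\lab_{1r}\labmeet\lab_{2r}$ or $\lab_{2l}\labjoin\lab_{1l}\labless\lab_{2r}$. I expect the resolution is that for precision we only need the conclusion when the \emph{less} precise call is defined; i.e. the quantified claim is ``if $\refineof(\iota_1',\iota_2')$ is defined then $\refineof(\iota_1,\iota_2)\gsubtp\refineof(\iota_1',\iota_2')$'' --- wait, that is backwards from what one wants for the dynamic gradual guarantee. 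Let me restate: for Theorem~\ref{thm:dgg} we need that whenever the \emph{precise} execution refines successfully, so does the \emph{imprecise} one, which corresponds to: if $\refineof(\iota_1,\iota_2)$ is defined and $\iota_i \gsubtp \iota_i'$, then $\refineof(\iota_1',\iota_2')$ is defined and $\refineof(\iota_1,\iota_2)\gsubtp\refineof(\iota_1',\iota_2')$.

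So the main obstacle, and the heart of the proof, is the preservation of definedness of $\refineof$ under making the arguments less precise: from $\lab_{1l}\labless\lab_{1r}\labmeet\lab_{2r}$ and $\lab_{1l}' \labless \lab_{1l}$, $\lab_{1r}\labless\lab_{1r}'$, $\lab_{2r}\labless\lab_{2r}'$ one gets $\lab_{1l}' \labless \lab_{1l} \labless \lab_{1r}\labmeet\lab_{2r} \labless \lab_{1r}'\labmeet\lab_{2r}'$ by monotonicity of $\labmeet$ and transitivity of $\labless$, and similarly $\lab_{2l}'\labjoin\lab_{1l}' \labless \lab_{2l}\labjoin\lab_{1l} \labless \lab_{2r} \labless \lab_{2r}'$. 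Once definedness is secured, the precision of the two output intervals is again just monotonicity of $\labjoin$/$\labmeet$ applied endpoint by endpoint, exactly as in item (1) and (2). I would structure the writeup as five short labelled cases, proving (2), (1), (5) first, then (3) carefully (splitting on definedness and doing the two endpoint inequalities for each of the two output intervals), then deriving (4) by composing (1) and (3) with the unfolding of $\bowtie$ on a cast-evidence pair. Each case is a handful of lines of lattice inequalities; there is no induction needed since these are all one-step operations.
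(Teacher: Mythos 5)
Your proposal is correct and follows essentially the same route as the paper's proof: unfold the interval operations, push the endpoint inequalities through monotonicity of $\labjoin$ and $\labmeet$, derive item (4) by composing items (1) and (3) with the unfolding of $\iota\bowtie E$, and dispatch item (5) by a case split on $?$. The only difference is that you explicitly check that $\refineof(\iota'_1,\iota'_2)$ stays defined when the arguments are made less precise (via the chains $\lab'_{1l}\labless\lab_{1l}\labless\lab_{1r}\labmeet\lab_{2r}\labless\lab'_{1r}\labmeet\lab'_{2r}$ and $\lab'_{2l}\labjoin\lab'_{1l}\labless\lab_{2l}\labjoin\lab_{1l}\labless\lab_{2r}\labless\lab'_{2r}$), a point the paper's proof leaves implicit; that is extra rigor rather than a divergence.
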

\begin{proof}

  \begin{tabbing}
    \\By assumptions 
    \\
    \quad\= (1)\quad\=   
    $\iota_1 = [\lab_{1l}, \lab_{1r}]$, $\iota'_1 = [\lab'_{1l}, \lab'_{1r}]$,
    $\lab'_{1l}\labless \lab_{1l}$, and $\lab_{1r}\labless \lab'_{1r}$
    \\\>(2)\> $\iota_2 = [\lab_{2l}, \lab_{2r}]$, $\iota'_2 = [\lab'_{2l}, \lab'_{2r}]$,
    $\lab'_{2l}\labless \lab_{2l}$, and $\lab_{2r}\labless \lab'_{2r}$
    \\By (1) and (2)
    \\\>(3)\> $\lab'_{1l}\labmeet\lab'_{2l}\labless
    \lab_{1l}\labmeet\lab_{2l}$ and  $\lab_{1r}\labmeet\lab_{2r}\labless
    \lab'_{1r}\labmeet\lab'_{2r}$
    \\\>(4)\> $\lab'_{1l}\labjoin\lab'_{2l}\labless
    \lab_{1l}\labjoin\lab_{2l}$ and $\lab_{1r}\labjoin\lab_{2r}\labless
    \lab'_{1r}\labjoin\lab'_{2r}$
    \\ By definition of $\bowtie$
    \\\>(5)\>$\iota_1\bowtie \iota_2 = [\lab_{1l}\labjoin\lab_{2l},
    \lab_{1r}\labmeet\lab_{2r}]$, and 
    $\iota'_1\bowtie \iota'_2 = [\lab'_{1l}\labjoin\lab'_{2l},
    \lab'_{1r}\labmeet\lab'_{2r}]$
    \\By (3) and (4),  $\iota_1\bowtie\iota_2 \sqsubseteq \iota'_1\bowtie\iota'_2$
    \\ By definition of $\labjoin$
    \\\>(6)\> $\iota_1\labjoin\iota_2 = [\lab_{1l}\labjoin\lab_{2l},
    \lab_{1r}\labjoin\lab_{2r}]$ and $\iota'_1\labjoin\iota'_2 = [\lab'_{1l}\labjoin\lab'_{2l},
    \lab'_{1r}\labjoin\lab'_{2r}]$ 
    \\ By (3), and (4)
    $\iota_1\labjoin\iota_2\sqsubseteq\iota'_1\labjoin\iota'_2$.
    \\ By definition of $\refineof$
    \\\>(7)\>  $\refineof(\iota_1, \iota_2)
    =([\lab_{1l},\lab_{1r}\labmeet\lab_{2r}], [\lab_{2l}\labjoin\lab_{1l},\lab_{2r}])$, and
    \\\>(8)\> $\refineof(\iota'_1, \iota'_2)
    =([\lab'_{1l},\lab'_{1r}\labmeet\lab'_{2r}],
    [\lab'_{2l}\labjoin\lab'_{1l},\lab'_{2r}])$
    \\ T.S. $[\lab_{1l},\lab_{1r}\labmeet\lab_{2r}]\sqsubseteq
    [\lab'_{1l},\lab'_{1r}\labmeet\lab'_{2r}]$ and 
    $[\lab_{2l}\labjoin\lab_{1l},\lab_{2r}]\sqsubseteq
    [\lab'_{2l}\labjoin\lab'_{1l},\lab'_{2r}]$
    \\By (1) and (3),  $[\lab_{1l},\lab_{1r}\labmeet\lab_{2r}]\sqsubseteq
    [\lab'_{1l},\lab'_{1r}\labmeet\lab'_{2r}]$
    \\By (2) and (4), $[\lab_{2l}\labjoin\lab_{1l},\lab_{2r}]\sqsubseteq
    [\lab'_{2l}\labjoin\lab'_{1l},\lab'_{2r}]$
    \\By assumptions
    \\\>(9)\> $E = (\iota_a, \iota_b)$, $E' = (\iota'_a, \iota'_b)$, and
    $\iota_a\sqsubseteq\iota'_a$, $\iota_b\sqsubseteq\iota'_b$
    \\By $\iota_1\sqsubseteq \iota'_1$ and (9) and we have proven 1-3 of
    this lemma
    \\\>(10)\> $\iota_a\bowtie\iota_1\sqsubseteq \iota'_a\bowtie\iota'_1$
    \\By (10) and we have proven 1-3 of this lemma
    \\\>(11)\> $\refineof(\iota_a\bowtie\iota_1,\iota_b)
    \sqsubseteq\refineof(\iota'_a\bowtie\iota'_1,\iota'_b)$
    \\By definition of $\iota\bowtie E$ and (11), $\iota_1\bowtie
    E\sqsubseteq \iota'_1\bowtie E'$
  \end{tabbing}

  The proof of 5 cases on $\glab'_1$ and $\glab'_2$. When neither one is
  $?$, the conclusion holds because $\sqsubseteq$ is reflexive. When
  one of them is $?$, $\glab'_1\cjoin\glab'_2=?$, it is defined that
  $\glab_1\cjoin\glab_2 \sqsubseteq ?$.
\end{proof}

\subsection{Static Gradual Guarantee}
\label{sec:app-sgg}

We omit the definitions of $e\gsubtp e'$ for $\wg$, which are
inductively defined over the structure of $e$. We
only show  the definition for cast expression below, as it requires
the types to be the same. 
\[
  \inferrule*{
    e \gsubtp e' 
  }{
    e::U \gsubtp e'::U 
  }
\]
The precision of the cast operator is defined as shown above. As the
precision operation is defined over labels of values in stores, the
expressions have to be cast to the same type $U$. If we cast $e'$ to a
different type $U'$ and try to show that $U \gsubtp U'$, the proof
results in cases that require a proof for $\ell_1 \gsubtp \ell_2$
where $\ell_1 \neq \ell_2$, which does not hold. 

\begin{lem}
  \label{lem:g-refine}
  ~ If~ $\glab_1 \sqsubseteq \glab_1'$, $\glab_2 \sqsubseteq \glab_2'$ and 
  $\glab_1 \clabless \glab_2$, 
  then $\glab_1' \clabless \glab_2'$. 
\end{lem}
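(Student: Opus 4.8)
The plan is to prove the statement by a straightforward case analysis on the shapes of the less-precise labels $\glab_1'$ and $\glab_2'$, exploiting the fact that the precision relation $\gsubtp$ on gradual labels (Fig.~\ref{fig:precision}) is very rigid: $\glab \gsubtp \glab'$ holds only when $\glab' = {?}$, or when $\glab = \glab' = \lab$ for some static label $\lab$. So I would not need induction or any prior lemma; the whole argument is by inspection of the two sets of inference rules (precision, and consistent ordering $\clabless$).

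First I would dispatch the two easy cases. If $\glab_1' = {?}$, then $\glab_1' \clabless \glab_2'$ holds immediately by the rule ${?} \clabless \glab$, for any $\glab_2'$. Symmetrically, if $\glab_2' = {?}$, then $\glab_1' \clabless \glab_2'$ holds by the rule $\glab \clabless {?}$. In either situation the hypotheses $\glab_1 \gsubtp \glab_1'$, $\glab_2 \gsubtp \glab_2'$, and $\glab_1 \clabless \glab_2$ are not even needed.

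The remaining case is when both $\glab_1'$ and $\glab_2'$ are static, say $\glab_1' = \lab_1'$ and $\glab_2' = \lab_2'$. Here I would invert the precision hypotheses: the only precision rule whose right-hand side is a static label is $\lab \gsubtp \lab$, so $\glab_1 \gsubtp \lab_1'$ forces $\glab_1 = \lab_1'$, and likewise $\glab_2 = \lab_2'$. Then the hypothesis $\glab_1 \clabless \glab_2$ reads $\lab_1' \clabless \lab_2'$; since both sides are static, by the rule deriving $\clabless$ from $\labless$ this means $\lab_1' \labless \lab_2'$, and applying that same rule once more yields $\glab_1' \clabless \glab_2'$, as required.

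There is essentially no obstacle here—the proof is purely a check of the rule shapes. The only point needing a moment's care is the inversion step in the static/static case: observing that precision cannot relate a strictly-less-precise ${?}$ to a static label, so the static labels on both sides of each $\gsubtp$ must coincide; once that is noted, the consistent-ordering fact on the more precise labels transfers verbatim to the less precise ones.
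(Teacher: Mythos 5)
Your proposal is correct and follows essentially the same route as the paper's proof: a case analysis on whether $\glab_1'$ and $\glab_2'$ are $?$ or static, discharging the $?$ cases directly by the rules $? \clabless \glab$ and $\glab \clabless ?$, and in the static/static case using the rigidity of the precision relation (only $\lab \gsubtp \lab$ has a static right-hand side) to transfer the hypothesis $\glab_1 \clabless \glab_2$ verbatim. The only cosmetic difference is that the paper first fixes $\glab_1,\glab_2$ as static labels and then enumerates four subcases, whereas you invert precision inside the static case; the content is the same.
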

\begin{proof}
  Assume $\glab_1 = \ell_1$ and $\glab_2 = \ell_2$ (L.H.S of $\sqsubseteq$ is a precise label).
  \\By precision definition, $\glab_1' = \ell_1$ or $\glab_1' = ?$, and $\glab_2' = \ell_2$ or $\glab_2' = ?$.
  \\ If $\glab_1' = \ell_1$ and $\glab_2' = \ell_2$, then the conclusion holds by assumption
  \\ If $\glab_1' = ?$ and $\glab_2' = \ell_2$, then $? \clabless \ell_2$ holds by definition of $\clabless$
  \\ If $\glab_1' = \ell_1$ and $\glab_2' = ?$, then $\ell_1 \clabless ?$ holds by definition of $\clabless$
  \\ If $\glab_1' = ?$ and $\glab_2' = ?$, then $? \clabless ?$ holds by definition of $\clabless$
\end{proof}

\begin{lem}[Static Gradual Guarantee - Expressions]
  \label{lem:static-exp}
  ~ If~ $\Gamma \vdash e : U$,
    $\Gamma\sqsubseteq\Gamma'$, and $e\sqsubseteq e'$, 
    then $\Gamma' \vdash e' : U'$ and $U \sqsubseteq U'$. 
\end{lem}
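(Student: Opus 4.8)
The plan is to induct on the derivation of $\Gamma \vdash e : U$ (equivalently, on the structure of $e$), inverting the precision hypothesis $e \sqsubseteq e'$ in each case to pin down the shape of $e'$. Two auxiliary facts I expect to use throughout: first, that $\sqsubseteq$ on gradual types is reflexive (in particular $U \sqsubseteq U$), which is a trivial induction from reflexivity of label precision; and second, that from $\tau^{\glab} \sqsubseteq V'$ one can invert the rule $\tau^{\glab} \gsubtp \tau^{\glab'}$ to conclude $V' = \tau^{\glab'}$ with $\glab \sqsubseteq \glab'$ — i.e.\ the base type is preserved and only the label gets less precise. These let me freely translate between ``$U \sqsubseteq U'$'' and ``same base type, $\sqsubseteq$ on labels''.

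The base cases \rulename{Bool} and \rulename{Int} are immediate: precision forces $e' = b^{\glab'}$ with $\glab \sqsubseteq \glab'$, and re-applying the rule gives $\Gamma' \vdash e' : \tbool^{\glab'}$ (resp.\ $\tpint^{\glab'}$) with $\tbool^{\glab} \sqsubseteq \tbool^{\glab'}$. The \rulename{Var} case is just unfolding the definition of $\Gamma \sqsubseteq \Gamma'$: $e' = x$, $U = \Gamma(x)$, $U' = \Gamma'(x)$, and $\Gamma(x) \sqsubseteq \Gamma'(x)$ holds by assumption. For \rulename{Bop}, $e = e_1 \bop e_2$ with $\Gamma \vdash e_i : \tau^{\glab_i}$ and $U = \tau^{\glab_1 \cjoin \glab_2}$; precision gives $e' = e_1' \bop e_2'$ with $e_i \sqsubseteq e_i'$, the IH yields $\Gamma' \vdash e_i' : \tau^{\glab_i'}$ with $\glab_i \sqsubseteq \glab_i'$ (same $\tau$, by the inversion fact above), so \rulename{Bop} applies and gives $\Gamma' \vdash e' : \tau^{\glab_1' \cjoin \glab_2'}$; finally part~5 of Lemma~\ref{lem:closed-under-refinement} gives $\glab_1 \cjoin \glab_2 \sqsubseteq \glab_1' \cjoin \glab_2'$, hence $U \sqsubseteq U'$.

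The \rulename{Cast} case is the crux, and the reason the precision relation on casts is deliberately restricted to identical target types. Here $e = (e_1 :: U)$ with $\Gamma \vdash e_1 : V$ and $V \csubtp U$, and the cast precision rule forces $e' = (e_1' :: U)$ with $e_1 \sqsubseteq e_1'$ and the \emph{same} $U$. The IH gives $\Gamma' \vdash e_1' : V'$ with $V \sqsubseteq V'$; writing $V = \tau^{\glab}$, $V' = \tau^{\glab'}$ ($\glab \sqsubseteq \glab'$) and $U = \tau^{h}$, from $V \csubtp U$ we get $\glab \clabless h$, and Lemma~\ref{lem:g-refine} — instantiated with $\glab \sqsubseteq \glab'$ and $h \sqsubseteq h$ — yields $\glab' \clabless h$, i.e.\ $V' \csubtp U$. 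Then \rulename{Cast} gives $\Gamma' \vdash (e_1' :: U) : U$, so $U' = U$ and $U \sqsubseteq U'$ by reflexivity. I expect this step to be the main obstacle: consistent subtyping must be shown stable under loosening the precision of the source type, which is exactly what Lemma~\ref{lem:g-refine} supplies, and which would break if casts were also permitted to vary their target type.
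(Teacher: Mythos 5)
Your proposal is correct and follows essentially the same route as the paper: induction on the typing derivation, inversion of the precision relation in each case, Lemma~\ref{lem:closed-under-refinement}(5) for \rulename{Bop}, and the \rulename{Cast} case resting on the fact that cast precision fixes the target type, with $\glab' \clabless \glab$ obtained exactly as in the paper (your explicit appeal to Lemma~\ref{lem:g-refine} is just the paper's inline case split on $\glab_1' \in \{\ell_1, ?\}$ packaged as a lemma). The only cosmetic caveat is that your reflexivity claim $U \sqsubseteq U$ (and the instantiation $h \sqsubseteq h$ when $h = ?$) relies on $? \gsubtp ?$, which the paper's stated precision rules technically omit but which its own \rulename{Cast} case assumes just as implicitly.
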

\begin{proof}
  By induction on the structure of the typing derivation. Follows from assumption for \rulename{Bool, Int, Var}.
  \begin{description}
  \item [Case:] \rulename{Bop}
    \begin{tabbing}
      By IH, $\tau^{\glab_1} \sqsubseteq \tau^{\glab_1'}$ and $\tau^{\glab_2} \sqsubseteq \tau^{\glab_2'}$
      \\By Lemma~\ref{lem:closed-under-refinement}, $\glab_1 \cjoin \glab_2 \sqsubseteq \glab_1' \cjoin \glab_2'$
      \\By \rulename{Bop}, $\tau^{\glab} \sqsubseteq \tau^{\glab'}$
    \end{tabbing}
  \item [Case:] \rulename{Cast}
    \begin{tabbing}
      $
        \inferrule*{
          \Gamma  \vdash e : \tau^{\glab_1} \\
          U = \tau^{\glab} \\
          \glab_1 \clabless \glab
        }{ 
          \Gamma  \vdash e::U : \tau^{\glab}
        }
      $
      \\ T.S. $\Gamma' \vdash e' :: U : \tau^{\glab}$ and $\glab_1' \clabless \glab'$ where $\Gamma' \vdash e' : \tau^{\glab_1'}$
      \\  $e'$ is also cast to $U$ instead of another type $U'$ for reasons mentioned above.
      \\By IH, $\Gamma' \vdash e' : \tau^{\glab_1'}$, $\tau^{\glab_1} \sqsubseteq \tau^{\glab_1'}$. Thus, $\glab_1 \sqsubseteq \glab_1'$.
      \\By assumption, $\glab_1 \clabless \glab$. $\glab_1$ is a precise label (left-value of $\sqsubseteq$ is a precise label).
      \\$\glab_1 = \ell_1$, then $\glab_1' = \ell_1$ or $\glab_1' = ?$. In both cases, $\glab_1' \clabless \glab$.
      \\By \rulename{Cast}, the conclusion holds.
    \end{tabbing}    
  \end{description}
\end{proof}


\begin{thm}[Static Gradual Guarantee]
  ~
  If~ $\Gamma; \glab \vdash c$, $\Gamma\sqsubseteq\Gamma'$, 
  $\glab\sqsubseteq\glab'$, and $c\sqsubseteq c'$, 
  then $\Gamma'; \glab'\vdash c'$. 
\end{thm}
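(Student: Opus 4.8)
The plan is to argue by structural induction on the derivation of $\Gamma; \glab \vdash c$ in the type system $\wg$ (Fig.~\ref{fig:app-typing}). Note first that the hypothesis $c \sqsubseteq c'$ (Fig.~\ref{fig:precision}) forces $c'$ to have the same top-level constructor as $c$, with immediate subterms that are themselves related by $\sqsubseteq$; so the case analysis is driven by the last rule used to type $c$, and in each case I invert $c \sqsubseteq c'$ to read off the shape of $c'$. The three supporting facts are all already available: Lemma~\ref{lem:static-exp} (the expression-level static gradual guarantee) turns each expression premise $\Gamma \vdash e : \tau^{\glab_e}$ together with $e \sqsubseteq e'$ into $\Gamma' \vdash e' : \tau^{\glab_e'}$ with $\glab_e \sqsubseteq \glab_e'$ and the \emph{same} base type $\tau$; Lemma~\ref{lem:g-refine} promotes any consistent-ordering side condition $\glab_1 \clabless \glab_2$ to $\glab_1' \clabless \glab_2'$ when $\glab_1 \sqsubseteq \glab_1'$ and $\glab_2 \sqsubseteq \glab_2'$; and Lemma~\ref{lem:closed-under-refinement}(5) gives monotonicity of $\cjoin$ with respect to $\sqsubseteq$, needed to push the weakened $\pc$ label into branch and loop bodies.

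The cases are then routine. For \rulename{Skip} we have $c = c' = \eskip$ and \rulename{Skip} re-applies. For \rulename{Seq}, $c' = c_1'; c_2'$ with $c_i \sqsubseteq c_i'$; the IH gives $\Gamma'; \glab' \vdash c_i'$ for $i \in \{1,2\}$ and \rulename{Seq} re-applies. For \rulename{Assign}, $c' = x := e'$; from $\Gamma \sqsubseteq \Gamma'$ we get $\Gamma(x) = \tau^{\glab_1} \sqsubseteq \Gamma'(x) = \tau^{\glab_1'}$, hence $\glab_1 \sqsubseteq \glab_1'$, while Lemma~\ref{lem:static-exp} gives $\Gamma' \vdash e' : \tau^{\glab_2'}$ with $\glab_2 \sqsubseteq \glab_2'$; Lemma~\ref{lem:g-refine} lifts $\glab \clabless \glab_1$ and $\glab_2 \clabless \glab_1$ to $\glab' \clabless \glab_1'$ and $\glab_2' \clabless \glab_1'$, so \rulename{Assign} applies. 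The \rulename{Out} case is identical, using reflexivity $\lab \sqsubseteq \lab$ for the fixed channel label. For \rulename{While} (and \rulename{If} entirely analogously with two branches), $c' = \ewhile\; e'\; \edo\; c_0'$; Lemma~\ref{lem:static-exp} gives $\Gamma' \vdash e' : \tbool^{\glab_e'}$ with $\glab_e \sqsubseteq \glab_e'$, Lemma~\ref{lem:closed-under-refinement}(5) gives $\glab \cjoin \glab_e \sqsubseteq \glab' \cjoin \glab_e'$, and applying the IH to the body $c_0$ under this weakened $\pc$ yields $\Gamma'; \glab' \cjoin \glab_e' \vdash c_0'$, so \rulename{While} re-applies.

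Since every supporting lemma is in hand, there is no genuinely hard step. The only points needing care are bookkeeping ones: that the base type $\tau$ delivered by Lemma~\ref{lem:static-exp} matches the one the rule demands (this is exactly why the precision relation on $\wg$ cast expressions is defined to preserve the cast type and why type precision keeps $\tau$ fixed), and that the consistent-ordering side conditions survive the loss of precision, which is the content of Lemma~\ref{lem:g-refine}. The subtlety most likely to trip one up is ensuring the induction is on the typing derivation (so that \rulename{While}, whose body is retyped under a strictly larger $\pc$, and \rulename{If} are covered uniformly) rather than on the derivation of $c \sqsubseteq c'$.
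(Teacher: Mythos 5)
Your proof is correct and follows essentially the same route as the paper: induction on the command typing derivation, using Lemma~\ref{lem:static-exp} for the expression premises and Lemma~\ref{lem:closed-under-refinement} for the $\cjoin$ monotonicity in the \rulename{If}/\rulename{While} cases, then reapplying the same typing rule. Your explicit appeal to Lemma~\ref{lem:g-refine} for the $\clabless$ side conditions in \rulename{Assign}/\rulename{Out} is a detail the paper leaves implicit, but it is exactly the intended justification.
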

\begin{proof}
  By induction on command typing derivation. Most cases can be proven by using the induction hypothesis and the typing rule. 
  When the derivation ends in \rulename{Assign, Out, If, While}, apply Lemma~\ref{lem:static-exp} on the premises 
  and when the derivation ends in  \rulename{If, While}, we additionally apply Lemma~\ref{lem:closed-under-refinement}.
  We use the same typing rule to reach the conclusion.
\end{proof}

\subsection{Dynamic Gradual Guarantee}
\label{sec:app-dgg}
\begin{lem}[Dynamic Guarantee (Expressions)]
  ~\label{lem:dynamic-g-exp}
  If $\delta_1 \sepidx{} e_1 \evalsto v_1$,
  $\delta_1\sqsubseteq\delta_2$, and $e_1\sqsubseteq e_2$, 
  then $\delta_2 \sepidx{} e_2 \evalsto  v_2$ and
  $v_1\sqsubseteq v_2$. 
\end{lem}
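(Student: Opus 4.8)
The plan is to prove Lemma~\ref{lem:dynamic-g-exp} by induction on the derivation of $\delta_1 \sepidx{} e_1 \evalsto v_1$, with a case analysis on the last rule used. Since $e_1 \gsubtp e_2$, the precision relation on expressions forces $e_2$ to have the same top-level syntactic form as $e_1$ (a value matches a value, a variable matches itself, a $\bop$ matches a $\bop$, and a cast $E_{\glab}\, e$ matches a cast $E'_{\glab'}\, e'$), so in each case the evaluation rule for $e_2$ is determined and I need only check that it applies and produces a more-precise value. The key auxiliary facts are the precision definitions in Fig.~\ref{fig:precision} (in particular $(\iota_1\, u)^{\glab_1} \gsubtp (\iota_2\, u)^{\glab_2}$ requires $\iota_1 \gsubtp \iota_2$ and $\glab_1 \gsubtp \glab_2$ and the \emph{same} raw value $u$) together with Lemma~\ref{lem:closed-under-refinement}, which states that $\bowtie$, $\labjoin$, $\refineof$, $\iota\bowtie E$, and $\cjoin$ are all monotone with respect to $\gsubtp$.

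First I would handle the base cases. For \rulename{M-Const}, $e_1 = (\iota_1\, u)^{\glab_1}$ and $e_2 = (\iota_2\, u)^{\glab_2}$ with $(\iota_1\, u)^{\glab_1} \gsubtp (\iota_2\, u)^{\glab_2}$; both evaluate to themselves, so the conclusion is immediate from the hypothesis. For \rulename{M-Var}, $e_1 = e_2 = x$, and the result values are $\delta_1(x)$ and $\delta_2(x)$; since $\delta_1 \gsubtp \delta_2$ means $\delta_1(x) \gsubtp \delta_2(x)$ for all $x$, we are done. For the \rulename{M-Bop} case, I apply the induction hypothesis to each subexpression to get $v_{1i} \gsubtp v_{2i}$ for $i \in \{1,2\}$; inverting the value-precision relation gives matching raw values $u_i$, intervals $\iota_{1i} \gsubtp \iota_{2i}$, and labels $\glab_{1i} \gsubtp \glab_{2i}$. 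Then $u_1 \bop u_2$ is the same on both sides, and monotonicity of $\labjoin$ and $\cjoin$ (Lemma~\ref{lem:closed-under-refinement} parts 2 and 5) gives $\iota_{11}\labjoin\iota_{12} \gsubtp \iota_{21}\labjoin\iota_{22}$ and $\glab_{11}\cjoin\glab_{12} \gsubtp \glab_{21}\cjoin\glab_{22}$, so \rulename{M-Bop} applies on the right and yields a more-precise value. For \rulename{M-Cast}, the IH gives $v \gsubtp v'$ with $v = (\iota\, u)^{\glab}$, $v' = (\iota'\, u)^{\glab'}$; from $E_{\glab_1} e_1 \gsubtp E'_{\glab_2} e_2$ we get $E \gsubtp E'$ and $\glab_1 \gsubtp \glab_2$; part 4 of Lemma~\ref{lem:closed-under-refinement} gives $\iota\bowtie E \gsubtp \iota'\bowtie E'$ (in particular the left side is defined, so \rulename{M-Cast} rather than \rulename{M-Cast-Err} fires on the left, consistent with our hypothesis), and the result follows.

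The main obstacle I anticipate is the interaction between the $\eabort$ outcomes and the definedness of the refinement operations in the more-precise run. The lemma as stated assumes $\delta_1 \sepidx{} e_1 \evalsto v_1$ (a genuine value, not $\eabort$), so I must argue that whenever the \emph{more precise} run succeeds, the \emph{less precise} run also succeeds rather than aborting --- but here the direction is favorable: we are told the less-precise-labeled expression $e_1$ evaluates successfully, and we need the more-precise-labeled (confusingly, ``precision'' $\gsubtp$ means $e_1$ has the more precise label). Reading Fig.~\ref{fig:precision} carefully, $\lab \gsubtp\, ?$ means a concrete label is more precise than $?$, so $e_1 \gsubtp e_2$ means $e_2$ is the \emph{less} precise (more dynamic) program. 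Thus the obligation is: if the program with sharper intervals/labels evaluates to $v_1$, the program with wider intervals also evaluates, to some $v_2 \gsubtp v_1$ --- wait, the statement says $v_1 \gsubtp v_2$, consistent with $e_1 \gsubtp e_2$. So I must show $\iota \bowtie E$ being defined implies $\iota' \bowtie E'$ is defined when $\iota \gsubtp \iota'$ and $E \gsubtp E'$; this is exactly the content needed, and it should follow because widening intervals only makes $\refineof$ and $\bowtie$ ``more likely'' to be defined --- but this monotonicity of \emph{definedness} is not literally one of the five numbered claims in Lemma~\ref{lem:closed-under-refinement}, which are phrased for already-defined results. I would therefore either invoke a companion definedness-monotonicity lemma (which the appendix presumably states alongside) or establish it inline by inspecting the side conditions of $\refineof$: if $\lab_{1l} \labless \lab_{1r}\labmeet\lab_{2r}$ holds for the sharper intervals, it continues to hold after widening since the lower bound can only decrease and the meet of upper bounds can only increase. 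Apart from this definedness bookkeeping, the proof is a routine structural induction.
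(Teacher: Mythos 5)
Your proposal is correct and follows essentially the same route as the paper's proof: induction on the evaluation derivation, base cases \rulename{M-Const} and \rulename{M-Var} directly from the assumptions, and the \rulename{M-Bop} and \rulename{M-Cast} cases via the induction hypothesis together with the monotonicity facts of Lemma~\ref{lem:closed-under-refinement}. Your extra observation that definedness of $\refineof$ and $\iota\bowtie E$ must also be preserved under widening (so the less precise run does not fire \rulename{M-Cast-Err}) is a point the paper leaves implicit, and your inline check of the $\refineof$ side conditions handles it correctly.
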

\begin{proof}
  By induction on the structure of the expression evaluation. We
  apply the induction hypothesis directly for.
  The basecases \rulename{M-Const} and
  \rulename{M-Var} can be shown using assumptions directly.
  \begin{description}
  \item[Case:] \rulename{M-Bop}
    \begin{tabbing}
      By assumption the evaluation ends in \rulename{M-Bop} rule:
      \\
      \quad\= (1)\quad\=   
      $\inferrule*[right=M-Bop]{
        \delta\sepidx{} e_1 \evalsto (\iota_1\, u_1)^{\glab_1} \\
        \delta\sepidx{} e_2 \evalsto (\iota_2\, u_2)^{\glab_2} \\\\
        \iota = (\iota_1\labjoin\iota_2) \\ \glab = \glab_1\cjoin\glab_2 \\ u = (u_1\bop u_2) 
      }{
        \delta\sepidx{} e_1\bop e_2 \evalsto   (\iota\;u)^{\glab}
      }$
      \\\>(2)\> $e= e_1\bop e_2$,  $\delta \sqsubseteq\delta'$, and
      $e\sqsubseteq e'$
      \\By inversion of (2)
      \\\>(3)\> $e' = e'_1 \bop e'_2$, $e_1\sqsubseteq e'_1$, and
      $e_2\sqsubseteq e'_2$
      \\By I.H.
      \\\>(4)\> $\delta'\sepidx{} e'_1 \evalsto
      (\iota'_1\,u'_1)^{\glab'_1}$, $\iota_1\sqsubseteq\iota'_1$,
      and $\glab_1\sqsubseteq\glab'_1$,
      \\\>(5)\>$\delta'\sepidx{} e'_2 \evalsto (\iota'_2\,
      u'_2)^{\glab'_2}$, $\iota_2\sqsubseteq\iota'_2$,
      and $\glab_2\sqsubseteq\glab'_2$,
      \\By \rulename{M-Bop}
      \\\>(6)\> $\delta'\sepidx{} e'_1\bop e'_2 \evalsto
      (\iota'\;u)^{\glab'}$, where $\iota'=\iota'_1\labjoin\iota'_2$,
      $\glab' = \glab'_1\cjoin\glab'_2$'
      \\By Lemma~\ref{lem:closed-under-refinement}
      \\\>(7)\> $(\iota\;u)^{\glab}\sqsubseteq (\iota'\;u)^{\glab'}$
    \end{tabbing}

  \item[Case:] \rulename{M-Cast} and \rulename{P-Cast} can be proven
    similarly by applying I.H. and Lemma~\ref{lem:closed-under-refinement}.

  \end{description}
\end{proof}

\begin{lem}~
  \label{lem:refine-value}
  \begin{enumerate}
  \item if  $\iota_1\sqsubseteq \iota_2$ and $v_1\sqsubseteq v_2$, and
    $v_1' = \reflvof{}(\iota_1, v_1)$ and   $v_2' =
    \reflvof{}(\iota_2, v_2)$, then $v_1'\sqsubseteq v_2'$
  \item if  $\iota_1\sqsubseteq \iota_2$ and $v_1\sqsubseteq v_2$, and
    $v_1' = \updval(\iota_1,v_1)$ and   $v_2' =
    \updval(\iota_2, v_2)$, then $v_1'\sqsubseteq v_2'$
  \item if  $\iota_1\sqsubseteq \iota_2$ and $\delta_1\sqsubseteq \delta_2$, and
    $\delta'_1 = \rflof{}(\delta_1, X, \iota_1)$, 
    and $\delta'_2 = \rflof{}(\delta_2, X, \iota_2)$, 
    then $\delta_1'\sqsubseteq \delta_2'$
  \end{enumerate}
\end{lem}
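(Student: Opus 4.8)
\textbf{Proof plan for Lemma~\ref{lem:refine-value}.}
The plan is to prove each of the three parts by unfolding the relevant definitions and appealing to Lemma~\ref{lem:closed-under-refinement}, which already establishes that the underlying interval operations $\bowtie$, $\labjoin$, $\refineof$, and $\cjoin$ are monotone with respect to $\gsubtp$. Throughout, I will repeatedly use the fact that the precision relation on values, $\gsubtp$, forces the raw value $u$ and the gradual label $\glab$ to be identical on both sides (only the interval may become less precise), so the only real content is tracking the interval component. I will also need the observation that $\eundef$ propagates: if the more precise side produces a defined result, so must the less precise side, because $\refineof$ and $\newlab$ succeed whenever their more-precise inputs make them succeed (this is again immediate from the monotonicity statements in Lemma~\ref{lem:closed-under-refinement}, part~3, together with the validity clauses in Fig.~\ref{fig:label-op}).

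For part~(1), I would unfold $\reflvof{}(\iota_c,(\iota\;u)^\glab) = (\iota'\;u)^\glab$ where $\refineof(\iota_c,\iota) = (\_,\iota')$. Given $\iota_1 \gsubtp \iota_2$ and $(\iota\;u)^\glab \gsubtp (\iota^*\;u)^\glab$, i.e. $\iota \gsubtp \iota^*$, Lemma~\ref{lem:closed-under-refinement}(3) gives $\refineof(\iota_1,\iota) \gsubtp \refineof(\iota_2,\iota^*)$; projecting onto the second component yields $\iota' \gsubtp \iota'^*$, and since the raw value and label are unchanged, $v_1' \gsubtp v_2'$. For part~(2), I would unfold $\updval(\iota_o,(\iota_n\;u_n)^\glab) = (\newlab(\iota_o,\iota_n)\,u_n)^\glab$; here I need a monotonicity fact for $\newlab$ analogous to those in Lemma~\ref{lem:closed-under-refinement} (namely: if $\iota_o \gsubtp \iota_o'$ and $\iota_n \gsubtp \iota_n'$ then $\newlab(\iota_o,\iota_n) \gsubtp \newlab(\iota_o',\iota_n')$), which follows directly by the same elementwise reasoning used in the proof of Lemma~\ref{lem:closed-under-refinement} --- $\newlab([\ell_{1l},\ell_{1r}],[\ell_{2l},\ell_{2r}]) = [\ell_{1l}\labjoin\ell_{2l},\ell_{1r}]$, and both $\labjoin$ on the lower bound and the identity on the upper bound are monotone. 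The paired-value cases of both operations (from Fig.~\ref{fig:app-rdupd} and Fig.~\ref{fig:reflvof-pair}) are handled by applying the scalar argument componentwise, using that $\gsubtp$ on paired values reduces to $\gsubtp$ on each component.

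For part~(3), I would proceed by induction on the size of the write-set $X$ (equivalently, on the structure of $\rflof{}$). The base case $X = \emptyset$ is immediate since $\rflof{}(\delta,\cdot,\iota) = \delta$ and $\delta_1 \gsubtp \delta_2$ by hypothesis. In the inductive step $X = (X', x)$, the definition gives $\rflof{}((\delta,x\mapsto v),(X',x),\iota) = (\rflof{}(\delta,X',\iota), x\mapsto \reflvof{}(\iota,v))$; I apply the induction hypothesis to the $X'$-part and part~(1) of this lemma to the $x$-entry (using $\delta_1(x) \gsubtp \delta_2(x)$, which comes from $\delta_1 \gsubtp \delta_2$), and assemble the two using the definition of $\gsubtp$ on stores.

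The main obstacle is the handling of $\eundef$ in part~(3): the definition of $\rflof{}$ on the precision side that appears in Fig.~\ref{fig:app-pair-lab-op} does \emph{not} guard against $\reflvof{}$ returning $\eundef$ (unlike the monitor version in Fig.~\ref{fig:store-aux-single}), so I must argue that whenever $\reflvof{}(\iota_1,v_1)$ is defined, $\reflvof{}(\iota_2,v_2)$ is also defined --- i.e. less precise inputs never \emph{introduce} a new failure. This is where Lemma~\ref{lem:closed-under-refinement}(3) is essential: $\refineof$ on less-precise, wider intervals succeeds at least as often as on narrower ones, because widening an interval can only make the validity side-conditions $\ell_{1l} \labless \ell_{1r}\labmeet\ell_{2r}$ and $\ell_{2l}\labjoin\ell_{1l} \labless \ell_{2r}$ easier to satisfy (the left endpoints move down or stay, the right endpoints move up or stay). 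Once that definedness-monotonicity is pinned down, the rest is the routine componentwise bookkeeping sketched above.
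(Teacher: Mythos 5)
Your proposal matches the paper's own (very terse) proof: parts (1) and (2) are done by unfolding $\reflvof{}$ and $\updval$ and invoking the monotonicity facts of Lemma~\ref{lem:closed-under-refinement}, and part (3) is by induction on the size of $X$ using part (1). Your additional care about $\newlab$ (not literally covered by Lemma~\ref{lem:closed-under-refinement}, but monotone by the same elementwise reasoning) and about definedness propagating from the more precise to the less precise side correctly fills in details the paper leaves implicit, so this is the same argument, just spelled out more fully.
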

\begin{proof}
  Proofs of (1) and (2) examine the definitions of the operations and
  apply Lemma~\ref{lem:closed-under-refinement} directly. Proof of (3)
  is by induction over the size of $X$ and (1).
\end{proof}

\begin{thm}[Dynamic Gradual Guarantee]
  If~ $\kappa_1, \delta_1 \sepidx{} c_1 \stackrel{{\alpha_1}}{\stepsto}
  \kappa'_1,\delta'_1\sepidx{} c'_1$ and 
  $\kappa_1, \delta_1 \sepidx{} c_1 \sqsubseteq \kappa_2, \delta_2 \sepidx{} c_2$, 
  then $\kappa_2, \delta_2 \sepidx{} c_2 \stackrel{{\alpha_2}}{\stepsto}
  \kappa'_2,\delta'_2\sepidx{} c'_2$ such that
  $\kappa'_1, \delta_1' \sepidx{} c_1' \sqsubseteq\kappa'_2, \delta_2' \sepidx{} c_2'$ 
  and $\alpha_1 = \alpha_2$.
\end{thm}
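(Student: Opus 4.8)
\textbf{Proof plan for the Dynamic Gradual Guarantee (Theorem~\ref{thm:dgg}).}
The plan is to proceed by induction on the derivation of the step $\kappa_1,\delta_1 \sepidx{} c_1 \stackrel{\alpha_1}{\stepsto} \kappa'_1,\delta'_1\sepidx{} c'_1$, with a case analysis on the last rule used. The first observation is that the precision relation $c_1 \sqsubseteq c_2$ is structural (see Fig.~\ref{fig:precision}): $c_1$ and $c_2$ have the same top-level constructor, and the write-set annotations $X$ on $\eif^X$ and $\ewhile^X$ are required to be \emph{identical}. Consequently, whichever operational rule applies to $c_1$, the matching rule applies to $c_2$, and the two configurations take ``the same'' step. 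So for each case the work is to (i) confirm the corresponding rule fires on $c_2$, and (ii) show the resulting configurations are again related by $\sqsubseteq$ and that $\alpha_1 = \alpha_2$.

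The routine cases (\rulename{M-Skip}, \rulename{M-Pop}, \rulename{M-While}, \rulename{M-If}) follow directly from the definition of $\sqsubseteq$ on configurations together with the fact that $\sqsubseteq$ is preserved by the interval/label operations used (Lemma~\ref{lem:closed-under-refinement}: $\bowtie$, $\labjoin$, $\cjoin$ are monotone w.r.t.\ $\sqsubseteq$). The inductive cases \rulename{M-Seq} and \rulename{M-Pc} apply the induction hypothesis to the sub-step and then re-assemble, using monotonicity of the $\pc$-stack precision relation. For \rulename{M-Assign} and \rulename{M-Out}, I would first apply the expression-level dynamic guarantee (Lemma~\ref{lem:dynamic-g-exp}) to get that $\delta_2 \sepidx{} e_2 \evalsto v_2$ with $v_1 \sqsubseteq v_2$, then invoke Lemma~\ref{lem:refine-value}(1),(2) to conclude $\reflvof{}(\iota_{\pc1},v_1)\sqsubseteq\reflvof{}(\iota_{\pc2},v_2)$ and that the $\updval$/store-update results stay related; since a public output value must be fully precise (the channel label $\lab$ is concrete and the value's interval is refined into $[\lab,\lab]$), $v_1\sqsubseteq v_2$ forces $\alpha_1=\alpha_2$. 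For \rulename{M-If-Refine}, I would again use Lemma~\ref{lem:dynamic-g-exp} on the guard and then Lemma~\ref{lem:refine-value}(3) for $\rflof{}$; here the identical write set $X$ on both sides is exactly what is needed.

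The main obstacle is the \textbf{totality / non-abort} requirement implicit in the statement: whenever the more precise configuration $c_1$ takes a step, the less precise $c_2$ must also take a (non-$\eabort$) step. Monotonicity of the operations (Lemma~\ref{lem:closed-under-refinement}) gives that $c_2$'s intermediate intervals are \emph{wider} than $c_1$'s, so this should hold --- but it has to be argued carefully for every operation that can fail: $\bowtie$ (in \rulename{M-Cast}), $\refineof$ (inside $\reflvof{}$), $\newlab$ (inside $\updval$), and the fold of $\reflvof{}$ inside $\rflof{}$. The key sublemma is: \emph{if $\iota_1\sqsubseteq\iota_1'$, $\iota_2\sqsubseteq\iota_2'$ and $\refineof(\iota_1,\iota_2)$ is defined, then $\refineof(\iota_1',\iota_2')$ is defined} (and similarly for $\bowtie E$ and $\newlab$); this follows from the defining side-conditions ($\lab_{1l}\labless\lab_{1r}\labmeet\lab_{2r}$ etc.) being upward-closed under widening of the endpoints, an observation already embedded in the proof of Lemma~\ref{lem:closed-under-refinement}. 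I would isolate this ``definedness is preserved under widening'' fact as an explicit auxiliary lemma and cite it in the \rulename{M-Cast}, \rulename{M-Assign}, \rulename{M-Out}, and \rulename{M-If-Refine} cases. Note the theorem is stated only for the monitor semantics (no pairs), so no additional reasoning about paired commands is required.
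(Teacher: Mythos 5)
Your plan matches the paper's own proof: the paper argues by induction on the command-step derivation, discharging most cases with the induction hypothesis and Lemma~\ref{lem:closed-under-refinement}, and handling \rulename{M-Assign}, \rulename{M-Out}, and \rulename{M-If-Refine} via Lemma~\ref{lem:dynamic-g-exp} and Lemma~\ref{lem:refine-value}, exactly as you propose. Your explicit ``definedness is preserved under widening'' sublemma (for $\refineof$, $\bowtie E$, $\newlab$, and the fold in $\rflof{}$) is correct and is the one point the paper's terse proof leaves implicit inside those lemmas, so isolating it is a refinement of the same argument rather than a different route.
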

\begin{proof}
  By induction on the command semantics. Most cases can be proven by
  using the induction hypothesis and
  Lemma~\ref{lem:closed-under-refinement} directly. 

  When the derivation ends in \rulename{M-Assign}, \rulename{M-Out},
  or \rulename{M-If-Refine}, apply Lemma~\ref{lem:dynamic-g-exp} and
  Lemma~\ref{lem:refine-value} on the premises and use the same
  semantic rule to reach the conclusion.
\end{proof}

\end{document}